\documentclass[12pt,reqno]{amsart}
\usepackage[utf8]{inputenc}
\usepackage[T1]{fontenc}
\usepackage[margin=1in,dvips]{geometry}
\usepackage{amsmath}
\usepackage{amsfonts}
\usepackage{amssymb}
\usepackage{amsthm}
\usepackage{graphicx}
\usepackage{hyperref}
\usepackage{slashed}
\usepackage{tikz}
\usepackage{caption}
\usepackage{subcaption}
\usepackage{float}
\usepackage{xcolor}
\usepackage{lipsum}
\usepackage{url}
\hfuzz=2pt

\usetikzlibrary{decorations.pathmorphing}
\usetikzlibrary{snakes}
\usepackage[final]{microtype}

\newtheorem{thm}{Theorem}
\newtheorem{cor}{Corollary}[section]

\theoremstyle{definition}
\newtheorem{prop}{Proposition}[section]

\theoremstyle{remark}
\newtheorem{rem}{Remark}[section]

 \numberwithin{equation}{section}
 \def\ub {\underline{u}}

\def\th {\theta}

\def\Hb {\underline{H}}
\def\chib {\underline{\chi}}

\def\etab {\underline{\eta}}
\def\bb {\underline{\beta}}
\def\ab {\underline{\alpha}}

\def\a {\alpha}
\def\b {\beta}
\def\nab {\nabla}
\def\ep {\epsilon}
\def\om {\omega}
\def\omb {\underline{\omega}}
\def\Om {\Omega}
\def\p {\partial}

\def\f {\frac}

\title[Weak Null Singularity for the Einstein--Euler System]{Weak Null Singularity\\ for the Einstein--Euler System}
   \author{Yuefeng Song}
   \address{Department of Mathematics\\ Stanford University\\ Stanford CA 94305-2125\\ USA}
   \email{songyf@stanford.edu}

\begin{document}
   \begin{abstract}
     We study the behavior of a self-gravitating perfect relativistic fluid satisfying the Einstein--Euler system in the presence of a weak null terminal spacetime singularity. This type of singularities is expected in the interior of generic dynamical black holes. In the vacuum case, weak null singularities have been constructed locally by Luk, where the metrics extend continuously to the singularities while the Christoffel symbols fail to be square integrable in any neighborhood of any point on the singular boundaries. We prove that this type of singularities persists in the presence of a self-gravitating fluid. Moreover, using the fact that the speed of sound is strictly less than the speed of light, we prove that the fluid variables also extend continuously to the singularity.
   \end{abstract}
   \maketitle
   \section{Introduction}
   \subsection{Motivation}
   Recent works suggest that generic dynamical \emph{vacuum} black holes terminate with an essential weak null singularity. This type of singularities were first rigorously constructed in \cite{luk2017weak} (see also \cite{Ori1995HowGA} for analytic examples), and have the property that in a suitable coordinate system, the metric coefficients extend continuously up to the null boundary while the Christoffel symbols blow up. Indeed, in \cite{dafermos2017interiordynamicalvacuumblack}, Dafermos and Luk showed that the metric in the interior of any black hole settling down to a subextremal rotating Kerr black hole suitably fast can be extended continuously across a small piece of the Cauchy horizon $\mathcal{CH}^+$ (see Figure \ref{Fig1}). When combined with the intuition from linear and spherical symmetric problems, \cite{LocalizedBigBang,a965bffe-de90-3a7a-a652-4c0d2147a13b,
   https://doi.org/10.1002/cpa.20071,Dafermos_2016,
   gautam2024latetimetailsmassinflation,
   Gurriaran_2025,gurriaran2026nonlinearinstabilitykerrcauchy,HISCOCK1981110,
   Luk:2015qja,luk2019strong,Luk2019,LOSQ,LUK20161948,luk2026formationweaknullsingularity,
   MR4657222,PhysRevLett.63.1663,PhysRevD.41.1796,
   SBierSKIJANPROOF,Moortel2017StabilityAI,
   93760b6492084bb69a1843fdee21adfc,VandeMoortel:2019ike,vandemoortel2025strongcosmiccensorshipconjecture} this suggests that the Cauchy horizon is a weak null singularity for a generic subclass of data, at least in a small neighborhood of Kerr.
   

   

   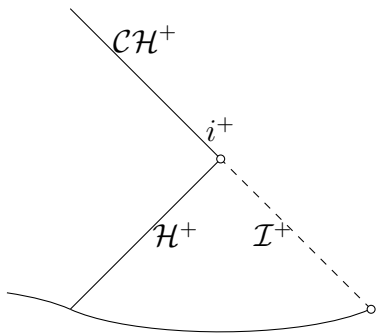
\begin{figure}[h]
   \centering
        \begin{tikzpicture}
          \draw[dashed](4,0)--(2,2);
          \draw(0,4)--(2,2);
          \draw(2,2)--(0,0);
          \filldraw[fill=white,draw=white] (0,0) arc (210:330:2.3cm and 0.6cm)--(2,-0.3)--(0,-0.3)--(0,0);
          \draw(0,0) arc (210:330:2.3cm and 0.6cm);
          \draw(0,0) arc (30:60:2.3cm and 0.6cm);
          \draw[fill=white](2,2) circle (1.5pt);
          \draw[fill=white](4,0) circle (1.5pt);
          \draw(1,3.6)node{$\mathcal{CH}^+$};
          \draw(1.4,1)node{$\mathcal{H}^+$};
          \draw(2.7,1)node{$\mathcal{I}^+$};
          \draw(2,2.4)node{$i^+$};
        \end{tikzpicture}
        \caption{A generic dynamical vacuum black hole}\label{Fig1}
       \end{figure}
       
       In view of this picture, it is natural to investigate the stability of the vacuum essential weak null singularity in the presence of matter fields. In cases where the matter fields propagate at the speed of light (e.g., Maxwell, (charged)-scalar field, null dust, etc.), various works in spherical symmetry showed that the singularity structure persists and is very similar to the vacuum case. In fact, the ideas of Luk can be extended to the type of matter fields described above, further implying that the singularity structure persists even outside symmetry (see \cite[Footnote 8]{luk2017weak}). In particular, in this case, the matter field blows up in a manner such that the Ricci curvature blows up along a parallelly propagated frame towards the weak null singularity. (See \cite[Lemma 11.7]{luk2019strong}.)
       

   
   The purpose of this paper is to understand the behavior of a perfect relativistic self-gravitating fluid at a weak null singularity, i.e., whether a Penrose diagram as in Figure \ref{Fig3} occurs. In the physical case, the speed of sound is strictly slower than the speed of light, in contrast to all the matter models that have been previously studied.
   In this paper, we focus on the \emph{local} problem near the singularity (see Figure \ref{Fig4}). One expects additional difficulties near the boundary of the fluid (see \cite{disconzi2023recent,hao2024wellposednessfreeboundarybarotropic} for discussions about the free boundary problem).
   
   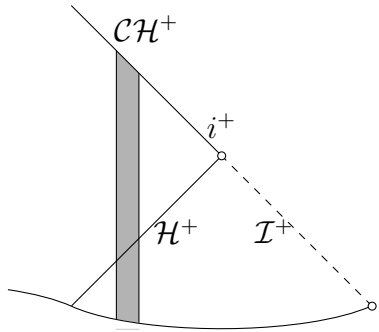
\begin{figure}[h]
   \centering
        \begin{tikzpicture}
          \draw[fill=black!30](0.6,3.4)--(0.6,-0.3)--(0.9,-0.3)--(0.9,3.1);
          \draw[dashed](4,0)--(2,2);
          \draw(0,4)--(2,2);
          \draw(2,2)--(0,0);
          \filldraw[fill=white,draw=white] (0,0) arc (210:330:2.3cm and 0.6cm)--(2,-0.3)--(0,-0.3)--(0,0);
          \draw(0,0) arc (210:330:2.3cm and 0.6cm);
          \draw(0,0) arc (30:60:2.3cm and 0.6cm);
          \draw[fill=white](2,2) circle (1.5pt);
          \draw[fill=white](4,0) circle (1.5pt);
          \draw(1,3.7)node{$\mathcal{CH}^+$};
          \draw(1.4,1)node{$\mathcal{H}^+$};
          \draw(2.7,1)node{$\mathcal{I}^+$};
          \draw(2,2.4)node{$i^+$};
        \end{tikzpicture}
        \caption{Fluids evolving until a weak null singularity}\label{Fig3}
       \end{figure}
   
   We prove that the weak null singularity is preserved under introduction of a fluid. In particular, no shock is formed in the fluid and the fluid variables are bounded up to the weak null singularity. Moreover, we establish that the fluid variables themselves, and thus the Ricci curvature of the spacetime metric, can be extended continuously up to the singularity. This is in stark contrast with the case mentioned above and is ultimately a consequence of the fact that the speed of sound is strictly slower than the speed of light. See the discussions in Section \ref{Section1.3.2}.
   
   \begin{figure}[h]
   \centering
        \begin{tikzpicture}
          \draw[dashed](4,0)--(2,2);
          \draw(0,4)--(2,2);
          \draw(2,2)--(0,0);
          \draw(0,0) arc (210:330:2.3cm and 0.6cm);
          \draw(0,0) arc (30:60:2.3cm and 0.6cm);
          \draw[fill=white](2,2) circle (1.5pt);
          \draw[fill=white](4,0) circle (1.5pt);
          \draw(1,3.7)node{$\mathcal{CH}^+$};
          \draw(1.4,1)node{$\mathcal{H}^+$};
          \draw(2.7,1)node{$\mathcal{I}^+$};
          \draw(2,2.4)node{$i^+$};
          \draw[fill=black!30](0.8,3.2)--(0.6,3)--(0.4,3.2)--(0.6,3.4);
        \end{tikzpicture}
        \caption{Domain of existence}\label{Fig4}
       \end{figure}
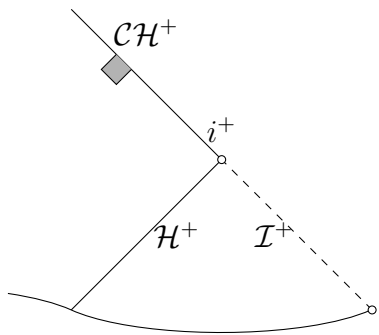
       
   \subsection{Description of the Main Results}\label{SectDescriptionofMainResults}
   The Einstein--Euler system can be formulated for a quadruple $(\mathcal{M},g,v,\tau)$ comprised of a $4$-dimensional time-oriented Lorentzian manifold $(\mathcal{M},g)$ of signature $(-,+,+,+)$, the fluid four-velocity $v^\nu$, and the proper energy density $\tau$ of the fluid,
   solving the Einstein--Euler equations
   \begin{align}
     \left\{
     \begin{aligned}
       &\mathrm{Ric}_{\mu\nu}-\frac{1}{2}Rg_{\mu\nu}=T_{\mu\nu}, \\
       &D_\alpha T^{\alpha\mu}=0,
     \end{aligned}
     \right.\label{Eq1}
   \end{align}
   where $T_{\mu\nu}$ is the energy-momentum tensor of a perfect fluid,
   \begin{align}
     &T_{\mu\nu}=(\tau+p)v_\mu v_\nu +pg_{\mu\nu},\label{Eq2}
   \end{align}
   $v^\nu$ is normalized by
   \begin{align}\label{Eq2.1.1}
     &g_{\mu\nu}v^\mu v^\nu=-1,
   \end{align}
   and $p$ relates to $\tau$ via the equation of state
   \begin{align}\label{Eq2.2.1}
     p=p(\tau),\ 0<p'<1.
   \end{align}
   The acoustical metric is given by
   \begin{align*}
     &G^{\alpha\beta}=p'g^{\alpha\beta}+(p'-1)v^\alpha v^\beta,
   \end{align*}
   see \cite{disconzi2023recent}.
   This defines a Lorentzian metric in spacetime whose inverse is given by $G^{\alpha\beta}$. 
   The characteristics for the Euler equations coincide with the sound cones $\{(x,\xi): G^{\alpha\beta} \xi_\alpha\xi_\beta=0\}$ of the acoustical metric $G$.
   Under the assumption that $p'<1$, the sound cones 
   lie strictly inside the light cones.

       The precise setup is as follows. In order to avoid considering issues regarding boundaries or topology, we consider a domain $D_{u_*,\underline{u}_*}$ diffeomorphic to $[0,u_*]\times[0,\underline{u}_*]\times\mathbb{T}^2$ (as opposed to $[0,u_*]\times[0,\underline{u}_*]\times\mathbb{S}^2$ in \cite{luk2017weak}) by local coordinate functions $(u,\underline{u},\theta^A)$ as a Lorentzian $4$-manifold with corners. One should think of $\{u=u_*\}$ as the singular boundary. Since the problem, due to the finite speed of propagation, is local in nature, we do not expect the assumptions on global topology to be essential.


   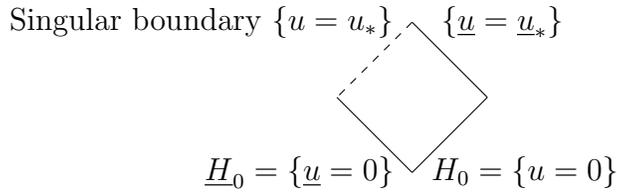
\begin{figure}[h]
   \centering
        \begin{tikzpicture}
          \draw(0,1)--(-1,0);
          \draw[dashed](0,1)--(1,0);
          \draw(0,-1)--(1,0);
          \draw(0,-1)--(-1,0);
          \draw(-1.8,-1)node{$H_0=\{u=0\}$};
          \draw(-1.5,1)node{$\{\underline{u}=\underline{u}_*\}$};
          \draw(1.8,-1)node{$\underline{H}_0=\{\underline{u}=0\}$};
          \draw(3,1)node{$\text{Singular boundary}\ \{u=u_*\}$};
        \end{tikzpicture}
        \caption{Region of existence in Theorem \ref{MainThm}}
        \label{Fig5}
       \end{figure}

   Our setup is the characteristic initial value problem in $D_{u_*,\underline{u}_*}$ with initial data given on two null hypersurfaces $H_0$ and $\underline{H}_0$ intersecting at $S_{0,0}$ diffeomorphic to $\mathbb{T}^2$. We will follow the general notations in \cite{Christodoulou:2008nj}, see also \cite{DafermosBlackHole}, \cite{417492a798bd4c2abf778c47bb3969dc},
   \cite{https://doi.org/10.1002/cpa.21531}, \cite{Yu2011DynamicalFO}.

   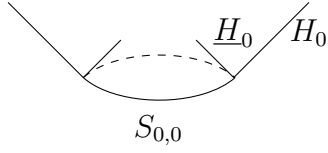
\begin{figure}[h]
     \centering
     \begin{tikzpicture}
       \draw(-1,-0.3) arc (210:330:1.165cm and 0.6cm);
       \draw[dashed](-1,-0.3) arc (150:30:1.165cm and 0.6cm);
       \draw(-1,-0.3)--(-2,0.7);
       \draw(1,-0.3)--(2,0.7);
       \draw(-1,-0.3)--(-0.5,0.2);
       \draw(1,-0.3)--(0.5,0.2);
       \draw(0,-1)node{$S_{0,0}=\mathbb{T}^2$};
       \draw(2,0.3)node{$H_0$};
       \draw(1,0.3)node{$\underline{H}_0$};
     \end{tikzpicture}
     \caption{The Basic Setup}
     \label{Fig6}
   \end{figure}
   
   We introduce a null frame $\{e_1,e_2,e_3,e_4\}$ adpated to a double null foliation $(u,\underline{u})$. Denote the constant $u$ hypersurfaces by $H_u$, the constant $\underline{u}$ hypersurfaces by $\underline{H}_{\underline{u}}$ and their intersections by $S_{u,\underline{u}}=H_u\cap\underline{H}_{\underline{u}}$.
   
   For $A\in\{1,2\}$, we let $\gamma_{AB}$ be a family of Riemannian metric on $\mathbb{T}^2$ parameterized by $(u,\underline{u})\in [0,u_*]\times[0,\underline{u}_*]$, with respect to local coordinates $\theta^A$. Correspondingly, we let $b^A$ be a vector field on $\mathbb{T}^2$, and let $\Omega$ be a positive function on $\mathbb{T}^2$. Consider a metric of the form
   \begin{align*}
     g=-2\Omega^2(du\otimes d\underline{u}+d\underline{u}\otimes du)
     +\gamma_{AB}(d\theta^A-b^Adu)\otimes(d\theta^B-b^Bdu).
   \end{align*}
   We further require that
   \begin{align*}
     \Omega|_{H_0\cup \underline{H}_0} &=1,\
     b^A|_{\underline{H}_0}=0.
   \end{align*}
   Define the vector fields
   \begin{align*}
     L'{}^\mu=-2
     g^{\mu\nu} \p_\nu  u,
     &\ \underline{L}'{}^\mu=-2
     g^{\mu\nu}\p_\nu \underline{u}.
   \end{align*}
   These are null and geodesic vector fields, where in coordinates
   \begin{align*}
     &L'=\Om^{-2}\p_{\ub},\
     \underline{L}'=\Om^{-2}(\p_u+b^A\p_A).
   \end{align*}
   Define
   \begin{align*}
     e_3=\Omega^2 \underline{L}',&\ e_4=L'
   \end{align*}
   to be the normalized pair 
   and
   \begin{align*}
     \underline{L}=\Omega^2\underline{L}', &\ L=\Omega^2 L'
   \end{align*}
   to be the so-called equivariant vector fields. Note that $u,\underline{u}$ are null variables, i.e.
   \begin{align*}
     g^{-1}(du,du)=0=g^{-1}(d\underline{u},d\underline{u}).
   \end{align*}
   Then $\theta^A$ is transported by $\underline{L}$ on $\underline{H}_0$ and $L$ everywhere, i.e.
   \begin{align*}
     L\theta^A=0,\
     \underline{L}\theta^A|_{\underline{H}_0}=0.
   \end{align*}
   Note that we have chosen $e_4$ to be geodesic. With this specific choice of frames, we will observe that $\omega=0$, as will be defined immediately below. This will be important later for the estimates, see Section \ref{SubsectionModifiedEstimates}. This is a different normalization as in \cite{luk2017weak}, though the arguments in \cite{luk2017weak} would have also worked with this different normalization.
   
   Denote the Weyl curvature tensor under the null frame $\{e_1,e_2,e_3,e_4\}$ by
   \begin{align*}
     & \alpha_{AB}=W(e_A,e_4,e_B,e_4),\
     \underline{\alpha}_{AB}=W(e_A,e_3,e_B,e_3),
     \\
     & \beta_A=\frac{1}{2}W(e_A,e_4,e_3,e_4),\
     \underline{\beta}_A=\frac{1}{2}W(e_A,e_3,e_3,e_4),
     \\
     & \rho=\frac{1}{4}W(e_4,e_3,e_4,e_3),\
     \sigma=\frac{1}{4}*W(e_4,e_3,e_4,e_3).
   \end{align*}
   We also define the Gauss curvature of the $2$-tori associated to the double null foliation to be $K$. Define also the following Ricci coefficients with respect to the null frame
   \begin{align*}
     & \chi_{AB}=g(D_Ae_4,e_B),\
     \underline{\chi}_{AB}=g(D_Ae_3,e_B),
     \\
     & \eta_A=-\frac{1}{2}g(D_3e_A,e_4),\
     \underline{\eta}_A=-\frac{1}{2}g(D_4e_A,e_3),
     \\
     & \omega=-\frac{1}{4}g(D_4e_3,e_4),\
     \underline{\omega}=-\frac{1}{4}g(D_3e_4,e_3),
     \\
     &\zeta_A=\frac{1}{2}g(D_Ae_4,e_3).
   \end{align*}
   Let $\hat{\chi}$ (resp. $\hat{\underline{\chi}}$) be the traceless part of $\chi$ (resp. $\underline{\chi}$).

   The data on $\underline{H}_0$ are given on $0\le u<u_*$ such that $\chib$ becomes singular as $u\rightarrow u_*$. More precisely, let $f:[0,u_*)\rightarrow\mathbb{R}$ be a smooth function such that $f(x)\ge0$ is decreasing and
   \begin{align*}
    &\int_{0}^{u_*}\frac{1}{f(x)^2}\mathrm{d}x<\infty.
   \end{align*}
   For simplicity, in the following, we fix $f$ to be $f(x)=(u_*-x)^{\frac{1}{2}}\log^p(\frac{1}{u_*-x})$ for $p>\frac{1}{2}$. Our main theorem shows local existence for a class of singular initial data with
   \begin{align*}
     &|\underline{\chi}(0,u)|\lesssim f(u)^{-2}.
   \end{align*}
   We construct a unique solution $(\mathcal{M},g)$ and $(v,\tau)$ to the Einstein--Euler equations in the region $0\le u < u_*,\ 0\le \underline{u}<\underline{u}_*$, where $u_*,\underline{u}_*\le\epsilon$.
   Here $(u,\underline{u})$ is a double null foliation for $(\mathcal{M},g)$
    and the metric $g$ takes the form as stated above.
   We further require $u_*$ to be sufficiently small such that
   \begin{align}\label{Eq1.5}
    &u_*\le \epsilon^2,\
    \int_{0}^{u_*}\frac{1}{f(x)^2}\mathrm{d}x<\epsilon^4.
   \end{align}
   Note that the second condition is more demanding in terms of smallness of $u_*$.
   In the following, we denote by $\ep$ the size of $\ub_*$, and by $\delta$ the size of $u_*$, where $\delta$ will be chosen significantly smaller than $\ep$.

   Define $\nabla$ to the induced Levi-Civita connection on the $2$-tori $S_{u,\ub}$ of constant $u$ and $\underline{u}$, i.e. $S_{u,\underline{u}}$ and $\nabla_3,\nabla_4,\slashed{\mathcal{L}}_\alpha$ to be the projections of the covariant derivatives $D_3,D_4$ and Lie derivatives $\mathcal{L}_\alpha$ to the tangent space of $S_{u,\underline{u}}$. Define the set of commutators $ Z \in\{\nabla_A,\epsilon\nabla_4\}$. 

   By prescribing $u_*$ to be sufficiently small as compared to the initial data, the spacetime reamins regular up to the weak null singularity. In particular, the fluid variables are free of shocks. We will see this in the estimate of the quadratic term in the equation of the fluid variables, see the third term in the proof of Proposition \ref{Prop4.4}. The following is a first version of the main result of this paper, the precise version can be found at Theorem \ref{SmoothExist} - \ref{ContinuousExist}:
   \begin{thm}\label{MainThm}
     For a class of singular characteristic initial data without any symmetry assumptions for the Einstein--Euler equations \eqref{Eq1} with a regular fluid profile and a singular geometric profile such that for some large $N$,
     \begin{align*}
       &\sum_{i\le N}|Z^i\hat{\underline{\chi}}|\sim (u_*-u)^{-1}\log^{-p}(\frac{1}{u_*-u}),\
       \text{for some }p>1,
     \end{align*}
     and for $\epsilon$ sufficiently small and $u_*,\underline{u}_*\le\epsilon$, there exists a unique smooth spacetime $(\mathcal{M},g)$ endowed with a double null foliation $(u,\underline{u})$ in $0\le u<u_*,\ 0\le \underline{u}<\underline{u}_*$, and unique smooth fluid represented by $(v,\tau)$, which satisfies the Einstein--Euler equations with the given data. Associated to $(\mathcal{M},g)$, there exists a coordinate system $(u,\underline{u},\theta^1,\theta^2)$ such that the metric and fluid variables $(v,\tau)$ extends continuously to the boundary, but the Christoffel symbols are not in $L^2$.
   \end{thm}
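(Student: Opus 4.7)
The plan is to extend the vacuum double null foliation scheme of \cite{luk2017weak} to the Einstein--Euler system by propagating weighted norms for the Ricci coefficients, Weyl curvature, and fluid variables $(v,\tau)$ simultaneously. First I would set up bootstrap assumptions analogous to those in \cite{luk2017weak}, which allow $\hat{\underline{\chi}}$ and $\underline{\alpha}$ to be singular in $u$ at the rate $f(u)^{-2}=(u_*-u)^{-1}\log^{-2p}\bigl(\tfrac{1}{u_*-u}\bigr)$, while requiring the remaining Ricci coefficients and Weyl components to enjoy better bounds with smallness provided by integration along the $\underline{u}$-direction over a length $\underline{u}_*\le\epsilon$. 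For the fluid I would also postulate as bootstrap that $(v,\tau)$, together with their angular, $\nabla_4$, \emph{and} $\nabla_3$ derivatives, are uniformly bounded; this is suggested by the observation that the ``bad'' null direction $e_3$ for the geometry is strictly timelike with respect to the acoustical metric $G^{\alpha\beta}$, so the fluid should not see the geometric singularity.

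Next, I would rederive the null structure equations and Bianchi identities in the chosen frame (with the convenient normalization $\omega=0$), now with the Ricci tensor sourced by the fluid via $\mathrm{Ric}_{\mu\nu}=T_{\mu\nu}-\tfrac12(g^{\alpha\beta}T_{\alpha\beta})g_{\mu\nu}$. The fluid contributions should be lower order compared with the $\underline{\alpha}$-driven singularity and hence absorbable by the same renormalizations as in \cite{luk2017weak}. Simultaneously, I would treat the Euler system as a coupled quasilinear system: the density satisfies the quasilinear wave equation $G^{\alpha\beta}\nabla_\alpha\nabla_\beta\tau=\mathrm{L.O.T.}$, while $v^\nu$ satisfies transport equations along its own integral curves coupled to $\tau$. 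Choosing $e_3$ as the energy multiplier for the acoustical wave equation yields a positive definite bulk term precisely because $0<p'<1$, so that even where $\hat{\underline{\chi}}\notin L^2_u$, the acoustical energy identity produces uniform bounds on $\tau$ and its derivatives of better strength than the corresponding geometric quantities; similar transport-type identities along $v^\nu$ propagate bounds for $v$.

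With these estimates the bootstrap closes: the geometric norms improve through the renormalized curvature estimates of \cite{luk2017weak}, now with additional sources from $\nabla T$ controlled by the fluid energies; the fluid norms improve through the acoustical energy estimates, with the coupling to the singular geometry contributing only $\underline{u}_*\le\epsilon$ smallness. For the continuous extension to $\{u=u_*\}$, I would integrate the transport equations for $\gamma_{AB},\Omega,b^A$ in $u$ to obtain continuous limits of the metric coefficients (as in \cite{luk2017weak}), and integrate the $\nabla_3$-derivative bounds on $(v,\tau)$ to obtain continuous, in fact H\"older, extensions of the fluid variables. The Christoffel symbols fail to be $L^2$ precisely because $\hat{\underline{\chi}}\notin L^2_u$, exactly as in the vacuum case.

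The main obstacle I anticipate is closing the top-order energy estimates where the geometric and fluid estimates couple. Commuting $\nabla_3$ into the Bianchi identities already requires in \cite{luk2017weak} a delicate renormalization to avoid losing a derivative of $\hat{\underline{\chi}}$; here the renormalized equations acquire sources from $\nabla_3 T$, which depend on $\nabla_3 v$ and $\nabla_3 \tau$ and must be estimated at precisely the right weight. The strictly subluminal sound speed is decisive: the acoustical energy estimate supplies a positive definite spacetime integral controlling $\nabla_3 \tau$ that would be absent if $p'=1$, and this integral absorbs the dangerous cross terms between the fluid and the singular $\hat{\underline{\chi}}$. Navigating this interplay --- subluminality versus the $\underline{\alpha}$-singularity --- is simultaneously what permits the estimates to close and what forces the stronger continuity conclusion for $(v,\tau)$ announced in the theorem.
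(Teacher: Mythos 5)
Your overall architecture (weighted bootstrap norms, renormalized curvature estimates as in \cite{luk2017weak}, smallness from $\underline{u}$-integration, continuity of the metric from $u$-transport, non-$L^2$ Christoffel symbols from $\hat{\underline{\chi}}\notin L^2_u$) matches the paper, but there is a genuine gap in how you treat the fluid. You propose to bootstrap that $(v,\tau)$ together with their $\nabla_3$ derivatives are \emph{uniformly bounded}, on the grounds that the fluid "should not see the geometric singularity." This cannot be closed: writing the Euler equations in the $e_3=\partial_u+b^A\partial_A$ direction, the right-hand side contains the singular connection coefficients $\underline{\chi}$, $\underline{\omega}$ contracted with $v$ (e.g. terms of the schematic form $\underline{\chi}\cdot v$, $\mathrm{tr}\underline{\chi}\, v$), which are of size $f(u)^{-2}=(u_*-u)^{-1}\log^{-2p}(\tfrac{1}{u_*-u})$. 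Hence $\partial_u(v,\log\tau)$ necessarily blows up at exactly this rate; it is only in $L^1_u$, not $L^\infty_u$. The paper encodes this by giving the transversal-derivative norms $\tilde{\mathcal{F}}_{4,2}$ and $\mathcal{E}_{2,\infty}$ the weights $f(u)$ and $f(u)^2$ on $\partial_u\partial_\nu^i(v,\log\tau)$, and proves the blow-up rate rather than boundedness. Subluminality ($0<p'<1$) enters differently than you use it: it makes $H_u$ \emph{spacelike for the acoustical metric}, so the energy estimates control, without degeneration, all derivatives of the fluid \emph{tangential} to $H$ (angular and $\partial_{\underline{u}}$), and the transversal $\partial_u$ derivative is then \emph{recovered from the equation} (non-characteristic), inheriting the $f(u)^{-2}$ singularity of the metric coefficients. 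Continuity of $(v,\tau)$ up to $\{u=u_*\}$ then follows by integrating an $L^1_u$ right-hand side, not from uniform $\nabla_3$ bounds; in particular your stronger claim of a H\"older extension is not justified for this class of $f$, since $\int_u^{u_*}f^{-2}\sim \log^{-(2p-1)}(\tfrac{1}{u_*-u})$ is not H\"older.

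Two further methodological divergences are worth flagging. First, you estimate $\tau$ through the second-order acoustical wave equation with $e_3$ as multiplier; the paper instead runs first-order energy currents $\dot{J}^\mu=p'v^\mu\dot{L}^2+2p'\dot{v}^\mu\dot{L}+v^\mu\dot{v}^\alpha\dot{v}_\alpha$ for the commuted first-order system in $(\dot{L},\dot{v})$, whose fluxes on $H$ and $\underline{H}$ are coercive with constant $(1-(p')^{1/3})$; the second-order formulation forces derivatives of the acoustical metric (hence of $g$ and $v$) into the commuted source terms and is harder to keep consistent with the weak-null weights. Second, you speak of commuting $\nabla_3$ into the Bianchi identities; the paper deliberately commutes only with the regular derivatives $\partial_\nu\in\{\partial_A,\epsilon\partial_{\underline{u}}\}$ tangent to $H$ (with the normalization $\omega=0$ making $\partial_{\underline{u}}\underline{\omega}$ admissible), precisely to avoid differentiating the singular quantities in the singular direction; $\partial_u$ information is recovered afterwards from transport/elliptic relations. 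Finally, your assertion that the fluid sources in the curvature equations are simply "lower order and absorbable" glosses over the real issue the paper must handle: $D\,\mathrm{Ric}$ contains $\partial_u$ of fluid variables and singular Ricci coefficients, which are only $L^1_u$, and the proof works only because these terms appear either against components whose energy lives on $H$ (sparing the $u$-integration) or in the already-weighted $(K,\check{\sigma},\underline{\beta})$ estimates.
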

   
   \begin{rem}
     Theorem \ref{MainThm} shows that the Euler part does not form shocks in all of $\mathcal{M}$. This is despite, as we will show in the proof, the fact that the first derivatives of the fluid variables grow indefinitely as $u\to u_*$. In particular, despite the growth of the first derivatives of the fluid variables, there is insufficient time to form shocks before the weak null singularity.
   \end{rem}
   \begin{rem}
     While Theorem \ref{MainThm} is formulated so that both $u_*$ and $\ub_*$ are required to be small, one may expect that only smallness in $u_*$ is needed. In other words, for any $\ub_* >0$, there exists $\ep_{d,D,\ub_*}$ possibly smaller depending on $\ub_*$ such that the same result holds if the conditions \eqref{Eq1.5} are satisfied. To achieve this, one would use the structure of the Einstein equations as in \cite{10.1093/imrn/rnr201}. For the fluid part, one may expect to always gain a smallness constant due to its slower speed.
   \end{rem}
   
    Theorem 1 makes a statement about the metric being singular in terms of the Christoffel symbols not being in $L^2$. The statement is, however, not geometric, as the blowup is only guaranteed to occur in the $(u,\ub,\th^1,\th^2)$ coordinate system. The recent breakthrough of Sbierski \cite{Sbierski2024} shows that in the setting of a weak null singularity, the blowup of suitable curvature components can guarantee the Lipschitz inextendibility of the metric (in any coordinate systems). Using the results of \cite{Sbierski2024}, we prove that under an additional assumption of the initial data for the curvature, the spacetime metric is Lipschitz inextendible. Notice that the condition in the following corollary is an open condition.

   \begin{cor}\label{Rem1.1}
     Under the conditions of Theorem \ref{MainThm}, if the initial data also satisfy
     \begin{align}\label{CurvatureBlowup}
       |\underline{\alpha}_{11}|(u,0),\
       |\underline{\alpha}_{22}|(u,0) & \gtrsim \frac{1}{(u_*-u)^2\log^p(\frac{1}{u_*-u})},\
       |\underline{\alpha}_{12}|(u,0) \lesssim \frac{1}{(u_*-u)^2\log^{2p}(\frac{1}{u_*-u})},
     \end{align}
     then the metric does not admit a $C^{0,1}_{\mathrm{loc}}$-extension across the weak null singularity.
   \end{cor}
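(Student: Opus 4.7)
The plan is to upgrade the Main Theorem's continuous extension statement into a non-extendibility theorem by propagating the assumed initial lower bounds on $\underline{\alpha}$ into the bulk and then verifying the hypotheses of the non-extendibility criterion of \cite{Sbierski2024}, which (in the setting of weak null singularities) obstructs a $C^{0,1}_{\mathrm{loc}}$-extension of $g$ across $\{u=u_*\}$ provided one can exhibit a non-integrable, non-degenerate lower bound on the tidal curvature $\underline{\alpha}$ along ingoing null geodesics transverse to the singular boundary.

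The first step is to propagate the pointwise bounds \eqref{CurvatureBlowup} from $\{\underline{u}=0\}$ throughout $D_{u_*,\underline{u}_*}$. The appropriate null Bianchi equation is a transport equation for $\nabla_4\underline{\alpha}$ whose right-hand side is schematically a sum of products of Ricci coefficients with curvature and angular derivatives of $\underline{\beta}$, now augmented by matter terms quadratic in $v$ and first order in $\tau,p$ coming from the Einstein--Euler coupling. Every term on this right-hand side is already controlled by Theorem \ref{MainThm}. Integrating in $\underline{u} \in [0, \underline{u}_*]$ with $\underline{u}_* \le \epsilon$ produces an error of size at most $O(\epsilon)\cdot (u_*-u)^{-2}\log^{-p}(1/(u_*-u))$, which is strictly subleading to the initial lower bound; thus for $\epsilon$ sufficiently small each of the bounds in \eqref{CurvatureBlowup} survives, with constants only slightly degraded, throughout $D_{u_*,\underline{u}_*}$. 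This is the sense in which \eqref{CurvatureBlowup} is a manifestly open condition.

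Next, I will convert these propagated component bounds into a frame-invariant statement suitable for \cite{Sbierski2024}. Because $\underline{\alpha}$ is a trace-free symmetric $2$-tensor on $S_{u,\underline{u}}$, the diagonal lower bounds combined with the sharper upper bound on the off-diagonal component imply that both eigenvalues of $\underline{\alpha}$, viewed as an endomorphism of $TS_{u,\underline{u}}$, satisfy $|\lambda_i| \gtrsim (u_*-u)^{-2}\log^{-p}(1/(u_*-u))$, with principal directions within angle $O(\log^{-p})$ of $\{e_1,e_2\}$. Consequently $|\underline{\alpha}|_\gamma$ admits a lower bound of the same size, and along every ingoing null ray at fixed $(\underline{u},\theta)$,
\begin{align*}
\int_0^{u_*} |\underline{\alpha}|_\gamma (u,\underline{u},\theta)\, \mathrm{d}u \;\gtrsim\; \int_0^{u_*} \frac{\mathrm{d}u}{(u_*-u)^2 \log^p(1/(u_*-u))} \;=\; +\infty.
\end{align*}
This is precisely the non-integrable tidal-curvature input required by \cite{Sbierski2024}; the anisotropy condition on $\underline{\alpha}_{12}$ ensures the eigenvalues of $\underline{\alpha}$ do not cancel over small angular patches and hence cannot be absorbed into a Lipschitz frame change.

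The main technical obstacle I anticipate is aligning our double-null-frame computations with the parallelly-propagated-frame setup in which \cite{Sbierski2024} is phrased. The $O(2)$-rotation relating $\{e_1,e_2\}$ to a frame parallelly transported along an ingoing null geodesic is governed by the Ricci coefficients $\eta,\underline{\eta},\underline{\omega}$, all of which, by Theorem \ref{MainThm}, admit uniform continuous extensions to the boundary and in particular are bounded. Consequently the lower bound on $\underline{\alpha}$ in the double null frame transfers, up to a bounded continuous rotation, to the same order of lower bound in the parallel frame, so the hypothesis of \cite{Sbierski2024} is satisfied and the $C^{0,1}_{\mathrm{loc}}$-inextendibility of $g$ across $\{u=u_*\}$ follows.
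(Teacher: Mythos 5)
Your overall strategy (propagate the lower bound on $\underline{\alpha}$ off $\underline{H}_0$, then feed it into the Sbierski criterion) is the right shape, but two steps as written do not go through. First, the pointwise propagation of \eqref{CurvatureBlowup} into the bulk is not justified by the available estimates. The $\nabla_4\underline{\alpha}$ equation \eqref{Eq3.30} has source terms $\slashed{\partial}\underline{\beta}$, $\psi_{\underline{H}}\cdot(K,\check{\sigma})$, $\psi_{\underline{H}}^2\psi_H$, $\partial\slashed{\mathrm{Ric}}+\Gamma\slashed{\mathrm{Ric}}$, and the a priori bounds control these only through weighted $L^2_u$-type norms (e.g.\ $f(u)\slashed{\partial}\underline{\beta}$, $f(u)\psi_{\underline{H}}$ in $L^\infty_{\underline{u}}L^2_u$); there is no pointwise-in-$u$ bound available, so your claim that integrating in $\underline{u}$ produces an error of size $O(\epsilon)(u_*-u)^{-2}\log^{-p}$ at each fixed $u$ is an assertion, not a consequence of Theorem \ref{MainThm}. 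The paper sidesteps this by propagating the lower bound only in the integrated sense: Gr\"onwall on \eqref{Eq3.30} (using $\omega=0$) gives $\int_0^{u'}|\underline{\alpha}(u,\underline{u}')-\underline{\alpha}(u,0)|\,du\le C(\Delta_1,\mathcal{E}_{2,\infty})\exp(\tilde{\mathcal{O}}_{4,2})\,\epsilon^{\frac12}(\mathcal{O}_{1,\infty}+\mathcal{R}_3+\mathcal{E}_{2,\infty})$, which is finite, and hence the divergence of $\int_0^{u'}|\underline{\alpha}_{11}|\,du$ survives; that integrated statement is all the criterion needs, and it is what you should prove instead.

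Second, you are applying a different (and not obviously available) form of the inextendibility criterion. The result actually invoked, \cite[Theorem 3.13]{Sbierski2024} as quoted in Section \ref{Section6}, asks for sup/integrability conditions on $\omega,\eta,\underline{\eta},\chi,\slashed{\partial}b$ and, crucially, the divergence of the \emph{spacetime} integrals $\int R(\hat{X}_1,\hat{X}_2,\hat{X}_3,\hat{X}_4)$ for \emph{all} vector fields $\epsilon$-close to a fixed continuous frame. Verifying this stability under perturbation is the substance of the proof: one must show that the contributions of every other curvature component are either of finite total integral (this is where $\mathcal{R}_3$ controls $K,\check{\sigma},\alpha,\beta,\underline{\beta}$ and the fluid controls the Ricci part) or dominated, with a small factor $\epsilon$, by the divergent $\underline{\alpha}_{11}$ integral; the hypothesis on $\underline{\alpha}_{12}$ and the trace identity $|\underline{\alpha}_{22}|\le C(|\underline{\alpha}_{11}|+|\underline{\alpha}_{12}|)$ are used exactly here. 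Your proposal replaces this with a "non-integrable tidal curvature along ingoing geodesics in a parallelly propagated frame" formulation and a frame-comparison governed by $\eta,\underline{\eta},\underline{\omega}$ "which are bounded" --- but $\underline{\omega}\in\psi_{\underline{H}}$ is one of the \emph{singular} coefficients, of size $f(u)^{-2}$ (only its $u$-integral is small), so that comparison as stated is false, and in any case the quantitative comparison against the remaining curvature components under $\epsilon$-perturbations of the frame is missing. To complete the argument you should test the curvature on $(e_1,e_3,e_1,e_3)$, use the finiteness of $\mathcal{R}_3$ and the propagated integrated bounds \eqref{Eq6.2}--\eqref{Eq6.3} to show $|\int R(\hat{e}_1,\hat{e}_3,\hat{e}_1,\hat{e}_3)|\gtrsim\int (C_1-\epsilon C_2)(u_*-u)^{-2}\log^{-p}\to\infty$ uniformly over $\epsilon$-perturbations, as the paper does.
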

   For the proof of Corollary \ref{Rem1.1}, see Section \ref{Section6}.
   
   Our main theorem, Theorem \ref{MainThm}, can be stated precisely as a combination of Theorem \ref{SmoothExist}
   and Theorem \ref{ContinuousExist}. Theorem \ref{SmoothExist} shows that in double null coordinates, there exists a smooth spacetime and fluid variables solving the Einstein--Euler system with singular characteristic initial data before touching the singular boundary. Theorem \ref{ContinuousExist} shows that the metric components and the fluid variables can further extend continuously to the singular boundary.
   \begin{thm}
     \label{SmoothExist}
     Consider the characteristic initial value problem for Einstein--Euler system with data $g,v,\tau$ that are smooth and compatible at the corner on $H_0\cap\{0\le \underline{u}<\underline{u}_*\}$ and $\underline{H}_0\cap\{0\le u<u_*\}$ such that the following holds:
     \begin{itemize}
       \item The initial metric $\gamma_0$ on $S_{0,0}$ obeys
           \begin{align*}
             &0<d\le\det\gamma_0\le D,
           \end{align*}
           and
           \begin{align*}
             &|(\frac{\partial}{\partial\theta})^I\gamma_{BC}|\le D.
           \end{align*}
       \item The metric on $H_0,\underline{H}_0$ satisfies the gauge conditions
       \begin{align*}
         &\Omega=1\text{ on }H_0\text{ and }\underline{H}_0,
       \end{align*}
       and
       \begin{align*}
       &b^A=0\text{ on }\underline{H}_0.
       \end{align*}
       \item Ricci coefficients and fluid variables on the initial hypersurface $H_0$ verify
       \begin{align*}
         &\sum_{|i|\le 5}\sup_{\underline{u}}\| Z ^i\chi\|_{L^2(S_{0,\underline{u}})}
         +\sum_{|i|\le 4}\sup_{\underline{u}}\| Z ^i(\zeta,\mathrm{tr}\underline{\chi},v,\log\tau)\|_{L^2(S_{0,\underline{u}})}
         \le D.
       \end{align*}
       \item Ricci coefficients and fluid variables on the initial hypersurface $\underline{H}_0$ verify
       \begin{align*}
         &\sum_{|i|\le 5}\sup_{\underline{u}}\|f(u)^2 Z ^i\underline{\chi}\|_{L^2(S_{0,\underline{u}})}
         +\sum_{|i|\le 4}\sup_{\underline{u}}\| Z ^i(\zeta,\mathrm{tr}\chi,v,\log\tau)\|_{L^2(S_{0,\underline{u}})}
         \le D.
       \end{align*}
     \end{itemize}
     Then for $\epsilon=(1-(\sup p')^{\frac{1}{3}})^2\epsilon_{d,D}$, where $\epsilon_{d,D}$ (only depends on $d,D$) is
     sufficiently small, and $u_*,\ub_*$ sufficiently small such that $u_*,\underline{u}_*\le\epsilon^2$ and $\|f(u)^{-1}\|_{L^2_{u}}^2<\epsilon^4$, there exists a unique spacetime $(\mathcal{M},g)$ endowed with a double null foliation $(u,\underline{u})$ in $\{(u,\ub):0\le u<u_*,\ 0\le\underline{u}<\underline{u}_*\}$ and fluid represented by $(v,\tau)$, which is a solution to the Einstein--Euler equations \eqref{Eq1} with the given data. Moreover, the spacetime and fluid variables remains smooth in $0\le u<u_*$ and $0\le \underline{u}<\underline{u}_*$.
   \end{thm}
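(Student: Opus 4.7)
The plan is to argue by a bootstrap/continuity argument for the smooth solution on $D_{u_\#,\underline{u}_*}$ with $u_\# \nearrow u_*$, following the general architecture of Luk 2017 adapted to the coupled Einstein--Euler system. I would carry weighted norms in which only $\underline{\chi}$ and (at the curvature level) $\underline{\beta}$ receive powers of $f(u)^2$, while all other Ricci coefficients, all curvature components except $\underline{\alpha}$ (which is avoided via the usual Luk renormalization of the Bianchi pair), and all fluid quantities are controlled in unweighted $L^2(S_{u,\underline{u}})$, $L^2(H_u)$, $L^2(\underline{H}_{\underline{u}})$ and $L^\infty$ norms, commuted with up to five $\{\partial_A,\epsilon\partial_{\underline{u}}\}$ derivatives in accordance with the data assumptions. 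The task is to improve the bootstrap constants by powers of the smallness parameter $\epsilon$ so that the solution extends smoothly up to but not including $u=u_*$.

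For the geometric sector the argument proceeds as in the vacuum case with two modifications. First, the matter enters the Einstein equations as lower order source terms in the null structure and Bianchi equations; once the fluid bootstrap is in force these sources are merely bounded (not singular) in $u$, hence integrate harmlessly against the weights. Second, the gauge $\omega=0$ eliminates the $\omega\alpha$ coupling from the transport equation for $\alpha$ (cf.\ Section \ref{SubsectionModifiedEstimates}), which is what allows the $H_u$ flux of $\alpha$ to be closed without an additional power of $f$. The renormalized Bianchi pair again avoids $\underline{\alpha}$, which now carries a fluid Ricci contribution, while Gauss and Codazzi provide elliptic control of angular derivatives on $S_{u,\underline{u}}\simeq\mathbb{T}^2$.

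The fluid sector exploits the acoustical metric $G^{\alpha\beta}=p'g^{\alpha\beta}+(p'-1)v^\alpha v^\beta$: since $0<p'<1$ the sound cones lie strictly inside the light cones with quantitative gap $1-\sup p'>0$. I would derive acoustical energy currents for $\tau$ using a $G$-timelike multiplier, together with transport estimates for $v$ projected on the frame, commuted with $\{\partial_A,\epsilon\partial_{\underline{u}}\}$ up to four orders. Because the gravitational null hypersurfaces $H_u$ and $\underline{H}_{\underline{u}}$ are strictly $G$-spacelike, the corresponding fluxes are coercive with a lower bound proportional to $1-\sup p'$, so data on $H_0\cup\underline{H}_0$ propagate to uniform $L^2(S_{u,\underline{u}})$ bounds for the fluid variables all the way up to $u=u_*$; Sobolev embedding on $\mathbb{T}^2$ then yields the required $L^\infty$ control. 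The choice $\epsilon=(1-(\sup p')^{1/3})^2\epsilon_{d,D}$ quantifies the transversality and closes this step uniformly in the bootstrap.

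The main obstacle is the coupling: commuting the Euler equations with $\epsilon\partial_{\underline{u}}$ produces commutator terms involving $\underline{\chi}$, which is precisely the singular geometric quantity of size $f^{-2}$. The saving structure is the hypothesis $\int_0^{u_*} f^{-2}\,du<\epsilon^4$, which makes the $u$-integrated bad term Gr\"onwall-absorbable provided the $\epsilon\partial_{\underline{u}}$ weight on the commutator is used. The closing scheme is therefore a staged bootstrap: assuming the geometric bounds, close the fluid bounds; then use the resulting fluid $L^\infty$ bounds to close the Einstein sources and hence the geometric sector; iterate. Uniqueness within the smooth class follows by applying the same weighted energy/acoustical estimates to the difference of two solutions and invoking Gr\"onwall.
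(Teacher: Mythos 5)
Your overall architecture (bootstrap with $f$-weighted norms, Luk's renormalization $(K,\check\sigma)$ to avoid $\underline{\alpha}$, acoustical-type energy currents with coercivity $\propto 1-(\sup p')^{1/3}$ on the $g$-null but $G$-spacelike hypersurfaces, smallness from $\int_0^{u_*}f^{-2}\,du$) matches the paper, but there is a genuine gap in how you treat the coupling. You assert that the fluid enters the geometric equations only as ``merely bounded'' lower-order sources and that \emph{all} fluid quantities can be carried in unweighted norms commuted only with $\{\partial_A,\epsilon\partial_{\underline u}\}$. That is not enough to close the curvature estimates: the renormalized Bianchi/structure equations contain \emph{first derivatives} of the Ricci tensor, i.e.\ terms of the form $\partial\mathrm{Ric}+\Gamma\,\mathrm{Ric}$, and these involve $\partial_u$ of the fluid variables, which blow up like $f(u)^{-2}$ (they are only $L^1_u$). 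The paper has to introduce the weighted norms $\tilde{\mathcal F}_{4,2}$ (for $f(u)\partial_u\partial_\nu^i(v,\log\tau)$) and the weighted part of $\mathcal E_{2,\infty}$, recover these $\partial_u$-derivatives algebraically from the Euler equations using that $\partial_u$ is non-characteristic for the acoustical metric (Proposition \ref{SingularEnergy}), and then check that the singular $\partial\mathrm{Ric}$ terms appear only in positions compatible with the null structure — either paired with quantities whose flux is taken on $H$ (sparing the $u$-integration), or inside the already $f$-weighted $(K,\check\sigma,\underline{\beta})$ energy estimate (Propositions \ref{Prop4.6.1}, \ref{Prop4.8.1}, \ref{Prop3.12}). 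Without this ingredient your scheme, as described, cannot bound the top-order error terms in the energy estimates for $\mathcal R_3$ and $\tilde{\mathcal O}_{4,2}$. Relatedly, your diagnosis of where the singular geometry hits the fluid is off: commuting with $\epsilon\partial_{\underline u}$ produces $\chi$-type (regular) commutators; the singular $\underline\chi$, $\underline\omega$ enter through the undifferentiated Christoffel symbols in the Euler equations and their tangential derivatives, and the reason for the normalization $\omega=0$ (geodesic $e_4$) is to make $\partial_{\underline u}$-commutations of $\omega,\underline\omega$ and the fluid transport admissible (avoiding $\partial_4\log\Omega$ at top order), not to delete an $\omega\alpha$ term from the $\alpha$-equation.

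A second, smaller issue is the continuation mechanism. For characteristic data one cannot simply ``extend in $u_\#\nearrow u_*$'': the paper reduces Theorem \ref{SmoothExist} to the a priori estimates via a last-slice argument on the spacelike leaves $\Sigma_\delta=\{u+\underline u=\delta\}$, combining characteristic local existence near the corner spheres $S_{0,\delta}$, $S_{\delta,0}$ with Cauchy local existence for Einstein--Euler (with $p'<1$) from $\Sigma_{\delta_*}$, the a priori bounds \eqref{Theorem4(2)} guaranteeing the $H^s$ control needed to restart; your proposal should specify some such mechanism, and note that the energy estimates acquire extra boundary terms on $\Sigma_\delta$ with a favorable sign.
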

   
   \begin{rem}\label{Constructinginitialdata}
     There exist initial data $g,v,\tau$ prescribed on $H_0\cap\{0\le \underline{u}<\underline{u}_*\}$ and $\underline{H}_0\cap\{0\le u<u_*\}$ satisfying all the assumptions in the main theorem. See Proposition \ref{Prop3.1}.
   \end{rem}
   
   According to Theorem \ref{SmoothExist}, the metric components and fluid variables remains smooth at the boundary $\underline{H}_{\underline{u}_*}$. While the weight $f$ allows the spacetime to be singular, the spacetime metric and fluid variables can be extended beyond the singular hypersurface 
   $H_{u_*}$ continuously.

   \begin{thm}
     \label{ContinuousExist}
     Under the assumptions of Theorem \ref{SmoothExist}, the metric components of $g$ and fluid variables $(v,\tau)$ can be extended continuously up to and beyond the singular boundary $
     H_{u_*}$. Moreover, the induced metric and null second fundamental form on the interior of the limiting hypersurface $
     H_{u_*}$ is regular. More precisely, in double null frame, the metric components $\gamma, b, \Omega$ satisfy the following estimates
     \begin{align*}
       &\sum_{|i|\le4}\sup_{0\le u\le u_*}
       \|\slashed{\nabla}^i(\gamma,b,\Omega,v,\tau)\|_{L^2(S_{u,\underline{u}_*})}\le C.
     \end{align*}
     Moreover, for any fixed $u < u_*$, we have the following bounds for the Ricci coefficients
     \begin{align*}
       &\sum_{j\le1}\sum_{|i|\le 3-j}
       \|\nabla_3^j Z ^i(\hat{\underline{\chi}},\mathrm{tr}\underline{\chi},\underline{\omega},\eta,\underline{\eta})
       \|_{L^2(S_{u,\underline{u}_*})}\le C_u.
     \end{align*}
     Similar regularity statements hold on $H_{u_*}$.
   \end{thm}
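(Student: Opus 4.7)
The plan is to argue Cauchyness in suitable topologies as $u\to u_*$ (and $\underline u\to \underline u_*$) by integrating the null transport equations of the double null foliation and using the uniform weighted bounds supplied by Theorem \ref{SmoothExist}. I view Theorem \ref{ContinuousExist} as consisting of three logically separate statements: continuous extension of $(\gamma,b,\Omega,v,\tau)$ to and beyond the terminal boundaries, $L^2$-Sobolev regularity on the interior of $\underline{H}_{\underline{u}_*}$ and $H_{u_*}$, and higher regularity of the Ricci coefficients for each fixed $u<u_*$.

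For the geometric quantities I would first apply Sobolev embedding on the tori $S_{u,\underline u}$ to the weighted higher-order bounds of Theorem \ref{SmoothExist}, producing $L^\infty$ control of $(\gamma,b,\Omega)$ and the weighted bound $|\underline\chi|+|\underline\omega|\lesssim f(u)^{-2}$. Schematically the metric form given in the introduction yields $u$-transport equations
\begin{align*}
&\partial_u\gamma_{AB}\sim \Omega^2\,\underline\chi_{AB}+\mathrm{l.o.t.},\qquad \partial_u\Omega\sim \Omega^2\,\underline\omega,\qquad \partial_u b^A\sim \Omega^2(\eta,\underline\eta,\zeta),
\end{align*}
whose right-hand sides are pointwise bounded by $Cf(u)^{-2}$. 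Since $\int_0^{u_*}f^{-2}\,du<\infty$, for any sequence $u_n\to u_*$ the values $(\gamma,b,\Omega)(u_n,\underline u,\theta)$ form a $C^0$ Cauchy sequence, defining a continuous limit. Commuting the same equations with $\slashed\partial^i$ for $|i|\le 4$ and closing in $L^2(S)$ gives the stated Sobolev bound on the limiting hypersurface; extension across $\underline H_{\underline u_*}$ is simpler, since $\chi$ carries no singular weight in that direction.

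The principal obstacle is the continuous extension of the fluid $(v,\tau)$, which couples to the singular coefficient $\underline\chi$ through the Euler equations. My plan is to recast these equations so that their principal parts use the acoustical metric $G^{\alpha\beta}=p'g^{\alpha\beta}+(p'-1)v^\alpha v^\beta$. Because $0<p'<1$, the $G$-null cones lie strictly inside the $g$-null cones; in particular each metric null hypersurface $H_u$ is \emph{spacelike} with respect to $G$, so $G$-based energy estimates with boundary contributions on $H_u$ are coercive, with a quantitative margin controlled by $\sup p'<1$. This matches the explicit smallness $\epsilon=(1-(\sup p')^{1/3})^2\epsilon_{d,D}$ appearing in Theorem \ref{SmoothExist}. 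The dangerous contributions of $\underline\chi$ in the Euler equations appear coupled to particular frame components of $v$, and the strict subluminality $1-p'>0$ furnishes the margin needed to absorb $\int_0^{u_*}|\underline\chi|\,du<\infty$ into a Grönwall argument closable up to $u=u_*$. Cauchyness of $(v,\tau)$ in $C^0$ then follows by reintegrating the transport equations along the fluid flow and using the same integrability.

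Given the continuous extensions, the induced metric and null second fundamental forms on $\underline H_{\underline u_*}$ and $H_{u_*}$ are read off from the transport equations, inherit the order-four $L^2$ Sobolev control, and the $u$-dependent bounds for $\hat{\underline\chi},\mathrm{tr}\underline\chi,\underline\omega,\eta,\underline\eta$ simply reflect the singular $f^{-2}$ weight which, for each fixed $u<u_*$, is finite. Extension \emph{beyond} the boundaries is obtained by posing a new characteristic initial value problem from the limiting hypersurfaces, which by construction satisfy the hypotheses needed to re-apply a version of Theorem \ref{SmoothExist}. The main technical difficulty I foresee lies in propagating top-order tangential regularity of $(v,\tau)$ through a region where $\underline\chi$ is borderline non-integrable in energy-type estimates — exactly the place where the strict inequality $p'<1$ must be exploited quantitatively rather than as a small perturbation.
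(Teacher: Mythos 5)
Most of your plan coincides with the paper's route: continuity of $(\gamma,b,\Omega)$ toward $\{u=u_*\}$ is obtained by integrating the metric transport equations, whose right-hand sides are bounded pointwise by $f(u)^{-2}$ plus regular terms, and using $\int_0^{u_*}f^{-2}\,du<\infty$; the Sobolev bounds follow by commuting with tangential derivatives; and for the fluid the decisive input is a pointwise bound on the transversal derivative, $|f(u)^{2}\partial_u\partial_\nu^{\le 1}(v,\log\tau)|\lesssim 1$, which comes from solving the Euler equations for the $e_3$-derivatives using that $H_u$ is non-characteristic for the acoustical metric (the factor $(v^3(1-p'))^{-1}$). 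Note, however, that in the paper this last bound is not re-derived in the proof of Theorem \ref{ContinuousExist}: it is already contained in the a priori estimates ($\mathcal{E}_{2,\infty}$, $\tilde{\mathcal{F}}_{4,2}$ of Propositions \ref{SingularEnergy}--\ref{SingularEnergy2} and the energy-current estimates of Propositions \ref{Prop4.3}--\ref{Prop4.4}), so your plan to set up fresh acoustical-metric energy estimates and a Gr\"onwall absorption of $\int|\underline{\chi}|\,du$ inside this proof duplicates work you are entitled to quote, but it is not wrong; continuity then follows, as in the paper, from Proposition \ref{PropTransport} applied to \eqref{Eq2.51}--\eqref{Eq2.52} (or equivalently your integration along the flow of $v$, which is transversal since $v^3$ is bounded below).

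The genuine gap is your treatment of the extension \emph{beyond} the boundaries: you propose to pose a new characteristic initial value problem from the limiting hypersurfaces and ``re-apply a version of Theorem \ref{SmoothExist}.'' This step fails. The data induced on $H_{u_*}$ do not satisfy the hypotheses of Theorem \ref{SmoothExist}: only the induced metric and the tangential null second fundamental form $\chi$ are regular there, while $\hat{\underline{\chi}}$, $\underline{\omega}$ and generically $\mathrm{tr}\underline{\chi}$ (equivalently, the transversal derivatives of $\gamma$ and $\Omega$) blow up like $f(u)^{-2}$ as $u\to u_*$, and the assumptions of Theorem \ref{SmoothExist} require exactly such transversal quantities (e.g.\ $\mathrm{tr}\underline{\chi}$, $\zeta$) to be bounded in $L^2(S)$ on the initial hypersurfaces. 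Worse, if the limiting data did satisfy those hypotheses, one could continue the solution with Christoffel symbols locally in $L^2$, contradicting the weak-null-singularity statement of Theorem \ref{MainThm} and the $C^{0,1}_{\mathrm{loc}}$-inextendibility of Corollary \ref{Rem1.1}. The claim ``up to and beyond'' is weaker than what you are trying to prove: following \cite{luk2017weak}, once $(\gamma,b,\Omega,v,\tau)$ are shown to extend continuously (uniformly in $(\underline{u},\theta)$) to the closed domain, one extends these components continuously, but not as a solution of the Einstein--Euler system, to $u>u_*$ in an arbitrary fashion keeping $g$ Lorentzian; no new existence theorem is invoked or available.
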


   \subsection{Main Ideas of the Proof}The proof in this paper extends the arguments in \cite{luk2017weak} for the vacuum problem to the Einstein--Euler system. Introduction of the fluid requires modifying the proof in \cite{luk2017weak}. We show that the fluid variables are less singular and in turn exploit this to close the estimates for the coupled problem.

   \subsubsection{Vacuum Problem}
   We first recall the construction of weak null singularity in the vacuum case by Luk \cite{luk2017weak}. The main ingredients of the proof in \cite{luk2017weak} are energy estimates for renormalized curvature components and weighted $L^2$ estimates.
   Denoting by $\Gamma$ a general Ricci coefficient and $\Psi$ a general perpendicular curvature component with respect to a null frame adapted to the double null foliation. According to the estimates that they satisfy, we decompose $\Psi$ into $\Psi_H\in\{\rho,\sigma,\alpha,\b\},\Psi_{\Hb}\in\{\rho,\sigma,\ab,\bb\}$. Then the standard approach to obtain a priori bounds couples the $L^2$ energy estimates for the curvature components
   \begin{align}
   \label{Eq1.3}
     &\int_H\Psi_H^2+\int_{\underline{H}}\Psi_{\Hb}^2\le\text{Data}+\iint \Gamma\Psi\Psi,
   \end{align}
   and the estimates for the Ricci coefficients from transport equations
   \begin{align}
   \label{Eq1.4}
   \begin{split}
     &\nabla_3\Gamma=\Psi+\Gamma\Gamma,\\
     &\nabla_4\Gamma=\Psi+\Gamma\Gamma.
   \end{split}
   \end{align}


   However, in the setting of weak null singularities in the $u$ direction, some of the Ricci coefficients and curvature components are singular. For the Ricci coefficients, we group them into three groups, namely $\psi$, $\psi_H$, $\psi_{\underline{H}}$, according to the estimates that they obey. In particular, $\psi_{\underline{H}} \in \{ \hat{\underline{\chi}}, \mathrm{tr}\underline{\chi}, \underline{\omega}\}$ are singular components, which behaves like $f(u)^{-2}$ as $u \to u_*$, where $f$ is a weight function such that $f(u)^{-2}$ blows up in $L^2_u$ but still lies in $L^1_u$, while $\psi_H$ and $\psi$ are regular components which are bounded. In terms of the curvature components, $\rho,\sigma,\underline{\beta}$ also behaves like $f(u)^{-2}$, while $\underline{\alpha}$ is even worse. In particular, $(\rho,\sigma,\underline{\beta},\underline{\alpha})$ not being in $L^2$ does not allow the estimates \eqref{Eq1.3} to be proven.

   Instead of controlling the spacetime curvature components, Luk controlled renormalized curvature components, more precisely, the functions $K$ and $\check{\sigma}$ defined by
   \begin{align*}
     &K=-\rho+\frac{1}{2}\hat{\chi}\cdot\hat{\underline{\chi}}
     -\frac{1}{4}\mathrm{tr}\chi\mathrm{tr}\underline{\chi},
     \\
     &\check{\sigma}=\sigma+\frac{1}{2}\underline{\hat{\chi}}\wedge\hat{\chi}.
   \end{align*}
   Here, $K$ is the Gauss curvature of the $2$-spheres and $\check{\sigma}$ relates to the curvature of the normal bundle to the $2$-spheres ($2$-tori in our setting). In fact, $K,\check{\sigma}$ satisfy equations with terms less singular than the terms in the corresponding equations for $\rho,\sigma$. Up to lower order terms, the Bianchi equations
   \begin{align*}
     &\nabla_3\rho+\frac{3}{2}\mathrm{tr}\underline{\chi} \rho=-\mathrm{div}\underline{\beta}-\frac{1}{2}\hat{\chi}\cdot\underline{\alpha}+\cdots,
     \\
     &\nabla_3\sigma+\frac{3}{2}\mathrm{tr}\underline{\chi}\sigma
     =-\mathrm{div}*\underline{\beta}+\frac{1}{2}\hat{\chi}\wedge\underline{\alpha}+\cdots,
   \end{align*}
   contain the non-integrable curvature component $\underline{\alpha}$. On the other hand, the intrinsic curvatures $(K,\check{\sigma})$ obeys the equations
   \begin{align*}
     &\nabla_3K+\frac{3}{2}\mathrm{tr}\chi K=\mathrm{div}\underline{\beta}+\cdots,
     \\
     &\nabla_3\check{\sigma}+\frac{3}{2}\mathrm{tr}\underline{\chi}\check{\sigma}
     =-\mathrm{div}*\beta+\cdots,
   \end{align*}
   where there are no terms containing $\underline{\alpha}$ or quadratic terms in $\mathrm{tr}\underline{\chi},\hat{\underline{\chi}}$ or $\underline{\omega}$. Hence, every term on the right hand side is integrable in $u$ direction.

   Here $(K,\check{\sigma})$ are regular, and will be bounded in the $\|(K,\check{\sigma})\|_{L_{\underline{u}}^\infty L_u^2 L^2(S)}$
   norm. For this system of equations, no $\alpha$ or $\underline{\alpha}$ is needed. However, $\underline{\beta}$ is still singular, but will be controlled in $\|f(u)\underline{\beta}\|_{L_{\underline{u}}^\infty L_u^2 L^2(S)}$ norm. Accordingly, we can decompose the renormalized curvature components $\check{\Psi}\in\{K,\check{\sigma},\b,\bb\}$ into $\check{\Psi}_H\in\{(K,\check{\sigma}),\b\},\ \check{\Psi}_{\Hb}\in\{\bb,(K,\check{\sigma})\}$, and rewrite the energy estimates formally, omitting the $f(u)$ weight in the pairing $(K,\check{\sigma}),\ \bb$, as
   \begin{align}
   \label{Eq1.3.1}
     &\int_H\check{\Psi}_H^2+\int_{\underline{H}}\check{\Psi}_{\Hb}^2\le\text{Data}+\iint \Gamma\check{\Psi}\check{\Psi},
   \end{align}

   A priori, the degenerate $L^2$ estimates may not control the error terms. To carry out the argument using these weak estimates require the null structure of the vacuum Einstein equations. For example, as in \cite{luk2017weak}, in the energy estimates for the singular component $\underline{\beta}$,
   \begin{align*}
     \|f(u)\underline{\beta}\|_{L^2(\underline{H})}^2 &\le \text{Data}+\|f(u)^2(\underline{\beta}\psi_H\underline{\beta}
     +\underline{\beta}\psi_{\underline{H}}\beta+\underline{\beta}\psi K)\|_{L_u^1L_{\underline{u}}^1L^1(S)}.
   \end{align*}
   It suffices to note that $\psi_{\underline{H}}$, while singular, can be shown to be small after integrating along the $u$ direction. Thus, the three terms can be bounded using Gronwall's inequality. Notice that if other combinations of curvature terms and Ricci coefficients such as $\underline{\beta}\psi_{\underline{H}}\underline{\beta},\underline{\beta}\psi_H\underline{\beta}$ or $\underline{\beta}\psi_{\underline{H}}K$ appear in the error terms, the degenerate energy will not be strong enough to close the estimates.

   To close all the estimates, we need to commute also with higher derivatives. As in \cite{https://doi.org/10.1002/cpa.21531}, \cite{Luk2013NonlinearIO}, \cite{luk2017weak}, commuting with partial derivatives $\epsilon\nabla_4$ and angular derivatives will not introduce terms that are more singular. Moreover, the null structure of the estimates is also preserved under these commutations.

   Similar to \cite{https://doi.org/10.1002/cpa.21531}, \cite{Luk2013NonlinearIO}, \cite{luk2017weak}, the renormalization introduces error terms in the energy estimates concerning Ricci coefficients with one more derivative compared to the curvature terms. To estimates these terms, we need top order  elliptic estimates on the tori in addition to the estimates via transport equations. A similar null structure appear in the elliptic estimates, allowing all the bounds to be closed.

   \subsubsection{Less Singular Nature of Fluid Variables}\label{Section1.3.2}
   We now discuss the estimates for the fluid variables in the presence of a weak null singularity.
   Denoting by $r$ a general fluid variable, one has the following energy estimates:
   \begin{align*}
     \int_Hrr+\int_{\underline{H}}rr\le\text{Data}+\iint \Gamma rr.
   \end{align*}
   In particular, the energy estimates allow us to control all components of the fluid variables without degeneration on the hypersurfaces $H$ and $\underline{H}$. This is due to the fact that while $H$ and $\underline{H}$ are null with respect to the spacetime metric, they are spacelike with respect to the acoustical metric. (Notice that this relies on the speed of sound being strictly small than the speed of light.) Moreover, we can take $\nabla_4$ and angular derivatives of the fluid variables because they are regular derivatives for the spacetime metric. Since these derivatives span the tangent space of $H$, Sobolev embedding allows us to conclude that the fluid variables are bounded up to the weak null singularity.

   On the other hand, notice that due to the singularity in the derivatives of the metric, the singular derivative $\nabla_3$ of the fluid variables blows up quantitatively as $f(u)^{-2}$. This can be proven directly from the fluid equation, since $\nabla_3$ is a non-characteristic derivative and the singularity is sourced by the singular metric terms on the right-hand side of the equation.



   The fact that the fluid variables are more regular reflects a more general phenomenon that can be seen in hyperbolic systems with multiple speeds. We refer the reader to the work of Speck \cite{Speck2017}, where a similar phenomenon is seen in the problem of shock formation for a system of two wave equations of different speeds. See also \cite{FritzJohnFormationofSingularities}, \cite{Luk2018}. In this problem, the speed of sound in the relativistic Euler equations is required to be strictly smaller than the speed of light, see Section \ref{Section5}.

   \subsubsection{Coupling of Geometry and Fluid}
   For the Einstein--Euler system, introduction of fluid variables gives rise to nonzero Ricci curvature terms, in contrast to the vacuum problem. Denote a general renormalized curvature component as $\check{\Psi}$. An adaptation of the $L^2$ energy estimates in \cite{luk2017weak} for the spacetime metric produces
   \begin{align*}
     &\int_H\check{\Psi}_H^2+\int_{\underline{H}}\check{\Psi}_{\Hb}^2\le\text{Data}+\iint \Gamma\check{\Psi}\check{\Psi}+\iint \check{\Psi} D\mathrm{Ric}
     +\iint \Gamma D\Gamma\check{\Psi}
     +\iint\Gamma\mathrm{Ric}\check{\Psi}.
   \end{align*}
   The $D\mathrm{Ric}$ term could be singular. In fact, this could involve $\nabla_3$ of fluid variables or a singular Ricci coefficient, both of which are merely in $L^1_u$. However, the singular terms only appear either when: (1) multiplied by a renormalized curvature component which can be controlled by the energy on $H$, and thus spares the $u$ integral for the singular term; (2) in the most singular energy estimates for $(K,\check{\sigma},\underline{\beta})$, where such a singularity is still consistent with the weight that we introduced.

   Following the renormalization of curvature components as in \cite{luk2017weak}, the energy estimates for the triple $(K,\check{\sigma},\beta)$ only involve regular derivatives, consisting of $\nabla_4$ and angular derivatives, of Ricci curvature. This is compatible with the less singular nature of $\beta$. In addition, we also adapt the proof to the pair $(\alpha,\beta)$ without further obstruction.
   
   For Ricci coefficients, the transport equations are
   \begin{align*}
     &\nabla_3\Gamma=\check{\Psi}+\Gamma\Gamma+\mathrm{Ric},\\
     &\nabla_4\Gamma=\check{\Psi}+\Gamma\Gamma+\mathrm{Ric}.
   \end{align*}
   Since the Ricci curvature terms are bounded, they do not pose any additional difficulties.

   \subsubsection{Null Coordinates and Commutators}


   In \cite{luk2017weak}, only projected angular covariant derivatives act as commutators. However, taking angular derivatives only is too restrictive for the spacetime integral arising in the energy current approach for fluid variables. Even if we were to commute only with angular derivatives, the commutator terms arising in the divergence of the fluid energy current involve all derivatives.

   Instead, we commute with regular derivatives $ Z $, consisting of $\epsilon\nabla_4$ and angular covariant derivatives, as the set of vector fields tangential to the null hypersurface $H$ where the weak null singularity occurs. Since $H$ is spacelike in acoustical metric and hence non-characteristic, we can recover $\nabla_3$ derivatives of the fluid variables using the equations.

   In \cite{luk2017weak}, when only angular commutations were used, one could control $N$ angular derivatives of any Ricci coefficients by $N-1$ angular derivatives of the renormalized curvature components using angular elliptic estimates. In our setting, however, because we use both angular derivatives and $\epsilon\nabla_4$ as commutators, at the top order, we will need to control the $\nabla_4$ derivatives in addition to the angular derivatives of the Ricci coefficients. In order to achieve this for the Ricci coefficients $\omega$ and $\underline{\omega}$, we need to normalize our null frames suitably, choosing $e_3 = \partial_{u} + b^A \partial_A$ and $e_4 = \Omega^{-2} \p_{\ub}$ (instead of, say, $e_3 = \Omega^{-1}(\partial_{u} + b^A \partial_A)$, $e_4 = \Omega^{-1} \p_{\ub}$). With such a normalization, $\omega = 0$ while $\nabla_4\underline{\omega}$ is admissible,
    see Section \ref{SingularFluidVariables}. The other Ricci coefficients are all easier because they behave well under transportation by $\nabla_4$. As a result, when proving bounds for the fluid velocity, we also have to decompose it with respect to a null frame defined in this manner.
    
   It is important to note that, for the Einstein--Euler system, we employ commutators built from projected covariant derivatives, instead of ambient covariant derivatives or Lie derivatives. The ambient covariant or Lie derivatives applied to fluid variables do not, in general, eliminate the singular components that arise in the regular directional derivatives of fluid variables. For example, $D_Ab^B\sim e_Ab^B+\underline{\chi}b^3+\chi b^4$, where $e_Ab^B$ is assumed to remain bounded as prescribed on the initial data, while $\underline{\chi}$ is singular. Consequently, such derivatives do not isolate the singular contribution coming from geometry. On the other hand, the difficulty with applying Lie derivatives arises in controlling the shift vector $b$ in the metric components. Differentiating the Euler equation with respect to Lie derivatives produces the term $e_A(b^B)$, in addition to derivatives of metric components which can be combined into Ricci coefficients. However, from the evolution equation $e_4b\sim(\eta-\etab)$, controlling $N$ derivatives of $e_A(b^B)$ requires controlling $N+1$ derivatives of Ricci coefficients, which prevents the estimates we expect. 
   
   In contrast, if we first decompose the fluid velocity with respect to the null frame, and then differentiate using projected covariant derivatives, we only generate Christoffel symbols which can be combined into Ricci coefficients, see Section \ref{SchematicEquations}. Moreover, singular Ricci components only appear in the singular directional derivatives, see Section \ref{ConstructionofInitialdata}. For this reason, we use projected covariant derivatives to differentiate the fluid part of the system.





   \subsection{Outline of the Proof}
   \begin{itemize}
     \item In Section \ref{Section3}, we define the energy and auxiliary norms.
     \item In Section \ref{Section3.1}, we reduce the main theorem to a priori estimates.
     \item In Section \ref{Section4}, we prove the a priori estimates for spacetime metric.
     \item In Section \ref{Section5}, we prove the a priori estimates for fluid variables.
     \item In Section \ref{Section6}, we complete the proof of the a priori estimates and the proof of continuous extendibility.
   \end{itemize}

   \section{Norms and Construction of Initial Data}\label{Section3}

   \subsection{Norms}\label{norms}
   As in \cite{luk2017weak}, we introduce the schematic notation
   \begin{align*}
     \psi_0\in\{b,\gamma,\gamma^{-1},\log\Omega\},\
     \psi\in\{\eta,\underline{\eta}\},\
     \psi_H\in\{\mathrm{tr}\chi,\hat{\chi},\omega\},\
     \psi_{\underline{H}}\in\{\mathrm{tr}\underline{\chi},\hat{\underline{\chi}},\underline{\omega}\}.
   \end{align*}
   We also introduce
   \begin{align*}
     r\in\{\slashed{v}^3,\ \slashed{v}^4,\ \slashed{v},\ \log\tau\}.
   \end{align*}
   where $\slashed{v}^3,\ \slashed{v}^4$ are scalar functions and $\slashed{v}=(\slashed{v}^1,\slashed{v}^2)$ is a tensor on the $2$-torus.
   Recall that for $A\in\{1,2\}$, the commutators are
   \begin{align*}
      Z \in\{\nabla_{A},\epsilon\nabla_4\},
     \slashed{\nabla}\in\{\nabla_A\}.
   \end{align*}
   Define for every $i\in\mathbb{N}$ and any tensor $\phi$, for $a$ multi-indices and $\|\|$ some norm,
   \begin{align*}
     &\| Z ^i\phi\|=\sum_{|a|=i}\| Z ^{a}\phi\|.
   \end{align*}
   Our norms will be of the form $L_u^pL_{\underline{u}}^qL^r(S)$, defined as
   \begin{align*}
     &\|\phi\|_{L_u^pL_{\underline{u}}^qL^r(S)}
     =\Big(\int\Big(\int\|\phi\|_{L^r(S)}^q d\underline{u}\Big)^{\frac{p}{q}} du\Big)^{\frac{1}{p}},
   \end{align*} 
   where $L_u^p$ and $L_{\underline{u}}^q$ are defined in measures $du$ and $d\underline{u}$ respectively and $L^r(S)$ is defined with respect to the surface measure $\mathrm{d}\sigma=\sqrt{|\det \gamma|}\mathrm{d}\theta^A\wedge\mathrm{d}\theta^B$ on $S_{u,\underline{u}}$, componentwise as
   \begin{align*}
     \| Z ^i\phi\|_{L^r(S_{u,\underline{u}})} &=\Big(\int_{S_{u,\underline{u}}}
     \sum_{I,J}| Z ^i(\phi_{\alpha_I}{}^{\alpha_J})|^r d\sigma
     \Big)^{\frac{1}{r}}.
   \end{align*}
   Define norms for the highest derivatives for the Ricci coefficients,
   \begin{align*}
     \tilde{\mathcal{O}}_{4,2}
     =&\sum_{i\le4}
     \Big(
     \epsilon^{-\frac{1}{2}}\|f(u) Z ^i\mathrm{tr}\underline{\chi}\|_{L_{u}^2 L_{\underline{u}}^\infty  L^2(S)}
     +\epsilon^{-\frac{1}{2}}\|f(u) Z ^i(\hat{\underline{\chi}},\underline{\omega})\|_{L_{\underline{u}}^\infty L_{u}^2 L^2(S)}
     +\epsilon^{-\frac{1}{2}}\| Z ^i(\psi,\psi_0)\|_{L_{\underline{u}}^\infty L_{u}^2 L^2(S)}
     \Big)
     \\
     &+\sum_{i\le4}
     \Big(
     \epsilon^{-\frac{1}{2}}\| Z ^i \mathrm{tr}\chi\|_{L_{\underline{u}}^2L_{u}^\infty     L^2(S)}
     +\epsilon^{-\frac{1}{2}}\| Z ^i (\hat{\chi},\omega)\|_{L_{u}^\infty L_{\underline{u}}^2 L^2(S)}
     +\epsilon^{-\frac{1}{2}}\|f(u) Z ^i (\psi,\psi_0)\|_{L_{u}^\infty L_{\underline{u}}^2 L^2(S)}
     \Big).
   \end{align*}
   Also, define norms for lower order derivatives for the Ricci coefficients,
   \begin{align*}
     \mathcal{O}_{i,p}
     =&\sum_{j\le i}\epsilon^{-\frac{1}{2}}\| Z ^j \psi_H\|_{ L_{\underline{u}}^2 L_{u}^\infty L^p(S)}
     +\sum_{j\le i}\epsilon^{-\frac{1}{2}}\|f(u) Z ^j \psi_{\underline{H}}\|_{ L_{u}^2 L_{\underline{u}}^\infty L^p(S)}
     \\
     &+\sum_{j\le i}\| Z ^j(\psi,\psi_0)\|_{L_{\underline{u}}^\infty L_{u}^\infty L^p(S)}.
   \end{align*}
   In addition, define the curvature norms for the curvature components,
   \begin{align*}
     \mathcal{R}_3=&\sum_{i\le3}
     \Big(
     \epsilon^{-\frac{1}{2}}\| Z ^i \beta\|_{L_{u}^\infty L_{\underline{u}}^2 L^2(S)}
     +\epsilon^{-\frac{1}{2}}\| Z ^i (K,\check{\sigma})\|_{L_{\underline{u}}^\infty L_{u}^2L^2(S)}
     \Big)
     \\
     &+\sum_{i\le3}
     \Big(
     \epsilon^{-\frac{1}{2}}\|f(u) Z ^i \underline{\beta}\|_{L_{\underline{u}}^\infty L_{u}^2L^2(S)}
     +\epsilon^{-\frac{1}{2}}\|f(u) Z ^i (K,\check{\sigma})\|_{L_{u}^\infty L_{\underline{u}}^2 L^2(S)}
     \Big)
     \\
     &+\sum_{i\le3}
     \Big(
     \| Z ^i\alpha\|_{L_u^\infty L_{\underline{u}}^2 L^2(S)}
     +\| Z ^i \beta\|_{L_{\underline{u}}^\infty L_{u}^2 L^2(S)}
     \Big).
   \end{align*}
   Also, define the corresponding norms for the initial data,
   \begin{align*}
     \mathcal{O}_{ini} =& \sum_{i\le3
     }
     \big(
     \| Z ^i(\psi,\psi_0)\|_{L_{\underline{u}}^\infty L^2(S_{\underline{u},0})}
     +\| Z ^i(\psi,\psi_0)\|_{L_{u}^\infty L^2(S_{0,u})}
     \\
     &+\epsilon^{-\frac{1}{2}}\| Z ^i \psi_H\|_{ L_{\underline{u}}^2 L^2(S_{\underline{u},0})}
     +\epsilon^{-\frac{1}{2}}\|f(u) Z ^i \psi_{\underline{H}}\|_{ L_{u}^2  L^2(S_{0,u})}
     \big)
     \\
     &+\| Z ^4\mathrm{tr}\chi\|_{L_{\underline{u}}^\infty L^2(S_{\underline{u},0})}
     +\|f(u)^2 Z ^4\mathrm{tr}\underline{\chi}\|_{L_{u}^\infty L^2(S_{0,u})}
     \\
     &+\| Z ^4\mathrm{tr}\underline{\chi}\|_{L_{\underline{u}}^\infty L^2(S_{\underline{u},0})}
     +\| Z ^4\mathrm{tr}\chi\|_{L_{u}^\infty L^2(S_{0,u})}
     \\
     &+\epsilon^{-\frac{1}{2}}\| Z ^4 (\hat{\chi},\omega)\|_{L_{\underline{u}}^2 L^2(S_{\underline{u},0})}
     +\epsilon^{-\frac{1}{2}}\|f(u) Z ^4(\psi,\psi_0)\|_{L_{\underline{u}}^2 L^2(S_{\underline{u},0})}
     \\
     &+\epsilon^{-\frac{1}{2}}\|f(u) Z ^4 (\hat{\underline{\chi}},\underline{\omega})\|_{ L_{u}^2 L^2(S_{0,u})}
     +\epsilon^{-\frac{1}{2}}\| Z ^4 (\psi,\psi_0)\|_{ L_{u}^2 L^2(S_{0,u})},
   \end{align*}
   and
   \begin{align*}
     \mathcal{R}_{ini}=&\sum_{|i|\le3}
     \Big(
     \epsilon^{-\frac{1}{2}}\| Z ^i\beta\|_{L_{\underline{u}}^2 L^2(S_{0,\underline{u}})}
     +\epsilon^{-\frac{1}{2}}\| Z ^i (K,\check{\sigma})\|_{L_{u}^2L^2(S_{u,0})}
     \Big)
     \\
     &+\sum_{i\le3}
     \Big(
     \epsilon^{-\frac{1}{2}}\|f(u) Z ^i \underline{\beta}\|_{L_{u}^2L^2(S_{u,0})}
     +\epsilon^{-\frac{1}{2}}\|f(u) Z ^i (K,\check{\sigma})\|_{L_{\underline{u}}^2 L^2(S_{0,\underline{u}})}
     \Big)
     \\
     &+\sum_{i\le3}
     \Big(
     \| Z ^i\alpha\|_{L_{\underline{u}}^2 L^2(S_{0,\underline{u}})}
     +\| Z ^i\beta\|_{L_{u}^2L^2(S_{u,0})}
     \Big).
   \end{align*}
   Similarly, define norms for the regular highest derivatives for the fluid variables,
   \begin{align*}
     \tilde{\mathcal{E}}_{i,2}
     =&
     \sum_{j\le i}\epsilon^{-\frac{1}{2}}\| Z ^j (\slashed{v},\log\tau)\|_{L_{\underline{u}}^\infty L_{u}^2L^2(S)}
     +\sum_{j\le i}\epsilon^{-\frac{1}{2}}\| Z ^j (\slashed{v},\log\tau)\|_{L_{u}^\infty L_{\underline{u}}^2L^2(S)},
   \end{align*}
   and norms with singular derivative $\nabla_3$ involved as:
   \begin{align*}
     \tilde{\mathcal{F}}_{4,2}=
     \sum_{i\le3}\|f(u)\nabla_3 Z ^i (\slashed{v},\log\tau)\|_{L_{\underline{u}}^2 L_{u}^2L^2(S)}.
   \end{align*}
   Also, define norms for lower order derivatives for the fluid variables,
   \begin{align*}
     \mathcal{E}_{2,\infty}
     =
     &\sum_{i\le2}
     \Big(
     \| Z ^i (\slashed{v},\log\tau)\|_{L_{\underline{u}}^\infty L_{u}^\infty L^\infty(S)}
     +\|f(u)^2\nabla_3 Z ^i (\slashed{v},\log\tau)\|_{L_{u}^\infty L_{\underline{u}}^\infty L^\infty(S)}
     \Big)
     .
   \end{align*}
   Finally, let $\mathcal{E}_{ini}$ denote the corresponding norms for the initial data,
   \begin{align*}
     \mathcal{E}_{ini}
     =&\sum_{i\le4}
     \Big(
     \epsilon^{-\frac{1}{2}}\| Z ^i(\slashed{v},\log\tau)\|_{L_{\underline{u}}^2 L^2(S_{\underline{u},0})}
     +\epsilon^{-\frac{1}{2}}\| Z ^i(\slashed{v},\log\tau)\|_{L_{u}^2 L^2(S_{0,u})}
     \Big)
     \\
     &+\sum_{i\le3}
     \Big(
     \|f(u)^2 \nabla_3 Z ^i (v,\log\tau)\|_{L_{\underline{u}}^2 L^2(S_{\underline{u},0})}
     +\|f(u)^2 \nabla_3 Z ^i (v,\log\tau)\|_{L_{u}^2 L^2(S_{0,u})}
     \Big)
     \\
     &+\sum_{i\le2}
     \Big(
     \| Z ^i(\slashed{v},\log\tau)\|_{L_{\underline{u}}^\infty L^\infty(S_{\underline{u},0})}
     +\| Z ^i(\slashed{v},\log\tau)\|_{L_{u}^\infty L^\infty(S_{0,u})}
     \Big)
     .
   \end{align*}
   \subsection{Commutation Formulae}\label{Section3.2}
   We shall need the following commutation formulae for $\slashed{\nabla},\ \nabla_3,\ \nabla_4$ for tensors on the torus, see for example \cite{Christodoulou:2008nj},
   \begin{prop}\label{Prop2.1}
     As for $\phi$ any covariant tensor on the torus,
     \begin{align*}
        [\nabla_4,\nabla_B]\phi_{A_1...A_r}
       =&(\etab_B+\zeta_B)\nabla_4\phi_{A_1...A_r}
       -\gamma^{CD}\chi_{BD}\nabla_C\phi_{A_1...A_r}
       \\
       &+\sum_i (\gamma^{CD}\chi_{A_iB}\etab_D
       -\gamma^{CD}\chi_{BD}\etab_{A_i}
       +\epsilon_{A_i}{}^C*\beta_B)\phi_{A_1...\hat{A}_iC...A_r},
       \\
        [\nabla_3,\nabla_B]\phi_{A_1...A_r}
       =&(\eta_B-\zeta_B)\nabla_3\phi_{A_1...A_r}
       -\gamma^{CD}\chib_{BD}\nabla_C\phi_{A_1...A_r}
       \\
       &+\sum_i (\gamma^{CD}\chib_{A_iB}\eta_D
       -\gamma^{CD}\chib_{BD}\eta_{A_i}
       +\epsilon_{A_i}{}^C*\bb_B)\phi_{A_1...\hat{A}_iC...A_r},
       \\
       [\nabla_3,\nabla_4]\phi_{A_1...A_r}
       =&-2\om\nabla_3\phi_{A_1...A_r}+2\omb\nabla_4\phi_{A_1...A_r}
       +2\gamma^{BC}(\eta_B-\etab_B)\nabla_C\phi_{A_1...A_r}
       \\
       &+\sum_i(2\gamma^{BC}\etab_{A_i}\eta_B-\eta_{A_i}\etab_B
       +\epsilon_{A_iB}\sigma)\phi_{A_1...\hat{A}_iC...A_r},
       \\
       [\nab_A,\nab_B]\phi_{A_1...A_r}
       =&K(\gamma_{AA_i}U_{A_1...\hat{A}_iB...A_r}-\gamma_{BA_i}\phi_{A_1...\hat{A}_iA...A_r}).
     \end{align*}
   \end{prop}
   By schematic notation $a=_s\sum a_i$, we mean by $a=\sum c_i a_i+c_0$ for some $c_i=C(\Delta_1,\mathcal{E}_{2,\infty})$. In particular, by Sobolev embeddings, $ Z \psi,  Z ^2r$ are of this kind.
   By induction and schematic Codazzi equation \eqref{Eq2.9} - \eqref{Eq2.10} and \eqref{Eq2.41},
   \begin{align*}
     &Z^i\beta=_sZ^i\nabla\psi_H+\sum_{i_1+i_2\le i}Z^{i_1}\psi Z^{i_2}\psi_H,
     \\
     &Z^i\bb=_sZ^i\nabla\psi_{\Hb}+\sum_{i_1+i_2\le i}Z^{i_1}\psi Z^{i_2}\psi_{\Hb},
     \\
     &Z^i\sigma=Z^i\nabla\psi+\sum_{i_1+i_2\le i}Z^{i_1}\psi_H Z^{i_2}\psi_{\Hb}.
   \end{align*}
   we obtain the schematic formulae for commutations, see for example \cite{dafermos2017interiordynamicalvacuumblack},
   \begin{prop}\label{Prop2.2}
     Applying $ Z ^i$ obeys the following schematic equations
    \begin{align*}
     &\nabla_4 Z ^i\phi=_s\sum_{ i_1+ i_2+ i_3\le i}  Z ^{i_1}\psi^{i_2} Z ^{i_3}\nabla_4\phi
     +\sum_{ i_1+ i_2+ i_3+ i_4\le i}  Z ^{i_1}\psi^{i_2} Z ^{i_3}\psi_H Z ^{i_4}\phi,
     \\
     &\nabla_3 Z ^i\phi=_s\sum_{ i_1+ i_2+ i_3\le i}  Z ^{i_1}\psi^{i_2} Z ^{i_3}\nabla_3\phi
     +\sum_{i_1+i_2+i_3+i_4\le i}  Z ^{i_1}(\psi+\psi_H)^{i_2} Z ^{i_3}\psi_{\underline{H}} Z ^{i_4}\phi,
     \\
     &\nabla_A Z ^i\phi=_s\sum_{ i_1+ i_2+ i_3\le i}  Z ^{i_1}\psi^{i_2} Z ^{i_3}\nabla_A\phi
     +\sum_{i_1+i_2\le i-1} Z^{i_1}K Z^{i_2}\phi
     +\sum_{ i_1+ i_2+ i_3+ i_4\le i}  Z ^{i_1}\psi^{i_2} Z ^{i_3}\psi_H Z ^{i_4}\phi.
   \end{align*}
   \end{prop}
   As compared with \cite{luk2017weak}, we have extra terms when commutating with $\nabla_3$ and $\nabla_A$ equations, which arise from the extra commutator $\ep\nabla_4$. In Section \ref{Section4}, we will focus on estimating the extra terms
   \begin{align*}
     &\sum_{i_1+i_2+i_3+i_4\le i}  Z ^{i_1}\psi_H^{i_2} Z ^{i_3}\psi_{\underline{H}} Z ^{i_4}\phi
   \end{align*}
   in $\nabla_3Z^i\phi$ equations and
   \begin{align*}
     &\sum_{ i_1+ i_2+ i_3+ i_4\le i}  Z ^{i_1}\psi^{i_2} Z ^{i_3}\psi_H Z ^{i_4}\phi
   \end{align*}
   in $\nabla_AZ^i\phi$ equations.
   
   \subsection{Construction of the Initial data}\label{ConstructionofInitialdata}
   \begin{prop}\label{Prop3.1}
     There exists a family of initial data prescribed on $H_0\cap\{0\le \underline{u}<\underline{u}_*\}$ and $\underline{H}_0\cap\{0\le u<u_*\}$ satisfying all the assumptions in Theorem \ref{SmoothExist}, including:
     \begin{itemize}
     \item The metric components $(\gamma,\Omega,b)$
     prescribed on $H_0\cap\{0\le \underline{u}<\underline{u}_*\}$ and $\underline{H}_0\cap\{0\le u<u_*\}$ satisfy the following:
     \begin{align*}
     &| (\p_{\ub},\slashed{\p}) ^i\gamma|\Big{|}_{S_{0,0}}\le D,\
     0<d\le\det\gamma_0\le D,\
     \Omega|_{H_0\cup\underline{H}_0}=1,\
     b^A|_{\underline{H}_0}=0.
     \end{align*}
     \item The Ricci coefficients $(\chi,\underline{\chi},\zeta)$ defined as
     \begin{align*}
       &\Omega\chi=\frac{1}{2}\partial_{\ub}\gamma,\
       \Omega^{-1}\underline{\chi}=\frac{1}{2}\partial_u\gamma,\
       \zeta=\frac{1}{4}\partial_ub,
     \end{align*}
     satisfying the constraints on $\underline{H}_0$
     \begin{align*}
     &\slashed{\mathcal{L}}_{e_3}\mathrm{tr}\underline{\chi}
     =-\frac{1}{2}(\mathrm{tr}\underline{\chi})^2-|\hat{\underline{\chi}}|^2-\mathrm{Ric}_{33},
     \\
      &\slashed{\mathcal{L}}_{e_3}\mathrm{tr}\chi
      +\mathrm{tr}\underline{\chi}\mathrm{tr}\chi
      =-2K-2\mathrm{div}\zeta+2|\zeta|^2-\mathrm{Ric}_{34},
      \\
      &\slashed{\mathcal{L}}_{e_3}\zeta+\mathrm{tr}\underline{\chi}\zeta
      =\slashed{\mathrm{div}}\underline{\chi}-\nabla\mathrm{tr}\underline{\chi}+\mathrm{Ric}_{3B},
     \end{align*}
     and the constraints on $H_0$
     \begin{align*}
     &\slashed{\mathcal{L}}_{e_4}\mathrm{tr}\chi
     =-\frac{1}{2}(\mathrm{tr}\chi)^2-|\hat{\chi}|^2-\mathrm{Ric}_{44},
     \\
      &\slashed{\mathcal{L}}_{e_4}\mathrm{tr}\underline{\chi}
      +\mathrm{tr}\chi\mathrm{tr}\underline{\chi}
      =-2K-2\mathrm{div}\zeta+2|\zeta|^2-\mathrm{Ric}_{43},
      \\
      &\slashed{\mathcal{L}}_{e_4}\zeta+\mathrm{tr}\chi\zeta
      =\slashed{\mathrm{div}}\chi-\nabla\mathrm{tr}\chi+\mathrm{Ric}_{4B},
     \end{align*}
     and the bounds
     \begin{align*}
     &\sum_{i\le 5}\| Z ^i\chi\|_{L^\infty_{\underline{u}}L^2(S_{0,\underline{u}})}
     +\sum_{i\le4}\| Z ^i(\zeta,\mathrm{tr}\underline{\chi})\|_{L^\infty_{\underline{u}}L^2(S_{0,\underline{u}})}
     \Big{|}_{H_0}\le D,
     \\
     &\sum_{i\le 5}\|f(u)^2 Z ^i\underline{\chi}\|_{L^\infty_{u}L^2(S_{0,u})}
     +\sum_{i\le4}\| Z ^i(\zeta,\mathrm{tr}\chi)\|_{L^\infty_{u}L^2(S_{0,u})}
     \Big{|}_{\underline{H}_0}\le D,
   \end{align*}
   with an exact blow-up rate
   \begin{align*}
     &|\hat{\underline{\chi}}|(u,0,\theta)\sim f(u)^{-2}.
   \end{align*}
    \item The fluid initial data
     $(v,\log\tau)$ prescribed on $H_0\cap\{0\le \underline{u}<\underline{u}_*\}$ and $(v,\log\tau)$ prescribed on $\underline{H}_0\cap\{0\le u<u_*\}$ satisfying the following:
     \begin{align*}
     &\sum_{i\le 4}\|\nabla_3 Z ^i (\slashed{v},\log\tau)\|_{L^\infty_{\underline{u}}L^2(S_{0,\underline{u}})}
     \Big{|}_{H_0}\le D,
     \\
     &\sum_{i\le 5}\| Z ^i (\slashed{v},\log\tau)\|_{L^\infty_{\underline{u}}L^2(S_{0,\underline{u}})}
     \Big{|}_{H_0}\le D,
     \\
     &\sum_{i\le 4}\|f(u)^2\nabla_3 Z ^i(\slashed{v},\log\tau)\|_{L^\infty_{u}L^2(S_{0,u})}
     \Big{|}_{\underline{H}_0}\le D,
     \\
     &\sum_{i\le 5}\| Z ^i(\slashed{v},\log\tau)\|_{L^\infty_{u}L^2(S_{0,u})}
     \Big{|}_{\underline{H}_0}\le D.
     \end{align*}
     where the transversal projected covariant derivatives $(\nabla_3\slashed{v},\nabla_3\log\tau)$ on $H_0\cap\{0\le \underline{u}<\underline{u}_*\}$ and $(\nabla_4\slashed{v},\nabla_4\log\tau)$ on $\underline{H}_0\cap\{0\le u<u_*\}$ are defined using the fluid equations.
     \end{itemize}
   \end{prop}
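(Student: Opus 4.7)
The plan is to prescribe free data on each of $H_0$ and $\underline{H}_0$ and then solve the null constraint equations as ODEs along the null generators. On $H_0 = \{u=0\}$ I would freely prescribe a smooth one-parameter family of Riemannian metrics $\gamma_{AB}(0,\underline{u},\theta)$ on $\mathbb{T}^2$ with the required determinant bounds, smooth fluid data $(v,\log\tau)(0,\underline{u},\theta)$, and enforce $\Omega = 1$. Then $\chi$ is recovered from $\Omega\chi_{AB} = \tfrac{1}{2}\partial_{\underline{u}}\gamma_{AB}$, and its traceless/trace parts obey the prescribed regularity automatically. On $\underline{H}_0 = \{\underline{u}=0\}$ I would prescribe a metric $\gamma_{AB}(u,0,\theta)$ whose associated $\hat{\underline{\chi}} = \tfrac{1}{2}\partial_u\gamma_{AB}$ (traceless part) satisfies the exact singular rate $|\hat{\underline{\chi}}|(u,0,\theta) \sim f(u)^{-2}$, together with smooth $(v,\log\tau)(u,0,\theta)$ compatible at the corner $S_{0,0}$.

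With these free data in hand, the remaining Ricci coefficients are determined by the null constraint system listed in the proposition. On $\underline{H}_0$, integrating the Raychaudhuri equation along $u$ yields $\mathrm{tr}\underline{\chi}$; the Codazzi equation then gives $\zeta$; and the $\nabla_3 \mathrm{tr}\chi$ equation gives $\mathrm{tr}\chi$. The analogous chain on $H_0$ gives $\mathrm{tr}\chi$, $\zeta$, and $\mathrm{tr}\underline{\chi}$. The Ricci source terms $R_{33}$, $R_{34}$, $R_{3A}$, $R_{4A}$, $R_{44}$ coming from the energy-momentum tensor \eqref{Eq2} are built purely from the already-prescribed $(v,\log\tau)$ and the metric, so they act as known forcing. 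Transversal derivatives $\partial_u(v,\log\tau)$ on $H_0$ and $\partial_{\underline{u}}(v,\log\tau)$ on $\underline{H}_0$ are recovered algebraically from the Euler equations $D_\alpha T^{\alpha\mu}=0$, which is possible because both $H_0$ and $\underline{H}_0$ are spacelike with respect to the acoustical metric (since $p' < 1$). Iterating with $\partial_\nu$-commutators reconstructs all higher transversal derivatives, and compatibility at the corner to any finite order is then arranged by a standard Borel-type matching of the formal Taylor jets in the remaining variable.

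The main obstacle is propagating the sharp $f(u)^{-2}$ singular rate through the Raychaudhuri ODE $\partial_u \mathrm{tr}\underline{\chi} = -\tfrac{1}{2}(\mathrm{tr}\underline{\chi})^2 - |\hat{\underline{\chi}}|^2 - R_{33}$ on $\underline{H}_0$. Since $|\hat{\underline{\chi}}|^2 \sim f(u)^{-4}$ is not integrable in $u$, the quadratic nonlinearity $-\tfrac{1}{2}(\mathrm{tr}\underline{\chi})^2$ must cooperatively absorb against $-|\hat{\underline{\chi}}|^2$. I would insert the ansatz $\mathrm{tr}\underline{\chi} = \lambda(u,\theta)\, f(u)^{-2} + (\text{lower order})$ and reduce the equation to a first-order linear ODE for $\lambda$ with bounded source; this matches the weighted bound $\|f(u)^2 \mathrm{tr}\underline{\chi}\|_{L^\infty_u L^2(S)} \le D$ required in the norm $\mathcal{O}_{ini}$. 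For the higher derivatives $\partial_\nu^i \mathrm{tr}\underline{\chi}$, commuting the Raychaudhuri equation with $\partial_\nu \in \{\partial_A, \epsilon\partial_{\underline{u}}\}$ introduces no additional powers of $f(u)^{-1}$, so the same scheme closes at every order up to $i \le 5$, with a Gronwall argument in $u$ weighted by $f(u)^{2}$.

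Finally I would verify all the norms listed in $\mathcal{O}_{ini}$, $\mathcal{R}_{ini}$, and $\mathcal{E}_{ini}$ by combining these ODE bounds on the characteristic initial data with the Sobolev embedding on $S_{u,0}$ and $S_{0,\underline{u}}$, noting that the transport-equation outputs on $\underline{H}_0$ (respectively $H_0$) automatically live in $L^2_u$ (respectively $L^2_{\underline{u}}$) because $\|f(u)^{-1}\|_{L^2_u} < \epsilon$ by assumption. The construction is manifestly nonempty: for instance one may start from $\gamma_{AB}(u,0,\theta) = \gamma_0(\theta) + \int_0^u H_{AB}(u',\theta)\, du'$ with $H_{AB}$ chosen traceless with respect to $\gamma_0$ and of size $f(u)^{-2}$, and then rescale by a conformal factor to install the right determinant.
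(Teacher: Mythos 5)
There are two genuine gaps, one structural for the metric data and one (more serious) for the fluid data. First, you over-determine the geometric data: on $\underline{H}_0$ you prescribe the full induced metric $\gamma_{AB}(u,0,\theta)$, which already fixes $\mathrm{tr}\underline{\chi}=\tfrac12\gamma^{AB}\partial_u\gamma_{AB}$ as well as $\hat{\underline{\chi}}$, and then you claim to ``integrate the Raychaudhuri equation along $u$'' to obtain $\mathrm{tr}\underline{\chi}$; these two prescriptions are incompatible, and for generic $\gamma$ and fluid source $R_{33}=(\tau+p)v_3^2$ the constraint simply fails (the same problem recurs on $H_0$). The free datum is only the conformal class: the paper writes $\gamma=\Phi^2\hat{\gamma}$, $\hat{\gamma}=\exp\Psi$ with $\Psi$ free and $|\partial_u\Psi|\sim f(u)^{-2}$, and the Raychaudhuri constraint becomes the \emph{linear} second-order ODE \eqref{Eq3.02} for the conformal factor $\Phi$ (with the fluid entering through $\tfrac{\Phi}{2}(\tau+p)v_3v_3$), so no Riccati-type ansatz for $\mathrm{tr}\underline{\chi}$ is needed; your closing remark about ``rescaling by a conformal factor to install the right determinant'' treats $\Phi$ as freely choosable, whereas it is exactly the unknown determined by the constraint.

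Second, and decisively, the fluid data on $\underline{H}_0$ cannot be ``smooth $(v,\log\tau)(u,0,\theta)$ freely prescribed'' with the transversal derivatives then ``recovered algebraically.'' The Euler equations solved for the $\partial_{\underline{u}}$-derivative on $\underline{H}_0$ (equations \eqref{Eq3.01}) contain the singular Christoffel symbols $\psi_{\underline{H}}\in\{\mathrm{tr}\underline{\chi},\hat{\underline{\chi}},\underline{\omega}\}\sim f(u)^{-2}$ contracted with $v$ (e.g.\ $\mathrm{tr}\underline{\chi}\,v^3$ in $\mathrm{div}\,v$), so if $(v,\log\tau)$ and its $\partial_u$, $\partial_A$ derivatives are bounded along $\underline{H}_0$, the recovered $\partial_{\underline{u}}(v,\log\tau)$ blows up like $f(u)^{-2}$; this violates the unweighted bound $\sum_{i\le5}\|\partial_\nu^i(v,\log\tau)\|_{L^\infty_uL^2(S_{0,u})}\le D$ (recall $\partial_\nu\ni\epsilon\partial_{\underline{u}}$), i.e.\ the hypotheses of Theorem \ref{SmoothExist} that the proposition must verify. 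The point of the paper's construction is precisely the opposite distribution of singularity: the fluid variables along $\underline{H}_0$ are obtained by solving the $\partial_u$-transport system \eqref{Eq3.022} (with freely chosen bounded sources $q^0,q^A,q^3$), so that $\partial_u(v,\log\tau)\sim f(u)^{-2}$ — consistent with the $f(u)^2$-weighted norm — and the singular coefficients cancel when \eqref{Eq3.01} is used to compute $\partial_{\underline{u}}(v,\log\tau)$, which then comes out bounded, schematically $\partial_{\underline{u}}r=_s\partial_{\nu'}r+(\psi+\psi_H)r$. Without this tuned cancellation (and its propagation under $\partial_\nu$-commutation to order $5$), the family of data you construct does not satisfy the assumptions of Theorem \ref{SmoothExist}, so the proposal as written does not prove the proposition.
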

   \begin{proof}
   We follow the approach as in 
   \cite{Christodoulou:2008nj}.
   Let $(\theta^1,\theta^2)$ be the standard local coordinates on torus $S_{0,0}\simeq\mathbb{R}^2/\mathbb{Z}^2$. 
   
   We begin with the initial data on $\underline{H}_0$, which is the harder case because of the presence of singularity. On $\underline{H}_0$, we set $\Omega=1, b=0$ and therefore $e_3=\partial_{u}$. We construct a metric on $\underline{H}_0$ in $(u,\theta_1,\theta_2)$ coordinates as
   \begin{align*}
     \gamma_{AB}=\Phi^2\hat{\gamma}_{AB},\
     \hat{\gamma}_{AB}=\exp\Psi.
   \end{align*}
   The freely prescribable data will be traceless $2\times2$ matrices $\Psi$ on $H_0\cup\underline{H}_0$
    such that for sufficiently large $N$,
   \begin{align*}
     \sum_{i\le N}
     |(\slashed{\p},\ep\p_{\ub})^i\Psi|
     \lesssim 1,\
     \sum_{i\le N}|(\slashed{\p},\ep\p_{\ub})^i\partial_{u}\Psi|
     \lesssim f(u)^{-2},\
     |\partial_{u}\Psi|\sim f(u)^{-2},
   \end{align*}
   and smooth functions $\Phi,\nabla_3\Phi,\nabla_4\Phi,\zeta_A,\underline{\omega}$ prescribed on $S_{0,0}$, such that
   \begin{align*}
      \Phi|_{S_{0,0}}=1,\
      \sum_{i\le N-1}|\slashed{\p}^i(\partial_{u}\Phi,\p_{\ub}\Phi,\zeta_A,\underline{\omega})|\Big|_{S_{0,0}}
      \lesssim1,
    \end{align*}
    and $(v^\nu,\log\tau)$, $(\p_{\ub} v^\nu,\p_{\ub}\log\tau)$ for $\nu\in\{1,2,4\}$ 
    on $S_{0,0}$ such that
    \begin{align*}
     \sum_{i\le N}|(\slashed{\p},\ep\p_{\ub})^i(v^\nu,\log\tau)|\Big|_{S_{0,0}}\lesssim 1,\
     \sum_{i\le N-1}|(\slashed{\p},\ep\p_{\ub})^i(\p_{u}v^\nu,\p_{u}\log\tau
     )|\Big|_{S_{0,0}}\lesssim 1,\ |v^4|\sim1,
   \end{align*}
    and smooth bounded functions $q^0,q^A,q^4$ on $\underline{H}_0$ for $A\in\{1,2\}$, such that
    \begin{align*}
      \sum_{i\le N-1}|\slashed{\p}^i(q^0,q^A,q^4)|\lesssim1.
    \end{align*}
   We derive an initial data for each set of freely prescribable data.
   For the fluid variables, 
   note that the normalization of fluid velocity implies
   \begin{align*}
     v^3=(2v^4)^{-1}(\gamma_{AB}v^Av^B+1),
   \end{align*}
   and thus determines $v^3, \p_{\ub}v^3$ on $S_{0,0}$.
   Also, the constraint equations from Einstein--Euler system can be translated to transport equations as follows, for $\alpha', \nu'\in\{A,B,3\}$,
    \begin{align}
    \begin{split}
    &  D_4v^4=(v^4(1-p'))^{-1}(-v^{\nu'} D_{\nu'} v^4-v^4 p'v^{\nu'} D_{\nu'} L
    -p'\frac{1}{2}D_3L
     +p'v^4 (v^{\nu'} D_{\nu'} L+ D_{\nu'} v^{\nu'})),
     \\
     &  D_4L=(v^4)^{-1}(-v^{\nu'} D_{\nu'} L- D_{\nu'} v^{\nu'}- D_4v^4),
     \\
     &  D_4v^{\alpha'}
     =(v^4)^{-1}(-v^{\nu'} D_{\nu'} v^{\alpha'}-v^{\alpha'} p'v^{\nu'} D_{\nu'} L- p' (1-\delta^{3}_{\alpha'})D^{\alpha'}   L-v^{\alpha'} p'v^4 D_4L+\frac{1}{2} p'\delta^{\alpha'}_3 D_4 L),
     \end{split}
     \label{Eq3.01}
   \end{align}
   or equivalently for $\alpha, \nu\in\{A,B,4\}$,
   \begin{align}\label{Eq3.023}
    \begin{split}
     &  D_3v^3=(v^3(1-p'))^{-1}(-v^\nu D_\nu v^3-v^3 p'v^\nu D_\nu L
     -\frac{1}{2}p'D_4L
     -p'v^3(-v^\nu D_\nu L- D_\nu v^\nu)),
     \\
     &  D_3L=(v^3)^{-1}(-v^\nu D_\nu L- D_\nu v^\nu- D_3v^3),
     \\
     &  D_3v^\alpha=(v^3)^{-1}(-v^\nu D_\nu v^\alpha-v^\alpha p'v^\nu D_\nu L- p' (1-\delta^{4}_\alpha)D^\a L-v^\alpha p'v^3 D_3L+\frac{1}{2} p'\delta^{\alpha}_4 D_3 L).
     \end{split}
   \end{align}
   For definition of $L$ see Section \ref{CommutingFluids}. For the proof, see Proposition \ref{SingularEnergy} for details. Note that $\Phi,\nabla_3\Phi,\nabla_4\Phi,\zeta_A,\underline{\omega}$ on $S_{0,0}$ and $\Psi$ on $H_0\cup\underline{H}_0$ determines all Christoffel symbols of the spacetime metric, and thus defines \eqref{Eq3.023} on $S_{0,0}$. We therefore arrive at $\nabla_3v$ on $S_{0,0}$.
    
    Next, we state the transport equations on $H_0$. As for $\Phi,\chi,\zeta,\omega$, we shall need
   \begin{align}
       &\partial_{u}^2\Phi
      +\frac{1}{8}\hat{\gamma}^{AC}\hat{\gamma}^{BD} \partial_{u}\hat{\gamma}_{AB}
      \partial_{u}\hat{\gamma}_{CD}\Phi+\frac{\Phi}{2}\Omega^2(\tau+p)v_3v_3=0,
      \label{Eq3.02}
      \\
      \label{Eq3.03}
      &\slashed{\mathcal{L}}_{e_3}\mathrm{tr}\chi
      +\mathrm{tr}\underline{\chi}\mathrm{tr}\chi
      =-2K-2\mathrm{div}\zeta+2|\zeta|^2-\mathrm{Ric}_{34},
      \\
      &\slashed{\mathcal{L}}_{e_3}\zeta+\mathrm{tr}\underline{\chi}\zeta
      =\slashed{\mathrm{div}}\underline{\chi}-\nabla\mathrm{tr}\underline{\chi}+\mathrm{Ric}_{3B},
      \label{Eq3.020}
    \end{align}
    where, given $b|_{\underline{H}_0}=0$, \eqref{Eq3.02} is equivalent to the constraint equation,
    \begin{align*}
     \slashed{\mathcal{L}}_{e_3}\mathrm{tr}\underline{\chi}
     =-\frac{1}{2}(\mathrm{tr}\underline{\chi})^2-|\hat{\underline{\chi}}|^2-\mathrm{Ric}_{33}.
   \end{align*}
   As for $(v,L)$, we need
   \begin{align}\label{Eq3.022}
   \begin{split}
     &\partial_{u}L=((1-p')v^3v^4+\frac{3}{2}p')^{-1}(v^4v^A\eta_A+\underline{\chi}_{AB}v^Av^B
     -v^3v^4\mathrm{tr}\underline{\chi})+q^0,
     \\
     &\partial_{u}v^A = -2\underline{\chi}^A{}_{B}v^B-2\eta^Av^4-v^Ap'\p_u L+q^A,
     \\
     &\partial_{u}v^3 = -2\underline{\omega} v^3+\eta_Av^A-v^3\p_u L
     -v^3\mathrm{tr}\underline{\chi}+q^3,
     \\
     &v^4=(2v^3)^{-1}(\gamma_{AB}v^Av^B+1),
     \end{split}
   \end{align}
   where $\underline{\chi}$ is determined by $\Phi$ and its tangential derivatives,
   \begin{align*}
     \Omega^{-1}\underline{\chi}_{AB}
     =\Phi^2\partial_{u}\hat{\gamma}_{AB}
      +\frac{2}{\Phi}\partial_{u}\Phi \gamma_{AB}.
   \end{align*}
    We prove that the ODEs \eqref{Eq3.02} 
    - \eqref{Eq3.022} give rise to a bounded solution $\Phi$ and $(\slashed{v},\log\tau)$. Since $\Omega=1,\ b=0,\ \omega=0$ on $\underline{H}_0$, we have $e_3\Omega=0$ and thus $\zeta=\eta=-\underline{\eta},\ \underline{\omega}=0$, and
    \begin{align*}
     \Omega\hat{\chi}_{AB}=\frac{1}{2}\Phi^2\partial_{\ub}\hat{\gamma}_{AB},\
     \Omega\mathrm{tr}\chi=\frac{2}{\Phi}\partial_{\ub}\Phi.
   \end{align*}
    Thus, a priori estimates
    \begin{align*}
      c\le  |\Phi|\le C,\
      |\zeta|,|\mathrm{tr}\chi|
      ,|(v,\log\tau)|\le C,\
      |\p_{u}\Phi|\le Cf(u)^{-2},
    \end{align*}
    implies that $|\mathrm{tr}\underline{\chi}|\le C^2f(u)^{-2}$,\ $|\hat{\underline{\chi}}|\sim f(u)^{-2}$, $|\hat{\chi}|\le C^2$ and thus
    \begin{align*}
      &|(v,\log\tau)(u,0)
      -(v,\log\tau)(0,0)|
      \le C^2\int_0^{u} f(u')^{-2}
      \le C^2\epsilon^2,
      \\
      &|\partial_u\Phi(u,0)-\partial_u\Phi(0,0)|
      \le C^3\int_0^{u} f(u')^{-4}
      \le \epsilon^2 C^3 f(u)^{-2},
      \\
      &|e^{\int^u_0\mathrm{tr}\underline{\chi}}\zeta(u,0)-\zeta(0,0)|
      \le C e^{C\int_0^{u} f(u')^{-2}}\int_0^{u} f(u')^{-2}
      \le Ce^\epsilon\epsilon^2
      \\
      &|e^{\int^u_0\mathrm{tr}\underline{\chi}}\mathrm{tr}\chi(u,0)
      -\mathrm{tr}\chi(u,0)|
      \le C e^{C\int_0^{u} f(u')^{-2}}\int_0^{u} f(u')^{-2}
      \le Ce^\epsilon\epsilon^2.
    \end{align*}
    The a priori estimates are improved for $C$ sufficiently large as compared to data on $S_{0,0}$ and time $\epsilon$ accordingly small. Thus, bootstrap arguments implies existence of a solution satisfying the a priori estimates. In particular, Christoffel symbols $\psi,\psi_H$ are bounded. Also, \eqref{Eq3.022} further implies 
    \begin{align*}
      |(\partial_{u}v,\partial_{u}\log\tau)|\lesssim f(u)^{-2}.
    \end{align*}
    In fact, if we further impose
    \begin{align}\label{Eq2.7001}
      &\sum_{i\le N}|(\slashed{\p},\ep\p_{\ub})^i\partial_u\partial_{u}\Psi|
     \lesssim u^{-1}f(u)^{-2},\
     |\partial_{u}\partial_u\Psi|\sim u^{-1}f(u)^{-2},
    \end{align}
    and refer to \eqref{Eq2.4}, we will arrive at
    \begin{align*}
       \underline{\alpha}_{AB}(u,0)
        =_s \nabla_3\hat{\underline{\chi}}_{AB}(u,0)
       +\underline{\chi}\hat{\underline{\chi}}(u,0)
       +\underline{\omega}\hat{\underline{\chi}}(u,0).
    \end{align*}
    Therefore, by assigning $\hat{\gamma}$ as a diagonal matrix, we have
    \begin{align*}
       &|\underline{\alpha}_{AA}|(u,0)\sim |\p_u\hat{\underline{\chi}}_{AA}|(u,0)
       \sim |\Phi^2\p_u\p_u\hat{\gamma}_{AA}|(u,0)
        \sim u^{-1}f(u)^{-2},\ A\in\{1,2\},
        \\
        &|\underline{\alpha}_{12}|(u,0)  \lesssim |\underline{\chi}\hat{\underline{\chi}}_{12}|(u,0)
        +|\underline{\omega}\hat{\underline{\chi}}_{12}|(u,0)
       \sim f(u)^{-4}.
    \end{align*}
    Thus, the resulting family of initial data satisfies \eqref{CurvatureBlowup} in addition.
    This observation will be used in the proof of Remark \ref{Rem1.1} in Section \ref{Section6}.
    
   Next, we check that the bounds in Theorem \ref{SmoothExist} hold.
   Note that
      \begin{align*}
     & \partial_\alpha v^\beta
     =D_\alpha v^\beta +g(D_\alpha e^\beta,e_\gamma) v^\gamma.
   \end{align*}
   and the only other terms in \eqref{Eq3.01} involving singular Christoffel symbols $\psi_{\underline{H}}$ are
  \begin{align*}
    &D_Av^B=e_Av^B-g(\nabla_Ae^B,e_C)v^C+\chi^B{}_Av^4+\underline{\chi}^B{}_Av^3,
    \\
    &D_Av^4=e_Av^4+\frac{1}{2}\underline{\chi}_{AB}v^B.
  \end{align*}
   According to \eqref{Eq3.022}, 
   the singular Christoffel symbols $\psi_{\underline{H}}\in\{\mathrm{tr}\underline{\chi},\hat{\underline{\chi}}, \underline{\omega}\}$ arising in \eqref{Eq3.01} cancel up. 
   Thus, $(\partial_{\ub}v,\partial_{\ub}\log\tau)$ derived from \eqref{Eq3.022} is bounded on $\underline{H}_0$. Furthermore, as for $\slashed{v}$,
   \begin{align*}
     &\nabla_4\slashed{v}^A=e_4\slashed{v}^A-g(\nabla_4e^A,e_\gamma)\slashed{v}^\gamma,\
     \nabla_4\slashed{v}^3=e_4\slashed{v}^3,\
     \nabla_4\slashed{v}^4=e_4\slashed{v}^4,
   \end{align*}
   implies that $(\nabla_4\slashed{v},\nabla_4\log\tau)$ is bounded on $\Hb_0$ as a consequence of boundedness of $\p_{\ub}r$.
    
    For higher order derivatives, we commute \eqref{Eq3.01} and initial data \eqref{Eq3.022} inductively with $\partial_{\ub}$ to derive $\partial_{\ub}^ir$.
    Following \cite{luk2017weak}, commuting the transport equations \eqref{Eq3.02},
     \eqref{Eq3.03}, \eqref{Eq3.01} with $ Z ^i$
     imply the bounds simultaneously for $ Z ^i (v,\log\tau)$ and $ Z ^i(\partial_{\nu}\Phi,\zeta,\mathrm{tr}\chi)$.
     
     It is worth noting that an extra derivative of the metric component $b$ is required for estimating $\p_uv^A$ as given rise by the formula
     \begin{align*}
       &D_3v^A=(\partial_u+b^B\partial_B)v^A-(\chib^{A}{}_{B}-D_Ab^B\Om^{-1})v^B.
     \end{align*}
     This will not be a problem by applying \eqref{Eq2.17} and utilizing sufficiently large number of derivatives prescribed for $\eta,\ \etab$ on $H_0\cup\Hb_0$ in the assumption.
    
    By the flexibility in choosing the traceless matrices $\Psi$ and the fluid data $q^0,q^A$, we conclude that there exists a family of admissible initial data satisfying all the assumptions in Theorem \ref{SmoothExist} with a singularity present. 
   
   Similarly for solving $\Phi$ and $(v,\log\tau)$ with the initial data for the spacetime metric on $H_0$ satisfying the bounds
    \begin{align*}
     \sum_{i\le N}| (\slashed{\p},\ep\p_{\ub}) ^i\Psi|\lesssim 1,\
     \sum_{i\le N-1}| (\slashed{\p},\ep\p_{\ub})^i\p_{\ub}\Psi|\lesssim 1,\
     |\p_{\ub}\Psi|\sim 1,
    \end{align*}
    and
    \begin{align*}
      \Phi|_{S_{0,0}}=1,\
      \sum_{i\le N-1}|(\slashed{\p},\ep\p_{\ub})^i(\partial_{u}\Phi,\p_{\ub}\Phi,\zeta_A,\underline{\omega})|\Big|_{S_{0,0}}
      \lesssim1,
    \end{align*}
    and
    \begin{align*}
     \sum_{i\le N}|\slashed{\p}^i(v,\log\tau)|\Big|_{S_{0,0}}\lesssim 1,
     \sum_{i\le N-1}|\slashed{\p}^i(\p_{\ub}v,\p_{\ub}\log\tau,
     \partial_{u}v,\partial_{u}\log\tau)|\Big|_{S_{0,0}}\lesssim 1.
   \end{align*}
   Here $\Omega=1,\omega=0$ on $H_0$ implies that $\zeta
   =\eta
   =-\underline{\eta}$. In order to control all Christoffel symbols, we apply similar proof to an extra equation for $\underline{\omega}$,
   \begin{align*}
     &\slashed{\mathcal{L}}_{e_4}\underline{\omega}
      =\frac{3}{2}|\zeta|^2-\frac{1}{2}K+\frac{1}{4}\hat{\chi}\cdot\hat{\underline{\chi}}
      -\frac{1}{8}\mathrm{tr}\chi\mathrm{tr}\underline{\chi}+\frac{1}{4}R+\frac{3}{4}\mathrm{Ric}_{34}.
   \end{align*}
   As for checking the conditions in Theorem \ref{SmoothExist}, 
   we read from \eqref{Eq3.01} that $(\partial_{u}v,\partial_{u}\log\tau)$ remain bounded on $H_0$. In particular, there is no singularity on $H_0$.
    \end{proof}
    
    \section{Reduction of Theorem 2 to a priori estimates}\label{Section3.1}

   We now turn to the proof of Theorem \ref{SmoothExist}. We shall need a local existence result for the characteristic initial value problem, see for example \cite{doi:10.1098/rspa.1990.0009}.
   \begin{prop}\label{Prop3.01}
     Assume that the initial data for the characteristic initial value problem satisfy the assumptions of Theorem \ref{SmoothExist} with $\epsilon$ sufficiently small. Then there exists an Einsteinian development $(U,g,v,\tau)$ satisfying the Einstein--Euler system in a neighborhood $U$ of $S_{0,0}$, obeying the initial conditions in $(H_u\cup H_{\underline{u}})\cap U$. 
   \end{prop}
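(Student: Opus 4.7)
The plan is to reduce the Einstein--Euler characteristic initial value problem to a setting where Rendall's classical result \cite{doi:10.1098/rspa.1990.0009} on the local well-posedness of the characteristic Cauchy problem for quasilinear hyperbolic systems directly applies. The two issues to handle are (i) putting the coupled Einstein--Euler system into an appropriate hyperbolic reduction, and (ii) verifying that the data prescribed on $H_0\cup\underline{H}_0$ via Proposition \ref{Prop3.1} (together with the constraint and gauge conditions) are admissible for that reduction.

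First, I would fix the gauge. Either the double null foliation gauge of Section \ref{SectDescriptionofMainResults} or a harmonic (wave) gauge can be used, with the latter slightly simpler to quote existing theorems for. Passing from the data prescribed in the double null gauge to initial data in, say, wave coordinates is standard and uses only the free data $(\gamma, \Omega, b)$ on $H_0\cup \underline{H}_0$ together with the values of $\partial_u$ and $\partial_{\underline{u}}$ derivatives of the unknowns that are forced by the constraint equations on each null hypersurface. In wave gauge, the reduced Einstein system becomes a system of quasilinear wave equations $\Box_g g_{\mu\nu} = F_{\mu\nu}(g, \partial g, T)$ for the metric coefficients; this is hyperbolic with characteristic cones coinciding with the light cones of $g$. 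Coupled to it, the relativistic Euler system \eqref{Eq1}--\eqref{Eq2.1.1}, together with the state law \eqref{Eq2.2.1} and the condition $0<p'<1$, can be cast as a symmetric hyperbolic system for $(v^\mu, \log\tau)$ whose characteristic cones are precisely the sound cones determined by the acoustical metric $G^{\alpha\beta}$; crucially, since $0<p'<1$, these sound cones lie strictly inside the light cones. Thus both the metric and fluid subsystems are hyperbolic with the null hypersurfaces $H_0, \underline{H}_0$ being characteristic for the metric part and spacelike for the fluid part. This is exactly the framework to which Rendall's characteristic well-posedness theorem applies.

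Next, I would verify the admissibility of the initial data. The bounds and compatibility at $S_{0,0}$ assumed in Theorem \ref{SmoothExist} ensure sufficient regularity: the free data $\Psi$, $\Phi$, the connection data $(\zeta,\underline{\omega})$, and the fluid data $(v, \log\tau)$, with their transversal derivatives determined by the constraint equations and by the fluid equations \eqref{Eq3.01}--\eqref{Eq3.023} as in the proof of Proposition \ref{Prop3.1}, provide consistent Cauchy data in a neighborhood of the corner $S_{0,0}$ when translated into wave gauge. Rendall's theorem then yields an Einsteinian development $(U, g, v, \tau)$ of the reduced system in a neighborhood $U$ of $S_{0,0}$, assuming the data on $H_0 \cup \underline{H}_0$ within some uniform distance of $S_{0,0}$ (which is guaranteed as long as $\epsilon$ is sufficiently small so that the data bounds in Theorem \ref{SmoothExist} yield finite local norms in $U$). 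Propagation of the harmonic gauge condition and of the original (non-reduced) Einstein constraints then follows from the standard argument: the constraint quantities satisfy a linear homogeneous wave equation and vanish on the initial hypersurfaces, hence vanish throughout $U$.

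The main obstacle in this reduction is the compatibility at the corner $S_{0,0}$ and the handling of the two different characteristic speeds. Because the fluid has strictly slower propagation speed than the metric, $H_0$ and $\underline{H}_0$ are simultaneously characteristic for the Einstein part and non-characteristic (spacelike) for the fluid part; one must check that Rendall's framework (or a mixed characteristic/non-characteristic adaptation thereof) accommodates this combination, which it does precisely because the fluid data can be freely prescribed on null hypersurfaces provided they are spacelike with respect to the acoustical metric. Once this compatibility is in place, the remaining local existence statement is standard, and the resulting development is smooth in $U$ by persistence of regularity for quasilinear hyperbolic systems. No quantitative control near the singular boundary $\{u=u_*\}$ is claimed at this stage; that control comes from the a priori estimates of Sections \ref{Section4}--\ref{Section5} combined with a bootstrap/continuity argument in Section \ref{Section6}.
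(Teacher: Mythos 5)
Your proposal is correct and takes essentially the same route as the paper: the paper gives no independent proof of Proposition \ref{Prop3.01}, but simply invokes the known local existence result for the characteristic initial value problem, citing Rendall \cite{doi:10.1098/rspa.1990.0009}, which is precisely the reduction (hyperbolic gauge reduction plus Rendall's theorem, with the fluid subsystem non-characteristic on $H_0\cup\underline{H}_0$ because the sound cones lie strictly inside the light cones) that you spell out. Your discussion of corner compatibility, constraint propagation, and the determination of transversal fluid derivatives from the fluid equations is consistent with how the paper sets up the data in Proposition \ref{Prop3.1} and then uses the local existence statement.
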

   By solving the eikonal equation in $U$, there exists a smooth local coordinate chart $(u,\underline{u},\theta)$ within $(H_0\cap\{0\le \underline{u}<\underline{u}_*\})\cap(\underline{H}_0\cap\{0\le u<u_*\})\subset U$ as a neighborhood of $S_{0,0}$.
   
   We shall also need a local existence result for the Cauchy initial value problem, see for example \cite{doi:10.1142/S0217732314502058}.
   \begin{prop}\label{Prop3.02}
   Let $s>\frac{3}{2}+2$, and let $(\Sigma,g_0,K,\tau,v)$ be an initial data set prescribed on a compact spacelike hypersurface $\Sigma$ for the Cauchy initial value problem for the Einstein--Euler system, such that $g_0\in H^{s+1}$, $ K,\tau,v\in H^s$, and $p'<1$. Then there exists an Einsteinian development $(\Sigma\times[0,T],g,v,\tau)$ satisfying the Einstein--Euler system, obeying the initial conditions, such that $g\in C^0([0,T],H^{s+1}(\Sigma))\cap C^1([0,T],H^{s}(\Sigma))\cap C^2([0,T],H^{s-1}(\Sigma))$ and $v,\tau\in C^0([0,T],H^{s}(\Sigma))\cap C^1([0,T],H^{s-1}(\Sigma))$.
   \end{prop}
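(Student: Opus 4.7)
The plan is to treat the Einstein--Euler system in wave (harmonic) gauge as a coupled quasilinear wave equation for the metric paired with a first-order symmetric hyperbolic system for the fluid. First, I would introduce harmonic coordinates $\{x^\alpha\}$ on a neighborhood of $\Sigma$, so that the gauge condition $\Gamma^\mu := g^{\alpha\beta}\Gamma^\mu_{\alpha\beta} = 0$ reduces the Einstein equation $\mathrm{Ric}_{\mu\nu} = T_{\mu\nu} - \tfrac{1}{2} g_{\mu\nu}\mathrm{tr}_g T$ to
\begin{align*}
g^{\alpha\beta}\partial_\alpha\partial_\beta g_{\mu\nu} = N_{\mu\nu}(g,\partial g) + 2\bigl(T_{\mu\nu} - \tfrac{1}{2} g_{\mu\nu}\mathrm{tr}_g T\bigr).
\end{align*}
Under the hypothesis $0 < p' < 1$, the equations $D_\alpha T^{\alpha\mu} = 0$ together with the normalization $g_{\mu\nu}v^\mu v^\nu = -1$ can be rewritten, in the variables $(v,\log\tau)$, as a first-order symmetric hyperbolic system whose characteristic cones are the acoustical cones $G^{\alpha\beta}\xi_\alpha\xi_\beta = 0$. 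Since these cones lie strictly inside the light cones, the hypersurface $\Sigma$ is simultaneously spacelike for both $g$ and $G$, which is the geometric prerequisite for the standard energy method.

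Second, I would run a Picard iteration in $H^{s+1}(\Sigma)\times H^s(\Sigma)$. Given $(g^{(n)}, v^{(n)}, \tau^{(n)})$, let $g^{(n+1)}$ solve the linear wave equation obtained by freezing coefficients at $g^{(n)}$ and the source at $(v^{(n)}, \tau^{(n)})$, and let $(v^{(n+1)}, \tau^{(n+1)})$ solve the linear symmetric hyperbolic system with coefficients from $g^{(n+1)}$. The hypothesis $s > \tfrac{3}{2} + 2$ yields, via Moser-type product estimates and the Sobolev embedding $H^{s-1}\hookrightarrow L^\infty$, uniform $H^{s+1}\times H^s$ bounds on a common interval $[0,T]$, and a contraction argument in the lower norm $H^{s}\times H^{s-1}$ extracts a limit. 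Persistence of regularity together with the equations themselves then deliver the stated $C^j([0,T],H^{s+1-j})$ and $C^j([0,T],H^{s-j})$ regularity.

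Third, I would verify constraint propagation. Using the twice-contracted second Bianchi identity together with $D_\alpha T^{\alpha\mu}=0$, one derives a homogeneous linear wave equation for the harmonic gauge vector $\Gamma^\mu$. Since $(\Sigma, g_0, K, v, \tau)$ is assumed to satisfy the Hamiltonian and momentum constraints, the initial values $\Gamma^\mu|_\Sigma$ and $\partial_t\Gamma^\mu|_\Sigma$ can be arranged to vanish through the standard choice of $\partial_t g_{\alpha\beta}|_\Sigma$; uniqueness for the wave equation then forces $\Gamma^\mu\equiv 0$ on $[0,T]\times \Sigma$, so the reduced system coincides with the full Einstein--Euler system.

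The main obstacle is the apparent derivative mismatch between the two subsystems: $g\in H^{s+1}$ sources the fluid equations through its first derivatives $\partial g \in H^s$, while $(v,\tau) \in H^s$ sources the wave equation for $g$ only at the level of $H^s$. One must carefully arrange the iteration so that this pairing $(H^{s+1}, H^s)$ is preserved at each step without derivative loss. The threshold $s > \tfrac{3}{2} + 2$ is precisely what guarantees $L^\infty$-control of $\partial g$ and of $(v,\tau)$ and their first derivatives, which is needed to close the tame commutator estimates arising when derivatives are commuted past the principal parts; it is also what makes the product of two $H^s$ functions itself lie in $H^s$ in a quantitative way.
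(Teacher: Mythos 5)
The paper does not actually prove Proposition \ref{Prop3.02}: it is quoted as a known local well-posedness theorem for the Cauchy problem for the Einstein--Euler system, with a citation to the literature, and is then used as a black box in the continuation argument for Theorem \ref{SmoothExist}. Your sketch reconstructs the standard proof underlying such results---harmonic-gauge reduction of the Einstein equations to quasilinear wave equations sourced by $T_{\mu\nu}$, reformulation of the relativistic Euler equations in the variables $(v,\log\tau)$ as a first-order symmetric hyperbolic system whose characteristics are the acoustical cones (with $0<p'<1$ guaranteeing both hyperbolicity and that a $g$-spacelike $\Sigma$ is non-characteristic for the fluid), an iteration closing in $H^{s+1}\times H^{s}$ for $s>\frac{3}{2}+2$, and propagation of the harmonic gauge via the twice-contracted Bianchi identity---so your route is correct in outline and is essentially the argument of the cited reference rather than a genuinely different one. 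Two points would need to be made explicit in a full write-up: first, the symmetrization in $(v,\log\tau)$ requires the density $\tau$ to be bounded away from zero on $\Sigma$ (the paper works with $\log\tau$ throughout, so positivity is implicit in the data), with continuity then keeping $\tau>0$ on a short time interval; second, in the constraint-propagation step you must actually construct the full first-order data $\partial_t g_{\alpha\beta}|_\Sigma$ from $(g_0,K)$ together with a choice of initial lapse and shift so that $\Gamma^\mu|_\Sigma=0$, and it is the Hamiltonian and momentum constraints of the initial data set that then force $\partial_t\Gamma^\mu|_\Sigma=0$; this is routine but it is precisely where the constraints enter, so it should not be glossed as merely ``can be arranged''.
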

   By solving the eikonal equation in $U$, there exists a smooth local coordinate chart $(u,\underline{u},\theta)$ within the determining future neighborhood of $\Sigma$.
   
   Next, we will establish a priori estimates for the geometric  and fluid quantities.
   \begin{thm}
     \label{AprioriEstimates}
     Assume that the initial data for the characteristic initial value problem satisfy the assumptions of Theorem \ref{SmoothExist} with $\epsilon$ sufficiently small. We also assume that the solution exists in $D_{u_*,\underline{u}_*}\cap\{u+\underline{u}< \delta\}$ where $0<\delta\le u_*+\underline{u}_*$. Then
     \begin{align}
       &\mathcal{O}_{ini}+\mathcal{R}_{ini}+\mathcal{E}_{ini}\le\tilde{D},
     \end{align}
     for some $\tilde{D}$ depending only on $D$ and $d$. Moreover, for $([0,u_*)\times[0,\underline{u}_*)\times\mathbb{T}^2)\cap \{u+\underline{u}=\delta\}$ there exists $B$ depending only on $D$ and $d$ such that
     \begin{equation}\label{Theorem4(2)}
       \mathcal{O}_{3,2}+\tilde{\mathcal{O}}_{4,2}+\mathcal{R}_3
       +\mathcal{E}_{2,\infty}+\tilde{\mathcal{E}}_{4,2}+\tilde{\mathcal{F}}_{4,2}
     \le B.
     \end{equation}
   \end{thm}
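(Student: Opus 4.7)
The plan is a standard bootstrap-continuity argument that couples the weighted renormalized curvature energy estimates of Luk \cite{luk2017weak} with transport estimates for the Ricci coefficients and with new acoustical energy estimates for the fluid. The initial-data bound $\mathcal{O}_{ini}+\mathcal{R}_{ini}+\mathcal{E}_{ini}\le\tilde{D}$ follows directly from the construction in Proposition \ref{Prop3.1}: every term in these three norms is prescribed on the initial cones $H_0\cup\underline{H}_0$, and the ODE analysis there produces uniform bounds in terms of $d,D$, with the singular Ricci coefficient $\psi_{\underline{H}}$ and the singular transversal fluid derivative $\partial_u(v,\log\tau)$ already carrying the weight $f(u)^2$, while the hypothesis $\|f(u)^{-1}\|_{L^2_u}<\epsilon$ absorbs the $\epsilon^{-1/2}$ prefactors.

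For the spacetime bound, I would introduce the bootstrap assumption
\begin{align*}
\mathcal{O}_{3,2}+\tilde{\mathcal{O}}_{4,2}+\mathcal{R}_3+\mathcal{E}_{2,\infty}+\tilde{\mathcal{E}}_{4,2}+\tilde{\mathcal{F}}_{4,2}\le 2B
\end{align*}
in $D_{u_*,\underline{u}_*}\cap\{u+\underline{u}<\delta\}$ and improve it to $B$. The loop proceeds as follows. First, Sobolev embedding on $S_{u,\underline{u}}$ upgrades the bootstrap $L^2$ bounds to pointwise control of $(\psi,\psi_0,v,\log\tau)$. Second, integrating $\nabla_3\Gamma=\check{\Psi}+\Gamma\Gamma+\mathrm{Ric}$ and its $\nabla_4$ counterpart along integral curves of $e_3,e_4$, together with elliptic estimates on $S_{u,\underline{u}}$ at the top order to compensate for the derivative loss in the renormalization (which is admissible thanks to the gauge choice $\omega=0$), recovers $\mathcal{O}_{3,2}$ and $\tilde{\mathcal{O}}_{4,2}$. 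Third, the curvature norms $\mathcal{R}_3$ are closed by the renormalized Bianchi pair energy estimates for $(K,\check{\sigma},\underline{\beta})$, $(K,\check{\sigma},\beta)$, and $(\alpha,\beta)$. The crucial structural observation, which survives the coupling to the fluid, is that $\psi_{\underline{H}}$ always appears with a weight $f(u)$ whose reciprocal is $L^2_u$-small, so Gronwall in $u$ closes; the new error terms $\iint\check{\Psi}\,D\mathrm{Ric}+\iint\Gamma\,\mathrm{Ric}\,\check{\Psi}$ introduced by the Einstein equations are arranged so that the singular $\partial_u r$ or singular Ricci coefficient pairs either with a curvature factor controlled on $H$ (sparing the $u$-integration) or with the same $f(u)$ weight used in the renormalization.

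The fluid bounds $\tilde{\mathcal{E}}_{4,2}$ are produced by a separate acoustical energy current argument. Because $0<p'<1$, the null hypersurfaces $H_u,\underline{H}_{\underline{u}}$ are strictly spacelike for the acoustical metric $G^{\alpha\beta}=p'g^{\alpha\beta}+(p'-1)v^\alpha v^\beta$; contracting the linearized Euler system for $\partial_\nu^i(v,\log\tau)$ against a $G$-timelike multiplier gives coercive boundary terms on both cones, with coercivity constant proportional to $1-\sup p'$. This is exactly the point at which the smallness factor $(1-(\sup p')^{1/3})^2$ in the hypothesis of Theorem \ref{SmoothExist} enters, absorbing the commutator and lower-order Christoffel error terms coming from $\partial_\nu^i$ commuted into the acoustical d'Alembertian. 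The singular norm $\tilde{\mathcal{F}}_{4,2}$ is then recovered algebraically: since $\partial_u$ is non-characteristic for $G$, the fluid equations express $\partial_u(v,\log\tau)$ as a linear combination of tangential derivatives with coefficients containing $\psi_{\underline{H}}$, producing precisely the $f(u)$ weight in $\tilde{\mathcal{F}}_{4,2}$. Finally, $\mathcal{E}_{2,\infty}$ follows from Sobolev embedding on $S_{u,\underline{u}}$.

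The main obstacle is the coupled top-order closure. The curvature energy estimates require $\partial_\nu^3$-control of the Ricci coefficients via elliptic estimates, which in turn feeds back to $\partial_\nu^4$ of the renormalized curvature, while the top-order Ricci transport equations need $\partial_\nu^3$ of the Ricci source, and hence $\partial_\nu^4(v,\log\tau)$ via the acoustical energy estimate. Tracking which combinations of $\psi_{\underline{H}}$, $\partial_u r$, and $\underline{\alpha}$-type curvature can appear in each individual error integral, and verifying in each case that they pair with an integrable $f(u)$ weight or with a boundary term on $H$ rather than $\underline{H}$, is the delicate bookkeeping. Once this null-structural check is carried out, taking $\epsilon$ sufficiently small absorbs all the off-diagonal terms and the bootstrap improves $2B$ to $B$ with $B$ depending only on $d,D$, completing the proof.
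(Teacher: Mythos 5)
Your plan follows essentially the same route as the paper: a bootstrap argument combining transport estimates plus top-order elliptic estimates for the Ricci coefficients, the renormalized energy estimates for the triples $(K,\check{\sigma},\beta)$, $(K,\check{\sigma},\underline{\beta})$ and $(\alpha,\beta)$, a fluid energy-current estimate whose fluxes on $H_u$ and $\underline{H}_{\underline{u}}$ are coercive with constant comparable to $1-p'$ (the paper uses the Disconzi-type current $\dot{J}^\mu=p'v^\mu\dot{L}^2+2p'\dot{v}^\mu\dot{L}+v^\mu\dot{v}^\alpha\dot{v}_\alpha$, which is the first-order analogue of your $G$-timelike multiplier and yields the factor $1-(p')^{1/3}$), algebraic recovery of $\partial_u(v,\log\tau)$ from the equations since $\partial_u$ is non-characteristic, and Sobolev embedding for the $L^\infty$ norms; your description of how the Ricci source terms pair with $f(u)$ weights or with $H$-fluxes also matches Propositions \ref{Prop4.6.1}, \ref{Prop4.8.1}, \ref{Prop3.12}. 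One small misattribution: the bound $\mathcal{O}_{ini}+\mathcal{R}_{ini}+\mathcal{E}_{ini}\le\tilde{D}$ must be derived for \emph{arbitrary} data satisfying the hypotheses of Theorem \ref{SmoothExist} (by integrating the constraint equations along $H_0\cup\underline{H}_0$), not from the particular family constructed in Proposition \ref{Prop3.1}; also, for $\delta<u_*+\underline{u}_*$ one has to note that the extra boundary terms on $\Sigma_\delta$ in the energy identities have a favorable sign.

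The one step that would fail as written is your closure scheme: bootstrapping the single sum $\mathcal{O}_{3,2}+\tilde{\mathcal{O}}_{4,2}+\mathcal{R}_3+\mathcal{E}_{2,\infty}+\tilde{\mathcal{E}}_{4,2}+\tilde{\mathcal{F}}_{4,2}\le 2B$ and "improving to $B$" does not close, because several of the recovered inequalities carry constants $C(\Delta_1,\mathcal{E}_{2,\infty})$ with \emph{no} accompanying power of $\epsilon$. Concretely, the paper's estimates \eqref{Eq3.001} read $\mathcal{E}_{2,\infty}\le C(\Delta_1)(1+\tilde{\mathcal{E}}_{4,2}^3)$ and $\tilde{\mathcal{F}}_{4,2}\le C(\Delta_1,\mathcal{E}_{2,\infty})(1+\tilde{\mathcal{E}}_{4,2}+\mathcal{O}_{3,2})$, and in \eqref{Eq3.1} the terms $\tilde{\mathcal{E}}_{4,2}+\tilde{\mathcal{F}}_{4,2}$ enter the $\tilde{\mathcal{O}}_{4,2}$ bound without $\epsilon$-smallness; since these constants grow polynomially in the bootstrap quantity, a uniform constant can only be recovered at a strictly higher power of itself, never improved from $2B$ to $B$. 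The paper therefore closes with a \emph{hierarchical} scheme \eqref{Eq3.2}: only $\mathcal{O}_{1,\infty}+\mathcal{O}_{2,4}+\mathcal{O}_{3,2}$, $\mathcal{R}_3$ and $\tilde{\mathcal{E}}_{4,2}$ are bootstrapped at level $\Delta_1$, $\tilde{\mathcal{O}}_{4,2}$ at level $2\Delta_1^2$, while $\mathcal{E}_{2,\infty}$ and $\tilde{\mathcal{F}}_{4,2}$ are not bootstrapped at all but derived a posteriori (of size roughly $\Delta_1^{13}$ and $\Delta_1^{120}$), with the key structural facts that the $\tilde{\mathcal{E}}_{4,2}$ estimate is self-improving up to data thanks to the $(1-(p')^{1/3})$ coercivity, and that all large derived quantities re-enter the remaining estimates only multiplied by $\epsilon^{1/2}$, so that $\epsilon\sim\Delta_1^{-900}$ absorbs them, as in \eqref{Eq6.1}. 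You either need to adopt this asymmetric bookkeeping or verify explicitly that your constants allow a single-constant closure; the final $B$ in the theorem is then whatever power of $\Delta_1$ dominates all six norms, not the bootstrap threshold itself.
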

   \begin{proof}
     [Proof of Theorem \ref{SmoothExist}]
     Denote by $U_\delta=D_{u_*,\underline{u}_*}\cap\{u+\underline{u}< \delta\}$ and $\Sigma_\delta=([0,u_*)\times[0,\underline{u}_*)\times\mathbb{T}^2)\cap \{u+\underline{u}=\delta\}$.
     We note that estimates \eqref{Theorem4(2)} implies the assumptions of Theorem \ref{SmoothExist} and thus Proposition \ref{Prop3.01} on every null hypersurface $H_u$ and $\underline{H}_{\underline{u}}$, where $0<u,\underline{u}<\delta$. Also, equations \eqref{Eq2.17} - \eqref{Eq2.22} and equations \eqref{Eq2.1} - \eqref{Eq2.16} imply that the Christoffel symbols and curvature components in estimates \eqref{Theorem4(2)} controls $\partial_tg$ and $\partial_t^2g$, respectively. Thus, estimates \eqref{Eq3.1} also implies the assumptions of Proposition \ref{Prop3.02} on every Cauchy hypersurface $\Sigma_{\delta_0}$, where $0<\delta_0<\delta$.
     
     We prove by contradiction that the domain of existence contains any compact subset $A_\delta=([0,u_*)\times[0,\underline{u}_*)\times\mathbb{T}^2)\cap \{u+\underline{u}\le \delta\}$. To start, Proposition \ref{Prop3.01} implies existence in some $A_\delta$. Suppose there exists a supremum $\delta_*<u_*+\underline{u}_*$ of all permitted $\delta$, then Theorem \ref{AprioriEstimates} ensures that an initial data set can be defined on the compact spacelike hypersurface $\Sigma_*=([0,u_*)\times[0,\underline{u}_*)\times\mathbb{T}^2)\cap \{u+\underline{u}=\delta_*\}$ as a strong $H^s$ limit of the solution. Thus, Proposition \ref{Prop3.02} applied to $\Sigma_*$ derives existence in some determining future neighborhood of $\Sigma_*$. Combining with existence within a neighborhood of $S_{0,\delta_*}$ by Proposition \ref{Prop3.01} applied to $S_{0,\delta_*}$ if $\delta_*<\underline{u}_*$ and a neighborhood of $S_{\delta_*,0}$ by applying to $S_{\delta_*,0}$ if $\delta_*<u_*$, we derive a development containing $A_{\delta_0}$ for some $\delta_0>\delta_*$. This contradiction implies that $\delta_*=u_*+\underline{u}_*$ and thus Theorem \ref{SmoothExist} is proved.
   \end{proof}
    In Section \ref{Section4} and Section \ref{Section5}, we prove Theorem \ref{AprioriEstimates} for the case of $\delta=u_*+\underline{u}_*$, where $U_\delta=D_{u_*,\underline{u}_*}$.
     The proof also applies to general $\delta$ with slight modifications, see Section \ref{Section6}. We prove the estimates for the spacetime metric on every compact region $[0,u_*']\times[0,\underline{u}_*']\times\mathbb{T}^2$
    where $u_*'\le u_*,\ \underline{u}_*'\le \underline{u}_*$, stated as follows,
   \begin{align}
   \begin{split}
     &\mathcal{O}_{3,2}\le C(\mathcal{O}_{ini})+\epsilon^{\frac{1}{2}} C(\Delta_1,\mathcal{E}_{2,\infty})(1+\mathcal{R}_3+\tilde{\mathcal{O}}_{4,2}
       +\tilde{\mathcal{E}}_{4,2}),
     \\
     &\tilde{\mathcal{O}}_{4,2}
     \le C(\mathcal{O}_{ini})(1+\mathcal{R}_3
     + C(\Delta_1,\mathcal{E}_{2,\infty})
     (\epsilon^{\frac{1}{2}}(1+\mathcal{R}_3+\tilde{\mathcal{O}}_{4,2}^2)
     +\tilde{\mathcal{E}}_{4,2}+\tilde{\mathcal{F}}_{4,2})),
     \\
     &\mathcal{R}_3
     \le C(\mathcal{O}_{ini},\mathcal{R}_{ini})
     (1+\epsilon^{\frac{1}{2}} C(\Delta_1,\mathcal{E}_{2,\infty})(1+\mathcal{R}_3+\tilde{\mathcal{O}}_{4,2}
     +\tilde{\mathcal{E}}_{4,2}+\tilde{\mathcal{F}}_{4,2})),
     \end{split}
   \label{Eq3.1}
   \end{align}
   and for fluid variables
   \begin{align}
     \begin{split}
     &(1-(p')^{\frac{1}{3}})\tilde{\mathcal{E}}_{4,2}^2
     \le C(\mathcal{E}_{ini})(1
     + \epsilon C(\Delta_1,\mathcal{E}_{2,\infty})),
     \\
     &\tilde{\mathcal{F}}_{4,2} \le \ep^{\f{1}{2}}C(\Delta_1, \mathcal{E}_{2,\infty})(1+\tilde{\mathcal{E}}_{4,2}+\mathcal{O}_{3,2}),
     \\
     &\mathcal{E}_{2,\infty}\le C(\Delta_1)(1+\tilde{\mathcal{E}}_{4,2}^3),
     \end{split}
     \label{Eq3.001}
   \end{align}
   under the bootstrap assumption
   \begin{align}
     \begin{split}
     &\mathcal{O}_{1,\infty}+\mathcal{O}_{2,4}+\mathcal{O}_{3,2}\le \Delta_1,\
     \mathcal{R}_3 \le \Delta_1,\
     \tilde{\mathcal{O}}_{4,2}\le 2\Delta_1^{2},\
     \tilde{\mathcal{E}}_{4,2}\le \Delta_1,
     \end{split}
     \label{Eq3.2}
   \end{align}
   with $\Delta_{1}=\Delta_{1}(\mathcal{O}_{ini},\mathcal{R}_{ini},\mathcal{E}_{ini})$ to be determined. According to the proof as in Section \ref{Section4} and Section \ref{Section5}, the constants of the form $C(\Lambda)$ are polynomials of $\Lambda$ of order less than $10$. To prove \eqref{Theorem4(2)},
   we take some constant $M\gg1$ such that the constants in \eqref{Eq3.1}, \eqref{Eq3.001} of the form $C(\mathcal{O}_{ini},\mathcal{R}_{ini},\mathcal{E}_{ini})$ are dominated by $M$ and the constants of the form $C(\Lambda)$ are controlled by $M\Lambda^{10}$. We also take $\Delta_{1}= \max\{12,2(1-(p')^{\frac{1}{3}})^{-\frac{1}{2}}\}M^5$. Then bootstrap arguments imply that, with data dominated by $M$ and $\epsilon=(16\Delta_1^{900})^{-1}$, by deducing step by step,
   \begin{align}\label{Eq6.1}
   \begin{split}
     &\mathcal{E}_{2,\infty}\le \Delta_1^{10}(1+\Delta_1^{3})\le 2\Delta_1^{13},
     \\
     &\tilde{\mathcal{E}}_{4,2}\le (1-(p')^{\frac{1}{3}})^{-\frac{1}{2}}( M+\epsilon^{\frac{1}{2}}
      2\Delta_1^{130})\le 2(1-(p')^{\frac{1}{3}})^{-\frac{1}{2}}M
      \le\frac{1}{2}\Delta_1,
     \\
     &\tilde{\mathcal{F}}_{4,2}\le 3\ep^{\f{1}{2}}\Delta_1^{150},
     \\
     &\tilde{\mathcal{O}}_{4,2}\le M(1+\Delta_1)+\epsilon^{\frac{1}{2}}\Delta_1^{300}
     \le 2M\Delta_1 \le \Delta_1^2,
     \\
     &\mathcal{O}_{3,2},\mathcal{R}_3\le M+\epsilon^{\frac{1}{2}}\Delta_1^{400}\le 2M.
     \end{split}
   \end{align}
   Apply Sobolev embedding \cite[Lemma 5.1, Lemma 5.2]{Christodoulou:2008nj} with bounded isoperimetric constant on every $S_{u,\underline{u}}$, as seen from Proposition \ref{Prop3.7}, this implies
   \begin{align}
     \mathcal{O}_{1,\infty}+\mathcal{O}_{2,4}+\mathcal{O}_{3,2}\le \frac{1}{2}\Delta_1.
   \end{align}
   Therefore, these estimates imply \eqref{Theorem4(2)} in the designated area.
   
   It thus remains to prove \eqref{Eq3.1} and \eqref{Eq3.001}. In the follows, \eqref{Eq3.1} will be proven in Section \ref{Section4}, and \eqref{Eq3.001} will be proven in Section \ref{Section5}.
   
   \section{Estimates for the Spacetime Metric}\label{Section4}
   

   \subsection{General Propositions}
   We prove general propositions for obtaining bounds from the covariant null transport equations. As in \cite{luk2017weak},
   \begin{prop}\label{PropTransport}
     There exists $\epsilon_0=\epsilon_0(\Delta_1)>0$ such that for all $\epsilon\le\epsilon_0$ and for every $2\le p\le \infty$, we have
     \begin{align*}
       \sum_{i\le a}\| Z ^i\phi\|_{L^p(S_{u,\underline{u}})}
       &\le C(\sum_{i\le a}\| Z ^i\phi\|_{L^p(S_{u,\underline{u}'})}
       +\sum_{i\le a}\int_{\underline{u}'}^{\underline{u}}\|\nabla_4 Z ^i\phi\|_{L^p(S_{u,\underline{u}''})}
       d\underline{u}''),
       \\
       \sum_{i\le a}\| Z ^i\phi\|_{L^p(S_{u,\underline{u}})}
       &\le C(\sum_{i\le a}\| Z ^i\phi\|_{L^p(S_{u',\underline{u}})}
       +\sum_{i\le a}\int_{u'}^{u}\|\nabla_3 Z ^i\phi\|_{L^p(S_{u'',\underline{u}})}
       du''),
     \end{align*}
     for any tensor $\phi$ tangential to the torus $S_{u,\underline{u}}$.
   \end{prop}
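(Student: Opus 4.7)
The plan is to reduce the statement to the fundamental theorem of calculus applied to $||\partial_\nu^i\phi||_{L^p(S_{u,\underline{u}})}$ along the relevant null direction, with careful accounting for the $\underline{u}$-dependence (resp.\ $u$-dependence) of both the volume form $d\sigma=\sqrt{\det\gamma}\,d\theta$ and the fiber metric $\gamma$ used to measure $\phi$ pointwise. Gronwall then absorbs the errors coming from these variations.

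For the first inequality, I would differentiate
\[
  ||\partial_\nu^i\phi||_{L^p(S_{u,\underline{u}})}^p = \int_{\mathbb{T}^2} |\partial_\nu^i\phi|_\gamma^p\,\sqrt{\det\gamma}\,d\theta
\]
in $\underline{u}$ under the integral sign. The term where $\partial_{\underline{u}}$ falls on $\partial_\nu^i\phi$ gives, after H\"older, a contribution bounded by $p\,||\partial_\nu^i\phi||_{L^p(S)}^{p-1}\,||\partial_{\underline{u}}\partial_\nu^i\phi||_{L^p(S)}$. The remaining terms, in which $\partial_{\underline{u}}$ hits either $\gamma^{-1}$ inside $|\cdot|_\gamma$ or $\sqrt{\det\gamma}$, produce factors of $\chi$ (since $\partial_{\underline{u}}\gamma_{AB}=2\Omega\chi_{AB}$, so $\partial_{\underline{u}}\gamma^{AB}=-2\Omega\chi^{AB}$) and are controlled pointwise by $C\,|\chi|\,|\partial_\nu^i\phi|^p\,\sqrt{\det\gamma}$. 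Dividing by $p\,||\partial_\nu^i\phi||_{L^p(S)}^{p-1}$ yields the differential inequality
\[
  \frac{d}{d\underline{u}}||\partial_\nu^i\phi||_{L^p(S)} \le ||\partial_{\underline{u}}\partial_\nu^i\phi||_{L^p(S)} + C\,||\chi||_{L^\infty(S)}\,||\partial_\nu^i\phi||_{L^p(S)}.
\]
Since the bootstrap assumption $\mathcal{O}_{1,\infty}\le\Delta_1$ together with Sobolev embedding on $S$ yields $||\chi||_{L^\infty_{\underline{u}}L^\infty(S)}\lesssim\Delta_1$, Gronwall produces a multiplicative constant bounded independently of $\underline{u}\in[0,\underline{u}_*]$ provided $\epsilon\le\epsilon_0(\Delta_1)$ is small. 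Summation over $i\le a$ is immediate, and the $p=\infty$ case follows by taking $p\to\infty$, or directly by integrating along integral curves of $e_4=\Omega^{-2}\partial_{\underline{u}}$ and using that $\partial_\nu^i\phi$ is $S$-tangent component-wise in the $(u,\underline{u},\theta)$ coordinates.

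The $u$-direction estimate follows the same scheme with transport vector field $e_3=\partial_u+b^A\partial_A$. The correction $b^A\partial_A$ is tangent to $S$ and introduces only lower-order terms governed by the regular quantities $\psi_0\in\{b,\gamma,\gamma^{-1},\log\Omega\}$. The substantive change is that the volume-form and metric-variation contributions now involve $\underline{\chi}$, which is singular: $\tilde{\mathcal{O}}_{4,2}\le 2\Delta_1^2$ combined with Sobolev embedding yields $||f(u)^2\underline{\chi}||_{L^\infty_u L^\infty(S)}\lesssim\Delta_1^2$, while by construction $\int_0^{u_*}f(u)^{-2}\,du<\epsilon^4$. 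Hence the Gronwall exponent $C\int_{u'}^{u}||\underline{\chi}||_{L^\infty(S)}\,du''\lesssim\Delta_1^2\epsilon^4$ remains uniformly bounded for $\epsilon$ small, and the same argument closes. The main obstacle is therefore only the bookkeeping: one must distinguish the regular coefficient $\chi$ (controlled in $L^\infty_{\underline{u}}L^\infty$) appearing in the first inequality from the singular coefficient $\underline{\chi}$ (controlled only through its $f$-weighted norm, but with the weight $f^{-2}$ integrable in $u$) appearing in the second. No further analytic idea is required beyond what the bootstrap hierarchy already encodes.
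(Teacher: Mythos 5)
Your proposal is correct and follows essentially the same route as the paper: differentiate $\|\partial_\nu^i\phi\|_{L^p(S)}^p$ along $\Omega^{-2}\partial_{\underline u}$ (resp.\ $\partial_u+b^A\partial_A$, with the tangential $b^A\partial_A$ handled by integration by parts on $S$), pick up the volume-form variation as a $\mathrm{tr}\chi$ (resp.\ $\mathrm{tr}\underline\chi$) coefficient, and close with H\"older and Gr\"onwall, the singular $u$-direction coefficient being harmless because its $u$-integral is small under the bootstrap (note only $\mathrm{tr}\underline\chi$ carries the $f^2$-weighted $L^\infty_u$ bound, while $\hat{\underline\chi}$ is controlled through the $f$-weighted $L^2_u$ norms together with $\|f^{-1}\|_{L^2_u}<\epsilon$, which still gives the needed $L^1_u$ smallness).
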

   \begin{proof}
     According to \cite{luk2017weak}, for surface measure $\mathrm{d}\sigma=\sqrt{|\det \gamma|}\mathrm{d}\theta^A\wedge\mathrm{d}\theta^B$ on $S_{u,\underline{u}}$, construction of $\nabla$ as the projected metric connection implies that $\nabla\gamma=0$. Therefore
     \begin{align*}
       &e_4\int_{S_{u,\underline{u}}}f
       =\int_{S_{u,\underline{u}}}\nabla_4f\mathrm{d}\sigma.
     \end{align*}
     Hence, by taking $f=| Z ^i\phi|^p$, we have
     \begin{align*}
       &\| Z ^i\phi\|_{L^p(S_{u,\underline{u}})}^p
       =\| Z ^i\phi\|_{L^p(S_{u,\underline{u}'})}^p
       +\int_{\underline{u}'}^{\underline{u}}
       \int_{S_{u,\underline{u}''}}p| Z ^i\phi|^{p-2}
       Z ^i\phi\nabla_4 Z ^i\phi\mathrm{d}\sigma
       d\underline{u}''.
     \end{align*}
     Thus proposition holds by H\"older's inequality and Gr\"onwall's inequality. Similarly we have
     \begin{align*}
       &e_3\int_{S_{u,\underline{u}}}f
       =\int_{S_{u,\underline{u}}}\nabla_3f
       \mathrm{d}\sigma,
     \end{align*}
     Hence, by taking $f=| Z ^i\phi|^p$, the second inequality follows similarly.
   \end{proof}
   \begin{prop}\label{PropElliptic}
     Let $\phi$ be a symmetric $r$ covariant tensor on a $2$-torus $(\mathbb{T}^2,\gamma)$. Recall that
     \begin{align*}
       &(\mathrm{div}\phi)_{A_1...A_r}=\nabla^B\phi_{BA_1...A_r},\
       (\mathrm{curl}\phi)_{A_1...A_r}=\slashed{\epsilon}^{BC}\nabla_B\phi_{CA_1...A_r}.
     \end{align*}
     Suppose $\phi$ satisfies
     \begin{align*}
        &\mathrm{div}\phi=f,\quad
        \mathrm{curl}\phi=g.
     \end{align*}
     Suppose also that
     \begin{align*}
        &E=\sum_{i\le2}\| Z ^i (\psi_0,\psi,\psi_H)\|_{L^\infty(S)}
        +\sum_{i\le3}\| Z ^i (\psi_0,\psi,\psi_H)\|_{L^4(S)}
        +\sum_{i\le4}\| Z ^i (\psi_0,\psi,\psi_H)\|_{L^2(S)}<\infty.
     \end{align*}
     There exists a constant $C_E$ depending only on $E$ such that
     \begin{align*}
       &
        \sum_{i\le3, j\le1}\| Z ^i\slashed{\nabla}^j \phi\|_{L^2(S)}
       \le C_E\sum_{i\le3}\| Z ^i (f,g
       ,\phi)\|_{L^2(S)}.
     \end{align*}
   \end{prop}
   \begin{proof}
     Upon integration by parts and $\nabla\gamma=0$, it follows that
     \begin{align*}
       \int_{S_{u,\underline{u}}} \varphi\nabla_C \psi\mathrm{d}\sigma
       =\int_{S_{u,\underline{u}}} \nabla_C \varphi \psi\mathrm{d}\sigma.
     \end{align*}
     By algebraic transformation,
     \begin{align*}
       &\int_{S_{u,\underline{u}}} |f|^2+|g|^2\mathrm{d}\sigma
       =\int_{S_{u,\underline{u}}} |\slashed{\nabla}\phi|^2\mathrm{d}\sigma
     \end{align*}
     For higher order derivatives, commuting the formula of $f,g$ by $ Z ^i$, then
     \begin{align*}
       &[\gamma^{AB}\nabla_A, Z ^i]\phi_{B\cdot}
       =_s\sum_{i=i_1+i_2} Z ^{i_1}\gamma^{AB}Z ^{i_2}\nabla_A\phi_{B\cdot}
       +\gamma^{AB}[\nabla_A, Z ^i]\phi_{B\cdot}\ ,
       \\
       &[\slashed{\epsilon}^{AB}\nabla_A, Z ^i]\phi_{B\cdot}
       =_s\sum_{i=i_1+i_2} Z ^{i_1}\slashed{\epsilon}^{AB}Z ^{i_2}\nabla_A\phi_{B\cdot}
       +\slashed{\epsilon}^{AB}[\nabla_A, Z ^i]\phi_{B\cdot}\ .
     \end{align*}
     Apply Proposition \ref{Prop2.2} to commuted $ Z ^if, Z ^ig$, then up to a constant $C_E$ depending on $E$,
     \begin{align*}
       &|\int_{S_{u,\underline{u}}} | Z ^if|^2+| Z ^ig|^2\mathrm{d}\sigma
       -\int_{S_{u,\underline{u}}} | Z ^i\slashed{\nabla}\phi|^2\mathrm{d}\sigma|
       \\
       \le&\ C_E \sum_{i=i_1+i_2+i_3, j, k\le1}\| Z ^{i_1}\slashed{\nabla}^j \phi  Z ^{i_2}\phi Z ^{i_3}\slashed{\nabla}^k(\psi_0,\psi,\psi_H)\|_{L^1(S)}
       \\
       \le&\ C_E (\sum_{i=i_1+i_2, j\le1}\| Z ^{i_1}\slashed{\nabla}^j \phi\|_{L^2(S)} \| Z ^{i_2}\phi\|_{L^2(S)}
       \\
       &+\sum_{i=i_1+i_2+i_3, i_3\ge2, j,k\le1}\| Z ^{i_1}\slashed{\nabla}^j \phi\|_{L^2(S)} \| Z ^{i_2}\phi\|_{L^\infty(S)} \| Z ^{i_3}\slashed{\nabla}^k(\psi_0,\psi,\psi_H)\|_{L^2(S)})
       \\
       \le&\
       \frac{1}{100}\sum_{i_1\le i, j\le1}\| Z ^{i_1}\slashed{\nabla}^j \phi\|_{L^2(S)}^2
       +C_E (\sum_{i_2\le i-1} \| Z ^{i_2}\phi\|_{L^2(S)}^2
       \\
       &+\sum_{i_2+i_3\le i,i_3\ge2, j,k\le1}\| Z ^{i_2}\phi\|_{L^\infty(S)}^2 \| Z ^{i_3}\slashed{\nabla}^k(\psi_0,\psi,\psi_H)\|_{L^2(S)}^2).
     \end{align*}
     Note that curvature on $\mathbb{T}^2$ is bounded by $E$, Sobolev embedding on $\mathbb{T}^2$ applied to $ Z ^i\slashed{\nabla}^j \phi$ concludes the proof.
   \end{proof}
   Also, we refer to a standard Sobolev embedding result.
   \begin{prop}\label{PropSobolev}
     Let $\phi$ be a function on $D_{u,\underline{u}}$, $0<u,\underline{u}\le\epsilon$, there exists a constant $C(\Delta_1) $ depending on $\Delta_1$ such that
     \begin{align*}
       \|\phi\|_{L^\infty(D_{u,\underline{u}})}
     &\le C(\Delta_1) \epsilon^{-\frac{1}{2}}\sum_{|\alpha|\le2}\| Z ^\alpha\phi\|_{L_{u}^\infty L_{\underline{u}}^2 L^2(S)}.
     \end{align*}
   \end{prop}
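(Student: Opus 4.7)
The proposition is a mixed-norm Sobolev embedding on the $4$-dimensional region $D_{u,\underline{u}}$, and the main observation is that the commutators $\partial_\nu \in \{\partial_A, \epsilon\partial_{\underline{u}}\}$ span the tangent space of each null hypersurface $H_u$, which is $3$-dimensional (parameterized by $(\underline{u}, \theta^A)$). So the plan is to reduce the statement to the classical $H^2\hookrightarrow L^\infty$ embedding on $H_u$, and then to take the supremum over $u$.

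Concretely, I will first fix $u\in[0,u_*]$ and rescale $\tilde{\underline{u}} = \underline{u}/\epsilon$, so that $H_u\cap D_{u,\underline{u}}$ becomes (diffeomorphic to) $[0, \underline{u}/\epsilon]\times\mathbb{T}^2 \subset [0,1]\times\mathbb{T}^2$ and $\partial_{\tilde{\underline{u}}} = \epsilon\partial_{\underline{u}}$. In these rescaled coordinates, the commutators $\partial_\nu$ are exactly the partial derivatives with respect to $\tilde{\underline{u}}$ and $\theta^A$. Applying the standard Sobolev embedding $H^2 \hookrightarrow L^\infty$ on the $3$-manifold $[0,1]_{\tilde{\underline{u}}}\times (\mathbb{T}^2,\gamma)$ yields
\begin{align*}
  \|\phi\|_{L^\infty_{\tilde{\underline{u}}} L^\infty(S)}
  \le C_0 \sum_{|\alpha|\le 2}\|\partial_\nu^\alpha\phi\|_{L^2_{\tilde{\underline{u}}} L^2(S)},
\end{align*}
and changing variables back to $\underline{u}$ introduces a factor of $\epsilon^{-1/2}$ on the right-hand side, since $d\tilde{\underline{u}} = \epsilon^{-1}\,d\underline{u}$ and the norm is $L^2$. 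Finally, taking $\sup_{u\in[0,u_*]}$ of both sides turns the left-hand side into $\|\phi\|_{L^\infty(D_{u,\underline{u}})}$ and the right-hand side into the $L^\infty_u L^2_{\underline{u}} L^2(S)$ norm appearing in the statement.

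The only real issue is guaranteeing that the Sobolev constant $C_0$ on $[0,1]\times (S_{u,\underline{u}},\gamma)$ is \emph{uniform} in $(u,\underline{u})$, which requires uniform control of the metric components $\gamma, \gamma^{-1}$ and in particular a uniform isoperimetric constant on every $S_{u,\underline{u}}$. This is exactly why the constant ends up depending on $\Delta_1$ rather than being universal: under the bootstrap assumption \eqref{Eq3.2}, the quantities $\psi_0\in\{b,\gamma,\gamma^{-1},\log\Omega\}$ are controlled through $\mathcal{O}_{1,\infty}$, and the isoperimetric constants of the $S_{u,\underline{u}}$ stay bounded by Proposition \ref{Prop3.7} (as already invoked at the end of Section \ref{Section3.1}). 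Once this uniformity is in hand, the argument above goes through with no further difficulty; I do not anticipate any substantial obstacle beyond bookkeeping the rescaling and the dependence of the Sobolev constant on $\gamma$.
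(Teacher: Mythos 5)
Your proposal is correct in substance but proves the embedding by a different route than the paper. You rescale $\tilde{\underline{u}}=\underline{u}/\epsilon$ so that $\partial_\nu$ become the unit coordinate derivatives on the $3$-dimensional slab $[0,1]_{\tilde{\underline{u}}}\times\mathbb{T}^2$, apply the standard isotropic $H^2\hookrightarrow L^\infty$ there, undo the rescaling to pick up the factor $\epsilon^{-\frac{1}{2}}$, and take $\sup_u$; the paper instead follows \cite[(5.35)]{Christodoulou:2008nj} and a Lions--Peetre interpolation argument, combining a purely angular torus embedding in $L^\infty_{\underline{u}}L^{1+\lambda}(S)$ with a mixed estimate involving one $\epsilon\partial_{\underline{u}}$ derivative in $L^{1+\lambda}_{\underline{u}}L^{2+\lambda}(S)$, and then uses H\"older in $\underline{u}$ (interval of length $\le\epsilon$) to produce the same $\epsilon^{-\frac{1}{2}}$ and the $L^2$-based norms. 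Both arguments rest on the same two points: only the tangential derivatives $\partial_\nu$ are used, and the $\epsilon$-weight in $\epsilon\partial_{\underline{u}}$ is exactly what converts the length of the $\underline{u}$-interval into $\epsilon^{-\frac{1}{2}}$; and both need the constants to be uniform in $(u,\underline{u})$, which you correctly source from the bootstrap assumption \eqref{Eq3.2} via Proposition \ref{Prop3.7} (in fact, since only coordinate derivatives and the measure $\sqrt{\det\gamma}\,d\theta$ enter, uniform upper and lower bounds on $\det\gamma$ already suffice, so the degeneracy of the induced metric on the null hypersurface is harmless, as you note). Your route is the more elementary one (no interpolation with the auxiliary exponent $\lambda$), while the paper's stays entirely within the fiberwise $2$-dimensional embeddings it quotes from \cite{Christodoulou:2008nj}. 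One caveat applies equally to both proofs: the Sobolev constant on the rescaled slab $[0,\underline{u}/\epsilon]\times\mathbb{T}^2$ is uniform only when the $\underline{u}$-extent is comparable to $\epsilon$ (a constant test function shows the stated inequality degenerates for $\underline{u}\ll\epsilon$), so your tacit replacement of $[0,\underline{u}/\epsilon]$ by $[0,1]$ mirrors an imprecision already present in the statement and in the paper's own proof rather than introducing a new gap.
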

   \begin{proof}
     Apply the interpolation theorem by Lions and Peetre to Sobolev embedding in $\mathbb{T}^2$, see for example \cite[(5.35)]{Christodoulou:2008nj}, where the constant depend on $\Delta_1$ which bounds the volume form,
     \begin{align*}
       &\|\phi\|_{L^\infty(D_{u,\underline{u}})}\le C(\Delta_1,\lambda) \sum_{|\alpha|\le2}\|\slashed{\nabla}^\alpha\phi\|_{L_{u}^\infty L_{\underline{u}}^\infty L^{1+\lambda}(S)},
     \end{align*}
     and
     \begin{align*}
       &\|\phi\|_{L^\infty(D_{u,\underline{u}})}\le C(\Delta_1,\lambda)\epsilon^{-\frac{1}{1+\lambda}} \sum_{|\alpha|\le1,|\beta|\le1}\|(\epsilon\nabla_4)^\alpha\slashed{\nabla}^\beta\phi\|_{L_{u}^\infty L_{\underline{u}}^{1+\lambda} L^{2+\lambda}(S)}.
     \end{align*}
     Take specifically $\lambda=\delta^2$ and apply H\"older's inequality, then
     \begin{align*}
       \|\phi\|_{L^\infty(D_{u,\underline{u}})}
       &\le C(\Delta_1,\delta)\epsilon^{-\frac{1}{1+\delta}}  \sum_{|\alpha|\le2}\| Z ^\alpha\phi\|_{L_{u}^\infty L_{\underline{u}}^{1+\delta} L^2(S)}.
     \end{align*}
     Apply H\"older's inequality again, we arrive at the inequality in the proposition.
   \end{proof}

   \subsection{Schematic Equations}\label{SchematicEquations}
   We derive the schematic equations, following \cite{luk2017weak}, from the equations in Appendix \ref{Equations} with the help of Proposition \ref{Prop2.2}. Recall that by $=_s$ we mean by suppressing all coefficients of size $C(\Delta_1,\mathcal{E}_{2,\infty})$, which are bounded by bootstrap assumptions. In the following, denote by $\Gamma$ a general Ricci coefficient perpendicular to the torus, then from \eqref{EqA.01} - \eqref{EqA.07} schematically we have
   \begin{align}\label{Eq4.0}
     \Gamma_{\alpha\cdot}^\beta
     =D_\alpha e^\beta-\nabla_\alpha (e^A\delta_A{}^\beta)
     =_s\psi+\psi_H+\psi_{\underline{H}}. &
   \end{align}
   Following \cite{Taylor2017}, denote by $\slashed{\mathrm{Ric}}_{\cdot \cdot}=\slashed{\mathrm{Ric}}_{AB}$ the restricted $(0,2)$-tensors on 
   the tori $S_{u,\underline{u}}$. Similarly, let $\slashed{\mathrm{Ric}}_{3\cdot},\slashed{\mathrm{Ric}}_{4,\cdot}$ be the $(0,1)$-tensors and $\slashed{\mathrm{Ric}}_{33},\slashed{\mathrm{Ric}}_{44},\slashed{\mathrm{Ric}}_{34}$ be the functions on tori $S_{u,\underline{u}}$. 
   We denote them by $\slashed{\mathrm{Ric}}$.
   Equation \eqref{Eq1} implies, for $ i\le4$, schematically
   \begin{align}
     &
      Z ^i \slashed{\mathrm{Ric}}=_s \sum_{ i_1\le i} Z ^{i_1}r
     +\sum_{ i_2\le i}  Z ^{i_2}\psi_0,
     \label{Eq3.3.1}
   \end{align}
   and for $ i\le 3$, schematically
   \begin{align}
      Z ^i \nabla\slashed{\mathrm{Ric}}=_s \sum_{ i_1+ i_2\le i}\nabla Z ^{i_1}r\cdot  Z ^{i_2}(r+\psi_0)
     +\sum_{ i_1+ i_2\le i} \nabla Z ^{i_1}\psi_0\cdot  Z ^{i_2}r.
     \label{Eq3.3.2}
   \end{align}
   As for metric components $\psi_0$, we derive from Equations \eqref{Eq2.17} - \eqref{Eq2.22} as
   \begin{align}
     \begin{split}
        &\nabla_4 Z ^i b
        =_s\sum_{ i_1+ i_2+ i_3+ i_4\le  i}Z ^{i_1}\psi^{i_2} Z ^{i_3}\psi  Z ^{i_4}\psi_H,
     \end{split}
     \label{Eq3.3}
     \\
     \begin{split}
        &\nabla_4 Z ^i \gamma
        =_s\sum_{ i_1+ i_2+ i_3+ i_4\le  i}Z ^{i_1}\psi^{i_2} Z ^{i_3}\psi  Z ^{i_4}\psi_H,
     \end{split}
     \label{Eq3.4}
     \\
     \begin{split}
        &\slashed{\nabla} Z ^i\log\Omega
        =_s
        \sum_{ i_1+ i_2+ i_3+ i_4\le  i}Z ^{i_1}\psi^{i_2} Z ^{i_3}\psi  Z ^{i_4}\psi_H,
     \end{split}
     \label{Eq3.5}
     \\
     \begin{split}
        &\nabla_4 Z ^i\log\Omega
        =_s
        \sum_{ i_1+ i_2+ i_3+ i_4\le  i}Z ^{i_1}\psi^{i_2} Z ^{i_3}\psi  Z ^{i_4}\psi_H,
     \end{split}
     \label{Eq3.5.1}
     \\
     \begin{split}
        &\nabla_3 Z ^i\log\Omega
        =_s 
        \sum_{ i_1+ i_2+ i_3+ i_4\le  i}Z ^{i_1}\psi^{i_2} Z ^{i_3}(\psi+\psi_H)  Z ^{i_4}\psi_{\Hb}.
     \end{split}
     \label{Eq3.6}
   \end{align}
   For Ricci coefficients, we have the following schematic equations:
   \begin{align}
     \nabla_4 Z ^i\eta
     =_s&\sum_{ i_1+ i_2+ i_3+ i_4\le  i}  Z ^{i_1}\psi^{i_2} Z ^{i_3}\beta
     +
      Z ^{i_1}\psi^{i_2} Z ^{i_3}\psi  Z ^{i_4}\psi_H
     + Z ^i\slashed{\mathrm{Ric}},
     \label{Eq3.7}
     \\
     \nabla_3 Z ^i\underline{\eta}
     =_s&\sum_{ i_1+ i_2+ i_3+ i_4\le  i}  Z ^{i_1}\psi^{i_2} Z ^{i_3}\underline{\beta}
     +
      Z ^{i_1}\psi^{i_2} Z ^{i_3}\psi  Z ^{i_4}\psi_{\underline{H}}
     + Z ^i\slashed{\mathrm{Ric}},
     \label{Eq3.8}
     \\
     \nabla_4 Z ^i\psi_{\underline{H}}
     =_s& Z ^iK+ Z ^i\slashed{\nabla}\underline{\eta}+\sum_{i_1+i_2=i}  Z ^{i_1}\psi\cdot  Z ^{i_2}\psi+\sum_{ i_1+ i_2\le  i} Z ^{i_1}(\psi+\psi_H) Z ^{i_2}\psi_{\underline{H}}
     + Z ^i\slashed{\mathrm{Ric}},
     \label{Eq3.9}
     \\
     \nabla_3 Z ^i\psi_{H}
     =_s& Z ^iK+ Z ^i\slashed{\nabla} \eta+\sum_{i_1+i_2=i}  Z ^{i_1}\psi\cdot  Z ^{i_2}\psi+\sum_{ i_1+ i_2\le  i}  Z ^{i_1}(\psi+\psi_H) Z ^{i_2}\psi_{\underline{H}}
     + Z ^i\slashed{\mathrm{Ric}},
     \label{Eq3.10}
   \\
     \nabla_4 Z ^i\mathrm{tr}\chi
     =_s&\sum_{ i_1+ i_2+ i_3+ i_4\le  i} Z ^{i_1}\psi^{i_2} Z ^{i_3}\psi_H Z ^{i_4}\psi_H+ Z ^i\slashed{\mathrm{Ric}},
     \label{Eq3.11}
     \\
     \nabla_3 Z ^i\mathrm{tr}\underline{\chi}
     =_s&\sum_{ i_1+ i_2+ i_3+ i_4\le  i} Z ^{i_1}\psi^{i_2} Z ^{i_3}(\psi+\psi_H+\psi_{\underline{H}}) Z ^{i_4}\psi_{\underline{H}}
     + Z ^i\slashed{\mathrm{Ric}},
     \label{Eq3.12}
     \\
     \nabla_4 Z ^i\hat{\chi}
     =_s&\sum_{ i_1+ i_2+ i_3+ i_4\le  i}  Z ^{i_1}\psi^{i_2} Z ^{i_3}\psi_H Z ^{i_4}\psi_H
     +\sum_{ i_1+ i_2+ i_3\le  i}  Z ^{i_1}\psi^{i_2} Z ^{i_3}\alpha,
     \label{Eq3.13}
     \\
     \nabla_3 Z ^i\hat{\underline{\chi}}
     =_s&\sum_{ i_1+ i_2+ i_3+ i_4\le  i} Z ^{i_1}\psi^{i_2} Z ^{i_3}(\psi+\psi_H+\psi_{\underline{H}}) Z ^{i_4}\psi_{\underline{H}}
     +\sum_{ i_1+ i_2+ i_3\le  i}  Z ^{i_1}\psi^{i_2} Z ^{i_3}\underline{\alpha},
     \label{Eq3.14}
     \\
     \mathrm{div} Z ^i\hat{\chi}
     =_s&\frac{1}{2} Z ^i\slashed{\nabla}\mathrm{tr}\chi- Z ^i\beta
     +\sum_{ i_1+ i_2+ i_3+ i_4\le  i} Z ^{i_1}\psi^{i_2} Z ^{i_3}(\psi+\psi_H) Z ^{i_4}\psi_H
     + Z ^i\slashed{\mathrm{Ric}},
     \label{Eq3.15}
     \\
     \mathrm{div} Z ^i\hat{\underline{\chi}}
     =_s&\frac{1}{2} Z ^i\slashed{\nabla}\mathrm{tr}\underline{\chi}- Z ^i\underline{\beta}
     +\sum_{ i_1+ i_2+ i_3+ i_4\le  i} Z ^{i_1}\psi^{i_2} Z ^{i_3}(\psi+\psi_H) Z ^{i_4}\psi_{\Hb}
     + Z ^i\slashed{\mathrm{Ric}},
     \label{Eq3.16}
     \\
     \nabla_4 Z ^i\underline{\omega}
     =_s&
      Z ^iK+
      \sum_{ i_1+ i_2+ i_3+ i_4\le  i} Z ^{i_1}\psi^{i_2} Z ^{i_3}(\psi+\psi_H) Z ^{i_4}(\psi_H+\psi_{\Hb})
     + Z ^i\slashed{\mathrm{Ric}},
     \label{Eq3.17}
     \\
     0=\nabla_3 Z ^i\omega
     =_s& Z ^iK+\sum_{ i_1+ i_2+ i_3+ i_4\le  i} Z ^{i_1}\psi^{i_2} Z ^{i_3}(\psi+\psi_H) Z ^{i_4}\psi_{\Hb}
     + Z ^i\slashed{\mathrm{Ric}}.
     \label{Eq3.18}
   \end{align}
   For curvature terms,
   we have the following equations:
   \begin{align}
   \begin{split}
     \nabla_3& Z ^i\beta+\slashed{\nabla} Z ^i K
     -*\slashed{\nabla} Z ^i\check{\sigma}
     =_s\sum_{ i_1+ i_2\le1,\  i_3+ i_4+ i_5+ i_6\le  i}
      Z ^{i_3}\slashed{\nabla}^{i_1}\psi_{\underline{H}} Z ^{i_4}\psi^{i_5}
      Z ^{i_6}\slashed{\nabla}^{i_2}\psi_H
     \\
     &+\sum_{ i_1+ i_2+ i_3\le  i-1}\psi^{i_1} Z ^{i_2}K Z ^{i_3}(K,\check{\sigma})
     \\
     &+\sum_{ i_1+ i_2+ i_3+ i_4\le i}  Z ^{i_1}\psi_{H}^{i_2} Z ^{i_3}\psi_{\Hb} Z ^{i_4}\beta
     +\sum_{ i_1+ i_2+ i_3+ i_4\le i}  Z ^{i_1}\psi^{i_2} Z ^{i_3}(\psi+\psi_H) Z ^{i_4}(K,\sigma)
     \\
     &+ Z ^i((\slashed{\nabla},\nabla_3,\nabla_4)\slashed{\mathrm{Ric}}+\Gamma\slashed{\mathrm{Ric}}),
     \end{split}
     \label{Eq3.19}
     \end{align}
     \begin{align}
     \begin{split}
     \nabla_4& Z ^i\underline{\beta}-\slashed{\nabla} Z ^i K
     -*\slashed{\nabla} Z ^i\check{\sigma}
     =_s\sum_{ i_1+ i_2\le1,\  i_3+ i_4+ i_5+ i_6\le  i}
      Z ^{i_3}\slashed{\nabla}^{i_1}\psi_{H} Z ^{i_4}\psi^{i_5}
      Z ^{i_6}\slashed{\nabla}^{i_2}\psi_{\underline{H}}
     \\
     &+\sum_{ i_1+ i_2+ i_3\le  i-1}\psi^{i_1} Z ^{i_2}K Z ^{i_3}(K,\check{\sigma})
     +\sum_{ i_1+ i_2+ i_3+ i_4\le i}  Z ^{i_1}\psi^{i_2} Z ^{i_3}(\psi+\psi_H) Z ^{i_4}(K,\sigma)
     \\
     &+ Z ^i((\slashed{\nabla},\nabla_3,\nabla_4)\slashed{\mathrm{Ric}}+\Gamma\slashed{\mathrm{Ric}}),
     \end{split}
     \label{Eq3.20}
     \\
     \begin{split}
     \nabla_3& Z ^i\underline{\beta}+\slashed{\mathrm{div}} Z ^i \underline{\alpha}
     =_s
     \sum_{ i_1+ i_2+ i_3+ i_4\le i}  Z ^{i_1}\psi_H^{i_2} Z ^{i_3}\psi_{\Hb} Z ^{i_4}\bb
     +\sum_{ i_1+ i_2+ i_3+ i_4\le i}  Z ^{i_1}\psi^{i_2} Z ^{i_3}(\psi+\psi_H) Z ^{i_4}\ab
     \\
     &+ Z ^i((\slashed{\nabla},\nabla_3,\nabla_4)\slashed{\mathrm{Ric}}+\Gamma\slashed{\mathrm{Ric}}),
     \end{split}
     \label{Eq3.21}
     \\
     \begin{split}
     \nabla_4& Z ^i\beta-\slashed{\mathrm{div}} Z ^i \alpha
     =_s
     \sum_{ i_1+ i_2\le  i} Z ^{i_1}\psi_{H} Z ^{i_2}\beta
     +\sum_{ i_1+ i_2+ i_3+ i_4\le i}  Z ^{i_1}\psi^{i_2} Z ^{i_3}(\psi+\psi_H) Z ^{i_4}\alpha
     \\
     &+ Z ^i( (\slashed{\nabla},\nabla_4) \slashed{\mathrm{Ric}}+(\psi+\psi_H) \slashed{\mathrm{Ric}}),
     \end{split}
     \label{Eq3.22}
     \\
     \begin{split}
     \nabla_4& Z ^iK+\mathrm{div} Z ^i\beta
     =_s\sum_{ i_1+ i_2+ i_3+ i_4\le  i}  Z ^{i_1}\psi^{i_2} Z ^{i_3}\psi_H Z ^{i_4}(K,\check{\sigma})
     \\
     &+\sum_{ i_1+ i_2+ i_3+ i_4\le  i}  Z ^{i_1}\psi^{i_2} Z ^{i_3}(\psi+\psi_H)  Z ^{i_4}\slashed{\nabla}\psi_H
     + Z ^i( (\slashed{\nabla},\nabla_4)\slashed{\mathrm{Ric}}+(\psi+\psi_H)\slashed{\mathrm{Ric}}),
     \end{split}
     \label{Eq3.23}
     \\
     \begin{split}
     \nabla_4& Z ^i(-\mathrm{div}\eta+K)
     =_s\sum_{ i_1+ i_2+ i_3+ i_4\le  i}  Z ^{i_1}\psi^{i_2} Z ^{i_3}\psi_H Z ^{i_4}(K,\check{\sigma})
     \\
     &
     +\sum_{ i_1+ i_2+ i_3+ i_4\le  i}  Z ^{i_1}\psi^{i_2} Z ^{i_3}\psi 
      Z ^{i_4}\slashed{\nabla}\psi_H
     + Z ^i( (\slashed{\nabla},\nabla_4)\slashed{\mathrm{Ric}}+(\psi+\psi_H)\slashed{\mathrm{Ric}}),
     \end{split}
     \label{Eq3.24}
     \\
     \begin{split}
     \nabla_3& Z ^iK-\mathrm{div} Z ^i\underline{\beta}
     =_s\sum_{ i_1+ i_2+ i_3+ i_4\le  i} Z ^{i_1}\psi^{i_2} Z ^{i_3}\psi_{\underline{H}} Z ^{i_4}(K,\check{\sigma})
     \\
     &+\sum_{ i_1+ i_2+ i_3+ i_4\le  i}  Z ^{i_1}\psi^{i_2} Z ^{i_3}(\psi+\psi_H)  Z ^{i_4}\slashed{\nabla}\psi_{\underline{H}}
     +\sum_{ i_1+ i_2+ i_3+ i_4\le i}  Z ^{i_1}\psi_H^{i_2} Z ^{i_3}\psi_{\Hb} Z ^{i_4}K
     \\
     &
     + Z ^i((\slashed{\nabla},\nabla_3,\nabla_4)\slashed{\mathrm{Ric}}+\Gamma\slashed{\mathrm{Ric}}),
     \end{split}
     \label{Eq3.25}
     \\
     \begin{split}
     \nabla_3& Z ^i(-\mathrm{div}\underline{\eta}+K)
     =_s\sum_{ i_1+ i_2+ i_3+ i_4\le  i} Z ^{i_1}\psi^{i_2} Z ^{i_3}\psi_{\underline{H}} Z ^{i_4}(K,\check{\sigma})
     \\
     &+\sum_{ i_1+ i_2+ i_3+ i_4\le  i}  Z ^{i_1}\psi^{i_2} Z ^{i_3}(\psi+\psi_H) Z ^{i_4}\slashed{\nabla}\psi_{\underline{H}}
     +\sum_{ i_1+ i_2+ i_3+ i_4\le i}  Z ^{i_1}\psi_H^{i_2} Z ^{i_3}\psi_{\Hb} Z ^{i_4}(K,\slashed{\nabla}\psi)
     \\
     &+ Z ^i((\slashed{\nabla},\nabla_3,\nabla_4)\slashed{\mathrm{Ric}}+\Gamma\slashed{\mathrm{Ric}}),
     \end{split}
     \label{Eq3.26}
     \\
     \begin{split}
     \nabla_3& Z ^i\check{\sigma}+\mathrm{div}* Z ^i\underline{\beta}
     =_s\sum_{ i_1+ i_2+ i_3+ i_4\le  i} Z ^{i_1}\psi^{i_2} Z ^{i_3}\psi_{\underline{H}} Z ^{i_4}(K,\check{\sigma})
     \\
     &+\sum_{ i_1+ i_2+ i_3+ i_4\le  i}  Z ^{i_1}\psi^{i_2} Z ^{i_3}(\psi+\psi_H)  Z ^{i_4}\slashed{\nabla}\psi_{\underline{H}}
     +\sum_{ i_1+ i_2+ i_3+ i_4\le i}  Z ^{i_1}\psi_H^{i_2} Z ^{i_3}\psi_{\Hb} Z ^{i_4}\check{\sigma}
     \\
     &+ Z ^i((\slashed{\nabla},\nabla_3,\nabla_4)\slashed{\mathrm{Ric}}+\Gamma\slashed{\mathrm{Ric}}),
     \end{split}
     \label{Eq3.27}
     \\
     \begin{split}
     \nabla_4& Z ^i\check{\sigma}+\mathrm{div}* Z ^i\beta
     =_s\sum_{ i_1+ i_2+ i_3+ i_4\le  i}  Z ^{i_1}\psi^{i_2} Z ^{i_3}\psi_{H} Z ^{i_4}(K,\check{\sigma})
     \\
     &+\sum_{ i_1+ i_2+ i_3+ i_4\le  i}  Z ^{i_1}\psi^{i_2} Z ^{i_3}(\psi+\psi_H)  Z ^{i_4}\slashed{\nabla}\psi_{H}
     + Z ^i(\slashed{\nabla}\slashed{\mathrm{Ric}}+(\psi+\psi_H)\slashed{\mathrm{Ric}}),
     \end{split}
     \label{Eq3.28}
     \end{align}
     \begin{align}
     \begin{split}
     \nabla_3& Z ^i\alpha-\slashed{\nabla}\hat{\otimes} Z ^i\beta
     =_s\sum_{ i_1+ i_2+ i_3+ i_4\le  i}  Z ^{i_1}(\psi+\psi_H)^{i_2} Z ^{i_3}(\psi+\psi_{\underline{H}})
      Z ^{i_4}\alpha
     + Z ^{i_1}\psi^{i_2} Z ^{i_3}\psi_{H}
      Z ^{i_4}
     (K,\check{\sigma})
     \\
     &+\sum_{ i_1+ i_2+ i_3+ i_4\le i}  Z ^{i_1}\psi^{i_2} Z ^{i_3}\psi_H Z ^{i_4}\beta
     +
      Z ^{i_1}\psi_H Z ^{i_2}\psi_H  Z ^{i_3}\psi_{\underline{H}}
      \\
     &+ Z ^i((\slashed{\nabla},\nabla_3,\nabla_4)\slashed{\mathrm{Ric}}+\Gamma\slashed{\mathrm{Ric}}),
     \end{split}
     \label{Eq3.29}
     \\
     \begin{split}
     \nabla_4& Z ^i\underline{\alpha}-\slashed{\nabla}\hat{\otimes} Z ^i\underline{\beta}
     =_s\sum_{ i_1+ i_2+ i_3+ i_4\le  i}  Z ^{i_1}\psi^{i_2} Z ^{i_3}\psi_{H}
      Z ^{i_4}\underline{\alpha}
     + Z ^{i_1}\psi^{i_2} Z ^{i_3}\psi_{\underline{H}}
      Z ^{i_4}
     (K,\check{\sigma})
     \\
     &+\sum_{ i_1+ i_2+ i_3+ i_4\le i}  Z ^{i_1}\psi^{i_2} Z ^{i_3}\psi_H Z ^{i_4}\bb
     +
      Z ^{i_1}\psi_{\underline{H}} Z ^{i_2}\psi_{\underline{H}}  Z ^{i_3}\psi_{H}
      \\
      &
     + Z ^i((\slashed{\nabla},\nabla_3,\nabla_4)\slashed{\mathrm{Ric}}+\Gamma\slashed{\mathrm{Ric}}).
     \end{split}
     \label{Eq3.30}
   \end{align}
   Also, we need the following equations for the $\mathrm{div}$-$\mathrm{curl}$ systems of $\eta$ and $\omega$:
   \begin{align}
     \begin{split}
        \mathrm{curl} Z ^i\underline{\eta}
        =_s& Z ^i\check{\sigma}+\sum_{ i_1+ i_2\le  i} Z ^{i_1}\psi Z ^{i_2}\psi,
     \end{split}
     \label{Eq3.31}
     \\
     \begin{split}
        \mathrm{curl} Z ^i\eta
        =_s& Z ^i\check{\sigma}+\sum_{ i_1+ i_2\le  i} Z ^{i_1}\psi Z ^{i_2}\psi,
     \end{split}
     \label{Eq3.31*}
     \\
     \begin{split}
        \mathrm{div} Z ^i\nabla\omega &=_s\mathrm{div}\nabla_4 Z ^i\eta
        +\sum_{ i_1+ i_2\le  i} Z ^{i_1}\psi Z ^{i_2}\psi.
     \end{split}
     \label{Eq3.32}
   \end{align}

   \subsection{Modified Estimates with Ricci terms}\label{SubsectionModifiedEstimates}
   In the following, we follow the proof of \cite{luk2017weak} while replacing the transporting vector fields by $\nabla_3,\nabla_4$ and commutators $\nabla$ by $ Z $ as indicated in the norms. We will prove \eqref{Eq3.1} in this subsection.

   \begin{prop}\label{Prop3.7}
     Under the bootstrap assumptions, there exists $\epsilon_0=\epsilon_0(\Delta_1)$ such that for $\epsilon\le \epsilon_0$,
     \begin{align}
       &\frac{1}{2}<\Omega<2,\\
       &0<c\le\det\gamma\le C,\\
       &|\gamma|,|\gamma^{-1}|\le C,\\
       &|b^A|\le C\Delta_1\epsilon,
     \end{align}
     where the constants $c,C$ depend only on $d,D$. To be explicit,
     \begin{align}
       \sum_{ i\le3}\| Z ^i\psi_0\|_{L^\infty_uL_{\underline{u}}^\infty L^2(S)} \lesssim& C(\mathcal{O}_{ini})+\epsilon^{\frac{1}{2}} \mathcal{O}_{3,2},
       \label{Eq4.40.1}
       \\
       \sum_{ i\le 4}\epsilon^{-\frac{1}{2}}\|f(u) Z ^i\psi_0\|_{L^\infty_uL_{\underline{u}}^2 L^2(S)} \lesssim& C(\mathcal{O}_{ini})+\epsilon^{\frac{1}{2}} \mathcal{\tilde{\mathcal{O}}}_{4,2},
       \label{Eq4.40.2}
       \\
       \sum_{ i\le 4}\epsilon^{-\frac{1}{2}}\| Z ^i\psi_0\|_{L^\infty_{\underline{u}}L_{u}^2 L^2(S)} \lesssim& C(\mathcal{O}_{ini})+\epsilon^{\frac{1}{2}} \mathcal{\tilde{\mathcal{O}}}_{4,2}.
       \label{Eq4.40.3}
     \end{align}
   \end{prop}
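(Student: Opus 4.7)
The plan is to apply Proposition \ref{PropTransport} to the four schematic transport equations \eqref{Eq3.3}--\eqref{Eq3.6} governing $\psi_0 \in \{b, \gamma, \gamma^{-1}, \log\Omega\}$, using as initial data the gauge conditions $\Omega|_{H_0 \cup \underline{H}_0} = 1$ and $b|_{\underline{H}_0} = 0$ together with the bounds on $\gamma$ at $S_{0,0}$ supplied by the hypotheses of Theorem \ref{SmoothExist}. The key idea in each case is that the component is transported in a direction of length $\le \epsilon$, with source a regular Ricci coefficient (or zero), so the deviation from the initial value is $O(\epsilon^{1/2})$ times bootstrap constants, which for $\epsilon$ small suffices to close.

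First I would establish the pointwise bounds. For $\gamma$: integrate $\Omega^{-2}\partial_{\underline{u}}\gamma =_s \chi$ along $\underline{u}$ starting from $\underline{H}_0$. The bootstrap $\mathcal{O}_{1,\infty} \le \Delta_1$ controls $\chi = \psi_H$ in $L^\infty_{\underline{u}} L^\infty(S)$ after pairing with $\epsilon^{1/2}$ from $L^2_{\underline{u}}$, so $|\gamma - \gamma|_{\underline{H}_0}| \lesssim \Delta_1 \epsilon^{1/2}$, and the bounds on $\gamma|_{\underline{H}_0}$ from the initial data then yield uniform bounds on $\gamma$, $\gamma^{-1}$, and $\det\gamma$ inside the prescribed constants $c,C$. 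For $b$: integrate $\Omega^{-2}\partial_{\underline{u}}b =_s (\eta - \underline{\eta}) = \psi$ along $\underline{u}$ from $b|_{\underline{H}_0} = 0$; using $\mathcal{O}_{1,\infty} \le \Delta_1$ gives $|b| \lesssim \Delta_1\epsilon$. For $\log\Omega$: since the choice of frame gives $\omega = 0$, the $\underline{u}$-transport of $\log\Omega$ reduces to $(\psi,\psi_0)$-controlled source terms, and integration in $\underline{u}$ again yields $O(\epsilon^{1/2})$ deviation from $\log\Omega|_{\underline{H}_0} = 0$, placing $\Omega$ inside $(1/2, 2)$ for small $\epsilon$. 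The angular gradient formula \eqref{Eq3.5} is used to recover $\slashed{\partial}\log\Omega$ via $\eta + \underline{\eta} = \psi$.

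Next I would establish the higher-order bounds \eqref{Eq4.40.1}--\eqref{Eq4.40.3}. For each $i \le 4$, commute \eqref{Eq3.3}--\eqref{Eq3.6} with $\partial_\nu^i$, picking up commutator terms of the form $\partial_\nu^{i_1}\psi_0 \cdot \partial_\nu^{i_2}(\text{source})$, then apply Proposition \ref{PropTransport} with $p = 2$ in the $\underline{u}$ direction (integrating from $\underline{H}_0$). For \eqref{Eq4.40.1} ($i \le 3$ in $L^\infty_u L^\infty_{\underline{u}} L^2(S)$): bound the right-hand side in $L^\infty_u L^1_{\underline{u}} L^2(S)$ via H\"older, extracting an $\epsilon^{1/2}$ from integration in $\underline{u}$, and recognize the remaining factor as $\mathcal{O}_{3,2}$ up to $C(\Delta_1,\mathcal{E}_{2,\infty})$ bilinear terms handled by Sobolev embedding (Proposition \ref{PropSobolev}). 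For \eqref{Eq4.40.2} and \eqref{Eq4.40.3} ($i = 4$): the source terms $\partial_\nu^4\chi, \partial_\nu^4(\eta,\underline{\eta}), \partial_\nu^4\underline{\omega}$ are controlled in the appropriate mixed norms by $\tilde{\mathcal{O}}_{4,2}$, giving $\mathcal{O}_{ini}$-initial contribution plus an $\epsilon^{1/2}\tilde{\mathcal{O}}_{4,2}$ bulk contribution. The weighted $f(u)$ factor in \eqref{Eq4.40.2} pairs correctly with the $f(u)$-weighted norms in $\tilde{\mathcal{O}}_{4,2}$ because $\psi_0$ is transported in $\underline{u}$, so $f(u)$ is constant along the integration direction.

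The main obstacle is bookkeeping: matching the mixed $L^p_u L^q_{\underline{u}} L^r(S)$ structures of $\mathcal{O}_{3,2}$ and $\tilde{\mathcal{O}}_{4,2}$ to the norms naturally produced by Proposition \ref{PropTransport} (which naturally gives $L^\infty_{\underline{u}}$ after integrating the transport equation), while carefully separating the singular Ricci coefficients $\psi_{\underline{H}}$ from the regular ones. Fortunately, because $\psi_0$ is transported in $\underline{u}$ rather than $u$, we never integrate against the non-integrable $f(u)^{-2}$ weight in the bulk, so the estimates close with an $\epsilon^{1/2}$ gain and the bootstrap can be improved.
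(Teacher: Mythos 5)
Your treatment of $b$ and $\gamma$ matches the paper's: commute \eqref{Eq3.3}, \eqref{Eq3.4} with $\partial_\nu^i$, apply Proposition \ref{PropTransport} in the $\underline{u}$ direction from $\underline{H}_0$, and use the $\epsilon^{1/2}$ gain from the short $\underline{u}$ interval together with the bootstrap norms. The genuine gap is your treatment of $\log\Omega$. There is no usable $\underline{u}$-transport equation for $\log\Omega$ in this gauge: the identity $\omega\equiv 0$ is an automatic consequence of the frame normalization (for the metric ansatz one checks directly that $D_4e_4=0$ holds for $e_4=\Omega^{-2}\partial_{\underline{u}}$ whatever $\Omega$ is, since $\Gamma^{\underline{u}}_{\underline{u}\underline{u}}=2\partial_{\underline{u}}\log\Omega$ exactly cancels the derivative of the prefactor), so it carries no information about $\partial_{\underline{u}}\log\Omega$. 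Indeed $\partial_{\underline{u}}\log\Omega=\partial_4\log\Omega$ is precisely the quantity the paper identifies in Section \ref{SingularFluidVariables} as ``noncontrollable'' without losing a derivative, and the whole point of the gauge is that it never needs to be estimated. The only equations available for $\log\Omega$ are the angular relation \eqref{Eq3.5} and the $u$-direction equation \eqref{Eq3.6}, $(\partial_u+b^A\partial_A)\log\Omega=-2\underline{\omega}$, whose source is the \emph{singular} coefficient $\underline{\omega}\in\psi_{\underline{H}}$.

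This breaks several of your supporting claims. You assert that ``we never integrate against the non-integrable $f(u)^{-2}$ weight in the bulk'' and that ``$f(u)$ is constant along the integration direction'': both fail for $\log\Omega$, which must be propagated in $u$ by integrating $\partial_\nu^i\underline{\omega}\sim f(u)^{-2}$; the estimate closes not because the singular weight is avoided but because $f^{-2}\in L^1_u$ with $\|f(u)^{-1}\|_{L^2_u}\le\epsilon^2$, which is exactly the origin of the terms $\|f(u)^{-1}\|_{L^2_u}\,\|f(u)\partial_\nu^i\psi_{\underline{H}}\|_{L^\infty_{\underline{u}}L^2_uL^2(S)}$ in the paper's chains for \eqref{Eq4.40.1}--\eqref{Eq4.40.3}. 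Your higher-order step is also internally inconsistent: you propose to commute \eqref{Eq3.3}--\eqref{Eq3.6} and then apply Proposition \ref{PropTransport} ``in the $\underline{u}$ direction (integrating from $\underline{H}_0$)'', yet \eqref{Eq3.6} is a $u$-transport equation (its data is $\Omega=1$ on $H_0$, where all tangential derivatives of $\log\Omega$ vanish), and your own list of sources includes $\partial_\nu^4\underline{\omega}$, which can only enter through that $u$-equation. To repair the argument, estimate $\partial_\nu^i\log\Omega$ by applying the second inequality of Proposition \ref{PropTransport} to the $\partial_\nu^i$-commuted \eqref{Eq3.6}, pairing the $f(u)$-weighted $\underline{\omega}$ norms in $\mathcal{O}_{3,2}$ and $\tilde{\mathcal{O}}_{4,2}$ with $\|f(u)^{-1}\|_{L^2_u}$; with that change the rest of your bookkeeping goes through as in the paper.
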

   \begin{proof}
     Apply Proposition \ref{PropTransport} and the transport equations \eqref{Eq3.3} - \eqref{Eq3.6}, recall that $\psi_0\in\{b,\gamma,\gamma^{-1},\log\Omega\},\ \psi\in\{
     \eta,\underline{\eta}\}$, we have:
   \begin{align*}
   \begin{split}
     \sum_{ i\le3}\| Z ^i\psi_0\|_{L^\infty_uL_{\underline{u}}^\infty L^2(S)}
     \lesssim&\ C(\mathcal{O}_{ini})
     +\epsilon\sum_{ i\le3}\| Z ^i\psi\|_{L^\infty_uL_{\underline{u}}^\infty L^2(S)}
     \\
     &+\epsilon^{\frac{1}{2}}\sum_{ i\le3}\| Z ^i\psi_{H}\|_{L_{\underline{u}}^2 L^\infty_u L^2(S)}
     \\
     \lesssim&\ C(\mathcal{O}_{ini})+\epsilon^{\frac{1}{2}} C(\Delta_1,\mathcal{E}_{2,\infty})\mathcal{O}_{3,2},
   \end{split}
   \\
   \begin{split}
     \sum_{ i\le4}\epsilon^{-\frac{1}{2}}\|f(u) Z ^i\psi_0\|_{L^\infty_uL_{\underline{u}}^2 L^2(S)}
     \lesssim&\ C(\mathcal{O}_{ini})
     +\epsilon^{-\frac{1}{2}}\Big(
     \epsilon\sum_{ i\le4}\|f(u) Z ^i\psi\|_{L^\infty_uL_{\underline{u}}^2 L^2(S)}
     \\
     &+\epsilon\sum_{ i\le4}\| Z ^i\psi_{H}\|_{L_{u}^\infty L^2_{\underline{u}} L^2(S)}
     \Big)
     \\
     \lesssim&\ C(\mathcal{O}_{ini})+\epsilon^{\frac{1}{2}} C(\Delta_1,\mathcal{E}_{2,\infty})\mathcal{\tilde{\mathcal{O}}}_{4,2},
   \end{split}
   \\
   \begin{split}
     \sum_{ i\le4}\epsilon^{-\frac{1}{2}}\| Z ^i\psi_0\|_{L^\infty_{\underline{u}}L_{u}^2 L^2(S)}
     \lesssim&\ C(\mathcal{O}_{ini})
     +\epsilon^{-\frac{1}{2}}\Big(\epsilon\sum_{ i\le4}\| Z ^i\psi\|_{L^\infty_{\underline{u}}L_{u}^2 L^2(S)}
     \\
     &+\epsilon\sum_{ i\le4}\| Z ^i\psi_{H}\|_{L^\infty_uL_{\underline{u}}^2 L^2(S)}
     \Big)
     \\
     \lesssim&\ C(\mathcal{O}_{ini})+\epsilon^{\frac{1}{2}} C(\Delta_1,\mathcal{E}_{2,\infty})\mathcal{\tilde{\mathcal{O}}}_{4,2}.
   \end{split}
   \end{align*}
   The first part of the proposition is implied by \eqref{Eq4.40.1} - \eqref{Eq4.40.3}.
   \end{proof}

   \begin{prop}\label{Prop4.6.1}
     Under the bootstrap assumptions, there exists $\epsilon_0=\epsilon_0(\Delta_1)$ such that for $\epsilon\le \epsilon_0$, Ricci curvature satisfy the estimates
     \begin{align}
   \begin{split}
     &\sum_{ i\le4}\|f(u) Z ^i\mathrm{Ric}\|_{L_u^\infty L^2_{\underline{u}}L^2(S)}
     \le C(\Delta_1,\mathcal{E}_{2,\infty})\epsilon^{\frac{1}{2}}(\tilde{\mathcal{O}}_{4,2}+\tilde{\mathcal{E}}_{4,2}),
     \end{split}
     \label{Eq3.33}
     \\
   \begin{split}
     &\sum_{ i\le4}\| Z ^i\mathrm{Ric}\|_{L^\infty_{\underline{u}} L^2_{u}L^2(S)}
     \le C(\Delta_1,\mathcal{E}_{2,\infty})\epsilon^{\frac{1}{2}}(\tilde{\mathcal{O}}_{4,2}+\tilde{\mathcal{E}}_{4,2}).
\end{split}
\label{Eq3.34}
   \end{align}
   \end{prop}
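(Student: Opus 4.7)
The plan is to combine the schematic Ricci expansion \eqref{Eq3.3.1} with the definitions of $\tilde{\mathcal{O}}_{4,2}$ and $\tilde{\mathcal{E}}_{4,2}$. Taking the trace of the Einstein equation and solving for $\mathrm{Ric}$ gives an algebraic expression for $\mathrm{Ric}_{\mu\nu}$ in terms of $g_{\mu\nu}$, $v^\mu$, $\tau$ and $p(\tau)$. Hence differentiating $i$ times with $\partial_\nu\in\{\partial_A,\epsilon\partial_{\underline{u}}\}$ produces a polynomial in $\partial_\nu^{\le i}r$ and $\partial_\nu^{\le i}\psi_0$ whose lower-order factors are bounded in $L^\infty$ by the bootstrap constant $\Delta_1$ together with $\mathcal{E}_{2,\infty}$ (via Sobolev embedding, Proposition \ref{PropSobolev}). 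This is precisely the content of \eqref{Eq3.3.1}: modulo coefficients of the form $C(\Delta_1,\mathcal{E}_{2,\infty})$ absorbed into $=_s$, only a single top-order derivative of $r$ or $\psi_0$ appears.

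For \eqref{Eq3.34}, the estimate is immediate by the norm definitions: for each $i\le 4$ and $i_2\le i$, the piece $\epsilon^{-1/2}\|\partial_\nu^i(\psi,\psi_0)\|_{L^\infty_{\underline{u}}L^2_u L^2(S)}$ of $\tilde{\mathcal{O}}_{4,2}$ bounds $\|\partial_\nu^{i_2}\psi_0\|_{L^\infty_{\underline{u}}L^2_u L^2(S)}$ by $\epsilon^{1/2}\tilde{\mathcal{O}}_{4,2}$, and similarly the corresponding piece of $\tilde{\mathcal{E}}_{4,2}$ handles $\|\partial_\nu^{i_1}r\|_{L^\infty_{\underline{u}}L^2_u L^2(S)}$. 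Summing over $i_1,i_2\le i\le 4$ with the implicit $C(\Delta_1,\mathcal{E}_{2,\infty})$ gives the claimed bound.

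For \eqref{Eq3.33}, the key auxiliary observation is that $f(u)=(u_*-u)^{1/2}\log^p(\tfrac{1}{u_*-u})$ is decreasing, so $|f(u)|\le f(0)=u_*^{1/2}\log^p(\tfrac{1}{u_*})\le C$ uniformly for $u_*\le\epsilon\le\epsilon_0$. The $\psi_0$ contribution is then directly controlled by the $\epsilon^{-1/2}\|f(u)\partial_\nu^i(\psi,\psi_0)\|_{L^\infty_u L^2_{\underline{u}}L^2(S)}$ component of $\tilde{\mathcal{O}}_{4,2}$, while for the fluid contribution we use the crude bound $\|f(u)\partial_\nu^{i_1}r\|_{L^\infty_u L^2_{\underline{u}}L^2(S)}\le |f|_{L^\infty}\|\partial_\nu^{i_1}r\|_{L^\infty_u L^2_{\underline{u}}L^2(S)}\lesssim \epsilon^{1/2}\tilde{\mathcal{E}}_{4,2}$, pulling the weight outside and invoking the second component of $\tilde{\mathcal{E}}_{4,2}$.

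There is no real obstacle here, since $\mathrm{Ric}$ is algebraically (not differentially) determined by $(g,v,\tau)$: the only minor point is verifying that the $=_s$ notation genuinely absorbs all nonlinear products without loss of derivative, which follows routinely from Proposition \ref{PropSobolev} placing all but one factor of each product into $L^\infty$ and folding it into $C(\Delta_1,\mathcal{E}_{2,\infty})$. The substantive work in handling the coupling between geometry and fluid is deferred to the transport and energy estimates in the remainder of Sections \ref{Section4}--\ref{Section5}.
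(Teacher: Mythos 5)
Your proposal is correct and follows essentially the same route as the paper: expand $\partial_\nu^i\mathrm{Ric}$ via the algebraic schematic relation \eqref{Eq3.3.1}, use the boundedness (indeed monotone decrease) of $f$ to drop the weight on the fluid terms, and read off the bounds from the $\partial_\nu^{\le 4}r$ and $\partial_\nu^{\le 4}\psi_0$ components of $\tilde{\mathcal{E}}_{4,2}$ and $\tilde{\mathcal{O}}_{4,2}$, each carrying the $\epsilon^{\frac12}$ gain from the $L^2_u$ or $L^2_{\underline{u}}$ integration. The only cosmetic difference is that the paper routes the $\psi_0$ contribution through the transport estimates \eqref{Eq4.40.2}--\eqref{Eq4.40.3} of Proposition \ref{Prop3.7}, while you invoke the corresponding pieces of the $\tilde{\mathcal{O}}_{4,2}$ norm directly, which amounts to the same bound.
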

   \begin{proof}
   Following schematic equations \eqref{Eq3.3.1}, fixing $u, \underline{u}$, respectively
     \begin{align*}
   \begin{split}
     \sum_{ i\le4}\|f(u) Z ^i\mathrm{Ric}\|_{L_u^\infty L^2_{\underline{u}}L^2(S)}
     \lesssim&
     \sum_{ j\le  4}
     \Big(
     \| Z ^{j} r\|_{L^\infty_{u} L^2_{\underline{u}} L^2(S)}
     +\|f(u) Z ^{j} \psi_0 \|_{L^\infty_{u} L^2_{\underline{u}} L^2(S)}
     \Big)
     \\
     \le&\ C(\Delta_1,\mathcal{E}_{2,\infty})\epsilon^{\frac{1}{2}}(\tilde{\mathcal{O}}_{4,2}+\tilde{\mathcal{E}}_{4,2}),
     \end{split}
     \\
   \begin{split}
     \sum_{ i\le4}\| Z ^i\mathrm{Ric}\|_{L^\infty_{\underline{u}} L^2_{u}L^2(S)}
     \lesssim&\sum_{ j\le 4}
     \Big(
     \| Z ^{j} r\|_{L^\infty_{\underline{u}} L^2_{u}L^2(S)}
     +\| Z ^{j} \psi_0 \|_{L^\infty_{\underline{u}} L^2_{u}L^2(S)}
     \Big)
     \\
     \le&\ C(\Delta_1,\mathcal{E}_{2,\infty})\epsilon^{\frac{1}{2}}(\tilde{\mathcal{O}}_{4,2}+\tilde{\mathcal{E}}_{4,2}).
\end{split}
   \end{align*}
   These estimates conclude the proof with the help of \eqref{Eq4.40.2} - \eqref{Eq4.40.3}.
   \end{proof}

   \begin{prop}\label{Prop4.6}
     Under the bootstrap assumptions, there exists $\epsilon_0=\epsilon_0(\Delta_1)$ such that for $\epsilon\le \epsilon_0$,
     \begin{align*}
       &\mathcal{O}_{3,2}\le C(\mathcal{O}_{ini})+\epsilon^{\frac{1}{2}} C(\Delta_1,\mathcal{E}_{2,\infty})(1+\mathcal{R}_3+\tilde{\mathcal{O}}_{4,2}+\tilde{\mathcal{E}}_{4,2}).
     \end{align*}
   \end{prop}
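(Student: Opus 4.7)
The plan is to apply Proposition \ref{PropTransport} to each of the schematic null transport equations (\ref{Eq3.7})--(\ref{Eq3.18}) for the Ricci coefficients, with the direction of integration matched to the corresponding norm in $\mathcal{O}_{3,2}$. After picking up initial data on the characteristic hypersurfaces $H_0$ or $\underline{H}_0$ (bounded by $\mathcal{O}_{ini}$), the right-hand sides split into three types: curvature terms (controlled by $\mathcal{R}_3$), nonlinear Ricci coefficient products (handled via the bootstrap assumption together with the $f(u)^{-1}$ integrability in $u$), and Ricci tensor sources from the Einstein--Euler equations (controlled by Proposition \ref{Prop4.6.1}). The bound on $\psi_0$ has already been furnished by Proposition \ref{Prop3.7}, so it remains to treat $\psi_H$, $\psi_{\underline{H}}$, and $\psi$.

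First I would handle $\psi_H \in \{\mathrm{tr}\chi,\hat{\chi},\omega\}$ via the $e_3$-transport equations (\ref{Eq3.10}), (\ref{Eq3.11}), (\ref{Eq3.13}), (\ref{Eq3.18}), integrating in $u$ along each $\underline{H}_{\underline{u}}$ and then taking the $L^2_{\underline{u}}$ norm. The curvature source $\partial_\nu^i K$ is controlled by $\mathcal{R}_3$ in $L^\infty_u L^2_{\underline{u}} L^2(S)$ after pairing with the $f(u)$-weight, so when one integrates in $u$ against $f(u)^{-1}$, Cauchy--Schwarz converts this into an $\epsilon^{1/2}$-factor using $\|f(u)^{-1}\|_{L^2_u}<\epsilon$. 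The nonlinear product $\psi_H\psi_{\underline{H}}$ uses the same trick: the $f(u)$-weight is absorbed into $\psi_{\underline{H}}$ (already present in the definition of $\mathcal{O}_{3,2}$) while $f(u)^{-1}$ integrates cheaply in $u$. The Ricci source $\partial_\nu^i\slashed{\mathrm{Ric}}$ is controlled by (\ref{Eq3.34}) and contributes the $C(\Delta_1,\mathcal{E}_{2,\infty})\epsilon^{1/2}(\tilde{\mathcal{O}}_{4,2}+\tilde{\mathcal{E}}_{4,2})$ term on the right-hand side.

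Next, for $\psi_{\underline{H}}$ I would use the $e_4$-transport equations (\ref{Eq3.9}), (\ref{Eq3.12}), (\ref{Eq3.14}), (\ref{Eq3.17}) (together with the divergence relation (\ref{Eq3.16}) for $\hat{\underline{\chi}}$), multiply by $f(u)$ (a $u$-only weight which commutes with $\partial_{\underline{u}}$), integrate in $\underline{u}$ along each $H_u$, and finally take the $L^2_u$ norm. The curvature sources $\partial_\nu^i(K,\check\sigma)$ and $\partial_\nu^i\slashed{\partial}\underline{\eta}$ are controlled in the $L^\infty_u L^2_{\underline{u}} L^2(S)$ weighted norm by $\mathcal{R}_3$ (with the $f(u)$ weight already built in), and the $\underline{u}$-integration contributes an $\epsilon^{1/2}$ smallness. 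The quadratic $\psi_{\underline{H}}\psi_{\underline{H}}$ contributions in (\ref{Eq3.12}), (\ref{Eq3.14}) feed into a Gr\"onwall argument in $\underline{u}$, whose exponential constant depends on $\|\psi_H\|_{L^1_{\underline{u}}} \lesssim \epsilon^{1/2}\mathcal{O}_{3,2}$, hence harmless. For $\psi\in\{\eta,\underline{\eta}\}$, equations (\ref{Eq3.7})--(\ref{Eq3.8}) provide sources $\beta$ and $\underline{\beta}$ which are controlled by $\mathcal{R}_3$ in the two corresponding anisotropic norms; the divergence--curl system (\ref{Eq3.31})--(\ref{Eq3.31*}) combined with Proposition \ref{PropElliptic} can be used whenever elliptic recovery is needed.

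The main obstacle is verifying that every nonlinear error term produced by commuting the transport equations with $\partial_\nu^i$ (up to three derivatives) respects the \emph{null structure} identified in \cite{luk2017weak}: whenever two potentially singular factors appear together, at least one of them must be paired with an $f(u)$ weight so that integration in $u$ converts it into an $\epsilon^{1/2}$ smallness via $\|f(u)^{-1}\|_{L^2_u}<\epsilon$ or $\int_0^{u_*} f(u)^{-2}\,du<\epsilon^4$. Distributing derivatives among factors according to Leibniz and estimating the lowest-order factors in $L^\infty$ via Sobolev embedding (Proposition \ref{PropSobolev}) and the bootstrap assumption (\ref{Eq3.2}), one checks that no forbidden combination such as $\psi_{\underline{H}}\psi_{\underline{H}}\psi_{\underline{H}}$ or an un-weighted top-order $\psi_{\underline{H}}$-on-$K$ pair appears. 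The Einstein--Euler coupling introduces the new Ricci source, but Proposition \ref{Prop4.6.1} shows it is no more singular than a regular Ricci coefficient product, so it fits into the existing hierarchy. Gr\"onwall's inequality in the appropriate variable then closes the loop and yields the stated bound.
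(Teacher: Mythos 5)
Your proposal takes essentially the same route as the paper's proof: apply Proposition \ref{PropTransport} to each group $\psi_H$, $\psi_{\underline{H}}$, $\eta$, $\underline{\eta}$ in the null-structure-compatible direction ($\psi_H$ and $\underline{\eta}$ integrated in $u$, $\psi_{\underline{H}}$ and $\eta$ integrated in $\underline{u}$ with the $f(u)$ weight), bound the Weyl curvature sources by $\mathcal{R}_3$ and the quadratic Ricci-coefficient terms via the bootstrap assumption together with $\|f(u)^{-1}\|_{L^2_u}<\epsilon$, and control the new Ricci curvature sources by Proposition \ref{Prop4.6.1}, with $\psi_0$ already handled in Proposition \ref{Prop3.7}. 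One correction to your equation references: \eqref{Eq3.11} and \eqref{Eq3.13} are $\partial_{\underline{u}}$-equations while \eqref{Eq3.12} and \eqref{Eq3.14} are $\partial_u$-equations, so with your (correct) choice of integration directions the equations actually invoked are the schematic \eqref{Eq3.10} for $\psi_H$ and \eqref{Eq3.9}, \eqref{Eq3.17} for $\psi_{\underline{H}}$; in particular \eqref{Eq3.14} cannot be used at this stage since its source $\underline{\alpha}$ is controlled by no norm, and the Gr\"onwall argument in $\underline{u}$ you describe pertains to the $\psi_H\psi_{\underline{H}}$ term of \eqref{Eq3.9}, not to $\psi_{\underline{H}}\psi_{\underline{H}}$ terms of \eqref{Eq3.12}, \eqref{Eq3.14}. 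A further minor point: the top-order source $\partial_\nu^3\slashed{\partial}\underline{\eta}$ in \eqref{Eq3.9} is four derivatives of $\psi$ and is controlled by $\tilde{\mathcal{O}}_{4,2}$ rather than $\mathcal{R}_3$, which is consistent with the stated bound since $\tilde{\mathcal{O}}_{4,2}$ appears on the right-hand side.
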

   \begin{proof}
   For $\psi_{\underline{H}}$, first for fixed $u$ respectively, we apply Proposition \ref{PropTransport} to get
   \begin{align}
   \begin{split}
     \|f(u) Z ^3\psi_{\underline{H}}\|_{L^2_uL_{\underline{u}}^\infty L^2(S)}
     \le&\Big\|f(u)\| Z ^3\psi_{\underline{H}}\|_{L_{\underline{u}}^\infty L^2(S)}\Big\|_{L^2_u}\
     \\
     \le&\ C(\mathcal{O}_{ini})+
     \Big\|f(u)\|\nabla_4 Z ^3\psi_{\underline{H}}\|_{L_{\underline{u}}^1 L^2(S)}\Big\|_{L^2_u}.
     \label{Eq4.38}
     \end{split}
   \end{align}
   Next, we substitute $ Z ^3$-commuted transport equations \eqref{Eq3.9} - \eqref{Eq3.10}. The right-hand side of equations \eqref{Eq3.9} - \eqref{Eq3.10} contains Weyl curvature terms, quadratic terms in Ricci coefficients, and also terms containing Ricci curvature. The Weyl curvature and Ricci coefficients terms can be handled exactly as in \cite{luk2017weak} and can be bounded by $\epsilon^{\frac{1}{2}} C(\Delta_1)(1+\mathcal{R}+\tilde{\mathcal{O}}_{4,2})$.
   
   In the presence of the fluid, however, the Ricci curvature terms are coupled with the fluid and they are handled by \eqref{Eq3.33} in Proposition \ref{Prop4.6.1} together with H\"older's inequality.

   Similarly for $\psi_H$ with \eqref{Eq3.34} in Proposition \ref{Prop4.6.1} and
   \begin{align}
   \begin{split}
     \| Z ^3\psi_{H}\|_{L^2_{\underline{u}}L_{u}^\infty L^2(S)}
     \le &\Big|\Big\|| Z ^3\psi_{H}\|_{L_{u}^\infty L^2(S)}\Big\|_{L^2_{\underline{u}}}
     \\
     \le &\ C(\mathcal{O}_{ini})+
     \Big|\Big\||\nabla_3 Z ^3\psi_{H}\|_{L_{u}^1 L^2(S)}\Big\|_{L^2_{\underline{u}}}.
   \end{split}
   \label{Eq4.39}
   \end{align}
   For the extra commutator term in the $\nabla_3 Z ^3\psi_{H}$ equation, we can bound it by
   \begin{align*}
     &\Big\|\sum_{i_1+i_2+i_3+i_4\le i}  Z ^{i_1}\psi_H^{i_2} Z ^{i_3}\psi_{\underline{H}} Z ^{i_4}\psi_H\Big\|_{L_{\ub}^2L_u^1L^2(S)}
     \\
     \le &\sum_{i_1+i_2+i_3+i_4\le i,\ i_3,i_4\le1} \|Z ^{i_1}\psi_H^{i_2}\|_{L_{u}^\infty L_{\ub}^2 L^2(S)}
     \|f(u)^{-1}\|_{L_u^2}
     \|f(u)Z ^{i_3}\psi_{\Hb}\|_{L_{u}^2 L_{\ub}^\infty  L^\infty(S)}
     \|Z ^{i_4}\psi_H\|_{ L_{u}^\infty L_{\ub}^\infty  L^\infty(S)}
     \\
     &+\sum_{i_1+i_2+i_3+i_4\le i,\ i_1,i_4\le1}
     \|Z ^{i_1}\psi_H^{i_2}\|_{ L_{\ub}^2 L_{u}^\infty L^\infty(S)}
     \|f(u)^{-1}\|_{L_u^2}
     \|f(u)Z ^{i_3}\psi_{\Hb}\|_{L_{\ub}^\infty L_{u}^2 L^2(S)}
     \|Z ^{i_4}\psi_H\|_{ L_{\ub}^\infty L_{u}^\infty L^\infty(S)}
     \\
     \le&\ \epsilon^{\f{1}{2}}  C(\Delta_1)(1+\mathcal{R}+\tilde{\mathcal{O}}_{4,2}).
   \end{align*}
   As for $\psi\in\{\eta, \underline{\eta}\}$, we obtain
   \begin{align}
     &\sum_{ i\le3}\| Z ^i\eta\|_{L^\infty_uL_{\underline{u}}^\infty L^2(S)}
     \le C(\mathcal{O}_{ini})+
     \sum_{ i\le3}\|\nabla_4 Z ^i\eta\|_{L^\infty_u L^1_{\underline{u}} L^2(S)},
     \label{Eq4.40}
     \\
     &\sum_{ i\le3}\| Z ^i\underline{\eta}\|_{L^\infty_u L_{\underline{u}}^\infty L^2(S)}
     \le C(\mathcal{O}_{ini})+
     \sum_{ i\le3}\|\nabla_3 Z ^i\underline{\eta}\|_{L_{\underline{u}}^\infty L^1_u L^2(S)}.
     \label{Eq4.41}
   \end{align}
   Next, we substitute $ Z ^3$-commuted transport equations \eqref{Eq3.7} - \eqref{Eq3.8}. The Weyl curvature and Ricci coefficients terms can be handled exactly as in \cite{luk2017weak} with the same bound. The Ricci curvature terms can be controlled by \eqref{Eq3.33} - \eqref{Eq3.34}.
   \end{proof}
   \begin{prop}
     \label{Prop4.8.1}
     Under the bootstrap assumptions, there exists $\epsilon_0=\epsilon_0(\Delta_1)$ such that for $\epsilon\le \epsilon_0$, Ricci curvature satisfy the estimates
     \begin{align}
     \sum_{ i\le3}\|f(u) Z ^i((\slashed{\nabla},\nabla_3,\nabla_4)\mathrm{Ric}+\Gamma\mathrm{Ric})\|_{L^\infty_{\underline{u}}L^2_u L^2(S)}
     &\le \epsilon^{\frac{1}{2}}C(\Delta_1,\mathcal{E}_{2,\infty})
     (\tilde{\mathcal{O}}_{4,2}+\tilde{\mathcal{E}}_{4,2}+\tilde{\mathcal{F}}_{4,2}).
     \label{Eq3.42}
   \end{align}
   \end{prop}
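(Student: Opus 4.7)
The plan is to expand $\partial_\nu^i\mathrm{Ric}$ and $\partial_\nu^i\partial\mathrm{Ric}$ via the schematic substitutions \eqref{Eq3.3.1}--\eqref{Eq3.3.2}, reducing the source to products of derivatives of the fluid variables $r=(v,\log\tau)$ and of the metric components $\psi_0$, and then to close the estimate by H\"older pairing each resulting product: one factor placed in $L^\infty$ (supplied by Proposition \ref{PropSobolev} together with $\mathcal{E}_{2,\infty}$ and Proposition \ref{Prop3.7}) and one factor placed in the appropriate $L^2$-type norm ($\tilde{\mathcal{O}}_{4,2}$, $\tilde{\mathcal{E}}_{4,2}$, or $\tilde{\mathcal{F}}_{4,2}$). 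Throughout, the weight $f(u)$ is deployed so that singular Ricci coefficients $\psi_{\underline{H}}$ and singular transversal fluid derivatives $\partial_u r$ always appear multiplied by $f(u)$, which tames them.

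For the terms coming from $\partial_\nu^i(\partial\mathrm{Ric})$ one encounters products $f(u)\,\partial\partial_\nu^{i_1}(r,\psi_0)\cdot\partial_\nu^{i_2}(r,\psi_0)$ with $i_1+i_2\le 3$. When the outer $\partial$ is tangential ($\slashed{\partial}$ or $\partial_{\underline{u}}$), it is, up to an $\epsilon^{-1}$, already one of the commutators $\partial_\nu$, so $\partial\partial_\nu^{i_1}$ is estimated by $\tilde{\mathcal{E}}_{4,2}$ for $r$ or by \eqref{Eq4.40.2}--\eqref{Eq4.40.3} for $\psi_0$, while the lower-order companion is bounded in $L^\infty$. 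When $\partial=\partial_u$ acts on $\psi_0$, equation \eqref{Eq3.6} rewrites it as $\underline{\omega}\in\psi_{\underline{H}}$, whose $f(u)$-weighted version is absorbed by $\tilde{\mathcal{O}}_{4,2}$. For the terms from $\partial_\nu^i(\Gamma\mathrm{Ric})$ one pairs $f(u)\,\partial_\nu^{j_1}\Gamma\cdot\partial_\nu^{j_2}\mathrm{Ric}$: for regular $\Gamma\in\{\psi,\psi_H\}$ Sobolev places one factor in $L^\infty$ and the other in $L^2$; for singular $\Gamma=\psi_{\underline{H}}$ I pair $f(u)$ with $\psi_{\underline{H}}$, so that $f(u)\partial_\nu^{j_1}\psi_{\underline{H}}$ is controlled by $\mathcal{O}_{3,2}$ up to an $\epsilon^{1/2}$, while $\partial_\nu^{j_2}\mathrm{Ric}$ is $L^\infty$-bounded via $\mathcal{E}_{2,\infty}$ and \eqref{Eq4.40.1}.

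The hard part is the sub-case $\partial=\partial_u$ falling on the fluid variable $r$, producing $f(u)\partial_u\partial_\nu^{i_1}r$: this is exactly the quantity that $\tilde{\mathcal{F}}_{4,2}$ is designed to control, but $\tilde{\mathcal{F}}_{4,2}$ provides only an $L^2_{\underline{u}}L^2_uL^2(S)$ bound, while \eqref{Eq3.42} demands $L^\infty_{\underline{u}}$. I resolve this by substituting the fluid equation for $\partial_u r$ (the $\partial_u$-counterpart of \eqref{Eq3.022}) commuted with $\partial_\nu^{i_1}$: as observed in the proof of Proposition \ref{Prop3.1}, the singular $\psi_{\underline{H}}\partial_u r$ contributions cancel, so that $\partial_u\partial_\nu^{i_1}r$ decomposes schematically into a tangential piece $\partial_\nu^{i_1+1}r$ and products $\partial_\nu^{j_1}(\psi+\psi_H+\psi_{\underline{H}})\cdot\partial_\nu^{j_2}r$ with $j_1+j_2\le i_1$. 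The first is directly controlled in $L^\infty_{\underline{u}}L^2_u L^2(S)$ by $\tilde{\mathcal{E}}_{4,2}$, while the latter reduces to exactly the $\Gamma\mathrm{Ric}$ analysis of the previous paragraph: the residual $f(u)\psi_{\underline{H}}$ pairings are absorbed either by $\mathcal{O}_{3,2}$ or by the smallness $\int_0^{u_*}f(u)^{-2}\,du<\epsilon^4$. The $\tilde{\mathcal{F}}_{4,2}$ term on the right-hand side of \eqref{Eq3.42} remains to absorb those commutator and lower-order pieces in the substitution that still involve $f(u)\partial_u\partial_\nu^j r$ directly; for these one uses the fundamental theorem of calculus in $\underline{u}$, integrating from $\underline{u}=0$ (where $\mathcal{E}_{ini}$ controls the boundary term) and paying $\epsilon^{1/2}$ from $|\underline{u}_*|\le\epsilon$. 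This null-structural cancellation, inherited ultimately from the strict subluminality of the sound speed which makes $H_u$ non-characteristic with respect to the acoustical metric, is precisely what prevents the transversal fluid singularity from spoiling the bound.
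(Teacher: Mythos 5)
Your overall route is the same as the paper's: reduce $\partial_\nu^i(\partial\mathrm{Ric}+\Gamma\mathrm{Ric})$ via \eqref{Eq3.3.1}--\eqref{Eq3.3.2} to products of $\partial_\nu$-derivatives and one transversal derivative of $(r,\psi_0)$, pair them by H\"older with one factor in $L^\infty$, keep the weight $f(u)$ on the singular factor, and eliminate $f(u)\partial_u\partial_\nu^i r$ by substituting the commuted Euler equations (this is exactly \eqref{Eq5.1} in Proposition \ref{SingularEnergy}, and it is what justifies the $L^\infty_{\underline{u}}L^2_uL^2(S)$ bound that $\tilde{\mathcal{F}}_{4,2}$ alone does not give). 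The genuine gap is in your treatment of $\partial_u\psi_0$: you invoke only \eqref{Eq3.6}, which rewrites $\partial_u\log\Omega$ as $\underline{\omega}$, and claim this covers ``$\partial_u$ acting on $\psi_0$.'' But $\psi_0\in\{b,\gamma,\gamma^{-1},\log\Omega\}$, so $\partial_u\gamma$ and, crucially, $\partial_u b$ also appear in $\partial_\nu^i(\partial\mathrm{Ric})$. For $\gamma$ one still has the algebraic rewrite $\partial_u\gamma=_s\underline{\chi}+\slashed{\partial}b\cdot\gamma+\cdots$, but $\partial_u b$ is not identified with any Ricci coefficient in this gauge, and no transport equation in the $u$-direction is available for it. The paper's proof spends its second half on precisely this term: it differentiates the relation $\partial_{\underline{u}}b=\Omega^2(2\eta-2\underline{\eta})$ (\eqref{Eq3.3}) in $u$, uses the $\nabla_3\underline{\eta}$ equation \eqref{Eq3.8} together with \eqref{Eq2.15} to express $\partial_\nu^3\partial_u\partial_{\underline{u}}b$ in terms of $\underline{\beta}$, $\psi\,\psi_{\underline{H}}$ and $\slashed{\mathrm{Ric}}$, and then recovers $\partial_u b$ in $L^\infty_{\underline{u}}L^2_uL^2(S)$ (using $\partial_ub|_{\underline{H}_0}=0$) from the $\tilde{\mathcal{O}}_{4,2}$, $\mathcal{R}_3$ and Ricci-curvature bounds. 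Without this step the estimate \eqref{Eq3.42} does not close as you have written it.

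A smaller but real inaccuracy: the cancellation you cite from Proposition \ref{Prop3.1} concerns the $D_4$ (i.e.\ $\partial_{\underline{u}}$) Euler equations \eqref{Eq3.01}, where the $\psi_{\underline{H}}$ Christoffel symbols indeed drop out. In the $D_3$/$\partial_u$ equations they do \emph{not} cancel --- they are exactly the source of the $f(u)^{-2}$ blow-up of $\partial_u r$ --- and the correct schematic is \eqref{Eq5.1}, $\partial_u\partial_\nu^i r=_s\epsilon^{-1}\sum_{j\le i+1}\partial_\nu^j r+\sum_{j\le i}\partial_\nu^j\Gamma$, with the singular part tamed only by the $f(u)$ weight. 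Since your subsequent schematic keeps the $\psi_{\underline{H}}$ terms and weights them, the estimate you then perform is the right one, but the stated justification is wrong; and once \eqref{Eq5.1} is used, the extra fundamental-theorem-of-calculus step in $\underline{u}$ with $\mathcal{E}_{ini}$ is not needed to upgrade $L^2_{\underline{u}}$ to $L^\infty_{\underline{u}}$.
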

   \begin{proof}
   Following schematic equations \eqref{Eq3.3.1} - \eqref{Eq3.3.2},
     \begin{align*}
     \sum_{ i\le3}\|f(u) Z ^i((\slashed{\nabla},\nabla_3,\nabla_4)\mathrm{Ric}+\Gamma\mathrm{Ric})\|_{L^\infty_{\underline{u}}L^2_u L^2(S)}
     &\lesssim\sum_{ i\le3}
     \|f(u) Z ^{i}(\nabla_3(\psi_0r)+\Gamma r)\|_{L^\infty_{\underline{u}} L^2_{u}L^2(S)}
     \\
     &\lesssim\sum_{ i\le3}
     \|f(u)(| Z ^{i}\Gamma\| r|+
     |\psi_0\| Z ^{i}\nabla_3r|)\|_{L^\infty_{\underline{u}} L^2_{u}L^2(S)}
     \\
     &\le \epsilon^{\frac{1}{2}} C(\Delta_1,\mathcal{E}_{2,\infty})(\tilde{\mathcal{O}}_{4,2}+\tilde{\mathcal{E}}_{4,2}+\tilde{\mathcal{F}}_{4,2}).
   \end{align*}
   We apply transport equation \eqref{Eq3.8} of $\nabla_3\underline{\eta}$ to get
    \begin{align*}
       Z ^3\nabla_3\nabla_4 b
      =&_s Z ^3\nabla_3(\eta-\underline{\eta})
   =_s Z ^3\nabla_3\underline{\eta}-\frac{1}{2} Z ^i\slashed{\nabla}\underline{\omega}
   \\
   =_s&\sum_{ i_1+ i_2+ i_3+ i_4\le  3}  Z ^{i_1}\psi^{i_2} Z ^{i_3}\underline{\beta}
     +
      Z ^{i_1}\psi^{i_2} Z ^{i_3}\psi  Z ^{i_4}\psi_{\underline{H}}
     + Z ^3\slashed{\mathrm{Ric}}.
    \end{align*}
    Combining with $\partial_{u}\gamma=_s\hat{\chi},\ \nabla_3\log\Omega=_s\underline{\omega}$, we bound $\nabla_3\psi_0$ in $L_{\underline{u}}^\infty L_u^2 L^2(S)$ norm with $\mathcal{\tilde{\mathcal{O}}}_{4,2}$ norms. We also notice that with $\Gamma$ denoting all Christoffel symbols perpendicular to the tori, it follows that
   $ Z ^i\Gamma =_s\sum_{ j\le i} Z ^j\psi+ Z ^i\psi_H+ Z ^j\psi_{\underline{H}}$.
   \end{proof}
   \begin{prop}
     Under the bootstrap assumptions, there exists $\epsilon_0=\epsilon_0(\Delta_1)$ such that for $\epsilon\le \epsilon_0$, the following estimate holds:
     \begin{align*}
     \tilde{\mathcal{O}}_{4,2}
     \le C(\mathcal{O}_{ini})(1+\mathcal{R}_3
     + C(\Delta_1,\mathcal{E}_{2,\infty})
     (\epsilon^{\frac{1}{2}}(1+\mathcal{R}_3+\tilde{\mathcal{O}}_{4,2}^2)
     +\tilde{\mathcal{E}}_{4,2}+\tilde{\mathcal{F}}_{4,2})).
     \label{Prop4.8}
     \end{align*}
   \end{prop}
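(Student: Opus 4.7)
The plan is to extend the scheme used in Proposition \ref{Prop4.6} (for $\mathcal{O}_{3,2}$) one derivative higher, with the understanding that for $\hat{\chi}$ and $\hat{\underline{\chi}}$ at top order, the transport equations \eqref{Eq3.13}--\eqref{Eq3.14} are unusable since they contain $\alpha$ and $\underline{\alpha}$, which are not controlled in the relevant norms. The curvature error terms arising from the fluid coupling are absorbed via the Ricci bounds established in Proposition \ref{Prop4.6.1} and Proposition \ref{Prop4.8.1}, while the Weyl curvature and quadratic Ricci coefficient terms are handled exactly as in \cite{luk2017weak}.

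First, for $\mathrm{tr}\chi$ and $\mathrm{tr}\underline{\chi}$, I apply Proposition \ref{PropTransport} in the $\underline{u}$-direction (resp. $u$-direction), substitute the $\partial_\nu^4$-commuted schematic equations \eqref{Eq3.11}, \eqref{Eq3.12}, and close via H\"older and Gr\"onwall. The quadratic products $\partial_\nu^{i_1}\psi_H \cdot \partial_\nu^{i_2}\psi_H$ (resp. $\partial_\nu^{i_1}\psi_{\underline{H}}\cdot\partial_\nu^{i_2}\psi_{\underline{H}}$ with the $f(u)^2$ weight) produce the $\tilde{\mathcal{O}}_{4,2}^2$ contribution with an $\epsilon^{1/2}$ gain; the $\partial_\nu^4 \slashed{\mathrm{Ric}}$ terms are controlled by \eqref{Eq3.33}--\eqref{Eq3.34}. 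Next, I treat $\underline{\omega}$ (integrating in $\underline{u}$) and $\eta,\underline{\eta}$ (integrating in $\underline{u}$ and $u$ respectively, as in \eqref{Eq4.40}--\eqref{Eq4.41}) through the schematic equations \eqref{Eq3.17}, \eqref{Eq3.7}, \eqref{Eq3.8}; these RHS contain $K,\check{\sigma}$ and $\beta,\underline{\beta}$, which are absorbed into the $\mathcal{R}_3$ contribution with their correct weights. The Ricci terms are again absorbed by Proposition \ref{Prop4.6.1}.

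For $\hat{\chi}$ and $\hat{\underline{\chi}}$ at top order, I split $\partial_\nu^4$ into terms with at least one angular derivative and the pure $(\epsilon\partial_{\underline{u}})^4$ term. For the former, I apply the elliptic estimate of Proposition \ref{PropElliptic} to the Codazzi equations \eqref{Eq3.15}--\eqref{Eq3.16}, reducing to bounds on $\partial_\nu^3\slashed{\partial}\mathrm{tr}\chi$, $\partial_\nu^3\beta$ (resp. $\partial_\nu^3\slashed{\partial}\mathrm{tr}\underline{\chi}$, $\partial_\nu^3\underline{\beta}$), quadratic $\partial_\nu^{i_1}\psi\cdot\partial_\nu^{i_2}\psi_H$ and Ricci terms; all of these are contained in $\mathcal{O}_{3,2}$, $\mathcal{R}_3$, the previously estimated top-order trace bounds, and Proposition \ref{Prop4.8.1}. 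For the pure $(\epsilon\partial_{\underline{u}})^4\hat{\chi}$ term, I use the transport equation \eqref{Eq3.13}, and now $\partial_\nu^3\alpha$ appears, but in the $L^\infty_u L^2_{\underline{u}} L^2(S)$ direction compatible with the $\mathcal{R}_3$ norm; the analogous estimate for $\hat{\underline{\chi}}$ via the $u$-direction transport with $\underline{\alpha}$ is handled by the weighted formulation. Finally, $\psi_0$ bounds are already furnished by Proposition \ref{Prop3.7} in \eqref{Eq4.40.2}--\eqref{Eq4.40.3}, and inserted into the RHS wherever they appear in the schematic products.

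The main obstacle is the simultaneous closure: terms like $\partial_\nu^3(\partial\slashed{\mathrm{Ric}} + \Gamma\slashed{\mathrm{Ric}})$ arising in the Bianchi-renormalized equations carry a singular $\partial_u r$ factor that is only in $L^1_u$ with weight $f(u)^{-2}$. The structural miracle, already exploited in Proposition \ref{Prop4.8.1}, is that this singular factor always pairs with a regular $\psi_0$ or similar regular Ricci coefficient, so that the product fits the $f(u)$-weighted $L^2_u$ norm of $\tilde{\mathcal{E}}_{4,2}+\tilde{\mathcal{F}}_{4,2}$ after H\"older. Once this matching is verified case-by-case for each schematic equation, the bound \eqref{Prop4.8} follows by collecting all contributions, absorbing the $\epsilon^{1/2}\tilde{\mathcal{O}}_{4,2}^2$ self-interaction on the right, and invoking the bootstrap with $\epsilon\le\epsilon_0(\Delta_1)$.
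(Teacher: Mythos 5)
There is a genuine gap in your treatment of the top-order estimates for $\eta,\underline{\eta}$ and $\underline{\omega}$. The norm $\tilde{\mathcal{O}}_{4,2}$ requires four $\partial_\nu$-derivatives of these coefficients, while $\mathcal{R}_3$ controls only three derivatives of the renormalized curvature. If you commute the transport equations \eqref{Eq3.7}, \eqref{Eq3.8}, \eqref{Eq3.17} with the full $\partial_\nu^4$ and integrate, the right-hand sides contain $\partial_\nu^4\beta$, $\partial_\nu^4\underline{\beta}$ and $\partial_\nu^4 K$, which appear in no norm of the scheme; in particular the purely angular part $\slashed{\partial}^4(\eta,\underline{\eta},\underline{\omega})$ cannot be recovered algebraically from the equations, so the derivative loss is real and the argument does not close. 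This is precisely why the paper treats these top-order terms with sphere elliptic estimates: for $\eta,\underline{\eta}$ it couples the $\mathrm{curl}$ equations \eqref{Eq3.31}, \eqref{Eq3.31*} with the renormalized transport equations \eqref{Eq3.24}, \eqref{Eq3.26} for $\mathrm{div}\eta-K$ and $\mathrm{div}\underline{\eta}-K$ (whose right-hand sides are free of $\beta,\underline{\beta}$), and for $\underline{\omega}$ it estimates $\slashed{\partial}^2\underline{\omega}$ elliptically, feeding in the $\underline{\beta}$ Bianchi equation \eqref{Eq3.20} via the renormalization $\Omega^{-2}\partial_{\underline{u}}\underline{\omega}-K$; only the residual $\partial_\nu^3(\epsilon\partial_{\underline{u}})$ components are then read off directly from \eqref{Eq3.7}, \eqref{Eq3.17} at one order lower, so that only $\partial_\nu^3(\beta,K,\mathrm{Ric})$ is ever needed, consistent with $\mathcal{R}_3$ and Propositions \ref{Prop4.6.1}, \ref{Prop4.8.1}.

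A second, related error is your handling of the pure $\partial_{\underline{u}}$-derivative of $\hat{\underline{\chi}}$: you propose the $u$-direction transport \eqref{Eq3.14} and claim the $\underline{\alpha}$ term is "handled by the weighted formulation." It is not: $\underline{\alpha}$ occurs in no norm of the paper ($\mathcal{R}_3$ deliberately excludes it, and it fails to be square-integrable in $u$ near the singularity), so any scheme requiring an estimate of $\underline{\alpha}$ breaks the whole renormalization. Moreover, the quantity to bound is $f(u)\partial_\nu^3(\epsilon\partial_{\underline{u}})\hat{\underline{\chi}}$ in $L^\infty_{\underline{u}}L^2_uL^2(S)$, so the correct move (the paper's) is to substitute the $\partial_{\underline{u}}$-equation for $\hat{\underline{\chi}}$, of the form \eqref{Eq3.9}, whose right-hand side $K+\slashed{\partial}\underline{\eta}+\psi\psi+\psi_H\psi_{\underline{H}}+\slashed{\mathrm{Ric}}$ contains no $\underline{\alpha}$ and is compatible with the $f(u)$ weight. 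Your treatment of $\mathrm{tr}\chi,\mathrm{tr}\underline{\chi}$ by transport, of $\hat{\chi}$ by the Codazzi elliptic estimate plus the algebraic use of \eqref{Eq3.13} with $\partial_\nu^3\alpha\in\mathcal{R}_3$, and of the fluid-coupled error terms via Propositions \ref{Prop4.6.1} and \ref{Prop4.8.1} does match the paper, but the two points above must be repaired along the lines indicated for the proposition to hold.
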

   \begin{proof}
   For $(K,\check{\sigma})$, we apply Proposition \ref{PropTransport} to get
   \begin{align}
     \| Z ^2(K,\check{\sigma})\|_{L^\infty_uL_{\underline{u}}^\infty L^2(S)}
     &\le C(\mathcal{O}_{ini})+\|\nabla_3 Z ^2(K,\check{\sigma})\|_{L^\infty_{\underline{u}}L^1_u L^2(S)}.
   \end{align}
   Next, we substitute $ Z ^2$-commuted transport equations \eqref{Eq3.23}, \eqref{Eq3.28}. The right hand side of equations \eqref{Eq3.23}, \eqref{Eq3.28} contains Weyl curvature terms, Ricci coefficients multiplied by $(K,\check{\sigma})$, and quadratic terms in Ricci coefficients, extra commutator term as described in Proposition \ref{Prop2.2}, and also terms containing Ricci curvature. The Weyl curvature and Ricci coefficients terms can be handled exactly as in \cite{luk2017weak} with the same bound.
   For the extra commutator term in the $\nabla_3 Z ^i(K,\check{\sigma})$ equation, we can bound it by
   \begin{align*}
     &\Big\|\sum_{i_1+i_2+i_3+i_4\le i}  Z ^{i_1}\psi_H^{i_2} Z ^{i_3}\psi_{\underline{H}} Z ^{i_4}(K,\check{\sigma})\Big\|_{L_{\ub}^2L_u^1L^2(S)}
     \\
     \le &\sum_{i_1+i_2+i_3+i_4\le i} \|Z ^{i_1}\psi_H^{i_2}\|_{L_{u}^\infty L_{\ub}^2 L^\infty(S)}
     \|f(u)^{-1}\|_{L_u^2}
     \|f(u)Z ^{i_3}\psi_{\Hb}\|_{L_{u}^2 L_{\ub}^\infty  L^\infty(S)}
     \|Z ^{i_4}(K,\sigma)\|_{ L_{u}^\infty L_{\ub}^\infty  L^2(S)}
     \\
     \le\ &C(\Delta_1,\mathcal{E}_{2,\infty})(1+\epsilon\mathcal{R}_3).
   \end{align*}
   The Ricci curvature terms can be controlled by \eqref{Eq3.42}. Therefore, we arrive at
   \begin{align*}
     &\| Z ^2(K,\check{\sigma})\|_{L^\infty_uL_{\underline{u}}^\infty L^2(S)}
     \le C(\Delta_1,\mathcal{E}_{2,\infty})(1+\epsilon\mathcal{R}_3+\epsilon^{\f{1}{2}} \tilde{\mathcal{O}}^2_{4,2}).
   \end{align*}
   Next for $\mathrm{tr}\chi,\mathrm{tr}\underline{\chi}$, we apply Proposition \ref{PropTransport} to get
   \begin{align}
    \begin{split}
     \| Z ^4\mathrm{tr}\chi\|_{L^2_{\underline{u}}L_{u}^\infty L^2(S)}
     \le&\ \Big\| Z ^4\mathrm{tr}\chi\|_{L^2(S)}\Big\|_{L^2_{\underline{u}}L_{u}^\infty}
     \\
     \le&\ C(\mathcal{O}_{ini})+
     \Big\|
     \nabla_4 Z ^4\mathrm{tr}\chi\|_{L_{\underline{u}'}^1 L^2(S)}
     \Big\|_{L^2_{\underline{u}}L_{u}^\infty },
     \end{split}
     \\
     \begin{split}
     \|f(u) Z ^4\mathrm{tr}\underline{\chi}\|_{L^2_{u} L^\infty_{\underline{u}} L^2}
     \le&\ \Big\|f(u)\| Z ^4\mathrm{tr}\underline{\chi}\|_{L^2(S)}
     \Big\|_{L^2_{u} L^\infty_{\underline{u}}(S)}
     \\
     \le&\ C(\mathcal{O}_{ini})+
     \Big\|f(u)\|\nabla_3 Z ^4\mathrm{tr}\underline{\chi}\|_{L_{u'}^1 L^2(S)}\Big\|_{L^2_{u} L^\infty_{\underline{u}}}.
     \end{split}
   \end{align}
    Next, we substitute $ Z ^4$-commuted transport equations \eqref{Eq3.11}, \eqref{Eq3.12}. The right hand side of equations \eqref{Eq3.11}, \eqref{Eq3.12} contains quadratic terms in Ricci coefficients, the extra commutator terms, and also terms containing Ricci curvature. The Ricci coefficients terms can be handled exactly as in \cite{luk2017weak} with the same bound. For the extra commutator term in the $\nabla_3 Z ^4\mathrm{tr}\chib$ equation, we can bound it by
   \begin{align*}
     &\Big\|\sum_{i_1+i_2+i_3+i_4\le i}  Z ^{i_1}\psi_H^{i_2} Z ^{i_3}\psi_{\underline{H}} Z ^{i_4}\psi_{\Hb}\Big\|_{L_u^1L^2(S)}
     \\
     \le &\sum_{i_1+i_2+i_3+i_4\le i,\ i_1= i-1}
     f(u)^{-2}
     \|Z ^{i_1}\psi_H^{i_2}\|_{L_{u}^2 L^2(S)}
     \|f(u)Z ^{i_3}\psi_{\Hb}\|_{L_{u}^\infty L^\infty(S)}
     \|f(u)Z ^{i_4}\psi_{\Hb}\|_{ L_{u}^2 L^\infty(S)}
     \\
     &+\sum_{i_1+i_2+i_3+i_4\le i,\ i_1, i_3\le i-2}f(u)^{-2}
     \|Z ^{i_1}\psi_H^{i_2}\|_{ L_{u}^\infty L^\infty(S)}
     \|f(u)Z ^{i_3}\psi_{\Hb}\|_{L_{u}^2 L^\infty(S)}
     \|f(u)Z ^{i_4}\psi_{\Hb}\|_{ L_{u}^2 L^2(S)}
     \\
     \le&\ f(u)^{-2}  C(\Delta_1)(1+\mathcal{R}+\tilde{\mathcal{O}}_{4,2}).
   \end{align*}
   The Ricci curvature terms can be controlled by \eqref{Eq3.33} - \eqref{Eq3.34}. Also, we remark that assuming the bootstrap assumptions, it follows as in \cite{luk2017weak} that eventually we have
   \begin{align*}
     &\| Z ^4\mathrm{tr}\chi\|_{L^\infty_{\underline{u}}L_{u}^\infty L^2(S)}
     +\| f(u)^2Z ^4\mathrm{tr}\chib\|_{L^\infty_{u}L_{\ub}^\infty L^2(S)}
     \le C(\Delta_1,\mathcal{E}_{2,\infty},\tilde{\mathcal{O}}_{4,2}).
   \end{align*}


   For $\hat{\chi},\hat{\underline{\chi}}$, we apply Proposition \ref{PropElliptic} for elliptic systems to get
   \begin{align}
   \begin{split}
     &\sum_{ i\le3}\| Z ^i\slashed{\nabla}\hat{\chi}\|_{L_u^\infty L_{\underline{u}}^2 L^2(S)}
     \\
     \lesssim&\sum_{ i\le3}\|\mathrm{div} Z ^i\slashed{\nabla}\hat{\chi}\|_{L_u^\infty L_{\underline{u}}^2 L^2(S)}
     +\sum_{ i\le3}\| Z ^i\hat{\chi}\|_{L_u^\infty L_{\underline{u}}^2 L^2(S)},
   \end{split}
   \\
   \begin{split}
     &\sum_{ i\le3}\|f(u) Z ^i\slashed{\nabla}\hat{\underline{\chi}}\|_{L_{\underline{u}}^\infty L^2_u L^2(S)}
     \\
     \lesssim&\sum_{ i\le3}\|f(u)\mathrm{div} Z ^i\slashed{\nabla}\hat{\underline{\chi}}\|_{L_{\underline{u}}^\infty L^2_u L^2(S)}
     +\sum_{ i\le3}\|f(u) Z ^i\hat{\underline{\chi}}\|_{L_{\underline{u}}^\infty L^2_u L^2(S)}.
     \end{split}
   \end{align}
   Next, we substitute $ Z ^3$-commuted transport equations \eqref{Eq3.15}, \eqref{Eq3.16}. The right hand side of equations \eqref{Eq3.15}, \eqref{Eq3.16} contains $\mathrm{tr}\chi,\mathrm{tr}\underline{\chi}$, Weyl curvature terms, quadratic terms in Ricci coefficients, the extra commutator terms and also terms containing Ricci curvature. The Weyl curvature and Ricci coefficients terms can be handled exactly as in \cite{luk2017weak} with the same bound.
   For the extra commutator terms in the $\mathrm{div} Z ^3\hat{\chi},\ \mathrm{div} Z ^3\hat{\chib}$ equations,
   \begin{align}
   \begin{split}
     &\sum_{i_1+i_2+i_3+i_4\le i} 
     \Big\| Z ^{i_1}\psi^{i_2} Z ^{i_3}\psi_{H} Z ^{i_4}\psi_{H}\Big\|_{L_u^\infty L_{\ub}^2 L^2(S)}
     +\Big\|f(u) Z ^{i_1}\psi^{i_2} Z ^{i_3}\psi_{H} Z ^{i_4}\psi_{\Hb}\Big\|_{L_{\ub}^\infty L_{u}^2 L^2(S)}
     \\
     \le &\sum_{i_1+i_2+i_3+i_4\le i,\ i_3,i_4\le 1}
     \|Z ^{i_1}\psi^{i_2}\|_{L_u^\infty L_{\ub}^2 L^2(S)}
     \|Z ^{i_3}\psi_{H}\|_{L_u^\infty L_{\ub}^\infty L^\infty(S)}
     \|Z ^{i_4}\psi_H\|_{L_u^\infty L_{\ub}^\infty L^\infty(S)}
     \\
     &+\sum_{i_1+i_2+i_3+i_4\le i,\ i_1,i_4\le 1}
     \|Z ^{i_1}\psi^{i_2}\|_{L_u^\infty L_{\ub}^\infty L^\infty(S)}
     \|Z ^{i_3}\psi_{H}\|_{L_u^\infty L_{\ub}^2 L^2(S)}
     \|Z ^{i_4}\psi_H\|_{L_u^\infty L_{\ub}^\infty L^\infty(S)}
     \\
     &+\sum_{i_1+i_2+i_3+i_4\le i,\ i_3,i_4\le 1}
     \|Z ^{i_1}\psi^{i_2}\|_{L_{\ub}^\infty L_{u}^2 L^2(S)}
     \|Z ^{i_3}\psi_{H}\|_{L_{\ub}^\infty L_{u}^\infty L^\infty(S)}
     \|f(u)Z ^{i_4}\psi_{\Hb}\|_{L_{\ub}^\infty L_{u}^\infty L^\infty(S)}
     \\
     &+\sum_{i_1+i_2+i_3+i_4\le i,\ i_1,i_4\le 1}
     \|Z ^{i_1}\psi^{i_2}\|_{L_{\ub}^\infty L_{u}^\infty L^\infty(S)}
     \|Z ^{i_3}\psi_{H}\|_{L_{\ub}^\infty L_{u}^2 L^2(S)}
     \|f(u)Z ^{i_4}\psi_{\Hb}\|_{L_{\ub}^\infty L_{u}^\infty L^\infty(S)}
     \\
     &+\sum_{i_1+i_2+i_3+i_4\le i,\ i_1,i_3\le 1}
     \|Z ^{i_1}\psi^{i_2}\|_{L_{\ub}^\infty L_{u}^\infty L^\infty(S)}
     \|Z ^{i_3}\psi_{H}\|_{L_{\ub}^\infty L_{u}^2 L^\infty(S)}
     \|f(u)Z ^{i_4}\psi_{\Hb}\|_{L_{\ub}^\infty L_{u}^\infty L^2(S)}
     \\
     \le&\ C(\Delta_1).
     \end{split}
   \label{Eq4.54.1}
   \end{align}
   The Ricci curvature terms can be controlled by \eqref{Eq3.33} - \eqref{Eq3.34}.

   For the remaining terms $\| Z ^3(\epsilon\nabla_4)\hat{\chi}\|_{L_u^\infty L_{\underline{u}}^2 L^2(S)}$,\
   $\|f(u) Z ^3(\epsilon\nabla_4)\hat{\underline{\chi}}\|_{L_{\underline{u}}^\infty L^2_u L^2(S)}$, which was not treated in \cite{luk2017weak}, where the different choice of order of the commutators applied can be related by commutator estimates we have done, by Prop \ref{PropTransport}, for $ i\le 3$,
   \begin{align}
   \| Z ^{i}(\psi_H,\psi)\|_{L_{\underline{u}}^\infty L^\infty_u L^p(S)}
   \le C(\mathcal{O}_{ini})+\| Z ^{i}\nabla_4(\psi_H,\psi)\|_{ L^\infty_u L_{\underline{u}}^1 L^p(S)},
   \label{3.31}
   \end{align}
   we utilize equations \eqref{Eq3.13}, \eqref{Eq3.14} to get
   \begin{align*}
     &\sum_{ i\le3}
     \Big(
     \epsilon^{-\frac{1}{2}}\| Z ^i(\epsilon\nabla_4)\hat{\chi}\|_{L_u^\infty L_{\underline{u}}^2 L^2(S)}+
     \|f(u) Z ^i(\epsilon\nabla_4)\hat{\underline{\chi}}\|_{L_{\underline{u}}^\infty L^2_u L^2(S)}
     \Big)
     \\
     \lesssim&\ \epsilon^{\frac{1}{2}}\sum_{ i\le3}
     \Big(
     \| Z ^i(\psi_H\psi_H,\alpha)\|_{L_u^\infty L_{\underline{u}}^2 L^2(S)}+
     \|f(u) Z ^i(\psi_H\psi_{\underline{H}},\slashed{\psi},\psi\psi,\mathrm{Ric})\|_{L_{\underline{u}}^\infty L^2_u L^2(S)}
     \Big)
     \\
     \lesssim&\ \epsilon^{\frac{1}{2}}\Big( \sum_{ i\le3,\ i_1+i_2=i, i_2\le1}
     \| Z ^{i_1}\psi_H\|_{L_u^\infty L_{\underline{u}}^\infty L^2(S)}
     \| Z ^{i_2}\psi_H\|_{L_u^\infty L_{\underline{u}}^2 L^\infty(S)}
     +\| Z ^i\alpha\|_{L_u^\infty L_{\underline{u}}^2 L^2(S)}
     \\
     &+\sum_{ i\le3,\ i_1+i_2=i,\  i_1\le1}
     \| Z ^{i_1}(\psi_H,\psi)\|_{L_{\underline{u}}^\infty L^\infty_u L^\infty(S)}
     \|f(u) Z ^{i_2}(\psi_{\underline{H}},\psi)\|_{L_{\underline{u}}^\infty L^2_u L^2(S)}
     \\
     &+\sum_{ i\le3,\ i_1+i_2=i,\  i_2\le1}
     \| Z ^{i_1}(\psi_H,\psi)\|_{L_{\underline{u}}^\infty L^\infty_u L^2(S)}
     \|f(u) Z ^{i_2}(\psi_{\underline{H}},\psi)\|_{L_{\underline{u}}^\infty L^2_u L^\infty(S)}
     \\
     &+\|f(u) Z ^3 (\slashed{\nabla}\psi,\mathrm{Ric})\|_{L_{\underline{u}}^\infty L^2_u L^2(S)}\Big)
     \\
     \le& \ \epsilon^{\frac{1}{2}}  C(\Delta_1,\mathcal{E}_{2,\infty})(1+\mathcal{R}_3+\tilde{\mathcal{O}}_{4,2}^2
     +\tilde{\mathcal{E}}_{4,2}+\tilde{\mathcal{F}}_{4,2}).
   \end{align*}
   where the Ricci curvature term has been controlled by \eqref{Eq3.33} - \eqref{Eq3.34}.

   For $\eta,\underline{\eta}$, we apply Proposition \ref{PropElliptic} for elliptic systems to get
   \begin{align}
   \begin{split}
     &\sum_{ i\le3}\|f(u) Z ^i\slashed{\nabla}(\eta,\underline{\eta})\|_{L_{u}^\infty L^2_{\underline{u}} L^2(S)}
     \\
     \lesssim&\sum_{ i\le3}
     \big(
     \|f(u)(K- Z ^i\mathrm{div}(\eta,\underline{\eta}))
     -f(u)K\|_{L_{u}^\infty L^2_{\underline{u}} L^2(S)}
     \\
     &+\|f(u) Z ^i\mathrm{div}(\eta,\underline{\eta})\|_{L_{u}^\infty L^2_{\underline{u}} L^2(S)}
     +\|f(u) Z ^i(\eta,\underline{\eta})\|_{L_{u}^\infty L^2_{\underline{u}} L^2(S)}
     \big),
     \end{split}
     \\
     \begin{split}
     &\sum_{ i\le3}\| Z ^i\slashed{\nabla}(\eta,\underline{\eta})\|_{L_{\underline{u}}^\infty L^2_u L^2(S)}
     \\
     \lesssim&\sum_{ i\le3}
     \big(
     \|(K- Z ^i\mathrm{div}(\eta,\underline{\eta}))-K\|_{L_{\underline{u}}^\infty L^2_u L^2(S)}
     \\
     &+\| Z ^i\mathrm{div}(\eta,\underline{\eta})\|_{L_{\underline{u}}^\infty L^2_u L^2(S)}+\| Z ^i(\eta,\underline{\eta})\|_{L_{\underline{u}}^\infty L^2_u L^2(S)}
     \big).
     \end{split}
   \end{align}
   Next, we substitute $ Z ^3$-commuted transport equations \eqref{Eq3.24}, \eqref{Eq3.26}  and $\mathrm{curl}$ equations \eqref{Eq3.31}, \eqref{Eq3.31*}. The right hand side of equations \eqref{Eq3.24}, \eqref{Eq3.26}, \eqref{Eq3.31}, \eqref{Eq3.31*} contains Ricci coefficients multiplied by $(K,\check{\sigma})$, quadratic terms in Ricci coefficients, the extra terms from commutation, and also terms containing Ricci curvature. The Ricci coefficients terms can be handled exactly as in \cite{luk2017weak} with the same bound. For the extra commutator terms in the $\mathrm{div} Z ^3(\eta,\etab),\ \mathrm{curl}Z^3(\eta,\etab)$ equations,
   \begin{align*}
     &\sum_{i_1+i_2+i_3+i_4\le i} 
     \Big\| Z ^{i_1}\psi^{i_2} Z ^{i_3}\psi_{H} Z ^{i_4}\psi\Big\|_{L_u^\infty L_{\ub}^2 L^2(S)}
     +\Big\|f(u) Z ^{i_1}\psi^{i_2} Z ^{i_3}\psi_{H} Z ^{i_4}\psi\Big\|_{L_{\ub}^\infty L_{u}^2 L^2(S)}
     \\
     \le &\sum_{i_1+i_2+i_3+i_4\le i,\ i_3,i_4\le 1}
     \|Z ^{i_1}\psi^{i_2}\|_{L_u^\infty L_{\ub}^2 L^2(S)}
     \|Z ^{i_3}\psi_{H}\|_{L_u^\infty L_{\ub}^\infty L^\infty(S)}
     \|Z ^{i_4}\psi\|_{L_u^\infty L_{\ub}^\infty L^\infty(S)}
     \\
     &+\sum_{i_1+i_2+i_3+i_4\le i,\ i_1,i_4\le 1}
     \|Z ^{i_1}\psi^{i_2}\|_{L_u^\infty L_{\ub}^\infty L^\infty(S)}
     \|Z ^{i_3}\psi_{H}\|_{L_u^\infty L_{\ub}^2 L^2(S)}
     \|Z ^{i_4}\psi\|_{L_u^\infty L_{\ub}^\infty L^\infty(S)}
     \\
     &+\sum_{i_1+i_2+i_3+i_4\le i,\ i_3,i_4\le 1}
     \|Z ^{i_1}\psi^{i_2}\|_{L_{\ub}^\infty L_{u}^2 L^2(S)}
     \|f(u)Z ^{i_3}\psi_{H}\|_{L_{\ub}^\infty L_{u}^\infty L^\infty(S)}
     \|Z ^{i_4}\psi\|_{L_{\ub}^\infty L_{u}^\infty L^\infty(S)}
     \\
     &+\sum_{i_1+i_2+i_3+i_4\le i,\ i_1,i_4\le 1}
     \|Z ^{i_1}\psi^{i_2}\|_{L_{\ub}^\infty L_{u}^\infty L^\infty(S)}
     \|f(u)Z ^{i_3}\psi_{H}\|_{L_{\ub}^\infty L_{u}^2 L^2(S)}
     \|Z ^{i_4}\psi\|_{L_{\ub}^\infty L_{u}^\infty L^\infty(S)}
     \\
     \le&\ C(\Delta_1).
   \end{align*}
   The Ricci curvature terms can be controlled by \eqref{Eq3.33} - \eqref{Eq3.34}. We arrive at
   \begin{align*}
     &\epsilon^{-\frac{1}{2}}\Big(\sum_{ i\le3}\|f(u) Z ^i\slashed{\nabla}(\eta,\underline{\eta})\|_{L_{u}^\infty L^2_{\underline{u}} L^2(S)}
     +\sum_{ i\le3}\| Z ^i\slashed{\nabla}(\eta,\underline{\eta})\|_{L_{\underline{u}}^\infty L^2_u L^2(S)}\Big)
     \\
     \le&
     C(\mathcal{O}_{ini})(1+\mathcal{R}
     +C(\Delta_1,\mathcal{E}_{2,\infty})(\epsilon^{\frac{1}{2}}\tilde{\mathcal{O}}_{4,2}+\tilde{\mathcal{E}}_{4,2}
     +\tilde{\mathcal{F}}_{4,2}))
     \\
     &\quad+\epsilon^{\frac{1}{2}} C(\Delta_1,\mathcal{E}_{2,\infty})(1+\mathcal{R}_3+\tilde{\mathcal{O}}_{4,2}^2
     +\tilde{\mathcal{E}}_{4,2}+\tilde{\mathcal{F}}_{4,2}).
   \end{align*}

   For the remaining terms $\|f(u) Z ^3(\epsilon\nabla_4)(\eta,\underline{\eta})\|_{L_{u}^\infty L^2_{\underline{u}} L^2(S)}$,\
   $\| Z ^3(\epsilon\nabla_4)(\eta,\underline{\eta})\|_{L_{\underline{u}}^\infty L^2_u L^2(S)}$, which was not treated in \cite{luk2017weak}, we utilize Equation \eqref{Eq3.7}, and estimate \eqref{3.31}, we get
   \begin{align*}
     &\epsilon^{-\frac{1}{2}}\sum_{ i\le3}
     \Big(
     \|f(u) Z ^i(\epsilon\nabla_4)(\eta,\underline{\eta})\|_{L_{u}^\infty L^2_{\underline{u}} L^2(S)}+
     \| Z ^i(\epsilon\nabla_4)(\eta,\underline{\eta})\|_{L_{\underline{u}}^\infty L^2_u L^2(S)}
     \Big)
     \\
     \lesssim& \epsilon^{\frac{1}{2}}\sum_{ i\le3}
     \Big(
     \|f(u) Z ^i(\psi_H\psi,\beta,\mathrm{Ric})\|_{L_{u}^\infty L^2_{\underline{u}} L^2(S)}
     +\| Z ^i(\psi_H\psi,\beta,\mathrm{Ric})\|_{L_{\underline{u}}^\infty L^2_u L^2(S)}
     \Big)
     \\
     \lesssim&\epsilon^{\frac{1}{2}}\Big(\sum_{ i\le3,i_1+i_2, i_2\le1}
     \| Z ^{i_1}\psi_H\|_{L_{u}^\infty L^2_{\underline{u}} L^2(S)}
     \|f(u) Z ^{i_2}\psi\|_{L_{u}^\infty L^\infty_{\underline{u}} L^2(S)}
     \\
     &+\sum_{ i\le3,i_1+i_2, i_1\le1}
     \| Z ^{i_1}\psi_H\|_{L_{u}^\infty L^\infty_{\underline{u}} L^2(S)}
     \|f(u) Z ^{i_2}\psi\|_{L_{u}^\infty L^2_{\underline{u}} L^2(S)}
     +\sum_{ i\le3}\|f(u) Z ^i(\beta,\mathrm{Ric})\|_{L_{u}^\infty L^2_{\underline{u}} L^2(S)}
     \\
     &+\sum_{ i\le3,i_1+i_2, i_2\le1}
     \| Z ^i\psi_H\|_{L_{\underline{u}}^\infty L^2_u L^2(S)}
     \| Z ^i\psi\|_{L_{\underline{u}}^\infty L^\infty_u L^2(S)}
     \\
     &+\sum_{ i\le3,i_1+i_2, i_1\le1}
     \| Z ^i\psi_H\|_{L_{\underline{u}}^\infty L^\infty_u L^2(S)}
     \| Z ^i\psi\|_{L_{\underline{u}}^\infty L^2_u L^2(S)}
     +\sum_{ i\le3}\| Z ^i(\beta,\mathrm{Ric})\|_{L_{\underline{u}}^\infty L^2_u L^2(S)}\Big)
     \\
     \le&
     \ \epsilon^{\frac{1}{2}}  C(\Delta_1,\mathcal{E}_{2,\infty})(1+\mathcal{R}_3+\tilde{\mathcal{O}}_{4,2}^2
     +\tilde{\mathcal{E}}_{4,2}+\tilde{\mathcal{F}}_{4,2}).
   \end{align*}
   where the Ricci curvature term has been controlled by \eqref{Eq3.33} - \eqref{Eq3.34}.

   For $\underline{\omega}$, we apply Proposition \ref{PropElliptic} for elliptic systems to get
   \begin{align}
   \begin{split}
     &\sum_{ i\le2}\|f(u) Z ^i\slashed{\nabla}^2\underline{\omega}\|_{L_{\underline{u}}^\infty L_u^2L^2(S)}
     \\
     \lesssim&\sum_{ i\le2}
     \Big(
     \|f(u) Z ^i\mathrm{div}\slashed{\nabla}\underline{\omega}\|_{L_{\underline{u}}^\infty L_u^2L^2(S)}
     +\|f(u) Z ^i\slashed{\nabla}\underline{\omega}\|_{L_{\underline{u}}^\infty L_u^2L^2(S)}
     \Big)
     \\
     \lesssim&\sum_{ i\le2}
     \Big(
     \|f(u) Z ^i\mathrm{div}\slashed{\nabla}\nabla_4
     \underline{\omega}\|_{L_{\underline{u}}^1 L_u^2L^2(S)}
     +\|f(u) Z ^i\slashed{\nabla}\underline{\omega}\|_{L_{\underline{u}}^\infty L_u^2L^2(S)}
     \Big)
     \\
     \lesssim&\sum_{ i\le2}
     \Big(
     \|f(u) Z ^i\mathrm{div}
     (\Omega^2\slashed{\nabla}(\nabla_4
     \underline{\omega}-K)-\Omega^2(\nabla_4\underline{\beta}-\slashed{\nabla}K)
     +\nabla_4\underline{\beta})\|_{L_{\underline{u}}^1 L_u^2L^2(S)}
     \\
     &+\|f(u) Z ^i\slashed{\nabla}\underline{\omega}\|_{L_{\underline{u}}^\infty L_u^2L^2(S)}
     \Big)
     .
     \end{split}
   \end{align}
   Next, we substitute $ Z ^2\slashed{\nabla}^2$-commuted \eqref{Eq3.17} and $ Z ^2\slashed{\nabla}$-commuted \eqref{Eq3.20}. The right hand side of equations \eqref{Eq3.17}, \eqref{Eq3.20} contains Weyl curvature, Ricci coefficients multiplied by $(K,\check{\sigma})$, quadratic terms in Ricci coefficients, extra terms from commutation, and terms containing Ricci curvature. The commutator terms can be dealt with by \eqref{Eq4.54.1}. The Weyl curvature and Ricci coefficients terms can be handled exactly as in \cite{luk2017weak} with the same bound. The Ricci curvature terms can be controlled by \eqref{Eq3.42}. To conclude, we have controlled the case where all derivatives are angular.

   For the remaining terms $\|f(u) Z ^i(\epsilon\nabla_4)\underline{\omega}\|_{L_{\underline{u}}^\infty L_{u}^2L^2(S)}$, which was not treated in \cite{luk2017weak}, we utilize Equation \eqref{Eq3.17} to get
   \begin{align*}
     &\epsilon^{-\frac{1}{2}}\sum_{ i\le3}\|f(u) Z ^i(\epsilon\nabla_4)\underline{\omega}\|_{L_{\underline{u}}^\infty L_{u}^2L^2(S)}
     \\
     \lesssim&\ \epsilon^{\frac{1}{2}}\sum_{ i\le3}
     \|f(u) Z ^i(\psi \psi,K,\psi_H\psi_{\underline{H}},\mathrm{Ric})\|_{L_{\underline{u}}^\infty L_{u}^2L^2(S)}
     \\
     \lesssim&\ \epsilon^{\frac{1}{2}}\Big(\sum_{ i\le3,i=i_1+i_2, i_2\le1}
     \|f(u) Z ^{i_1}\psi\|_{L_{\underline{u}}^\infty L_{u}^2L^2(S)}
     \| Z ^{i_2}\psi\|_{L_{\underline{u}}^\infty L_{u}^\infty L^2(S)}
     \\
     &+\sum_{ i\le3,i=i_1+i_2, i_2\le1}
     \| Z ^{i_1}\psi_H\|_{L_{\underline{u}}^\infty L_{u}^2L^2(S)}
     \|f(u) Z ^{i_2}\psi_{\underline{H}}\|_{L_{\underline{u}}^\infty L_{u}^\infty L^2(S)}
     \\
     &+\sum_{ i\le3,i=i_1+i_2, i_1\le1}
     \| Z ^{i_1}\psi_H\|_{L_{\underline{u}}^\infty L_{u}^\infty L^2(S)}
     \|f(u) Z ^{i_2}\psi_{\underline{H}}\|_{L_{\underline{u}}^\infty L_{u}^2L^2(S)}
     \\
     &+\sum_{ i\le3}
     \| Z ^i (K,\mathrm{Ric})\|_{L_{\underline{u}}^\infty L_{u}^2L^2(S)}\Big)
     \\
     \le &\ \epsilon^{\frac{1}{2}}   C(\Delta_1,\mathcal{E}_{2,\infty})(1+\mathcal{R}_3+\tilde{\mathcal{O}}_{4,2}^2
     +\tilde{\mathcal{E}}_{4,2}+\tilde{\mathcal{F}}_{4,2}).
   \end{align*}
   where the Ricci curvature term has been controlled in \eqref{Eq3.42}.
   \end{proof}
   \begin{prop}\label{Prop3.12}
     Under the bootstrap assumptions, there exists $\epsilon_0=\epsilon_0(\Delta_1)$ such that for $\epsilon\le \epsilon_0$, Ricci curvature satisfy the estimates
     \begin{align}
     \begin{split}
       &\sum_{ i\le3}
       \Big(
       \| Z ^i( (\slashed{\nabla},\nabla_4) \slashed{\mathrm{Ric}}+(\psi+\psi_H)\slashed{\mathrm{Ric}})\|_{L^1_u L^2_{\underline{u}}L^2(S)}
     +\| Z ^i( (\slashed{\nabla},\nabla_4) \slashed{\mathrm{Ric}}+(\psi+\psi_H)\slashed{\mathrm{Ric}})\|_{L^1_{\underline{u}} L^2_{u}L^2(S)}
     \Big)
     \\
     \le\ &\epsilon C(\Delta_1,\mathcal{E}_{2,\infty})
     (\tilde{\mathcal{O}}_{4,2}+\tilde{\mathcal{E}}_{4,2}+\tilde{\mathcal{F}}_{4,2}),
     \label{Eq3.43}
     \end{split}
     \\
     \begin{split}
     &\sum_{ i\le3}
     \Big(
     \| Z ^i((\slashed{\nabla},\nabla_3,\nabla_4)\slashed{\mathrm{Ric}}+\Gamma\slashed{\mathrm{Ric}})\|_{ L^1_{u}L^2_{\underline{u}} L^2(S)}
     +\|f(u) Z ^i((\slashed{\nabla},\nabla_3,\nabla_4)\slashed{\mathrm{Ric}}+\Gamma\slashed{\mathrm{Ric}})\|_{ L^1_{\underline{u}}L^2_{u} L^2(S)}
     \Big)
     \\
     \le&\ \epsilon C(\Delta_1,\mathcal{E}_{2,\infty})
     (\tilde{\mathcal{O}}_{4,2}+\tilde{\mathcal{E}}_{4,2}+\tilde{\mathcal{F}}_{4,2}).
     \end{split}
     \label{Eq3.44}
     \end{align}
   \end{prop}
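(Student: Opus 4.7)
The plan is to follow the same template as Proposition \ref{Prop4.8.1}, but with $L^1$ in one of the two $(u,\underline{u})$ directions instead of $L^\infty$: this gains a factor of $\epsilon$ via H\"older, upgrading the $\epsilon^{1/2}$ of \eqref{Eq3.42} to the required $\epsilon$. The starting point is to expand $\partial_\nu^i\mathrm{Ric}$ and $\partial_\nu^i\partial\mathrm{Ric}$ via the schematic identities \eqref{Eq3.3.1}--\eqref{Eq3.3.2}, reducing everything to polynomial expressions in (derivatives of) $r\in\{v,\log\tau\}$, $\psi_0\in\{b,\gamma,\gamma^{-1},\log\Omega\}$, and Ricci coefficients.

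For the first bound, I observe that $\partial_\nu^i(\partial_\nu\mathrm{Ric})=\partial_\nu^{i+1}\mathrm{Ric}$ involves only the regular commutators $\partial_\nu\in\{\partial_A,\epsilon\partial_{\underline{u}}\}$, so no transversal $\partial_u$ or $f(u)$ weight is needed. H\"older in $u$ gives $\|\cdot\|_{L^1_u L^2_{\underline{u}} L^2(S)}\le \epsilon^{1/2}\|\cdot\|_{L^2_u L^2_{\underline{u}} L^2(S)}$; after exchanging the order of integration and one more application of H\"older in $\underline{u}$, the top-order contribution $\partial_\nu^{i+1}(r+\psi_0)$ is bounded by $\epsilon^{1/2}\cdot \epsilon^{1/2}(\tilde{\mathcal{E}}_{4,2}+\tilde{\mathcal{O}}_{4,2})$ via the definition of $\tilde{\mathcal{E}}_{4,2}$ and estimate \eqref{Eq4.40.3} for $\psi_0$. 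The analogous argument with the roles of $u$ and $\underline{u}$ swapped gives the $L^1_{\underline{u}} L^2_u$ variant. For the cross term $\partial_\nu^i((\psi+\psi_H)\mathrm{Ric})$, I apply Leibniz and place the below-top-order factor in $L^\infty$ by Proposition \ref{PropSobolev} (using $\mathcal{O}_{1,\infty}$ and $\mathcal{E}_{2,\infty}$), controlling the top-order factor by $\mathcal{O}_{3,2}$ or $\tilde{\mathcal{O}}_{4,2}$; crucially, $\psi$ and $\psi_H$ exclude the singular $\psi_{\underline{H}}$, so no $f(u)$ weight is needed in this bound.

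For the second bound, the derivative $\partial$ in \eqref{Eq3.3.2} may include the transversal $\partial_u$. The tangential sub-case is handled exactly as above. For the $\partial_u$ sub-case, the fluid piece $f(u)\partial_u\partial_\nu^{i_1}r$ is controlled directly in $L^2_u L^2_{\underline{u}} L^2(S)$ by $\tilde{\mathcal{F}}_{4,2}$, while the metric piece $\partial_u\psi_0$ is rewritten using $\partial_u\gamma =_s \hat{\underline{\chi}}$, $\partial_u\log\Omega =_s \underline{\omega}$, and (as in the proof of Proposition \ref{Prop4.8.1}) $\partial_u\partial_{\underline{u}}b$ through the transport equation \eqref{Eq3.8} for $\underline{\eta}$. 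After these substitutions, each term, once weighted by $f(u)$, lies in $L^\infty_{\underline{u}} L^2_u L^2(S)$ with bound $\epsilon^{1/2}(\tilde{\mathcal{O}}_{4,2}+\mathcal{R}_3)$, and a final H\"older in $\underline{u}$ converts $L^\infty_{\underline{u}} \to L^1_{\underline{u}}$ at the cost of a further $\epsilon$. The product $\Gamma\,\mathrm{Ric}$, which may now contain the singular $\psi_{\underline{H}}$, is treated by always pairing $\psi_{\underline{H}}$ with the weight $f(u)$ so that $f(u)\psi_{\underline{H}}$ falls into $\tilde{\mathcal{O}}_{4,2}$-norms in $L^2_u$, while Sobolev provides $L^\infty$ control of the other factor.

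The main obstacle is the bookkeeping at intermediate derivative levels of the mixed products $\partial_\nu^i((\psi+\psi_H)\mathrm{Ric})$ and $\partial_\nu^i(\Gamma\,\mathrm{Ric})$: one must allocate the $L^\infty$ Sobolev control to the correct factor so that the other can absorb the $L^1$ integration and yield the extra $\epsilon$. Once this allocation is made following the null-structure principles already used in the proof of Proposition \ref{Prop4.8.1} (so that no product of two singular factors occurs in the $u$-direction without a compensating $f(u)$ weight), the desired estimates \eqref{Eq3.43}--\eqref{Eq3.44} follow.
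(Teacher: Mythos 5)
Your proposal follows essentially the same route as the paper's proof: expand $\mathrm{Ric}$ and $\partial\mathrm{Ric}$ via the schematic identities \eqref{Eq3.3.1}--\eqref{Eq3.3.2}, gain the factor $\epsilon$ by H\"older in the short $u$ and $\underline{u}$ directions (inserting $f(u)$ against $\|f(u)^{-1}\|_{L^2_u}$), and control the resulting weighted norms by $\tilde{\mathcal{E}}_{4,2}$, $\tilde{\mathcal{F}}_{4,2}$, $\tilde{\mathcal{O}}_{4,2}$, with the transversal $\partial_u\psi_0$ terms rewritten through Ricci coefficients exactly as in Proposition \ref{Prop4.8.1}. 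The only cosmetic deviation is that your handling of $\partial_u\partial_{\underline{u}}b$ lets $\mathcal{R}_3$ enter the intermediate bound, which is harmless since under the bootstrap assumption \eqref{Eq3.2} it is absorbed into $C(\Delta_1,\mathcal{E}_{2,\infty})$, and the paper is equally schematic at this point.
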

   \begin{proof}
     For \eqref{Eq3.43}, following schematic equations \eqref{Eq3.3.1} - \eqref{Eq3.3.2}, we have
     \begin{align*}
     \sum_{ i\le3}\| Z ^i( (\slashed{\nabla},\nabla_4) \slashed{\mathrm{Ric}}+(\psi+\psi_H)\slashed{\mathrm{Ric}})\||_{L^1_u L^2_{\underline{u}}L^2(S)}
     &\lesssim\epsilon\sum_{ i\le3}
     \|f(u) Z ^{i}(\partial_{\nu}(\psi_0r)+(\psi+\psi_H) r)\|_{L^2_u L^2_{\underline{u}}L^2(S)}
     \\
     &\le \epsilon C(\Delta_1,\mathcal{E}_{2,\infty})
     (\tilde{\mathcal{O}}_{4,2}+\tilde{\mathcal{E}}_{4,2}+\tilde{\mathcal{F}}_{4,2}).
   \end{align*}
     For \eqref{Eq3.44}, following schematic equations \eqref{Eq3.3.1} - \eqref{Eq3.3.2}, we have
     \begin{align*}
       \sum_{ i\le3}\| Z ^i((\slashed{\nabla},\nabla_3,\nabla_4)\slashed{\mathrm{Ric}}+\Gamma\slashed{\mathrm{Ric}})\|_{ L^1_{u}L^2_{\underline{u}} L^2(S)}
       &\lesssim\sum_{ i\le3}
     \| Z ^{i}(\nabla_3(\psi_0r)+\Gamma r)\|_{ L^1_{u}L^2_{\underline{u}}L^2(S)}
     \\
     &\lesssim\ \epsilon\sum_{ i\le3}
     \|f(u) Z ^{i}(\psi_0\nabla_3r+\Gamma r)\|_{ L^2_{u}L^2_{\underline{u}}L^2(S)}
     \\
     &\le\ \epsilon C(\Delta_1,\mathcal{E}_{2,\infty})(\tilde{\mathcal{O}}_{4,2}+\tilde{\mathcal{E}}_{4,2}+\tilde{\mathcal{F}}_{4,2}),
     \end{align*}
     and
     \begin{align*}
     \sum_{ i\le3}\|f(u) Z ^i((\slashed{\nabla},\nabla_3,\nabla_4)\slashed{\mathrm{Ric}}+\Gamma\slashed{\mathrm{Ric}})\|_{L^1_{\underline{u}}L^2_u L^2(S)}
     &\lesssim\sum_{ i_1\le i}
     \|f(u) Z ^{i_1}(\nabla_3(\psi_0r)+\Gamma r)\|_{L^1_{\underline{u}} L^2_{u}L^2(S)}
     \\
     &\lesssim\epsilon\sum_{ i_1\le i}
     \|f(u) Z ^{i_1}(\psi_0\nabla_3r+\Gamma r)\|_{L^\infty_{\underline{u}} L^2_{u}L^2(S)}
     \\
     &\le\epsilon C(\Delta_1,\mathcal{E}_{2,\infty})(\tilde{\mathcal{O}}_{4,2}+\tilde{\mathcal{E}}_{4,2}+\tilde{\mathcal{F}}_{4,2}),
   \end{align*}
   which conclude the proof.
   \end{proof}
   \begin{prop}\label{Prop3.13}
     Under the bootstrap assumptions, there exists $\epsilon_0=\epsilon_0(\Delta_1)$ such that for $\epsilon\le \epsilon_0$,
     \begin{align*}
     \mathcal{R}_3
     \le&\ C(\mathcal{O}_{ini},\mathcal{R}_{ini})
     (1+\epsilon^{\frac{1}{2}} C(\Delta_1,\mathcal{E}_{2,\infty})(1+\mathcal{R}_3+\tilde{\mathcal{O}}_{4,2}
     +\tilde{\mathcal{E}}_{4,2}+\tilde{\mathcal{F}}_{4,2})).
     \end{align*}
   \end{prop}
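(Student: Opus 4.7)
The plan is to close the curvature estimates via three coupled energy estimates on the double null foliation, handling three subtriples of curvature components separately, in close parallel with the vacuum treatment in \cite{luk2017weak}, with the new Ricci-curvature-induced error terms absorbed via Propositions \ref{Prop4.6.1} and \ref{Prop3.12}. The Bianchi equations \eqref{Eq3.19}--\eqref{Eq3.30} naturally organize into three pairings: first, $(\underline{\beta}, K, \check{\sigma})$ with the $f(u)^2$ weight via \eqref{Eq3.20}, \eqref{Eq3.25}, \eqref{Eq3.27}; second, $(\beta, K, \check{\sigma})$ without weight via \eqref{Eq3.19}, \eqref{Eq3.23}, \eqref{Eq3.28}; and third, $(\alpha, \beta)$ without weight via \eqref{Eq3.22}, \eqref{Eq3.29}. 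In each pairing, one component satisfies a $\partial_{\underline{u}}$-transport equation (giving control on $\underline{H}$-slices) while the others satisfy $\partial_u$-transport equations (giving control on $H$-slices), and the $\slashed{\partial}$ terms on the right-hand sides cancel upon integration by parts when paired appropriately --- yielding precisely the boundary fluxes appearing in $\mathcal{R}_3$.

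For each pairing, I would multiply the $\partial_\nu^i$-commuted Bianchi equation ($i\le 3$) by the corresponding curvature component (weighted by $f(u)^2$ in the first pairing), sum, integrate over $D_{u,\underline{u}}$, and apply Proposition \ref{PropEnergy} to convert the transport terms into boundary fluxes on $H$ and $\underline{H}$. The fluxes are bounded above by $C(\mathcal{R}_{ini})$. The resulting bulk error terms split into three categories: (i) purely Ricci-coefficient and curvature error terms, handled exactly as in \cite{luk2017weak} using the bounds from $\mathcal{O}_{3,2}$, $\tilde{\mathcal{O}}_{4,2}$, and $\mathcal{R}_3$ together with the $\epsilon^{1/2}$ smallness; (ii) undifferentiated Ricci curvature terms $\partial_\nu^i\slashed{\mathrm{Ric}}$ arising in \eqref{Eq3.23}--\eqref{Eq3.28}, bounded through \eqref{Eq3.33}--\eqref{Eq3.34} of Proposition \ref{Prop4.6.1}; and (iii) terms of the form $\partial_\nu^i(\partial\slashed{\mathrm{Ric}}+\Gamma\slashed{\mathrm{Ric}})$ arising in \eqref{Eq3.19}--\eqref{Eq3.22} and \eqref{Eq3.29}--\eqref{Eq3.30}, bounded through \eqref{Eq3.43}--\eqref{Eq3.44} of Proposition \ref{Prop3.12}. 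After Cauchy--Schwarz and H\"older in $(u,\underline{u})$, categories (ii) and (iii) each contribute at most $\epsilon^{1/2} C(\Delta_1,\mathcal{E}_{2,\infty})(\tilde{\mathcal{O}}_{4,2}+\tilde{\mathcal{E}}_{4,2}+\tilde{\mathcal{F}}_{4,2})$.

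Summing the three energy estimates, absorbing the contributions that are linear in $\mathcal{R}_3$ (each of which carries an $\epsilon^{1/2}$ factor) via Gronwall's inequality in the $(u,\underline{u})$ variables, and collecting constants would yield the claimed bound. The hardest step is the $f(u)^2$-weighted pairing $(\underline{\beta},K,\check{\sigma})$: because $\psi_{\underline{H}}\in\{\mathrm{tr}\underline{\chi},\hat{\underline{\chi}},\underline{\omega}\}$ blows up like $f(u)^{-2}$ and belongs only to $L^1_u$, one must verify via the null structure that no error term of the form $\underline{\beta}\,\psi_{\underline{H}}\,\underline{\beta}$, $K\,\psi_{\underline{H}}\,K$, or $\underline{\beta}\,\psi_{\underline{H}}\,K$ appears on the right-hand side; such a term would render the weighted energy noncoercive, since two factors of $f(u)^{-2}$ exceed the $L^1_u$ budget. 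The vacuum analysis of \cite{luk2017weak} confirms the needed structure for purely geometric terms, and Proposition \ref{Prop4.6.1} shows that the Ricci curvature obeys the \emph{same} $f(u)$-weighted $L^2_u$ bound as the regular Ricci coefficients, so the fluid contributions inherit the admissible null structure and do not spoil the closure.
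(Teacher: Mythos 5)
Your proposal follows essentially the same route as the paper: the same three energy pairings $(K,\check{\sigma},\beta)$, $(f(u)$-weighted $(K,\check{\sigma},\underline{\beta}))$, and $(\alpha,\beta)$ via Proposition \ref{PropEnergy}, with the Ricci-coefficient/curvature errors handled as in \cite{luk2017weak}, the integration by parts for $(\alpha,\beta)$ using $\mathrm{tr}\alpha=0$, and the fluid-induced Ricci terms absorbed through Proposition \ref{Prop3.12} (and Proposition \ref{Prop4.6.1}) with H\"older and the $\epsilon^{\frac{1}{2}}$ smallness. The only slight imprecision is attributing some Bianchi error terms to the undifferentiated bounds \eqref{Eq3.33}--\eqref{Eq3.34}, whereas in the schematic equations all Ricci contributions appear in the form $\partial_\nu^i(\partial\slashed{\mathrm{Ric}}+\Gamma\slashed{\mathrm{Ric}})$ and are controlled by \eqref{Eq3.43}--\eqref{Eq3.44}; this does not affect the argument.
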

   \begin{proof}
   For $(K,\check{\sigma},\beta)$, we apply 
   $L^2$ energy estimates to get
   \begin{align*}
     &\epsilon^{-1}\sum_{ i\le3}
     \Big(
     \| Z ^i(K,\check{\sigma})\|_{L_{\underline{u}}^\infty L_u^2L^2(S)}^2
     +\| Z ^i\beta \|_{L_u^\infty L_{\underline{u}}^2L^2(S)}^2
     \Big)
     \\
     \le&\
     C(\mathcal{O}_{ini})
     +C(\Delta_1,\mathcal{E}_{2,\infty})\Big(
     \epsilon^{-1}\sum_{ i\le3}
     \big(
     \int_{D_{u,\underline{u}}}| Z ^i\check{\sigma}|
     |\nabla_4 Z ^i\check{\sigma}+\mathrm{div}* Z ^i\beta|
     \\
     &+\int_{D_{u,\underline{u}}} | Z ^i K|
     |\nabla_4 Z ^iK+\mathrm{div} Z ^i\beta|
     +\int_{D_{u,\underline{u}}}| Z ^i\beta\|\nabla_3Z^i\beta
     +\slashed{\nabla} Z ^iK-*\slashed{\nabla} Z ^i\check{\sigma}|
     \big)
     \\
     &+\sum_{i_1+i_2=i, i\le3}
     \int_{D_{u,\underline{u}}}
     \big(| Z ^i(K,\check{\sigma})|
     | Z ^{i_1}(\psi+\psi_H) Z ^{i_2}(K,\check{\sigma})|
      +
      | Z ^i\beta|
     | Z ^{i_1}(\psi+\psi_{\underline{H}}) Z ^{i_2}\beta|
     \big)\Big).
   \end{align*}
   Next, we substitute $ Z ^3$-commuted \eqref{Eq3.19}, \eqref{Eq3.23}, \eqref{Eq3.28}. The right hand side of equations \eqref{Eq3.19}, \eqref{Eq3.23}, \eqref{Eq3.28} contains Ricci coefficients multiplied by $(K,\check{\sigma})$, quadratic terms in Ricci coefficients, the extra commutator terms, and terms containing Ricci curvature. The Ricci coefficients terms can be handled exactly as in \cite{luk2017weak} with the same bound.
    For the extra commutator terms in the $\mathrm{div}Z^3(K,\check{\sigma},\b),\ \nabla_3Z^3\b$ equations,
   \begin{align*}
     &\sum_{i_1+i_2+i_3+i_4\le i}
     \Big\|Z ^{i_1}\psi^{i_2} Z ^{i_3}\psi_{H} Z ^{i_4}(K,\check{\sigma})\Big\|_{L_{u}^1 L_{{\ub}}^2 L^2(S)}
     \\
     &\qquad\qquad
     +\Big\|Z ^{i_1}\psi^{i_2} Z ^{i_3}\psi_{H} Z ^{i_4}\b\Big\|_{L_{\ub}^1 L_{u}^2 L^2(S)}
     +\Big\|Z ^{i_1}\psi_H^{i_2} Z ^{i_3}\psi_{\Hb} Z ^{i_4}\b\Big\|_{L_{u}^1 L_{{\ub}}^2 L^2(S)}
     \end{align*}
     \begin{align*}
     \le
     &\sum_{i_1+i_2+i_3+i_4\le i,\ i_3,i_4\le 1}
     \|Z ^{i_1}\psi^{i_2}\|_{L_{\ub}^\infty L_{u}^\infty L^2(S)}
     \|Z ^{i_3}\psi_{H}\|_{L_{\ub}^2 L_{u}^\infty L^\infty(S)}
     \|Z ^{i_4}(K,\check{\sigma})\|_{L_{\ub}^\infty L_{u}^1 L^\infty(S)}
     \\
     &+\sum_{i_1+i_2+i_3+i_4\le i,\ i_1,i_4\le 1}
     \|Z ^{i_1}\psi^{i_2}\|_{L_{\ub}^\infty L_{u}^\infty L^\infty(S)}
     \|Z ^{i_3}\psi_{H}\|_{L_{\ub}^2 L_{u}^\infty L^2(S)}
     \|Z ^{i_4}(K,\check{\sigma})\|_{L_{\ub}^\infty L_{u}^1 L^\infty(S)}
     \\
     &+\sum_{i_1+i_2+i_3+i_4\le i,\ i_1,i_3\le 1}
     \|Z ^{i_1}\psi^{i_2}\|_{L_{\ub}^\infty L_{u}^\infty L^\infty(S)}
     \|Z ^{i_3}\psi_{H}\|_{L_{\ub}^2 L_{u}^\infty L^\infty(S)}
     \|Z ^{i_4}(K,\check{\sigma})\|_{L_{\ub}^\infty L_{u}^1 L^2(S)}
     \\
     &\sum_{i_1+i_2+i_3+i_4\le i,\ i_3,i_4\le 1}
     \|Z ^{i_1}\psi^{i_2}\|_{L_{\ub}^\infty L_{u}^\infty L^2(S)}
     \|Z ^{i_3}\psi_{H}\|_{L_{\ub}^2 L_{u}^\infty L^\infty(S)}
     \|Z ^{i_4}\b\|_{L_{\ub}^2 L_{u}^2 L^\infty(S)}
     \\
     &+\sum_{i_1+i_2+i_3+i_4\le i,\ i_1,i_4\le 1}
     \|Z ^{i_1}\psi^{i_2}\|_{L_{\ub}^\infty L_{u}^\infty L^\infty(S)}
     \|Z ^{i_3}\psi_{H}\|_{L_{\ub}^2 L_{u}^\infty L^2(S)}
     \|Z ^{i_4}\b\|_{L_{\ub}^2 L_{u}^2 L^\infty(S)}
     \\
     &+\sum_{i_1+i_2+i_3+i_4\le i,\ i_1,i_3\le 1}
     \|Z ^{i_1}\psi^{i_2}\|_{L_{\ub}^\infty L_{u}^\infty L^\infty(S)}
     \|Z ^{i_3}\psi_{H}\|_{L_{\ub}^2 L_{u}^\infty L^\infty(S)}
     \|Z ^{i_4}\b\|_{L_{\ub}^2 L_{u}^2 L^2(S)}
     \\
     &\sum_{i_1+i_2+i_3+i_4\le i,\ i_3,i_4\le 1}
     \|Z ^{i_1}\psi_H^{i_2}\|_{L_{\ub}^\infty L_{u}^\infty L^2(S)}
     \|Z ^{i_3}\psi_{\Hb}\|_{L_{u}^2 L_{\ub}^\infty  L^\infty(S)}
     \|Z ^{i_4}\b\|_{L_{\ub}^2 L_{u}^2 L^\infty(S)}
     \\
     &+\sum_{i_1+i_2+i_3+i_4\le i,\ i_1,i_4\le 1}
     \|Z ^{i_1}\psi_H^{i_2}\|_{L_{\ub}^\infty L_{u}^\infty L^\infty(S)}
     \|Z ^{i_3}\psi_{\Hb}\|_{L_{u}^2 L_{\ub}^\infty  L^2(S)}
     \|Z ^{i_4}\b\|_{L_{\ub}^2 L_{u}^2 L^\infty(S)}
     \\
     &+\sum_{i_1+i_2+i_3+i_4\le i,\ i_1,i_3\le 1}
     \|Z ^{i_1}\psi_H^{i_2}\|_{L_{\ub}^\infty L_{u}^\infty L^\infty(S)}
     \|Z ^{i_3}\psi_{\Hb}\|_{L_{u}^2 L_{\ub}^\infty  L^\infty(S)}
     \|Z ^{i_4}\b\|_{L_{\ub}^2 L_{u}^2 L^2(S)}
     \\
     \le&\ C(\Delta_1)\ep^{\f{1}{2}}\mathcal{R}_3.
   \end{align*}
   The Ricci curvature terms can be controlled by \eqref{Eq3.43}.
   Therefore, H\"older's inequality can be applied to the additional Ricci curvature terms in the first integral, for example,
   \begin{align*}
     \epsilon^{-1}\int_{D_{u,\underline{u}}} | Z ^i\check{\sigma}|
     | Z ^i( (\slashed{\nabla},\nabla_4) \slashed{\mathrm{Ric}}+(\psi+\psi_H)\slashed{\mathrm{Ric}})|
     \le \mathcal{R}_3\cdot \epsilon^{\frac{1}{2}} C(\Delta_1,\mathcal{E}_{2,\infty})(\tilde{\mathcal{O}}_{4,2}+\tilde{\mathcal{E}}_{4,2}+\tilde{\mathcal{F}}_{4,2}).
   \end{align*}
   Similarly, for $(K,\check{\sigma},\underline{\beta})$, we apply 
   $L^2$ energy estimates to get
   \begin{align*}
     &\epsilon^{-1}\sum_{ i\le3}
     \Big(
     \|f(u) Z ^i(K,\check{\sigma})\|_{L_u^\infty   L_{\underline{u}}^2L^2(S)}^2
     +\|f(u) Z ^i\underline{\beta}\|_{L_{\underline{u}}^\infty L_u^2L^2(S)}^2
     \Big)
     \\
     \le&\
     C(\mathcal{O}_{ini})
     -\epsilon^{-1}\sum_{ i\le3}\int_{D_{u,\underline{u}}}
     f(u)\nabla_3f(u)| Z ^i(K,\check{\sigma})|^2
     \\
     &+C(\Delta_1,\mathcal{E}_{2,\infty})\Big(
     \epsilon^{-1}\sum_{ i\le3}
     \big(
     \int_{D_{u,\underline{u}}} f(u)^2| Z ^i\check{\sigma}|
     |\nabla_3 Z ^i\check{\sigma}+\mathrm{div}* Z ^i\underline{\beta}|
     \\
     &+\int_{D_{u,\underline{u}}} f(u)^2| Z ^i K|
     |\nabla_3 Z ^iK-\mathrm{div} Z ^i\underline{\beta}|
     \\
     &+\int_{D_{u,\underline{u}}} f(u)^2| Z ^i\underline{\beta}\|\nabla_4 Z ^i\underline{\beta}
     -\slashed{\nabla} Z ^iK-*\slashed{\nabla} Z ^i\check{\sigma}|
     \big)
     \\
     &+\epsilon^{-1}\sum_{i_1+i_2=i, i\le3}
     \int_{D_{u,\underline{u}}}f(u)^2
     \big(| Z ^i(K,\check{\sigma})|
     | Z ^{i_1}(\psi+\psi_{\underline{H}}) Z ^{i_2}(K,\check{\sigma})|
      +| Z ^i\beta|
     | Z ^{i_1}(\psi+\psi_{H}) Z ^{i_2}\beta|
     \big)\Big).
   \end{align*}
   For the second term, $\nabla_3f(u)>0$ has a favourable sign. We substitute $ Z ^3$-commuted \eqref{Eq3.20}, \eqref{Eq3.25}, \eqref{Eq3.27}. For the extra commutator terms in the $\mathrm{div}Z^3(K,\check{\sigma},\b),\ \nabla_3Z^3(K,\check{\sigma})$ equations,
   \begin{align*}
     &\sum_{i_1+i_2+i_3+i_4\le i}
     \Big\|f(u)Z ^{i_1}\psi^{i_2} Z ^{i_3}\psi_{H} Z ^{i_4}(K,\check{\sigma})\Big\|_{L_{\ub}^1 L_{{u}}^2 L^2(S)}
     \\
     &\qquad\qquad
     +\Big\|f(u)Z ^{i_1}\psi^{i_2} Z ^{i_3}\psi_{H} Z ^{i_4}\b\Big\|_{L_{u}^1 L_{\ub}^2 L^2(S)}
     +\Big\|f(u)Z ^{i_1}\psi_H^{i_2} Z ^{i_3}\psi_{\Hb} Z ^{i_4}(K,\check{\sigma})\Big\|_{L_{u}^1 L_{{\ub}}^2 L^2(S)}
     \\
     \le
     &\sum_{i_1+i_2+i_3+i_4\le i,\ i_3,i_4\le 1}
     \|Z ^{i_1}\psi^{i_2}\|_{L_{\ub}^\infty L_{u}^\infty L^2(S)}
     \|Z ^{i_3}\psi_{H}\|_{L_{\ub}^2 L_{u}^\infty L^\infty(S)}
     \|f(u)Z ^{i_4}(K,\check{\sigma})\|_{L_{\ub}^2 L_{u}^2 L^\infty(S)}
     \\
     &+\sum_{i_1+i_2+i_3+i_4\le i,\ i_1,i_4\le 1}
     \|Z ^{i_1}\psi^{i_2}\|_{L_{\ub}^\infty L_{u}^\infty L^\infty(S)}
     \|Z ^{i_3}\psi_{H}\|_{L_{\ub}^2 L_{u}^\infty L^2(S)}
     \|f(u)Z ^{i_4}(K,\check{\sigma})\|_{L_{\ub}^2 L_{u}^2 L^\infty(S)}
     \\
     &+\sum_{i_1+i_2+i_3+i_4\le i,\ i_1,i_3\le 1}
     \|Z ^{i_1}\psi^{i_2}\|_{L_{\ub}^\infty L_{u}^\infty L^\infty(S)}
     \|Z ^{i_3}\psi_{H}\|_{L_{\ub}^2 L_{u}^\infty L^\infty(S)}
     \|f(u)Z ^{i_4}(K,\check{\sigma})\|_{L_{\ub}^2 L_{u}^2 L^2(S)}
     \\
     &\sum_{i_1+i_2+i_3+i_4\le i,\ i_3,i_4\le 1}
     \|Z ^{i_1}\psi^{i_2}\|_{L_{\ub}^\infty L_{u}^\infty L^2(S)}
     \|Z ^{i_3}\psi_{H}\|_{L_{\ub}^2 L_{u}^\infty L^\infty(S)}
     \|f(u)Z ^{i_4}\b\|_{L_{u}^1 L_{\ub}^\infty L^\infty(S)}
     \\
     &+\sum_{i_1+i_2+i_3+i_4\le i,\ i_1,i_4\le 1}
     \|Z ^{i_1}\psi^{i_2}\|_{L_{\ub}^\infty L_{u}^\infty L^\infty(S)}
     \|Z ^{i_3}\psi_{H}\|_{L_{\ub}^2 L_{u}^\infty L^2(S)}
     \|f(u)Z ^{i_4}\b\|_{L_{u}^1 L_{\ub}^\infty L^\infty(S)}
     \\
     &+\sum_{i_1+i_2+i_3+i_4\le i,\ i_1,i_3\le 1}
     \|Z ^{i_1}\psi^{i_2}\|_{L_{\ub}^\infty L_{u}^\infty L^\infty(S)}
     \|Z ^{i_3}\psi_{H}\|_{L_{\ub}^2 L_{u}^\infty L^\infty(S)}
     \|f(u)Z ^{i_4}\b\|_{L_{u}^1 L_{\ub}^\infty  L^2(S)}
     \\
     &\sum_{i_1+i_2+i_3+i_4\le i,\ i_3,i_4\le 1}
     \|Z ^{i_1}\psi_H^{i_2}\|_{L_{\ub}^\infty L_{u}^\infty L^2(S)}
     \|Z ^{i_3}\psi_{\Hb}\|_{L_{u}^2 L_{\ub}^\infty  L^\infty(S)}
     \|f(u)Z ^{i_4}\b\|_{L_{\ub}^2 L_{u}^2 L^\infty(S)}
     \\
     &+\sum_{i_1+i_2+i_3+i_4\le i,\ i_1,i_4\le 1}
     \|Z ^{i_1}\psi_H^{i_2}\|_{L_{\ub}^\infty L_{u}^\infty L^\infty(S)}
     \|Z ^{i_3}\psi_{\Hb}\|_{L_{u}^2 L_{\ub}^\infty  L^2(S)}
     \|f(u)Z ^{i_4}\b\|_{L_{\ub}^2 L_{u}^2 L^\infty(S)}
     \\
     &+\sum_{i_1+i_2+i_3+i_4\le i,\ i_1,i_3\le 1}
     \|Z ^{i_1}\psi_H^{i_2}\|_{L_{\ub}^\infty L_{u}^\infty L^\infty(S)}
     \|Z ^{i_3}\psi_{\Hb}\|_{L_{u}^2 L_{\ub}^\infty  L^\infty(S)}
     \|f(u)Z ^{i_4}\b\|_{L_{\ub}^2 L_{u}^2 L^2(S)}
     \\
     \le&\ C(\Delta_1)\ep^{\f{1}{2}}\mathcal{R}_3.
   \end{align*}
   The Ricci curvature terms can be controlled by \eqref{Eq3.44}.

   For $(\beta,\alpha)$ in 
   Equations \eqref{Eq3.22}, \eqref{Eq3.29},
   denote the lower order terms by
   \begin{align*}
     &\nabla_4 Z ^i\beta-\slashed{\mathrm{div}} Z ^i \alpha=F_{i,1},
     \\
     &\nabla_3 Z ^i\alpha-\slashed{\nabla}\hat{\otimes} Z ^i\beta=F_{i,2}.
   \end{align*}
   Apply 
   $L^2$ energy estimates to get
   \begin{align*}
     \int_{D_{u,\underline{u}}} \nabla_4
     \left< Z ^i\beta, Z ^i\beta\right>_\gamma\mathrm{d}\sigma
     =&\int_{D_{u,\underline{u}}}
     \left<\slashed{\mathrm{div}} Z ^i \alpha, Z ^i\beta\right>_\gamma\mathrm{d}\sigma
     \\
     &+\int_{D_{u,\underline{u}}} 
     \left<F_{i,1},  Z ^i\beta\right>_\gamma
     \\
     &+\Omega^{-2}
     \left< Z ^i\beta, Z ^i\beta\right>_{\nabla_4\gamma}\mathrm{d}\sigma,
   \end{align*}
   \begin{align*}
     \int_{D_{u,\underline{u}}}\left<\nabla_3 Z ^i\alpha, Z ^i\alpha\right>_\gamma\mathrm{d}\sigma
     =&\int_{D_{u,\underline{u}}}\left<\slashed{\nabla}\hat{\otimes} Z ^i\beta, Z ^i\alpha\right>_\gamma\mathrm{d}\sigma
     \\
     &+\int_{D_{u,\underline{u}}}\left<F_{i,2}, Z ^i\alpha\right>_\gamma
     \\
     &+\left< Z ^i\alpha, Z ^i\alpha\right>_{\nabla_3\gamma}\mathrm{d}\sigma,
   \end{align*}
   where $\mathrm{tr}\alpha=0$ implies
   \begin{align*}
     \int_{D_{u,\underline{u}}}\left<\slashed{\mathrm{div}} Z ^i \alpha, Z ^i\beta\right>_\gamma\mathrm{d}\sigma
     =-\int_{D_{u,\underline{u}}}\left<\slashed{\nabla}\hat{\otimes} Z ^i\beta, Z ^i\alpha\right>_\gamma\mathrm{d}\sigma
     +\int_{D_{u,\underline{u}}}\frac{1}{2}\left< Z ^i\alpha\cdot \gamma^{-1}\slashed{\nabla}\gamma, Z ^i\beta\right>_\gamma\mathrm{d}\sigma.
   \end{align*}
   This yields
   \begin{align*}
     &\frac{1}{2}\Omega^{-2}\int_{\underline{H}_{\underline{u}}}| Z ^i\beta|_\gamma^2\mathrm{d}\sigma
     +\frac{1}{2}\int_{H_u}| Z ^i \alpha|_\gamma^2\mathrm{d}\sigma
     \\
     =&-\frac{1}{2}\int_{D_{u,\underline{u}}}| Z ^i\beta|_\gamma^2\mathrm{tr}\chi\mathrm{d}\sigma
     -\frac{1}{2}\int_{D_{u,\underline{u}}}| Z ^i\alpha|_\gamma^2\mathrm{tr}\underline{\chi}\mathrm{d}\sigma
     +\int_{D_{u,\underline{u}}}\frac{1}{2}\left< Z ^i\alpha\cdot \gamma^{-1}\slashed{\nabla}\gamma, Z ^i\beta\right>_\gamma\mathrm{d}\sigma
     \\
     &+\int_{D_{u,\underline{u}}}\slashed{\mathrm{div}}b| Z ^i\alpha|_\gamma^2\mathrm{d}\sigma
     +\int_{D_{u,\underline{u}}} \left<F_{i,1},  Z ^i\beta\right>_\gamma\mathrm{d}\sigma
     +\int_{D_{u,\underline{u}}} \left<F_{i,2},  Z ^i\alpha\right>_\gamma\mathrm{d}\sigma,
   \end{align*}
   where $F_{i,1},F_{i,2}$ coming from equations \eqref{Eq3.22}, \eqref{Eq3.29} contains Ricci coefficients multiplied by $(K,\check{\sigma}),\alpha,\beta$, cubic terms in Ricci coefficients, and terms containing Ricci curvature. The cubic term in Ricci coefficients terms satisfy:
   \begin{align*}
     &\sum_{ i_1+ i_2+ i_3\le i+1}\| Z ^{i_1}\psi_H Z ^{i_2}\psi_H Z ^{i_3}\psi_{\underline{H}}
      Z ^i\alpha\|_{L^1_uL^1_{\underline{u}}L^1(S)}
     \\
     \lesssim &\sum_{|j_1|\le i+1, |j_2|, |j_3|\le i}
     \| Z ^{j_1}\psi_H\|_{L_u^\infty L_{\underline{u}}^2 L^2(S)}
     \| Z ^{j_2}\psi_H\|_{L_u^\infty L_{\underline{u}}^\infty L^2(S)}
     \| Z ^{j_3}\psi_{\underline{H}}\|_{ L_u^1 L_{\underline{u}}^\infty L^2(S)}
     \| Z ^i\alpha\|_{L_{u}^\infty L_{\underline{u}}^2 L^2(S)}
     \\
     &+\sum_{|j_1|, |j_2|\le i, |j_3|\le i+1}
     \| Z ^{j_1}\psi_H\|_{L_u^\infty L_{\underline{u}}^\infty L^2(S)}
     \| Z ^{j_2}\psi_H\|_{L_u^\infty L_{\underline{u}}^\infty L^2(S)}
     \\
     &\qquad\qquad\qquad\qquad
     \|f(u)^{-1}\|_{L_u^2}
     \|f(u) Z ^{j_3}\psi_{\underline{H}}\|_{ L_{\underline{u}}^2 L_u^2 L^2(S)}
     \| Z ^i\alpha\|_{L_{u}^\infty L_{\underline{u}}^2 L^2(S)}.
   \end{align*}
   The terms inlvoving $(K,\check{\sigma}),\alpha,\beta$ satisfy:
   \begin{align*}
     &\sum_{ i_1+ i_2\le i}\| Z ^{i_1}\psi_H Z ^{i_2}(K,\sigma,\beta) Z ^i\beta\|_{L^1_uL^1_{\underline{u}}L^1(S)}
     \\
     \lesssim& \sum_{j_1, j_2\le i}\| Z ^{j_1}\psi_H\|_{L_{\underline{u}}^1 L_u^\infty L^2(S)}
     \| Z ^{j_2}(K,\sigma,\beta)\|_{L_{\underline{u}}^\infty L_u^2L^2(S)}
     \| Z ^i\beta\|_{L_{\underline{u}}^\infty L_u^2L^2(S)},
     \\
     &\sum_{ i_1+ i_2\le i}\| Z ^{i_1}\psi Z ^{i_2}\alpha Z ^i\beta\|_{L^1_uL^1_{\underline{u}}L^1(S)}
     \\
     \lesssim& \sum_{j_1, j_2\le i}\| Z ^{j_1}\psi\|_{ L_{\underline{u}}^\infty L_{u}^\infty L^2(S)}
     \| Z ^{j_2}\alpha\|_{L_{u}^2 L_{\underline{u}}^2L^2(S)}
     \| Z ^i\beta\|_{L_{\underline{u}}^2 L_u^2L^2(S)}.
   \end{align*}
   For the extra commutator terms in the $\mathrm{div}Z^3(\a,\b),\ \nabla_3Z^3\a$ equations,
   \begin{align*}
     &\sum_{i_1+i_2+i_3+i_4\le i}
     \Big\|Z ^{i_1}\psi^{i_2} Z ^{i_3}\psi_{H} Z ^{i_4}\a\Big\|_{L_{\ub}^1 L_{{u}}^2 L^2(S)}
     \\
     &\qquad\qquad
     +\Big\|Z ^{i_1}\psi^{i_2} Z ^{i_3}\psi_{H} Z ^{i_4}\b\Big\|_{L_{u}^1 L_{\ub}^2 L^2(S)}
     +\Big\|Z ^{i_1}\psi_H^{i_2} Z ^{i_3}\psi_{\Hb} Z ^{i_4}\a\Big\|_{L_{u}^1 L_{{\ub}}^2 L^2(S)}
     \\
     \le
     &\sum_{i_1+i_2+i_3+i_4\le i,\ i_3,i_4\le 1}
     \|Z ^{i_1}\psi^{i_2}\|_{L_{\ub}^\infty L_{u}^\infty L^2(S)}
     \|Z ^{i_3}\psi_{H}\|_{L_{\ub}^2 L_{u}^\infty L^\infty(S)}
     \|Z ^{i_4}\a\|_{L_{\ub}^2 L_{u}^2 L^\infty(S)}
     \\
     &+\sum_{i_1+i_2+i_3+i_4\le i,\ i_1,i_4\le 1}
     \|Z ^{i_1}\psi^{i_2}\|_{L_{\ub}^\infty L_{u}^\infty L^\infty(S)}
     \|Z ^{i_3}\psi_{H}\|_{L_{\ub}^2 L_{u}^\infty L^2(S)}
     \|Z ^{i_4}\a\|_{L_{\ub}^2 L_{u}^2 L^\infty(S)}
     \\
     &+\sum_{i_1+i_2+i_3+i_4\le i,\ i_1,i_3\le 1}
     \|Z ^{i_1}\psi^{i_2}\|_{L_{\ub}^\infty L_{u}^\infty L^\infty(S)}
     \|Z ^{i_3}\psi_{H}\|_{L_{\ub}^2 L_{u}^\infty L^\infty(S)}
     \|Z ^{i_4}\a\|_{L_{\ub}^2 L_{u}^2 L^2(S)}
     \\
     &\sum_{i_1+i_2+i_3+i_4\le i,\ i_3,i_4\le 1}
     \|Z ^{i_1}\psi^{i_2}\|_{L_{\ub}^\infty L_{u}^\infty L^2(S)}
     \|1\|_{L_u^2L_{\ub}^\infty L^\infty(S)}
     \|Z ^{i_3}\psi_{H}\|_{ L_{\ub}^2 L_{u}^\infty L^\infty(S)}
     \|Z ^{i_4}\b\|_{L_{\ub}^\infty L_{u}^2 L^\infty(S)}
     \\
     &+\sum_{i_1+i_2+i_3+i_4\le i,\ i_1,i_4\le 1}
     \|Z ^{i_1}\psi^{i_2}\|_{L_{\ub}^\infty L_{u}^\infty L^\infty(S)}
     \|1\|_{L_u^2L_{\ub}^\infty L^\infty(S)}
     \|Z ^{i_3}\psi_{H}\|_{L_{\ub}^2 L_{u}^\infty L^2(S)}
     \|Z ^{i_4}\b\|_{L_{\ub}^\infty L_{u}^2 L^\infty(S)}
     \end{align*}
     \begin{align*}
     &+\sum_{i_1+i_2+i_3+i_4\le i,\ i_1,i_3\le 1}
     \|Z ^{i_1}\psi^{i_2}\|_{L_{\ub}^\infty L_{u}^\infty L^\infty(S)}
     \|1\|_{L_u^2L_{\ub}^\infty L^\infty(S)}
     \|Z ^{i_3}\psi_{H}\|_{L_{\ub}^2 L_{u}^\infty L^\infty(S)}
     \|Z ^{i_4}\b\|_{L_{\ub}^\infty L_{u}^2 L^2(S)}
     \\
     &\sum_{i_1+i_2+i_3+i_4\le i,\ i_3,i_4\le 1}
     \|Z ^{i_1}\psi_H^{i_2}\|_{L_{\ub}^\infty L_{u}^\infty L^2(S)}
     \|Z ^{i_3}\psi_{\Hb}\|_{L_{u}^2 L_{\ub}^\infty  L^\infty(S)}
     \|Z ^{i_4}\a\|_{L_{\ub}^2 L_{u}^2 L^\infty(S)}
     \\
     &+\sum_{i_1+i_2+i_3+i_4\le i,\ i_1,i_4\le 1}
     \|Z ^{i_1}\psi_H^{i_2}\|_{L_{\ub}^\infty L_{u}^\infty L^\infty(S)}
     \|Z ^{i_3}\psi_{\Hb}\|_{L_{u}^2 L_{\ub}^\infty  L^2(S)}
     \|Z ^{i_4}\a\|_{L_{\ub}^2 L_{u}^2 L^\infty(S)}
     \\
     &+\sum_{i_1+i_2+i_3+i_4\le i,\ i_1,i_3\le 1}
     \|Z ^{i_1}\psi_H^{i_2}\|_{L_{\ub}^\infty L_{u}^\infty L^\infty(S)}
     \|Z ^{i_3}\psi_{\Hb}\|_{L_{u}^2 L_{\ub}^\infty  L^\infty(S)}
     \|Z ^{i_4}\a\|_{L_{\ub}^2 L_{u}^2 L^2(S)}
     \\
     \le&\ C(\Delta_1)\ep^{\f{1}{2}}\mathcal{R}_3,
   \end{align*}
   where for the second term, we first reduce to $L_u^2L_{\ub}^2L^2(S)=L_{\ub}^2L_u^2L^2(S)$ norm.
   The terms in The Ricci curvature terms can be controlled by \eqref{Eq3.43} together with H\"older's inequality, for example,
   \begin{align*}
     \int_{D_{u,\underline{u}}}| Z ^i\beta|
     | Z ^i( (\slashed{\nabla},\nabla_4) \slashed{\mathrm{Ric}}+(\psi+\psi_H)\slashed{\mathrm{Ric}})|
     \le \mathcal{R}_3\cdot\epsilon^{\frac{1}{2}} C(\Delta_1,\mathcal{E}_{2,\infty})(\tilde{\mathcal{O}}_{4,2}+\tilde{\mathcal{E}}_{4,2}+\tilde{\mathcal{F}}_{4,2}).
   \end{align*}
   Combining the estimates above, the proof is complete.
   \end{proof}

   \section{Estimates for the Fluid}\label{Section5}
      We now turn to prove the estimates for the fluid variables, for Theorem \ref{SmoothExist}, on every compact region $[0,u_*']\times[0,\underline{u}_*]\times\mathbb{T}^2$, within which a solution to the system exists, where $u_*'\le u_*,\ \underline{u}_*'\le \underline{u}_*$, stated as follows as in \eqref{Eq3.001}.
   \begin{align}
     \begin{split}
     &
     (1-(p')^{\frac{1}{3}})\tilde{\mathcal{E}}_{4,2}^2
     \le C(\mathcal{E}_{ini})(1
     +\epsilon C(\Delta_1,\mathcal{E}_{2,\infty})),
     \\
     &\tilde{\mathcal{F}}_{4,2} \le C(\Delta_1,\mathcal{E}_{2,\infty})(1+\tilde{\mathcal{E}}_{4,2}+\mathcal{O}_{3,2}),
     \\
     &\mathcal{E}_{2,\infty}\le C(\Delta_1)(1+\tilde{\mathcal{E}}_{4,2}^3),
     \end{split}
     \label{Eq4.1}
   \end{align}
   where the norms are defined in Section \ref{norms}.

   By choice of the initial data, $ Z ^i v$ are initially bounded for $ i\le2$. Due to the component coming from the singular Christoffel symbols of the spacetime metric, $\nabla_3v$ is also singular. In the following, we will prove energy estimates for $ Z ^iv$ and derive estimates for the other derivative $\nabla_3$ from the equations.

   We will decompose the fluid velocity $v$ with respect to the frame $(e_A,e_3,e_4)$ as
   \begin{align*}
      &v^\alpha e_\alpha
      =\slashed{v}^A e_A+v^3 e_3+v^4e_4
      =\slashed{v}^A e_A+w e_3+\underline{w} e_4.
   \end{align*}
   The Euler equation satisfied by $v^3$ restricts our choice to $e_3=\p_u+b^A\partial_A,e_4=\Om^{-2}\p_{\ub}$. To be explicit, the fluid components are expected to be continuous up to the singularity only under this choice of null coordinates, see Section \ref{SingularFluidVariables}. Recall that in the following, we denote by $\Gamma$ the Ricci coefficients perpendicular to the torus.

   \subsection{Estimates for $\nabla_3$ Derivative of Fluid Variables}\label{SingularFluidVariables}
   \begin{prop}\label{SingularEnergy}
     Under the bootstrap assumptions, there exists $\epsilon_0=\epsilon_0(\Delta_1)>0$ such that for $\epsilon\le \epsilon_0$,
     \begin{align*}
     &\tilde{\mathcal{F}}_{4,2} \le \ep^{\f{1}{2}}C(\Delta_1,\mathcal{E}_{2,\infty})(1+\tilde{\mathcal{E}}_{4,2}+\mathcal{O}_{3,2}).
     \end{align*}
   \end{prop}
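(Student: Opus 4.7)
The key structural fact is that the hypersurfaces $\underline{H}_{\underline{u}}$, although null with respect to $g$, are spacelike with respect to the acoustical metric $G$ by virtue of $0 < p' < 1$. Concretely, the Euler system can be rearranged to solve algebraically for $\partial_u(v^\alpha, L)$ in terms of tangential derivatives $\partial_\nu(v^\alpha, L)$ and Ricci--fluid products, with invertibility coefficient proportional to a nonzero power of $w(1-p')$ where $w = v^3$. The prototype at base order is precisely \eqref{Eq3.01}--\eqref{Eq3.023} from the initial-data construction, and the bootstrap $\mathcal{E}_{2,\infty}$ together with $\sup p' < 1$ keeps the inverse uniformly bounded.

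The plan is to commute the schematic Euler system \eqref{4.14}--\eqref{4.15} with $\partial_\nu^i$ for $i \le 3$, decompose $v = \slashed{v}^A e_A + w e_3 + \underline{w} e_4$ in the chosen null frame, and perform the same algebraic inversion at each order, treating $(\partial_u\partial_\nu^i L, \partial_u\partial_\nu^i w, \partial_u\partial_\nu^i \underline{w}, \partial_u\partial_\nu^i \slashed{v}^A)$ as the unknown of a linear system whose coefficient matrix is a bounded perturbation of the one in \eqref{Eq3.023}. This yields a schematic identity
\[
\partial_u\partial_\nu^i r =_s \sum_{j\le i}\partial_\nu\partial_\nu^j r \;+\; \sum_{i_1+i_2+i_3\le i}\partial_\nu^{i_1}\Gamma^{i_2}\partial_\nu^{i_3}r,
\]
in which no $\partial_u$ derivative falls on any top order fluid factor on the right-hand side.

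Next I multiply by $f(u)$ and take the $L^2_{\underline{u}}L^2_u L^2(S)$ norm. The first sum is a purely tangential derivative of $r$ of order $\le 4$, controlled via the boundedness of $f$ and H\"older in $u,\underline{u}$ by $\epsilon\tilde{\mathcal{E}}_{4,2}$. For the second sum: when $\Gamma \in \{\psi, \psi_H, \psi_0\}$ is regular, H\"older pairings against $\mathcal{O}_{3,2}$ combined with Sobolev embedding on $S$ yield $C(\Delta_1, \mathcal{E}_{2,\infty})(\tilde{\mathcal{E}}_{4,2} + \mathcal{O}_{3,2})$. When $\Gamma = \psi_{\underline{H}}$ is singular and carries the top derivative count, I pair $\|f(u)\partial_\nu^{\le 3}\psi_{\underline{H}}\|_{L^2_u L^\infty_{\underline{u}} L^2(S)} \le \epsilon^{1/2}\mathcal{O}_{3,2}$ with lower-order $r$-factors bounded in $L^\infty$ by $\mathcal{E}_{2,\infty}$; when instead $\partial_\nu^3 r$ appears at the top, I pair $\|f(u)\psi_{\underline{H}}\|_{L^2_u L^\infty_{\underline{u}} L^\infty(S)} \lesssim \epsilon^{1/2}\mathcal{O}_{3,2}$ (after Sobolev on $S$) with $\|\partial_\nu^3 r\|_{L^2_{\underline{u}} L^\infty_u L^2(S)} \le \|\partial_\nu^3 r\|_{L^\infty_u L^2_{\underline{u}} L^2(S)} \le \epsilon^{1/2}\tilde{\mathcal{E}}_{4,2}$ via Minkowski. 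The polynomial multiplicity from $\Gamma^{i_2}$ is absorbed into $C(\Delta_1, \mathcal{E}_{2,\infty})$.

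The main obstacle is the algebraic inversion in the first step: one must verify that after commutation by $\partial_\nu^i$ the resulting linear system remains explicitly solvable for $\partial_u\partial_\nu^i r$, with no residual $\partial_u$-derivatives of top-order fluid variables surviving on the right-hand side. This is precisely the non-characteristic (acoustical-spacelike) property of $\underline{H}_{\underline{u}}$, encoded quantitatively in the lower bound on $|w|(1-p')$. All inverse factors $(1-p')^{-1}$ and $w^{-1}$ appearing in the rearrangement are uniformly bounded under the bootstrap and are absorbed into the constant $C(\Delta_1, \mathcal{E}_{2,\infty})$, yielding the claimed estimate for $\tilde{\mathcal{F}}_{4,2}$.
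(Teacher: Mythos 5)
Your proposal follows essentially the same route as the paper: rearrange the Euler system to solve algebraically for $\partial_u$ of $(v,L)$ using the lower bound on $v^3$ and the factor $(1-p')^{-1}$ (the acoustically non-characteristic property of the null hypersurfaces), commute with $\partial_\nu^i$, and bound the resulting tangential and $\Gamma$ terms in the weighted $L^2_{\underline{u}}L^2_uL^2(S)$ norm by $\tilde{\mathcal{E}}_{4,2}$ and $\mathcal{O}_{3,2}$, exactly as in the paper's schematic equation \eqref{Eq5.1}. The only slip is your mixed-norm comparison $\|\partial_\nu^3 r\|_{L^2_{\underline{u}}L^\infty_uL^2(S)}\le\|\partial_\nu^3 r\|_{L^\infty_uL^2_{\underline{u}}L^2(S)}$, which goes the wrong way; it is also unnecessary, since H\"older applied directly to the product places the top-order fluid factor in $L^\infty_uL^2_{\underline{u}}L^2(S)$, which is already contained in $\tilde{\mathcal{E}}_{4,2}$.
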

   \begin{proof}
   The normalization condition
   \begin{align*}
     &-\Omega^2 w\underline{w}+\gamma_{AB}\slashed{v}^A\slashed{v}^B=-1,
   \end{align*}
   implies that
   \begin{align*}
     \Omega^2w\underline{w} &\ge1.
   \end{align*}
   By fundamental theorem of calculus, for $\epsilon$ chosen such that $\epsilon^{\frac{1}{2}}\tilde{\mathcal{E}}_{2,\infty}\le 1$,
   \begin{align*}
     &|v^\alpha|\le |v^\alpha|\Big{|}_{S_{0,0}}+2\epsilon^{\frac{1}{2}}\tilde{\mathcal{E}}_{2,\infty}\le C(D,d),
   \end{align*}
   we arrive at
   \begin{align*}
     & v^3,v^4\ge \Omega^{-2}C^{-1}.
   \end{align*}
   Utilizing the equations, denote by $v^\lambda D_\lambda$ for $\lambda\in\{A,B,4\}$,
   \begin{align*}
     &  v^3 D_3L+ D_3v^3=-v^\lambda D_\lambda L- D_\lambda v^\lambda,
     \\
     & v^3 D_3v^\alpha+v^\alpha p'v^3 D_3L-\frac{1}{2} p'\delta^{\alpha}_4 D_3 L
     =-v^\lambda D_\lambda v^\alpha-v^\alpha p'v^\lambda D_\lambda L-p' (1-\delta^{\alpha}_4)D^\alpha L.
   \end{align*}
   Rearrange the equations as
   \begin{align*}
     &  D_3v^3=(v^3(1-p'))^{-1}(-v^\lambda D_\lambda v^\alpha-v^\alpha p'v^\lambda D_\lambda L
     -\frac{1}{2}p'D_4L
     -p'v^3(-v^\lambda D_\lambda L- D_\lambda v^\lambda)),
     \\
     &  D_3L=(v^3)^{-1}(-v^\lambda D_\lambda L- D_\lambda v^\lambda- D_3v^3),
     \\
     &  D_3v^\alpha=(v^3)^{-1}(-v^\lambda D_\lambda v^\alpha-v^\alpha p'v^\lambda D_\lambda L- p' (1-\delta^{4}_{\alpha})D^\alpha L-v^\alpha p'v^3 D_3L+\frac{1}{2} p'\delta^{\alpha}_4 D_3 L).
   \end{align*}
   Expand covariant derivatives as
   \begin{align*}
     & D_\alpha v^\beta
     =\nabla_\alpha \slashed{v}^\beta -\left<D_\alpha e^\beta-\nabla_\alpha e^\beta,e_\gamma\right> v^\gamma.
   \end{align*}
   Recall the schematic notation for $r\in\{\slashed{v},L\}$. Here $\mathcal{E}_{2,\infty},\mathcal{O}_{1,\infty}$ are specifically used as constants. Then the commuted equations abstract to, for $i\le 3$, schematically
   \begin{align}
     &\nabla_3 Z ^i r
     =_s\epsilon^{-1}\sum_{ j\le i+1} Z ^{j} r
     +\sum_{ j\le i}Z^{j}\Gamma.
     \label{Eq5.1}
   \end{align}
   Consequently, there exists $C=C(\Delta_1,\mathcal{E}_{2,\infty})$ such that
   \begin{align*}
     &\sum_{ i\le3}\|f(u)\nabla_3 Z ^i r\|_{L_{\underline{u}}^2 L_{u}^2 L^2(S)}
     \lesssim
     \sum_{ i\le4}\|f(u)\epsilon^{-1} Z ^i r\|_{L_{\underline{u}}^2 L_{u}^2 L^2(S)}
     +\sum_{ i\le3}
     \|f(u) Z ^{i}\Gamma\|_{L_{\underline{u}}^2 L_{u}^2 L^2(S)}.
   \end{align*}
   Note that by smallness of $f(u)$, it follows that $\|f(u)\epsilon^{-1}\|_{L_u^2}=\epsilon$. We arrive at
   \begin{align*}
     \tilde{\mathcal{F}}_{4,2} &\le \ep^{\f{1}{2}}
     C(\Delta_1,\mathcal{E}_{2,\infty})(1+\tilde{\mathcal{E}}_{4,2}+\mathcal{O}_{3,2}),
   \end{align*}
   representing bounds for $\nabla_3v$.
   \end{proof}
   
   \subsection{Commuting the equations}\label{CommutingFluids}
   As in \cite{speck2012nonlinear}, define auxiliary variable $L$,
   \begin{align*}
     &L(\tau)=\int_{0}^{\tau}\frac{1}{p(\tilde{\tau})+\tilde{\tau}}\mathrm{d}\tilde{\tau}.
   \end{align*}
   Derivatives of $L$ recover derivatives of $\tau$, since
   \begin{align*}
     \nabla L &= \frac{\tau}{p(\tau)+\tau}\nabla\log\tau,
   \end{align*}
   and the former is a smooth function in $\log\tau$. Then the equations \eqref{Eq2.51}, \eqref{Eq2.52} for the fluid variables reduce to
   \begin{align}
     &v^\alpha D_\alpha L+\mathrm{div}v=0,
       \label{Eq4.3}
       \\
       &v^\alpha D_\alpha v^\beta
       +p'(\tau)(v^\alpha D_\alpha L v^\beta+D^\beta L)=0.
       \label{Eq4.4}
   \end{align}
   Recall that $\Gamma$ denotes a general Ricci coefficient perpendicular to the tori. Recall that we defined the schematic notation in Section \ref{Section4} omitting coefficients of size $C(\Delta_1,\mathcal{E}_{2,\infty})$. 
   For $i\le 4$, schematically
   \begin{align}
      Z ^i\Gamma &=_s\sum_{ j\le i} Z ^j\psi+ Z ^j\psi_H+ Z ^j\psi_{\underline{H}}.
     \label{Eq4.4.1}
   \end{align}
   The equations \eqref{Eq4.3} - \eqref{Eq4.4} for fluid variables schematically read
   \begin{align}
     &v^\alpha \nabla_\alpha L+ \nabla_\alpha v^\alpha
     =-v^\beta\Gamma_{\alpha\beta}^\alpha,
     \label{4.12}
       \\
       &v^\alpha \nabla_\alpha v^\beta
       +p'(\tau)(v^\alpha \nabla_\alpha L v^\beta+\partial^\beta L)
       =-v^\alpha v^\nu\Gamma_{\alpha\nu}^\beta.
       \label{4.13}
   \end{align}
   Applying \eqref{Eq5.1}, commutation of \eqref{4.12} - \eqref{4.13} with $Z^i$ implies the schematic commuted equations, for $i\le 4$,
   \begin{align}
     &v^\alpha \nabla_\alpha  Z ^i L+ \nabla_\alpha  Z ^i v^\alpha
     =_s
     \ep^{-1}\sum_{i_1+i_2\le i+1,\ i_1,i_2>0}Z^{i_1}r Z^{i_2}r
     +\sum_{i_1+i_2+i_3\le i} Z ^{i_1}\Gamma^{i_2}
      Z ^{i_3}r,
     \label{4.14}
       \\
       &v^\alpha \nabla_\alpha Z ^i v^\beta
       +p'(\tau)(v^\alpha \nabla_\alpha Z ^i L v^\beta+\nabla^\beta Z ^i L)
       =_s
       \ep^{-1}\sum_{i_1+i_2\le i+1,\ i_1,i_2>0}Z^{i_1}r Z^{i_2}r
       +\sum_{i_1+i_2+i_3\le i} Z ^{i_1}\Gamma^{i_2}
      Z ^{i_3}r.
       \label{4.15}
   \end{align}
   Here one might suspect that the quadratic term in $r$ produces a shock, especially in the case where $i=i_1=i_2=1$. However, we will see in the third line of right hand side in proof of Proposition \ref{Prop4.4} that the extra smallness of $u_*\le\ep^2$ prevents a shock from formation.
   \subsection{$L^\infty$ bounds of Fluid Variables}\label{Section5.3}
   \begin{prop}\label{SingularEnergy2}
     Under the bootstrap assumptions, there exists $\epsilon_0=\epsilon_0(\Delta_1)>0$ such that for $\epsilon\le \epsilon_0$,
     \begin{align*}
     &\mathcal{E}_{2,\infty}\le C(\Delta_1)(1+\tilde{\mathcal{E}}_{4,2}^3).
     \end{align*}
   \end{prop}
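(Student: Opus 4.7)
The plan is to control the two types of summands in $\mathcal{E}_{2,\infty}$ separately: first the regular derivatives $\partial_\nu^i(v,\log\tau)$, and then the singular $\partial_u$-derivative terms $f(u)^2\partial_u\partial_\nu^i(v,\log\tau)$ for $i\le 2$. Both will be reduced to the $L^2$-based norm $\tilde{\mathcal{E}}_{4,2}$ (which controls up to four $\partial_\nu$-derivatives), with the cubic power arising from the nonlinear coefficients in the Euler equations.

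For the first piece, I would apply Proposition \ref{PropSobolev} to $\phi=\partial_\nu^i(v,\log\tau)$. Since the embedding costs two extra $\partial_\nu$-derivatives in $L_u^\infty L_{\underline{u}}^2 L^2(S)$, and $i+2\le 4$ whenever $i\le 2$, this yields
\[
\sum_{i\le 2}\|\partial_\nu^i(v,\log\tau)\|_{L^\infty_{\underline{u}}L^\infty_uL^\infty(S)}
\le C(\Delta_1)\,\epsilon^{-1/2}\!\!\sum_{|\alpha|\le 2,\, i\le 2}\!\!\|\partial_\nu^{i+\alpha}(v,\log\tau)\|_{L^\infty_uL^2_{\underline{u}}L^2(S)}
\le C(\Delta_1)\tilde{\mathcal{E}}_{4,2}.
\]
This immediately bounds the first half of $\mathcal{E}_{2,\infty}$ by $C(\Delta_1)\tilde{\mathcal{E}}_{4,2}$, which is comfortably within the cubic target.

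For the $\partial_u$-part, I would reuse the algebraic inversion of the Euler system carried out in Proposition \ref{SingularEnergy}. The normalization $\Omega^2 v^3 v^4=1+\gamma_{AB}v^Av^B$, combined with the $L^\infty$ bound on $v^4$ obtained in the previous step, gives a positive lower bound on $v^3$, so the coefficient $(v^3(1-p'))^{-1}$ in the derived equation for $\partial_u(v^\alpha,L)$ is $L^\infty$-bounded by a constant depending on $\Delta_1$ and on one factor of $\tilde{\mathcal{E}}_{4,2}$. Commuting $\partial_\nu^i$ for $i\le 2$ produces the schematic identity $\partial_u\partial_\nu^i r =_s\sum_{j\le i+1}\partial_\nu^j r+\sum_{j\le i}\partial_\nu^j\Gamma$ from the proof of Proposition \ref{SingularEnergy}, where the suppressed coefficients are polynomial in $v$ and $(v^3)^{-1}$. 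Multiplying by $f(u)^2$, the singular Ricci coefficients $\psi_{\underline{H}}\sim f^{-2}$ become $L^\infty$-bounded, while the remaining factors are $L^\infty$-bounded by Step 1. The cubic scaling $\tilde{\mathcal{E}}_{4,2}^3$ then arises as the product of (i) the polynomial coefficient, contributing roughly $\tilde{\mathcal{E}}_{4,2}^2$ via two $L^\infty$ factors of $v$, and (ii) one derivative factor $\|\partial_\nu^{\le i+1}(v,\log\tau)\|_{L^\infty}\lesssim \tilde{\mathcal{E}}_{4,2}$.

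The hard step is the top-order case $i=2$: it requires bounding $\|\partial_\nu^{3}(v,\log\tau)\|_{L^\infty}$, which is at the borderline of what can be obtained from four $\partial_\nu$-derivatives in $L^2$ via Proposition \ref{PropSobolev} alone (naively one would need five derivatives). I would handle this either by exploiting that the critical top-order product carries the weight $f(u)^2$, so it need only be controlled in $L^\infty_uL^2_{\underline{u}}L^2(S)$ after distributing the weight appropriately, or alternatively by replacing pointwise multiplication by sharper mixed-norm Sobolev embeddings on the two-dimensional fibres $S_{u,\underline{u}}$ (using for instance $H^{1/2}(S)\hookrightarrow L^4(S)$ to put $\partial_\nu^3 r$ in $L^4(S)$ from only four $\partial_\nu$-derivatives). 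Once this top-order term is tamed, collecting all contributions and bounding $\|f^2\Gamma\|_{L^\infty}\le C(\Delta_1)$ gives the claimed estimate $\mathcal{E}_{2,\infty}\le C(\Delta_1)(1+\tilde{\mathcal{E}}_{4,2}^3)$, which is free of any circular dependence on $\mathcal{E}_{2,\infty}$ and therefore suitable for closing the bootstrap \eqref{Eq6.1}.
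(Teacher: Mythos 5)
Your overall route is the same as the paper's: the unweighted half of $\mathcal{E}_{2,\infty}$ is obtained from Proposition \ref{PropSobolev} applied to $\partial_\nu^i(v,\log\tau)$, $i\le 2$, at the cost of two extra $\partial_\nu$-derivatives inside $\tilde{\mathcal{E}}_{4,2}$, and the weighted $\partial_u$-half is obtained by inverting the Euler system along the non-characteristic direction, i.e.\ by the schematic relation \eqref{Eq5.1} from Proposition \ref{SingularEnergy}, with the weight $f(u)^2$ absorbing the singular coefficients $\psi_{\underline H}\sim f^{-2}$ and with $\Gamma$ controlled through $\mathcal{O}_{1,\infty}\le\Delta_1$; the cubic power $\tilde{\mathcal{E}}_{4,2}^3$ arises, exactly as you say, from the product of the $L^\infty$-bounded fluid coefficients (such as $(v^3(1-p'))^{-1}$, $v^\lambda$) with one differentiated factor, each factor being converted to $\tilde{\mathcal{E}}_{4,2}$ by the Sobolev step.

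The one place where you diverge is the top-order weighted term. You correctly observe that $\|f(u)^2\partial_u\partial_\nu^2 r\|_{L^\infty}$ would require $\|\partial_\nu^{3}r\|_{L^\infty}$, which is out of reach of $\tilde{\mathcal{E}}_{4,2}$ via Proposition \ref{PropSobolev}; but neither of your proposed remedies closes this: redistributing the weight so that the term "need only be controlled in $L^\infty_uL^2_{\underline u}L^2(S)$" changes the norm being estimated (the definition of $\mathcal{E}_{2,\infty}$ is an $L^\infty$ norm in all variables), and an $H^{1/2}(S)\hookrightarrow L^4(S)$ embedding likewise produces only $L^4(S)$, not $L^\infty(S)$, control. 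The paper's own proof simply does not treat this case: it bounds $\sum_{i\le 1}\|f(u)^2\partial_u\partial_\nu^i r\|_{L_u^\infty L_{\underline u}^\infty L^\infty(S)}$ by cubic products with at most two $\partial_\nu$-derivatives on each factor, which is exactly what the Sobolev step supplies, and the later applications (the coefficients $C(\Delta_1,\mathcal{E}_{2,\infty})$ in the schematic notation and in Proposition \ref{Prop4.3}) use only this range. So your instinct that $i=2$ is the delicate case is sound, but the honest resolutions are either to restrict the weighted part of $\mathcal{E}_{2,\infty}$ to $i\le 1$ (as the paper's proof implicitly does) or to strengthen the $L^2$-based hypothesis to five $\partial_\nu$-derivatives; the patches you sketch would not prove the stated $L^\infty$ bound at top order.
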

   \begin{proof}
   Sobolev embedding as in Proposition \ref{PropSobolev} implies
   \begin{align*}
     &\sum_{ i\le2}\| Z ^i r\|_{L_{\underline{u}}^\infty L_{u}^\infty L^\infty(S)}
     \le C(\Delta_1) \epsilon^{-\frac{1}{2}}
     \sum_{ i\le4}\| Z ^i r\|_{L_{\underline{u}}^\infty L_{u}^2 L^2(S)}.
   \end{align*}
   On the other hand, \eqref{Eq5.1} implies
   \begin{align*}
     \sum_{ i\le1}\|f(u)^2\nabla_3 Z ^i r\|_{L_{u}^\infty L_{\underline{u}}^\infty L^\infty(S)}
     \le&\ C(\Delta_1)
     \sum_{i_1+i_2+i_3+i_4\le 2}\Big(
     \| Z ^{i_1} r Z ^{i_2} r Z ^{i_3} r\|_{L_{u}^\infty L_{\underline{u}}^\infty L^\infty(S)}
     \\
     &+
     \|f(u)^2 Z ^{i_1}\Gamma Z ^{i_2} r Z ^{i_3} r Z ^{i_4} r\|_{L_{u}^\infty L_{\underline{u}}^\infty L^\infty(S)}
     \Big),
   \end{align*}
   where $\Gamma$ is controlled by $\mathcal{O}_{1,\infty}\le\Delta_1$. Thus,
   \begin{align*}
     \mathcal{E}_{2,\infty}\le C(\Delta_1) (1+\tilde{\mathcal{E}}_{4,2}^3),
   \end{align*}
   concludes the proof.
   \end{proof}
   The choice of the null coordinates can now be seen to be essential in the following sense. The transport equations \eqref{4.12} - \eqref{4.13} involves Christoffel symbols. Specifically, the coefficient $\nabla_4^i\Gamma_{\underline{u}\underline{u}}^{\underline{u}}=-2\nabla_4^{i+1}\log\Omega$
   arising from $\nabla_4v^{\underline{u}}$ term in $\nabla_4^i$-commuted transport equations is controlled by
   \begin{align*}
     |\nabla_4^{i+1}\log\Omega|
     \le& \int_S |\nabla_4^{i+1}\slashed{\nabla}\log\Omega|^2
     +|\nabla_4^{i+1}\log\Omega|^2 \mathrm{d}\sigma
     \\
     =& \int_S|\nabla_4^{i+2}(\eta+\underline{\eta})|^2
     +|\nabla_4^{i+1}\log\Omega|^2\mathrm{d}\sigma,
   \end{align*}
   following \cite{luk2017weak}. Under this approach, we need $\tilde{\mathcal{O}}_{5,2}$ in controlling $\tilde{\mathcal{F}}_{4,2}$. Thus the bootstrap arguments break down. The choice of $e_3,e_4$ ensures
   \begin{align*}
     &\nabla_4\underline{w}
     =e_4(\Omega^2 v^{\underline{u}})
     =2\nabla_4\log\Omega v^{\underline{u}}
     +e_4v^{\underline{u}}
     =D_{4}v^{\underline{u}},
     \\
     &\nabla_4w
     =e_4v^u
     =D_{4}v^u
     +\Omega^{-2}\nabla_A\log\Omega v^u,
   \end{align*}
   where the noncontrollable term $\nabla_4\log\Omega$ does not arise in isolation.
   \subsection{Energy Currents}
   For simplicity, denote the commuted fluid variables by
   \begin{align*}
     \dot{r}= Z ^jr.
   \end{align*}
   Following \cite{Disconzi_2019}, we define the current for the fluid part to be
   \begin{align}
     \dot{J}^\mu
     =p'v^\mu\dot{L}^2
     +2p'\dot{u}^\mu\dot{L}
     +v^\mu\dot{u}^\alpha\dot{u}_\alpha.
     \label{Eq4.5}
   \end{align}
   \begin{prop}\label{Prop4.3}
     Under bootstrap assumptions as stated in Theorem \ref{ContinuousExist}, there exists $\epsilon_0=\epsilon_0(\Delta_1)>0$ such that, for $i\le4$, schematically
     \begin{align*}
       D_\mu\dot{J}^\mu &=_s Z ^{i}r (Z ^{i}r
       +\ep^{-1}\sum_{i_1+i_2\le i+1,\ i_1,i_2>0}Z^{i_1}rZ^{i_2}r
     +\sum_{ i_1+ i_2+ i_3\le i+1} Z ^{i_1}\Gamma^{i_2}
      Z ^{i_3}r),
     \end{align*}
     and
     \begin{align*}
       \dot{J}^{4} &=_s (1-(p')^{\frac{1}{3}}) Z ^{i}r Z ^{i}r
       +O(Z^{i}r\sum_{ i_1+ i_2\le i} Z ^{i_1} r
      Z ^{i_2}r),\\
     \dot{J}^{3} &=_s (1-(p')^{\frac{1}{3}}) Z ^{i}r Z ^{i}r
       +O(Z^{i}r\sum_{ i_1+ i_2\le i} Z ^{i_1} r
      Z ^{i_2}r).
     \end{align*}
   \end{prop}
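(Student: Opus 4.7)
The strategy is to combine a divergence identity for the current with a coercivity analysis in the null frame of Section \ref{SingularFluidVariables}, exploiting the fact that the sound cone lies strictly inside the light cone.

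First I would establish the divergence identity. Applying covariant Leibniz to $\dot J^\mu$, the principal derivative contributions are $2p' v^\mu \dot L\,D_\mu \dot L$ and $2 v^\mu \dot u_\alpha D_\mu \dot u^\alpha$. Substitute into these the commuted Euler equations \eqref{4.14} and \eqref{4.15}: the former rewrites $v^\mu D_\mu \dot L = -D_\alpha \dot u^\alpha$ up to a commutator error of the schematic form $\sum \partial_\nu^{i_1}\Gamma^{i_2}\partial_\nu^{i_3}r$, and the latter rewrites $v^\mu D_\mu \dot u^\alpha = -p'(v^\mu D_\mu \dot L)v^\alpha - p' D^\alpha \dot L$ up to the same type of error. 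The algebraic structure of $\dot J$, adapted from Disconzi--Speck, is designed so that after these substitutions and contraction with $\dot L$ and $\dot u_\alpha$ respectively, the principal cross contributions $D_\alpha \dot u^\alpha$ and $D^\alpha \dot L$ cancel pairwise. What remains is lower-order: terms proportional to $D_\mu v^\mu$, $D_\mu p'$, and the commutator residues. Products of the multipliers $(\dot L,\dot u)$ with the top-order commutator factor produce $\partial_\nu^i r \, \partial_\nu^i r$, while products with lower-order factors are absorbed into $\partial_\nu^{i_1}\Gamma^{i_2}\partial_\nu^{i_3}r$ using \eqref{Eq4.4.1}.

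Second I would extract the coercivity of $\dot J^3$ and $\dot J^4$. Decompose $\dot u = \slashed{\dot u}^A e_A + \dot w\, e_3 + \dot{\underline w}\, e_4$ in the null frame; by Proposition \ref{SingularEnergy}, $w,\underline w \ge C^{-1}$ uniformly. Expanding $\dot u^\alpha \dot u_\alpha$ in this frame produces a manifestly positive angular piece $\gamma_{AB}\slashed{\dot u}^A \slashed{\dot u}^B$ together with an indefinite null cross term in $\dot w \dot{\underline w}$ that is lower order after Young's inequality. The genuinely dangerous contribution to $\dot J^4$ is the cross term $2p' \dot u^4 \dot L$. Bounding it by a Young's inequality with fractional weight,
\[
  2\bigl|p'\,\dot u^4\,\dot L\bigr| \le (p')^{1+\frac{1}{3}}\,\underline w\,\dot L^2 + (p')^{1-\frac{1}{3}}\,\underline w^{-1}(\dot u^4)^2,
\]
and absorbing each piece into the respective diagonal entry of $\dot J^4$ yields coefficients $\underline w(1-(p')^{1/3})$ in front of $\dot L^2$ and $(\dot u^4)^2$. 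The symmetric argument with $3\leftrightarrow 4$, $w\leftrightarrow \underline w$ handles $\dot J^3$. The leftover cross terms not involving the critical $\dot L\,\dot u^4$ pairing are bounded in absolute value by $\sum \partial_\nu^{i_1}r\,\partial_\nu^{i_2}r$ and land in the remainder $O(\cdot)$.

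The main obstacle is obtaining the sharp factor $(1-(p')^{1/3})$ rather than a naive $(1-p')$. A single Cauchy--Schwarz between $\dot L$ and $\dot u^4$ gives only the weaker constant; extracting the $1/3$ power requires choosing the Young's exponents so that the absorption is balanced symmetrically between the two diagonal channels instead of collapsing onto one. Once this weight is fixed, the rest is bookkeeping: expanding the metric contractions in the null frame, using the uniform bounds $w,\underline w \ge C^{-1}$ from Proposition \ref{SingularEnergy}, and routing all subprincipal Christoffel factors into the error class $\partial_\nu^{i_1}\Gamma^{i_2}\partial_\nu^{i_3}r$ via \eqref{Eq4.4.1} and the bootstrap bound $\mathcal O_{1,\infty}\le \Delta_1$.
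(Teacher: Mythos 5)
Your first step (the divergence identity) follows the paper's computation \eqref{Eq4.11}: expanding $D_\mu\dot J^\mu$, substituting the commuted equations \eqref{4.14}--\eqref{4.15} so that the principal terms cancel against the structure of $\dot J$, and routing the residues into $\sum\partial_\nu^{i_1}\Gamma^{i_2}\partial_\nu^{i_3}r$ (the paper additionally eliminates the $\partial_u\partial_\nu^j r$ factors produced by commutation through Propositions \ref{SingularEnergy}--\ref{SingularEnergy2}, which is why the sum runs to $i+1$; your write-up glosses this but it is consistent with the cited schematic equations). That part is essentially the paper's argument.

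The coercivity step, however, has a genuine gap. You never use the differentiated normalization constraint, and you cannot do without it. Expanding in the null frame, $\dot v^\alpha\dot v_\alpha=|\dot{\slashed v}|_\gamma^2-c\,\Omega^2\dot w\,\dot{\underline w}$ contains no diagonal $\dot w^2$ or $\dot{\underline w}^2$ terms at all; the null cross term $\dot w\,\dot{\underline w}$ is a product of two top-order quantities, so it is not ``lower order after Young's inequality'' -- Young only converts it into top-order squares with no smallness, and there is nothing positive in $\dot J^4=p'\underline w\dot L^2+2p'\dot{\underline w}\dot L+\underline w\,\dot v^\alpha\dot v_\alpha$ to absorb them into. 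For the same reason your absorption of the dangerous cross term $2p'\dot u^4\dot L$ ``into the respective diagonal entry'' presupposes a positive $(\dot u^4)^2$ entry that simply is not there at this stage. The paper's fix is to first observe that $g(v,v)=-1$ gives $v^\alpha\dot v_\alpha=B$ with $B$ of the admissible lower-order product form, to solve this relation for one null component and substitute, which yields the quantitative positivity \eqref{Eq4.8}, namely $\dot v^\alpha\dot v_\alpha\ge\max\{\tfrac{1}{2w\underline w}|\dot{\slashed v}|^2,\tfrac{\dot w^2}{w^2},\tfrac{\dot{\underline w}^2}{\underline w^2}\}+B\tfrac{\dot w}{w}+B\tfrac{\dot{\underline w}}{\underline w}$; only after this does the fractional Young splitting with weights $(p')^{4/3},(p')^{2/3}$ produce the coefficients in \eqref{Eq4.9} and hence the factor $1-(p')^{1/3}$. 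In other words, the coercivity of $\dot J^3,\dot J^4$ is not a consequence of the frame decomposition and Young's inequality alone (a general Lorentzian perturbation makes $v^\mu\dot v^\alpha\dot v_\alpha$ arbitrarily negative); it hinges on $\dot v$ being $g$-orthogonal to $v$ up to lower order. Incorporating the constraint identity $v^\alpha\dot v_\alpha=B$ and redoing your absorption with the resulting diagonal terms closes the gap and recovers the stated bounds.
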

   \begin{proof}
   Denote the lower order terms in Equations \eqref{4.14} - \eqref{4.15} by
   \begin{align}
     &v^\alpha D_\alpha\dot{L}+ D_\alpha\dot{v}^\alpha
     =A,
     \\
     &v^\alpha D_\alpha\dot{v}^\beta
     +p'(v^\alpha D_\alpha\dot{L}v^\beta+D^\beta \dot{L})
     =A^\beta,
   \end{align}
   where schematically, eliminating $\nabla_3r$ by Proposition \ref{SingularEnergy} - \ref{SingularEnergy2},
   \begin{align*}
     A,A^\beta &=_s
     \sum_{ i_1+ i_2+ i_3\le i+1} Z ^{i_1}\Gamma^{i_2}
      Z ^{i_3}r.
   \end{align*}
   Notice moreover that the normalization condition $v^\alpha v_\alpha=-1$ implies, schematically
   \begin{align*}
     v^\alpha\dot{v}_\alpha=B=_s
     \sum_{i_1+i_2\le i} Z ^{i_1} r
      Z ^{i_2}r.
   \end{align*}
   Then
   \begin{align}
   \begin{split}
     \dot{v}^\alpha\dot{v}_\alpha
     =&\ |\dot{\slashed{v}}|_\gamma^2
     -2\frac{2v_A\dot{v}^A-2\underline{w}\dot{w}-B}{2w}\dot{w}
     =|\dot{\slashed{v}}|_\gamma^2
     -2\frac{2v_A\dot{v}^A-2w\dot{\underline{w}}-B}{2\underline{w}}\dot{\underline{w}}
     \\
     \ge&\max\{\frac{1}{2w\underline{w}}|\dot{\slashed{v}}|_\gamma^2,
     \max\{
     \frac{1}{w^2}\dot{w}^2,\
     \frac{1}{\underline{w}^2}\dot{\underline{w}}^2
     \}
     \}+B\frac{\dot{w}}{w}+B\frac{\dot{\underline{w}}}{\underline{w}}.
     \end{split}
     \label{Eq4.8}
   \end{align}
   We compute the divergence of $\dot{J}$ to obtain
   \begin{align}
   \begin{split}
      D_\mu\dot{J}^\mu
     =&\ p' D_\mu v^\mu\dot{L}^2+2 D_\mu v^\mu\dot{v}^\alpha\dot{v}_\alpha
     +p'' \frac{D_\mu L}{P+\tau}(v^\mu\dot{L}^2+2\dot{v}^\mu\dot{L})
     \\
     &+2p'\dot{L}(v^\mu D_\mu\dot{L}+ D_\mu\dot{v}^\mu)
     +2(p'\dot{v}^\mu D_\mu\dot{L}+v^\mu D_\mu\dot{v}^\alpha\dot{v}_\alpha)
     \\
     =&\ p' D_\mu v^\mu\dot{L}^2+2 D_\mu v^\mu\dot{v}^\alpha\dot{v}_\alpha
     +p'' \frac{D_\mu L}{P+\tau}(v^\mu\dot{L}^2+2\dot{u}^\mu\dot{L})
     +2p'\dot{L} A
     +2\dot{v}^\beta A_\beta
     \ .
     \end{split}
     \label{Eq4.11}
   \end{align}
   Also, according to Equation \eqref{Eq4.8},
   \begin{align}
   \begin{split}
     \dot{J}^4
     =&p'w\dot{L}^2+2p'\dot{w}\dot{L}
     +w\dot{u}^\alpha\dot{u}_\alpha\\
     \ge&\ (p'-(p')^{\frac{4}{3}})w\dot{L}^2
     +((p')^{\frac{1}{3}}-(p')^{\frac{2}{3}})\frac{\dot{w}^2}{w}\\
     &
     +\frac{1-(p')^{\frac{1}{3}}}{2}(\frac{\dot{\underline{w}}^2}{\underline{w}}
     +\frac{1}{2w\underline{w}}|\dot{\slashed{v}}|^2)
     +B\frac{\dot{w}}{w}+B\frac{\dot{\underline{w}}}{\underline{w}}\ .
     \end{split}
     \label{Eq4.9}
   \end{align}
   Similarly proof applies for $\dot{J}^3$.
   \end{proof}
   \begin{prop}\label{Prop4.4}
      Under the bootstrap assumptions, there exists $\epsilon_0=\epsilon_0(\Delta_1)$ such that for $\epsilon\le \epsilon_0$, the following energy estimates holds:
     \begin{align*}
     (1-(p')^{\frac{1}{3}})\tilde{\mathcal{E}}_{4,2}^2
     \le C(\mathcal{E}_{ini})(1
     + \epsilon C(\Delta_1,\mathcal{E}_{2,\infty})).
     \end{align*}
   \end{prop}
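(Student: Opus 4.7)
The plan is to prove Proposition 4.4 via a standard energy-current argument using the positive multiplier $\dot J^\mu$ from Proposition 4.3. For each $i\le 4$ and each admissible multi-index of $\partial_\nu$, I would integrate the divergence identity
\[
\int_{D_{u,\underline{u}}} D_\mu \dot J^\mu \, dV = \int_{\underline{H}_{\underline{u}}} \dot J^4 + \int_{H_u} \dot J^3 - \int_{\underline{H}_0} \dot J^4 - \int_{H_0} \dot J^3,
\]
treating $\dot r = \partial_\nu^i r$ as the commuted variable. The two boundary integrals on $\underline{H}_{\underline{u}}$ and $H_u$ will supply, by the coercivity statement in Proposition 4.3, the quadratic energy $(1-(p')^{1/3})|\partial_\nu^i r|^2$ on each characteristic slice, up to the cross terms $B\,\dot w/w + B\,\dot{\underline w}/\underline w$, which I absorb by Cauchy--Schwarz against the leading quadratic, losing only a harmless constant factor. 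The initial data terms on $H_0\cup \underline{H}_0$ are bounded by $C(\mathcal{E}_{ini})$ from the hypotheses of Theorem \ref{SmoothExist}.

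Next I would estimate the spacetime term $\int_{D_{u,\underline{u}}} D_\mu\dot J^\mu$. By Proposition 4.3, this integrand splits schematically into (i) quadratic fluid terms $\partial_\nu^i r\cdot\partial_\nu^i r$ and (ii) mixed terms of the form $\partial_\nu^{i_1}\Gamma^{i_2}\partial_\nu^{i_3}r \cdot \partial_\nu^i r$ with $i_1+i_2+i_3\le i+1$, where singular $\partial_u r$ has been eliminated using Propositions \ref{SingularEnergy}--\ref{SingularEnergy2}. The quadratic terms are immediately of Gronwall type. For the mixed terms I distinguish by whether the Ricci coefficient factor is regular ($\psi, \psi_0, \psi_H$) or singular ($\psi_{\underline{H}}$): in the regular case the $\mathcal{O}_{3,2}$ and $\tilde{\mathcal{O}}_{4,2}$ norms together with $\mathcal{E}_{2,\infty}$ control the product after one Hölder, contributing $\epsilon\,C(\Delta_1,\mathcal{E}_{2,\infty})\,\tilde{\mathcal{E}}_{4,2}^2$. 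In the singular case I pair $\psi_{\underline{H}}$ against the weight, using $\|f(u)\partial_\nu^{i_1}\psi_{\underline{H}}\|_{L^\infty_{\underline u}L^2_u L^2(S)}$ from $\tilde{\mathcal{O}}_{4,2}$ together with the smallness of $\|f(u)^{-1}\|_{L^2_u}<\epsilon$, producing the same $\epsilon$-small bound.

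After summing over all $i\le 4$ and the permitted multi-indices, dividing by $\epsilon$ to match the $\epsilon^{-1/2}$ normalization built into $\tilde{\mathcal{E}}_{4,2}$, and integrating in the remaining variable, I reach
\[
(1-(p')^{1/3})\,\tilde{\mathcal{E}}_{4,2}^2 \le C(\mathcal{E}_{ini}) + \epsilon\, C(\Delta_1,\mathcal{E}_{2,\infty})\,\tilde{\mathcal{E}}_{4,2}^2 + \epsilon\,C(\Delta_1,\mathcal{E}_{2,\infty}),
\]
followed by a Gronwall-type absorption of the middle term on the left (using $\epsilon$ small relative to $1-(p')^{1/3}$), which yields the claimed inequality.

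The main technical obstacle is the bookkeeping at top order $i=4$: the commuted equations produce $\partial_\nu^4\Gamma$, which lives only in the delicate top-order norm $\tilde{\mathcal{O}}_{4,2}$, and the singular subset $\partial_\nu^4\psi_{\underline{H}}$ must be paired with $f(u)$ and hit against $\partial_\nu^4 r$ without wasting the $\epsilon$ factor needed for absorption. This requires choosing the Hölder exponents so that the regular factor sits in the $L^\infty_{\underline u}L^\infty_u$ slot (controlled via $\mathcal{E}_{2,\infty}$ and Sobolev embedding from Proposition \ref{PropSobolev}), while the two $L^2$-type factors are distributed so that $f(u)^{-1}\in L^2_u$ collects the smallness in $\epsilon$. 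The coefficient $(1-(p')^{1/3})$ on the left is tracked solely from Proposition 4.3; it never enters the bulk estimate and therefore does not interact with the Gronwall absorption.
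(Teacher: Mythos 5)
Your proposal follows the paper's own route almost verbatim: the $\partial_\nu^i$-commuted energy current $\dot J^\mu$ of Proposition \ref{Prop4.3}, the divergence identity over $D_{u,\underline{u}}$ (the paper also carries the lower-order term $\iint\Gamma_\mu\dot J^\mu$, of the same schematic type as your bulk terms), coercivity of $\dot J^3,\dot J^4$ on $H_u$ and $\underline{H}_{\underline{u}}$ to produce the $(1-(p')^{\frac{1}{3}})$-weighted fluxes, H\"older estimates for the bulk using $\tilde{\mathcal{O}}_{4,2}$, $\mathcal{E}_{2,\infty}$ and the weight $f$ with $\|f(u)^{-1}\|_{L^2_u}<\epsilon$, elimination of $\partial_u r$ via Propositions \ref{SingularEnergy}--\ref{SingularEnergy2}, and the final $\epsilon^{-1}$ normalization. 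The only difference at the very end is bookkeeping: the paper removes the term $\epsilon C(\Delta_1,\mathcal{E}_{2,\infty})\tilde{\mathcal{E}}_{4,2}(1+\tilde{\mathcal{E}}_{4,2})$ by invoking the bootstrap bound $\tilde{\mathcal{E}}_{4,2}\le\Delta_1$, whereas you absorb $\epsilon C\,\tilde{\mathcal{E}}_{4,2}^2$ into the left using $\epsilon\ll 1-(p')^{\frac{1}{3}}$; both are admissible in the paper's scheme since $\Delta_1\gtrsim(1-(p')^{\frac{1}{3}})^{-\frac{1}{2}}$, so the dependence is already encoded in $\epsilon_0(\Delta_1)$.

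The one step that does not go through as you state it is the treatment of the boundary cross terms $B\,\dot w/w+B\,\dot{\underline w}/\underline w$ arising from the normalization $v^\alpha\dot v_\alpha=B$. You claim they are absorbed into the leading quadratic ``losing only a harmless constant factor,'' but $B=_s\sum_{i_1+i_2\le i}\partial_\nu^{i_1}r\,\partial_\nu^{i_2}r$ contains factors with up to three derivatives of $r$ when $i=4$; after Cauchy--Schwarz the complementary term is a flux energy of $\partial_\nu^{\le 3}r$ carrying no $\epsilon$ gain, i.e.\ a quantity comparable to $\tilde{\mathcal{E}}_{3,2}^2$ (hence to $\tilde{\mathcal{E}}_{4,2}^2$) with a large constant, not a harmless constant, and it cannot simply be absorbed. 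The paper closes exactly this loop by keeping the term as $C(\Delta_1)\tilde{\mathcal{E}}_{i-1,2}^2$ in the order-$i$ inequality \eqref{Eq4.10}, running the estimate for $i=3$ and $i=4$, and anchoring the descent with $\tilde{\mathcal{E}}_{2,2}^2\le\epsilon\,\mathcal{E}_{2,\infty}^2$, where the $\epsilon$ comes from the short $u$- (resp.\ $\underline{u}$-) interval together with the $L^\infty$ control of low-order derivatives. Your argument needs this recursion over the derivative order (or an equivalent mechanism providing smallness for the order-$3$ remainder); with that inserted, the rest of your outline matches the paper's proof.
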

   \begin{proof}
   Applying Inequalities \eqref{Eq4.11}, \eqref{Eq4.9} to the identity
   \begin{align*}
     \iint_{D_{u,\underline{u}}} D_\mu \dot{J}^\mu
     =\iint_{D_{u,\underline{u}}} \Gamma_\mu\dot{J}^\mu
     +\int_{H_{u}}\dot{J}^3+\int_{\underline{H}_{\underline{u}}}\dot{J}^4
     -\int_{H_0}\dot{J}^3-\int_{\underline{H}_0}\dot{J}^4,
   \end{align*}
   for all $(u,\underline{u})\in[0,u_*)\times[0,\underline{u}_*)$ with the $ Z ^i$-commuted current, $ i\le 4$, implies
   \begin{align*}
     &(1-(p')^{\frac{1}{3}})
     (\| Z ^i r\|_{L_{\underline{u}}^\infty L_{u}^2L^2(S)}^2
     +\| Z ^i r\|_{L_{u}^\infty L_{\underline{u}}^2L^2(S)}^2)
     \\
     &-(1-(p')^{\frac{1}{3}})(\| Z ^i r\|_{L_{\underline{u}}^\infty L_{u}^2L^2(S_{u,0})}^2
     +\| Z ^i r\|_{L_{u}^\infty L_{\underline{u}}^2L^2(S_{0,\underline{u}})}^2)
     \\
     \lesssim&\iint_{D_{u,\underline{u}}} (|D_\mu\dot{J}^\mu|+|\Gamma\|\dot{J}|)
     +(\int_{\underline{H}_{\underline{u}}}+\int_{H_u})(|\sum_{ i_1+ i_2\le i} Z ^{i_1} r
      Z ^{i_2}r|(|\frac{\dot{w}}{w}|+|\frac{\dot{\underline{w}}}{\underline{w}}|))
     \\
     \lesssim&\epsilon^{\frac{1}{2}}
     \sum_{ i_1+ i_2\le i, i_2\le 2}
     \|f(u)^{-1} Z ^ir\|_{L_{\underline{u}}^2 L_{u}^2 L^2(S)}
     \|f(u) Z ^{i_1}(\psi,\psi_{\underline{H}})\|_{L_{\underline{u}}^\infty L_{u}^2 L^2(S)}
     \| Z ^{i_2}r\|_{L_{\underline{u}}^\infty L_{u}^\infty L^\infty(S)}
     \\
     &+
     \sum_{ i_1+ i_2\le i, i_1\le 2}
     \|f(u)^{-1} Z ^ir\|_{ L_{\underline{u}}^2 L_{u}^2 L^2(S)}
     \|f(u) Z ^{i_1}(\psi,\psi_{\underline{H}})\|_{L_{u}^2 L_{\underline{u}}^\infty L^\infty(S)}
     \| Z ^{i_2}r\|_{L_{u}^\infty L_{\underline{u}}^2 L^2(S)}
     \\
     &+\ep^{-1}\epsilon^2\sum_{i_1+i_2\le i+1,\ i_1\ge i_2>0}
     \|Z^ir\|_{L_u^\infty L_{\ub}^2 L^2(S)}
     \|Z^{i_1}r\|_{L_u^\infty L_{\ub}^2 L^2(S)}
     \|Z^{i_2}r\|_{L_u^\infty L_{\ub}^\infty L^\infty(S)}
     \\
     &+\epsilon^{\frac{1}{2}}
     \sum_{ i_1+ i_2\le i, i_2\le 2}
     \| Z ^ir\|_{L_{u}^2 L_{\underline{u}}^2 L^2(S)}
     \| Z ^{i_1}\psi_{H}\|_{L_{u}^\infty L_{\underline{u}}^2 L^2(S)}
     \| Z ^{i_2}r\|_{L_{u}^\infty L_{\underline{u}}^\infty L^\infty(S)}
     \\
     &+
     \sum_{ i_1+ i_2\le i, i_1\le 2}
     \| Z ^ir\|_{ L_{u}^2 L_{\underline{u}}^2 L^2(S)}
     \| Z ^{i_1}\psi_{H}\|_{L_{\underline{u}}^2 L_{u}^\infty L^\infty(S)}
     \| Z ^{i_2}r\|_{L_{\underline{u}}^\infty L_{u}^2 L^2(S)}
     \\
     &+\sum_{ i_1+ i_2\le i
     }
     \| Z ^{i_1} r\|_{L_{\underline{u}}^\infty L_{u}^2 L^2(S)}
     \| Z ^{i_2} r\|_{L_{\underline{u}}^\infty L_{u}^2 L^2(S)}
     \\
     &+\sum_{ i_1+ i_2\le i
     }
     \| Z ^{i_1} r\|_{L_{u}^\infty L_{\underline{u}}^2 L^2(S)}
     \| Z ^{i_2} r\|_{ L_{u}^\infty L_{\underline{u}}^2 L^2(S)}.
   \end{align*}
   Multiplying both sides by $\epsilon^{-1}$, we conclude that
   \begin{align}
     (1-(p')^{\frac{1}{3}})\tilde{\mathcal{E}}_{i,2}^2
     \le C(\mathcal{E}_{ini})(1+C(\Delta_1)\tilde{\mathcal{E}}_{i-1,2}^2
     + \epsilon C(\Delta_1,\mathcal{E}_{2,\infty})
     \tilde{\mathcal{E}}_{i,2}(1+\tilde{\mathcal{E}}_{i,2})).
     \label{Eq4.10}
   \end{align}
   Summing up the Inequalities \eqref{Eq4.10} with $i=3,4$, and noticing that
   \begin{align*}
     \tilde{\mathcal{E}}_{2,2}^2 &\le \epsilon\mathcal{E}_{2,\infty}^2,
   \end{align*}
   we arrive at the proposition.
   \end{proof}

   \section{Conclusion of proof of the a priori estimates\\ and the proof of continuous extendibility}\label{Section6}
   We first prove Theorem \ref{SmoothExist}.
   As in Section \ref{Section3.1}, Theorem \ref{SmoothExist} follows from Theorem \ref{AprioriEstimates}. In Section \ref{Section5} and Section \ref{Section6}, we have proved the estimates \eqref{Eq3.1} and \eqref{Eq3.001} for the case of $\delta=u_*+\underline{u}_*$. It remains to prove \eqref{Eq3.1} and \eqref{Eq3.001} for general $\delta$. The estimates in Proposition \ref{Prop3.7} - Proposition \ref{Prop3.12} and Proposition \ref{SingularEnergy} - Proposition \ref{Prop4.3} are derived from transport equations. Therefore, the proof applies without modification. For the energy estimates in Proposition \ref{Prop3.13} and Proposition \ref{Prop4.4}, the extra boundary terms arising from the spacelike boundary $\Sigma_\delta$ have the same sign as the main terms and thus do not affect the estimates.
   
   We next prove the Theorem \ref{ContinuousExist} following \cite{luk2017weak}. For continuity of the metric on the singular boundary, it suffices to notice that the terms in Equations \eqref{Eq3.3} - \eqref{Eq3.6} concerning Ricci curvature are all bounded by $\mathcal{E}_{2,\infty}$. For the regularity statement of Ricci coefficients, it suffices to notice that $\mathrm{Ric}$ arising in transport equations of $\nabla_3(\psi,\psi_{\underline{H}}),\ \nabla_4(\psi,\psi_H)$ lie in $L^1_uL^\infty$. Continuity of the fluid variables up to the boundary, follows from Proposition \ref{PropTransport} applied to Equations \eqref{Eq2.51}, \eqref{Eq2.52}, since the lower order terms are all bounded by $\mathcal{E}_{2,\infty}$.
   
   Finally, as promised in Corollary \ref{Rem1.1}, we prove that the family of metrics satisfying \eqref{CurvatureBlowup} does not admit a $C_{loc}^{0,1}$-extension beyond the singular boundary $\{u=u_*\}$. The existence of the family of initial data follows by imposing \eqref{Eq2.7001} as in the proof of Proposition \ref{Prop3.1}.
   \begin{proof}[Proof of Corollary \ref{Rem1.1}]
   By Theorem \ref{AprioriEstimates}, $\psi_{\underline{H}},f(u)\psi_{\underline{H}},f(u)\underline{\beta},K,\check{\sigma},f(u)\partial\mathrm{Ric}$
   are in $L^\infty_{\underline{u}}L^2_uL^\infty(S)$ as $u\rightarrow u_*$ and $\psi$ remains bounded. The transport equation \eqref{Eq3.30} of $|\underline{\alpha}|$ schematically says
   \begin{align*}
     \nabla_4\underline{\alpha}
     =_s\slashed{\nabla}\underline{\beta}
     +\mathrm{tr}\chi \underline{\alpha}
     +\omega\underline{\alpha}
     +\psi_{\underline{H}}\cdot(K,\sigma)
     +\psi_{\underline{H}}^2\psi_H
     +\nabla\slashed{\mathrm{Ric}}
     +\Gamma\slashed{\mathrm{Ric}},
   \end{align*}
   where $\omega=0$. Apply Gr\"onwall's inequality, for any $u'<u_*,\ \underline{u}'<\underline{u}_*$,
   \begin{align*}
     &\int_0^{u'}|\underline{\alpha}(u,\underline{u}')-\underline{\alpha}(u,0)|du
     \\
     \le& C(\Delta_1,\mathcal{E}_{2,\infty}) \int\Big(\exp(\int\mathrm{tr}\chi d\underline{u})\int (\slashed{\nabla}\underline{\beta}
     +\psi_{\underline{H}}\cdot(K,\sigma)
     +\psi_{\underline{H}}^2\psi_H
     +\partial\slashed{\mathrm{Ric}})
     d\underline{u}\Big)du
     \\
     \le& C(\Delta_1,\mathcal{E}_{2,\infty})
     \exp(\tilde{\mathcal{O}}_{4,2})
     \epsilon^{\frac{1}{2}}(\mathcal{O}_{1,\infty}+\mathcal{R}_3+\mathcal{E}_{2,\infty}).
   \end{align*}
   Therefore, the bounds of $|\underline{\alpha}|$ in \eqref{CurvatureBlowup} is propagated up to the singularity as:
   \begin{align}\label{Eq6.2}
       \int_0^{u'}|\underline{\alpha}_{11}|(u,\underline{u}),\
       \int_0^{u'}|\underline{\alpha}_{22}|(u,\underline{u}) & \gtrsim \int_0^{u'}\frac{1}{(u_*-u)^2\log^p(\frac{1}{u_*-u})}du,
   \end{align}
   while
   \begin{align}\label{Eq6.3}
       \int_0^{u'}|\underline{\alpha}_{12}|(u,\underline{u}) \lesssim \int_0^{u'}\frac{1}{(u_*-u)^2\log^{2p}(\frac{1}{u_*-u})}du.
     \end{align}
   According to \cite{Sbierski2024}, we have the following theorem:
    \begin{thm}\cite[Theorem 3.13]{Sbierski2024}
     Let $(M,g)$ be a Lorentzian manifold in a double null gauge. Assume that the metric $g$ extends in $(u,\underline{u},\theta^A)$-coordinates extends continuously as a Lorentzian metric to the manifold with boundary $\overline{M}=(-1,1)\times(-1,0]\times\mathbb{T}^2$. Moreover, assume that for some $0<C<\infty$,
     \begin{align*}
       \sup_M\{|\omega|+|\eta|+|\underline{\eta}|+|\chi|\}\le C,\
       \int_{-1}^{0}\sup_{(\underline{u},\theta)\in(-1,1)\times\mathbb{T}^2}(|\chi|+|\slashed{\nabla}b|)du\le C.
     \end{align*}
     Assume in addition that for every $\bar{p}\in \partial\overline{M}$ and any neighborhood $\overline{W}\subset\overline{M}$ of $\bar{p}$, there exists a $\bar{q}\in \partial\overline{M}\cap\overline{W}$ and a compact neighborhood $\overline{V}\subset\overline{W}$ of $\bar{q}$ such that, for all $u'\le 0$ close enough to $0$, the set $\overline{V}\cap\{u\le u'\}$ is also a compact smooth manifold with corners, and there exist
       continuous vector fields $\overline{X_i}$ on $\overline{V}$, $i=1,2,3,4$, and an $\epsilon>0$ such that
   \begin{align*}
     &|\int_{\overline{D_{u_k,\underline{u}_*}}} R(\hat{X}_1,\hat{X}_2,\hat{X}_3,\hat{X}_4)|\rightarrow\infty,
   \end{align*}
   for any $|\hat{X}_i-\overline{X_i}|<\epsilon$ and $u_k\rightarrow u_*$. Then there exists no $C_{loc}^{0,1}$ extension of $D_{u_*,\underline{u}_*}$ across the singular boundary $\{u=u_*\}$.
    \end{thm}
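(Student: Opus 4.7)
The strategy is a proof by contradiction. Assume that $g$ admits a $C^{0,1}_{\mathrm{loc}}$-extension $\tilde g$ across $\partial\overline M$ on some neighborhood of the boundary point $\bar p$, and derive a uniform bound on the curvature integral that contradicts the divergence hypothesis. The plan uses the $C^{0,1}$-regularity of $\tilde g$ to define the Riemann tensor distributionally, and the double-null gauge structure together with the assumed bounds on $\omega,\eta,\underline\eta,\chi,\chi+\slashed\partial b$ to control boundary terms after integration by parts.

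The first step is to observe that under $C^{0,1}_{\mathrm{loc}}$-regularity, the Christoffel symbols $\tilde\Gamma$ of $\tilde g$ lie in $L^\infty_{\mathrm{loc}}(\overline M)$, so the Riemann curvature $R = \partial\tilde\Gamma + \tilde\Gamma\cdot\tilde\Gamma$ is a well-defined tensor distribution of order one on the interior of $\overline M$. Its pairing with smooth, compactly supported test tensor fields is a finite real number, and by a standard mollification the pairing extends continuously in a suitable sense to pairings against $L^\infty$ tensor fields that are smoothly approximable on relatively compact open subsets of $\overline M$.

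Next, I would apply the hypothesis to $\bar p$ together with the neighborhood on which the extension $\tilde g$ exists, producing $\bar q,\overline V,\overline{X}_i,\epsilon$. Picking a sequence of smooth $\hat X_i$ with $\sup_{\overline V}|\hat X_i-\overline{X}_i|<\epsilon$, integration by parts against the distributional derivative $\partial\tilde\Gamma$ on the compact manifold with corners $\overline V\cap\{u\le u'\}$ yields a schematic identity
\begin{equation*}
\int_{\overline V\cap\{u\le u'\}} R(\hat X_1,\hat X_2,\hat X_3,\hat X_4)\,dV
= \mathrm{BT}(u') + \int_{\overline V\cap\{u\le u'\}}\bigl(\tilde\Gamma\cdot\partial\hat X + \tilde\Gamma\cdot\tilde\Gamma\cdot\hat X^{\otimes 4}\bigr)\,dV,
\end{equation*}
where $\mathrm{BT}(u')$ collects boundary integrals on the null pieces of $\partial(\overline V\cap\{u\le u'\})$. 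The interior term on the right is bounded by $\|\tilde\Gamma\|_{L^\infty(\overline V)}^2\cdot\mathrm{Vol}(\overline V)\cdot\|\hat X\|_{C^1(\overline V)}^4$, uniformly in $u'$. The boundary contributions are controlled uniformly in $u'\to u_*$ using the $C^0$-extension of $\tilde g$ (which keeps the induced measures on the null faces bounded) together with the $L^\infty$-bounds on $\omega,\eta,\underline\eta,\chi$ and the $L^1$-bound on $\chi+\slashed\partial b$ that ensure the $u$-integral of surface quantities remains finite as $u'\to u_*$. Taking $u'=u_k\to u_*$ produces a uniform bound on $|\int R(\hat X_1,\dots,\hat X_4)|$, directly contradicting the hypothesis.

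The hard part will be making the integration by parts fully rigorous at the corners of $\overline V\cap\{u\le u'\}$ and justifying that the distributional definition of $R$ commutes with restriction to $\overline V\cap\{u\le u'\}$ when $u'\to u_*$ approaches the singular boundary $\{u=u_*\}$. Concretely, one must approximate $\tilde g$ by smooth metrics $\tilde g_\delta$ (for instance by mollifying in coordinates extending across $\{u=u_*\}$), verify the integration-by-parts identity for $\tilde g_\delta$ with uniform-in-$\delta$ bounds provided by the $C^{0,1}$ norm of $\tilde g$, and pass to the limit. The other technical point is to ensure the smooth approximants $\hat X_i$ of the (merely continuous) $\overline X_i$ still fall within the $\epsilon$-window so that the hypothesis applies; this is automatic since the hypothesis is formulated as an open condition on $\hat X_i$.
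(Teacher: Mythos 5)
First, note that the paper does not prove this statement at all: it is imported verbatim as \cite[Theorem 3.13]{Sbierski2024} and used as a black box in the proof of Corollary \ref{Rem1.1}, so there is no in-paper proof to compare against; your proposal can only be measured against the strategy of Sbierski's argument. Your sketch does capture the correct second half of that strategy: if a $C^{0,1}_{\mathrm{loc}}$-extension existed, its Christoffel symbols would be in $L^\infty_{\mathrm{loc}}$, the curvature would be a first-order distribution $R=\partial\tilde\Gamma+\tilde\Gamma\cdot\tilde\Gamma$, and pairing against smooth vector fields and integrating by parts over the regions $\overline{D_{u_k,\underline{u}_*}}$ gives bounds that are uniform in $u_k$, contradicting the assumed divergence; the $\epsilon$-openness of the hypothesis is indeed what lets one replace the merely continuous $\overline{X}_i$ by smooth approximants.

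The genuine gap is that you treat the hypothetical extension as if it were an extension of $g$ \emph{in the given} $(u,\underline{u},\theta^A)$-coordinates across $\{u=u_*\}$ (``mollifying in coordinates extending across $\{u=u_*\}$''). A $C^{0,1}_{\mathrm{loc}}$-extension is an isometric embedding $\iota:(M,g)\hookrightarrow(\tilde M,\tilde g)$ into a strictly larger manifold with locally Lipschitz metric in a priori unknown coordinates; the continuity of $g$ on $\overline M$ assumed in the theorem concerns the original double-null gauge only. The hard core of the theorem is precisely the bridge you skip: showing that some boundary point $\bar p\in\partial\overline M$ is actually reached by the extension, that a compact neighborhood $\overline V$ (and the exhausting sets $\overline V\cap\{u\le u'\}$, hence the regions $\overline{D_{u_k,\underline{u}_*}}$) maps into a fixed compact subset of $\tilde M$ on which $\|\tilde\Gamma\|_{L^\infty}$ is finite, that the continuous frame $\overline{X}_i$ pushes forward to a continuous frame near the boundary of $\iota(M)$ which can be $\epsilon$-approximated by vector fields smooth in the extension, and that the volumes of $\overline{D_{u_k,\underline{u}_*}}$ and the areas of the boundary faces, measured via $\tilde g$, stay uniformly bounded as $u_k\to u_*$. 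This is exactly where the hypotheses $\sup_M(|\omega|+|\eta|+|\underline{\eta}|+|\chi|)\le C$ and $\int\sup(|\chi|+|\slashed{\partial}b|)\,du\le C$ are consumed — they control the null foliation, the induced measures and the behavior of the relevant causal curves up to the boundary — whereas in your write-up they are only invoked vaguely to ``control boundary terms.'' Without this identification step the contradiction hypothesis (divergence of the curvature integrals over sets defined in the original gauge, tested against frames defined on $\overline M$) never connects to the integration-by-parts bound computed in the extension, so the argument as written does not close.
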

     By $\mathrm{tr}_\gamma\underline{\alpha}=0$ as a Weyl tensor, we see
     \begin{align*}
       &|\underline{\alpha}_{22}|
       \le C(\Delta_1)(|\underline{\alpha}_{11}|+|\underline{\alpha}_{12}|).
     \end{align*}
     Apply \eqref{Eq6.2} - \eqref{Eq6.3}, we have
     \begin{align*}
       &\int^{u_k}_0\iint|\underline{\alpha}| d\theta d\underline{u} du
       \le C(\Delta_1)
       \int^{u_k}_0\iint\Big{(}|\underline{\alpha}_{11}|
       +\frac{1}{(u_*-u)^2\log^p(\frac{1}{u_*-u})}\Big{)}d\theta d\underline{u} du.
     \end{align*}
     On the other hand, by Cauchy--Schwarz inequality, the boundedness of $\mathcal{R}_3$ imply that
     \begin{align*}
       &\int^{u_k}_0\iint|(K,\check{\sigma},\alpha,\beta,\underline{\beta})|d\theta d\underline{u} du
       \\
       \le&
     \epsilon^{-\frac{1}{2}}\| \beta\|_{L_{u}^\infty L_{\underline{u}}^2 L^2(S)}
     +\epsilon^{-\frac{1}{2}}\| (K,\check{\sigma})\|_{L_{\underline{u}}^\infty L_{u}^2L^2(S)}
     +\epsilon^{-\frac{1}{2}}\|f(u) \underline{\beta}\|_{L_{\underline{u}}^\infty L_{u}^2L^2(S)}
     +\|\alpha\|_{L_u^\infty L_{\underline{u}}^2 L^2(S)}
     \\
       \le& \mathcal{R}_3
       \le C(\Delta_1)
       \le C(\Delta_1)\int^{u_k}_0\iint \frac{1}{(u_*-u)^2\log^p(\frac{1}{u_*-u})}d\theta d\underline{u} du.
     \end{align*}
     Therefore, the condition is satisfied with $e_1,e_3,e_1,e_3$, since $\underline{\alpha}_{11}=W_{1313}$ implies that
   \begin{align*}
     &\Big{|}\int^{u_k}_0\iint R(\hat{e}_1,\hat{e}_3,\hat{e}_1,\hat{e}_3)d\theta d\underline{u} du\Big{|}
     \\
     \ge& \int^{u_k}_0\iint|\underline{\alpha}_{11}|-\epsilon |R|d\theta d\underline{u} du
     \\
     \ge& \int^{u_k}_0\iint(C_1-\epsilon C_2(\Delta_1))\frac{1}{(u_*-u)^2\log^p(\frac{1}{u_*-u})}d\theta d\underline{u} du
     \rightarrow\infty.
   \end{align*}
   Thus, this implies the $C_{loc}^{0,1}$-inextensibility of the weak null singularity.
   \end{proof}
   
   \subsection*{Acknowledgements}
   The author thanks Jonathan Luk for suggesting the problem and sharing many insights from the works \cite{luk2017weak}, as well as offering valuable comments on an earlier version of the manuscript. The author also thanks Maxime Van de Moortel for his generous help. The author also thanks Dawei Shen and Jingbo Wan for suggestions that improved the proof. This work is supported by the NSF DMS-2304445.
   %

   \appendix
   \section{Equations}\label{Equations}
   Following \cite{Christodoulou1989-1990}, we derive the equations for the Ricci coefficients and curvature components in the modified double null coordinates as in Theorem \ref{ContinuousExist}. In Section \ref{SchematicEquations}, we derive the commuted equations from the equations below. 
   To start, we record the Christoffel symbols
   \begin{align}
    &D_Ae_4=-\zeta_Ae_4+\chi_{AB}e_B,\ &&
    D_Ae_3=\zeta_Ae_3+\chib_{AB}e_B,
    \label{EqA.01}\\
    &D_3e_4=2\omb e_4+2\eta_A e_A,\ &&
    D_4e_3=2\om e_3+2\etab_Ae_A,
    \label{EqA.02}\\
    &D_4e_A=\etab e_4+\chi_{AB}e_B,\ &&
    D_3e_A=\eta_Ae_3+\chib_{AB}e_B-D_Ab^B\Om^{-1}e_B,
    \label{EqA.03}\\
    &D_4e_4=-2\om e_4,\ &&
    D_3e_3=-2\omb e_3,
    \label{EqA.04}\\
    &D_Ae_B=\nabla_Ae_B+\frac{1}{2}\chib_{AB}e_4
    +\frac{1}{2}\chi_{AB}e_3.&&
    \label{EqA.05}
\end{align}
And the commutation formulae
\begin{align}
  &[e_A,e_4]=-\nabla_A\log\Om e_4,\ \quad\quad
  [e_A,e_3]=-\nabla_A\log\Om e_3+D_Ab^B\Om^{-1}e_B,
  \label{EqA.06}\\
  &[e_3,e_4]=\nabla_4\log\Om e_3-\nabla_3\log\Om e_4-D_4b^B\Om^{-1}e_B.
  \label{EqA.07}
\end{align}
   Firstly, $\chi$ and $\underline{\chi}$ obey the following null structure equations:
   \begin{align}
     &\nabla_4\mathrm{tr}\chi
     +\frac{1}{2}(\mathrm{tr}\chi)^2
     +2\underline{\omega}\slashed{\mathrm{tr}}\chi
     =-|\hat{\chi}|^2
     -\mathrm{Ric}_{44},
     \label{Eq2.1}
     \\
     &\nabla_4\hat{\chi}+\mathrm{tr}\chi\hat{\chi}
     =-2\omega\hat{\chi}-\alpha,
     \label{Eq2.2}
     \\
     &\nabla_3\mathrm{tr}\underline{\chi}
     +\frac{1}{2}(\mathrm{tr}\underline{\chi})^2
     +2\underline{\omega}\slashed{\mathrm{tr}}\underline{\chi}
     =-|\hat{\underline{\chi}}|^2
     -\mathrm{Ric}_{33},
     \label{Eq2.3}
     \\
     &\nabla_3\hat{\underline{\chi}}+\mathrm{tr}\underline{\chi}\hat{\underline{\chi}}
     =-2\underline{\omega}\hat{\underline{\chi}}-\underline{\alpha},
     \label{Eq2.4}
     \\
     &\nabla_4\mathrm{tr}\underline{\chi}+\frac{1}{2}\mathrm{tr}\chi\mathrm{tr}\underline{\chi}
     =2\omega\mathrm{tr}\underline{\chi}+2\rho+\frac{1}{3}R-\hat{\chi}\cdot\underline{\hat{\chi}}
     +2\mathrm{div}\underline{\eta}+2|\underline{\eta}|^2,
     \label{Eq2.5}
     \\
     &\nabla_4\hat{\underline{\chi}}+\frac{1}{2}\mathrm{tr}\chi\hat{\underline{\chi}}
     =\nabla\hat{\otimes}\underline{\eta}+2\omega\hat{\underline{\chi}}
     -\frac{1}{2}\mathrm{tr}\underline{\chi}\hat{\chi}+\underline{\eta}\hat{\otimes}\underline{\eta}
     +\widehat{\mathrm{Ric}},
     \label{Eq2.6}
     \\
     &\nabla_3\mathrm{tr}\chi+\frac{1}{2}\mathrm{tr}\underline{\chi}\mathrm{tr}\chi
     =2\underline{\omega}\mathrm{tr}\chi+2\rho+\frac{1}{3}R-\hat{\chi}\cdot\underline{\hat{\chi}}
     +2\mathrm{div}\eta+2|\eta|^2,
     \label{Eq2.7}
     \\
     &\nabla_3\hat{\chi}+\frac{1}{2}\mathrm{tr}\underline{\chi}\hat{\chi}
     =\nabla\hat{\otimes}\eta+2\underline{\omega}\hat{\chi}
     -\frac{1}{2}\mathrm{tr}\chi\hat{\underline{\chi}}+\eta\hat{\otimes}\eta
     +\widehat{\mathrm{Ric}},
     \label{Eq2.8}
     \\
     &\slashed{\mathrm{div}}\hat{\chi}
     +\zeta\hat{\chi}
     =\frac{1}{2}\nabla\slashed{\mathrm{tr}}\chi
     +\frac{1}{2}\slashed{\mathrm{tr}}\chi \zeta
     -\beta,
     \label{Eq2.9}
     \\
     &\slashed{\mathrm{div}}\hat{\underline{\chi}}
     -\zeta\hat{\chi}
     =\frac{1}{2}\nabla\slashed{\mathrm{tr}}\underline{\chi}
     +\frac{1}{2}\slashed{\mathrm{tr}}\underline{\chi} \zeta
     +\underline{\beta}.
     \label{Eq2.10}
   \end{align}
   Here $\widehat{\slashed{\mathrm{Ric}}}=\slashed{\mathrm{Ric}}-(\frac{1}{2}R+\mathrm{Ric}_{34})\gamma$ denotes the traceless part of the Ricci curvature restricted to the tangent directions on the tori $S_{u,\underline{u}}$.
   The other Ricci coefficients satisfy the following null structure equations:
   \begin{align}
     &\nabla_4\eta=-\chi\cdot(\eta-\underline{\eta})-\beta
     -\frac{1}{2}\mathrm{Ric}_{4\cdot},
     \label{Eq2.11}
     \\
     &\nabla_3\underline{\eta}=-\underline{\chi}\cdot(\underline{\eta}-\eta)+\underline{\beta}
     -\frac{1}{2}\mathrm{Ric}_{3\cdot},
     \label{Eq2.12}
     \\
     &\nabla_4\underline{\omega}=2\omega\underline{\omega}-\eta\cdot\underline{\eta}
     +\frac{1}{2}|\eta|^2+\frac{1}{2}\rho+\frac{1}{4}\mathrm{Ric}_{34}+\frac{1}{12}R,
     \label{Eq2.13}
     \\
     &\nabla_3\omega=2\omega\underline{\omega}-\eta\cdot\underline{\eta}+\frac{1}{2}|\underline{\eta}|^2
     +\frac{1}{2}\rho+\frac{1}{4}\mathrm{Ric}_{34}+\frac{1}{12}R.
     \label{Eq2.14}
   \end{align}
   Also
   \begin{align}
        &\nabla_3\eta+\nabla_3\underline{\eta} =\slashed{\nabla}\underline{\omega},
        \label{Eq2.15}
        \\
     &\nabla_4\eta+\nabla_4\underline{\eta} =\slashed{\nabla}\omega.
     \label{Eq2.16}
   \end{align}
   The metric components $b,\gamma,\slashed{\nabla}\log\Omega$ satisfy equations:
   \begin{align}
     &\Om^{-2}\p_{\ub}b=2\eta-2\underline{\eta},
     \label{Eq2.17}
     \\
     &\Om^{-2}\p_{\ub}\gamma=2\chi,
     \label{Eq2.18}
     \\
     &\slashed{\nabla}\log\Omega
     =\frac{1}{2}(\eta+\underline{\eta}),
     \label{Eq2.21}
     \\
     &\nabla_4\log\Omega
     =-2\omega,
     \label{Eq2.22.1}
     \\
     &\nabla_3\log\Omega
     =-2\underline{\omega}.
     \label{Eq2.22}
   \end{align}
   We define Riemann curvature tensors by
   \begin{align*}
     & \bar{\alpha}_{AB}=R(e_A,e_4,e_B,e_4),\
     \underline{\bar{\alpha}}_{AB}=R(e_A,e_3,e_B,e_3),
     \\
     & \bar{\beta}_A=\frac{1}{2}R(e_A,e_4,e_3,e_4),\
     \underline{\bar{\beta}}_A=\frac{1}{2}R(e_A,e_3,e_3,e_4),
     \\
     & \bar{\rho}=\frac{1}{4}R(e_4,e_3,e_4,e_3),\
     \bar{\sigma}=\frac{1}{4}*R(e_4,e_3,e_4,e_3),
   \end{align*}
   where, as compared with Weyl tensors,
   \begin{align*}
     &\alpha=\bar{\alpha}-\frac{1}{2}\slashed{\gamma}\mathrm{Ric}_{44},\
     \underline{\alpha}=\underline{\bar{\alpha}}-\frac{1}{2}\slashed{\gamma}\mathrm{Ric}_{33},
     \\
     &\beta=\bar{\beta}-\frac{1}{2}\mathrm{Ric}_{4\cdot},\
     \underline{\beta}=\underline{\bar{\beta}}+\frac{1}{2}\mathrm{Ric}_{3\cdot},
     \\
     &\rho=\bar{\rho}-\frac{1}{2}\mathrm{Ric}_{34}-\frac{1}{6}R,\
     \sigma=\bar{\sigma}.
   \end{align*}
   Equations in terms of curvature can be deduced from:
   \begin{align*}
     &R_{A3B3}=\underline{\bar{\alpha}}_{AB},\
     R_{A4B4}=\bar{\alpha}_{AB},
     \\
     &R_{A334}=2\underline{\bar{\beta}}_A,\
     R_{A434}=2\bar{\beta}_A,
     \\
     &R_{3434}=4\bar{\rho},\
     R_{AB34}=2\sigma\slashed{\epsilon}_{AB},
     \\
     &R_{A3BC}=*(\underline{\bar{\beta}}+\mathrm{Ric}_{3\cdot})_A\slashed{\epsilon}_{BC}
     =\slashed{\gamma}_{AB}(\underline{\bar{\beta}}+\mathrm{Ric}_{3\cdot})_C
     -\slashed{\gamma}_{AC}(\underline{\bar{\beta}}+\mathrm{Ric}_{3\cdot})_B,
     \\
     &R_{A4BC}=-*(\bar{\beta}+\mathrm{Ric}_{4\cdot})_A\slashed{\epsilon}_{BC}
     =-\slashed{\gamma}_{AB}(\bar{\beta}-\mathrm{Ric}_{4\cdot})_C
     +\slashed{\gamma}_{AC}(\bar{\beta}-\mathrm{Ric}_{4\cdot})_B,
     \\
     &R_{A4B3}=(-\bar{\rho}+\frac{1}{2}R+\mathrm{Ric}_{34})\slashed{\gamma}_{AB}
     +\bar{\sigma}\slashed{\epsilon}_{AB}-\slashed{\mathrm{Ric}}_{AB},
     \\
     &R_{ABCD}=(-\bar{\rho}+\frac{1}{2}R+\mathrm{Ric}_{34})\slashed{\epsilon}_{AB}\slashed{\epsilon}_{CD}.
   \end{align*}
   Following \cite{Christodoulou1989-1990}, we derive the following equations:
   \begin{align*}
     &D_4R_{A4B4}=\nabla_4\bar{\alpha}_{AB}-4\omega\bar{\alpha}_{AB},
     \\
     &D_3R_{A3B3}=\nabla_3\underline{\bar{\alpha}}_{AB}-4\underline{\omega}\underline{\bar{\alpha}}_{AB},
     \\
     &D_3R_{A4B4}=\nabla_3\bar{\alpha}_{AB}+4\underline{\omega}\bar{\alpha}_{AB}
     -8(\eta\hat{\otimes}(\bar{\beta}+\frac{1}{2}\mathrm{Ric}_{4\cdot}))_{AB},
     \\
     &D_4R_{A3B3}=\nabla_4\underline{\bar{\alpha}}_{AB}+4\omega\underline{\bar{\alpha}}_{AB}
     +8(\underline{\eta}\hat{\otimes}(\underline{\bar{\beta}}-\frac{1}{2}\mathrm{Ric}_{3\cdot}))_{AB},
     \\
     &D_AR_{B4C4}=\nabla_A\bar{\alpha}_{BC}
     -(\chi\hat{\otimes}(\bar{\beta}+\mathrm{Ric}_{4\cdot}))_{ABC}+2\zeta_A\bar{\alpha}_{BC},
     \\
     &D_AR_{B3C3}=\nabla_A\underline{\bar{\alpha}}_{BC}
     -(\underline{\chi}\hat{\otimes}(\underline{\bar{\beta}}-\mathrm{Ric}_{3\cdot}))_{ABC}
     -2\zeta_A\underline{\bar{\alpha}}_{BC},
     \\
     &D_4R_{A434}=2\nabla_4\bar{\beta}_A+4\omega\bar{\beta}_A
     -2\underline{\eta}^B\bar{\alpha}_{AB},
   \\
     &D_3R_{A334}=2\nabla_3\underline{\bar{\beta}}_A
     +4\underline{\omega}\underline{\bar{\beta}}_A
     +2\eta^B\underline{\bar{\alpha}}_{AB},
     \\
     &D_3R_{A434}=2\nabla_3\bar{\beta}_A-4\underline{\omega}\bar{\beta}_A
     +\eta_A(-6\bar{\rho}+R+2\mathrm{Ric}_{34})-2(\slashed{\mathrm{Ric}}\cdot\eta)_A
     -6*\eta_A\bar{\sigma},
     \\
     &D_4R_{A334}=2\nabla_4\underline{\bar{\beta}}_A-4\omega\underline{\bar{\beta}}_A
     -\underline{\eta}_A(-6\bar{\rho}+R+2\mathrm{Ric}_{34})-2(\slashed{\mathrm{Ric}}\cdot\underline{\eta})_A
     -6*\underline{\eta}_A\bar{\sigma},
     \\
     &D_AR_{B434}=2\nabla_A\bar{\beta}_B
     +2\zeta_A\bar{\beta}_B
     -\underline{\chi}_{A}{}^{C}\bar{\alpha}_{BC}
     +\chi_{AB}(-3\bar{\rho}+\frac{1}{2} R+\mathrm{Ric}_{34})
     -3\chi_{CB}\slashed{\epsilon}_A{}^C\bar{\sigma}-\chi\cdot*\slashed{\mathrm{Ric}},
     \\
     &D_AR_{B334}=2\nabla_A\underline{\bar{\beta}}_B
     -2\zeta_A\underline{\bar{\beta}}_B
     +\underline{\chi}_{A}{}^{C}\underline{\bar{\alpha}}_{BC}
     -\underline{\chi}_{AB}(-3\bar{\rho}+\frac{1}{2} R+\mathrm{Ric}_{34})
     +3\underline{\chi}_{CB}\slashed{\epsilon}_A{}^C\bar{\sigma}+\underline{\chi}\cdot*\slashed{\mathrm{Ric}},
     \\
     &D_4R_{3434}=4\nabla_4\bar{\rho}-8\underline{\eta}^A\bar{\beta}_A,
     \\
     &D_3R_{3434}=4\nabla_3\bar{\rho}+8\eta^A\underline{\bar{\beta}}_A,
     \\
     &D_AR_{3434}=4\nabla_A\bar{\rho}+4(\chi_A{}^B\underline{\bar{\beta}}_B-\underline{\chi}_A{}^B\bar{\beta}_B),
     \\
     &D_4R_{AB34}=2\slashed{\epsilon}_{AB}\nabla_4\bar{\sigma}
     +4(\underline{\eta}_A(\bar{\beta}_B-\frac{1}{2}\mathrm{Ric}_{4B})
     -\underline{\eta}_B(\bar{\beta}_A-\frac{1}{2}\mathrm{Ric}_{4A})),
     \\
     &D_3R_{AB34}=2\slashed{\epsilon}_{AB}\nabla_3\bar{\sigma}
     +4(\eta_A(\underline{\bar{\beta}}_B+\frac{1}{2}\mathrm{Ric}_{3B})
     -\eta_B(\underline{\bar{\beta}}_A+\frac{1}{2}\mathrm{Ric}_{3A})),
     \\
     &D_AR_{BC34}=2\slashed{\epsilon}_{BC}\nabla_A\bar{\sigma}
     +2(\chi_{AB}(\bar{\underline{\beta}}_C+\frac{1}{2}\mathrm{Ric}_{3C})
     -\chi_{AC}(\bar{\underline{\beta}}_B+\frac{1}{2}\mathrm{Ric}_{3B}))
     \\
     &\qquad\qquad\qquad+2(\underline{\chi}_{AB}(\bar{\beta}_C-\frac{1}{2}\mathrm{Ric}_{4C})
     -\underline{\chi}_{AC}(\bar{\beta}_B-\frac{1}{2}\mathrm{Ric}_{4B})).
   \end{align*}
   For curvature components, we have the following equations:
   \begin{align}
     \begin{split}
     &\nabla_3\sigma+\frac{3}{2}\mathrm{tr}\underline{\chi}\sigma+\slashed{\mathrm{div}}*\underline{\beta}
     +(2\eta-\zeta)\wedge\underline{\beta}+\frac{1}{2}\hat{\chi}\wedge\underline{\alpha}
     \\
     &=\frac{1}{2}(\slashed{\epsilon}\cdot \nabla\slashed{\mathrm{Ric}}_{3\cdot}
     -\slashed{\epsilon}\cdot(\underline{\chi}\times\slashed{\mathrm{Ric}}
     +2\eta\cdot\slashed{\mathrm{Ric}}_{3\cdot}),
     \end{split}
     \label{Eq2.23}
     \\
     \begin{split}
     &\nabla_4\sigma+\frac{3}{2}\mathrm{tr}\chi\sigma
     +\mathrm{div}*\beta
     +(2\underline{\eta}+\zeta)\wedge\beta
     -\frac{1}{2}\hat{\underline{\chi}}\wedge\alpha
     \\
     &=\frac{1}{2}(\slashed{\epsilon}\cdot \nabla\slashed{\mathrm{Ric}}_{4\cdot}
     -\slashed{\epsilon}\cdot(\chi\times\slashed{\mathrm{Ric}}
     +2\underline{\eta}\cdot\slashed{\mathrm{Ric}}_{3\cdot}),
     \end{split}
     \label{Eq2.24}
     \\
     \begin{split}
     &\nabla_3\rho+\frac{1}{2}\mathrm{tr}\underline{\chi}(3\rho+\frac{1}{2}\slashed{\mathrm{Ric}}_{34})+\slashed{\mathrm{div}}\underline{\beta}
     +(2\eta-\zeta)\cdot\underline{\beta}+\frac{1}{2}\hat{\chi}\cdot\underline{\alpha}
     \\
     &=\frac{1}{2}(
     -\nabla_4\slashed{\mathrm{Ric}}_{33}
     -\slashed{\mathrm{div}}\mathrm{Ric}_{3\cdot}
     +\underline{\omega}\slashed{\mathrm{Ric}}_{43}-2\omega\slashed{\mathrm{Ric}}_{33}))
     -\frac{1}{4}\mathrm{tr}\chi\slashed{\mathrm{Ric}}_{33}
     +\frac{1}{2}\underline{\chi}\cdot*\slashed{\mathrm{Ric}}
     -\frac{1}{6}\nabla_3R,
     \end{split}
     \label{Eq2.25}
     \end{align}
     \begin{align}
     \begin{split}
     &\nabla_4\rho+\frac{1}{2}\mathrm{tr}\chi(3\rho+\frac{1}{2}\slashed{\mathrm{Ric}}_{34})
     -\slashed{\mathrm{div}}\beta
     -(2\underline{\eta}+\zeta)\cdot\beta+\frac{1}{2}\hat{\underline{\chi}}\cdot\alpha
     \\
     &=
     \frac{1}{2}(
     -\nabla_3\slashed{\mathrm{Ric}}_{44}
     +\slashed{\mathrm{div}}\mathrm{Ric}_{4\cdot}
     +\omega\slashed{\mathrm{Ric}}_{34}-2\underline{\omega}\slashed{\mathrm{Ric}}_{44})
     -\frac{1}{4}\mathrm{tr}\underline{\chi}\slashed{\mathrm{Ric}}_{44}
     +\frac{1}{2}\chi\cdot*\slashed{\mathrm{Ric}}
     -\frac{1}{6}\nabla_3R,
     \end{split}
     \label{Eq2.26}
     \\
     \begin{split}
     &\nabla_4\underline{\beta}+\mathrm{tr}\chi\underline{\beta}
     +2\omega\underline{\beta}+\nabla\rho+\frac{1}{6}\nabla R
     -*\nabla\sigma+3\underline{\eta}(\rho+\frac{1}{6}\mathrm{Ric}_{34})
     -3*\underline{\eta}\sigma
     -2\hat{\underline{\chi}}\cdot\beta
     \\
     &=\frac{1}{2}
     (-\nabla_\cdot\slashed{\mathrm{Ric}}_{34}+\nabla_3\slashed{\mathrm{Ric}}_{\cdot4}
     +\underline{\chi}\cdot\slashed{\mathrm{Ric}}_{4\cdot}
     +\chi\cdot\slashed{\mathrm{Ric}}_{3\cdot}
     -\eta\slashed{\mathrm{Ric}}_{34}
     -\underline{\omega}\slashed{\mathrm{Ric}}_{4\cdot}-2\eta\cdot\slashed{\mathrm{Ric}}),
     \end{split}
     \label{Eq2.27}
     \\
     \begin{split}
     &\nabla_3\underline{\beta}
     +2\mathrm{tr}\underline{\chi}\underline{\beta}
     +\mathrm{div}\underline{\alpha}
     +2\underline{\omega}\underline{\beta}
     +(\eta-2\zeta)\cdot\underline{\alpha}
     \\
     &=\frac{1}{2}
     (-\nabla_3\slashed{\mathrm{Ric}}_{\cdot3}
     -2\mathrm{tr}\underline{\chi}\slashed{\mathrm{Ric}}_{\cdot3}
     +\nabla_\cdot\slashed{\mathrm{Ric}}_{33}
     -2\underline{\chi}\cdot\slashed{\mathrm{Ric}}_{3\cdot}
     +\underline{\omega}\slashed{\mathrm{Ric}}_{3\cdot}
     +(\eta-2\zeta)\slashed{\mathrm{Ric}}_{33}),
     \end{split}
     \label{Eq2.28}
     \\
     \begin{split}
     &\nabla_3\beta
     +\mathrm{tr}\underline{\chi}\beta
     -2\underline{\omega}\beta
     -\nabla\rho-\frac{1}{6}\nabla R-*\nabla\sigma-2\hat{\chi}\cdot\underline{\beta}
     -3(\eta(\rho+\frac{1}{6}\mathrm{Ric}_{34})+*\eta\sigma)
     \\
     &=\frac{1}{2}
     (\nabla_\cdot\slashed{\mathrm{Ric}}_{34}
     -\nabla_4\slashed{\mathrm{Ric}}_{\cdot3}
     -\underline{\chi}\cdot\slashed{\mathrm{Ric}}_{4\cdot}
     -\chi\cdot\slashed{\mathrm{Ric}}_{3\cdot}
     -\underline{\eta}\slashed{\mathrm{Ric}}_{34}
     -\omega\slashed{\mathrm{Ric}}_{3\cdot}+2\underline{\eta}\slashed{\mathrm{Ric}}),
     \end{split}
     \label{Eq2.29}
     \\
     \begin{split}
     &\nabla_4\beta
     +2\mathrm{tr}\chi\beta
     -\mathrm{div}\alpha
     +2\omega\beta
     -(\underline{\eta}+2\zeta)\cdot\alpha
     \\
     &=-\frac{1}{2}
     (-\nabla_4\slashed{\mathrm{Ric}}_{\cdot4}
     -2\mathrm{tr}\chi \slashed{\mathrm{Ric}}_{\cdot4}
     +\nabla\slashed{\mathrm{Ric}}_{44}
     -2\chi\cdot\slashed{\mathrm{Ric}}_{4\cdot}
     +\omega\slashed{\mathrm{Ric}}_{4\cdot}
     -(\underline{\eta}+2\zeta)\slashed{\mathrm{Ric}}_{44}),
     \end{split}
     \label{Eq2.30}
   \\
     \begin{split}
     &\nabla_3\alpha+\frac{1}{2}\mathrm{tr}\underline{\chi}\alpha
     =\nabla\hat{\otimes}\beta+4\underline{\omega}\alpha
     -3(\hat{\chi}(\rho+\frac{1}{6}\mathrm{Ric}_{34})+*\hat{\chi}\sigma)+(\zeta+4\eta)\hat{\otimes}\beta
     \\
     &-\nabla_4\slashed{\mathrm{Ric}}
     -\frac{1}{2}\chi\slashed{\mathrm{Ric}}_{34}-\frac{1}{2}\underline{\chi}\slashed{\mathrm{Ric}}_{44}
     -\slashed{\mathrm{Ric}}\times\chi
     +\slashed{\mathrm{Ric}}_{4\cdot}\otimes\underline{\eta}
     \\
     &-\frac{1}{4}
     (\nabla_3\slashed{\mathrm{Ric}}_{44}-\nabla_4\slashed{\mathrm{Ric}}_{34}
     -(4\eta-\zeta)\cdot\slashed{\mathrm{Ric}}_{4\cdot}
     +2\underline{\eta}\slashed{\mathrm{Ric}}_{4\cdot}
     +\omega\slashed{\mathrm{Ric}}_{34}
     +2\underline{\omega}\slashed{\mathrm{Ric}}_{44})\slashed{\gamma}
     \\
     &+\frac{1}{2}(\slashed{\epsilon}\cdot\nabla\slashed{\mathrm{Ric}}_{4\cdot}
     -\slashed{\epsilon}\cdot(\chi\times\slashed{\mathrm{Ric}})
     -\slashed{\epsilon}\cdot(\underline{\eta}\otimes\slashed{\mathrm{Ric}}_{4\cdot})
     )\slashed{\epsilon},
     \end{split}
     \label{Eq2.31}
     \\
     \begin{split}
     &\nabla_4\underline{\alpha}+\frac{1}{2}\mathrm{tr}\chi\underline{\alpha}
     =-\nabla\hat{\otimes}\underline{\beta}+4\omega\underline{\alpha}
     -3(\hat{\underline{\chi}}(\rho+\frac{1}{6}\mathrm{Ric}_{34})-*\hat{\underline{\chi}}\sigma)
     +(\zeta-4\underline{\eta})\hat{\otimes}\underline{\beta}
     \\
     &-\nabla_3\slashed{\mathrm{Ric}}
     -\frac{1}{2}\chi\slashed{\mathrm{Ric}}_{33}-\frac{1}{2}\underline{\chi}\slashed{\mathrm{Ric}}_{34}
     -\slashed{\mathrm{Ric}}\times\underline{\chi}
     +\slashed{\mathrm{Ric}}_{3\cdot}\otimes\eta
     \\
     &-\frac{1}{4}
     (\nabla_4\slashed{\mathrm{Ric}}_{33}-\nabla_3\slashed{\mathrm{Ric}}_{34}
     -(4\underline{\eta}-\zeta)\cdot\slashed{\mathrm{Ric}}_{3\cdot}
     +2\eta\slashed{\mathrm{Ric}}_{3\cdot}
     +\underline{\omega}\slashed{\mathrm{Ric}}_{34}
     +2\omega\slashed{\mathrm{Ric}}_{44})\slashed{\gamma}
     \\
     &+\frac{1}{2}(\slashed{\epsilon}\cdot\nabla\slashed{\mathrm{Ric}}_{3\cdot}
     -\slashed{\epsilon}\cdot(\underline{\chi}\times\slashed{\mathrm{Ric}})
     -\slashed{\epsilon}\cdot(\eta\otimes\slashed{\mathrm{Ric}}_{3\cdot})
     )\slashed{\epsilon}.
     \end{split}
     \label{Eq2.32}
   \end{align}
   As in \cite{luk2017weak}, modify the equations by considering the system in
   \begin{align}
     &K=-\rho+\frac{1}{2}\hat{\chi}\cdot\hat{\underline{\chi}}-\frac{1}{4}\mathrm{tr}\chi\mathrm{tr}\underline{\chi}
     +\frac{1}{3}R+\frac{1}{2}\slashed{\mathrm{Ric}}_{34},
     \label{Eq2.33}
     \\
     &\check{\sigma}=\sigma+\frac{1}{2}\hat{\underline{\chi}}\wedge\hat{\chi}
     \label{Eq2.34}.
   \end{align}
   Then the Bianchi equations are free of $\alpha,\underline{\alpha}$, as in the following equations:
   \begin{align}
   \begin{split}
     \nabla_3\beta+\mathrm{tr}\underline{\chi}\beta
     &=-\nabla K+*\nabla\check{\sigma}+2\underline{\omega}\beta+2\hat{\chi}\cdot\underline{\beta}
     -3(\eta K-*\eta\check{\sigma})
     +\frac{1}{2}(\nabla(\hat{\chi}\cdot\underline{\hat{\chi}})+*\nabla(\hat{\chi}\wedge\underline{\hat{\chi}}))
     \\
     &+\frac{3}{2}(\eta\hat{\chi}\cdot\underline{\hat{\chi}}+*\eta\hat{\chi}\wedge\hat{\underline{\chi}})
     -\frac{1}{4}(\nabla\mathrm{tr}\chi\mathrm{tr}\underline{\chi}
     +\mathrm{tr}\chi\nabla\mathrm{tr}\underline{\chi})
     -\frac{3}{4}\eta\mathrm{tr}\chi\mathrm{tr}\underline{\chi}
     \\
     &+\frac{1}{2}
     (\nabla\slashed{\mathrm{Ric}}_{34}-\nabla_4\slashed{\mathrm{Ric}}_{\cdot3}
     -\underline{\chi}\cdot\slashed{\mathrm{Ric}}_{4\cdot}
     -\chi\cdot\slashed{\mathrm{Ric}}_{3\cdot}
     -\underline{\eta}\slashed{\mathrm{Ric}}_{34}
     -\omega\slashed{\mathrm{Ric}}_{3\cdot}+2\underline{\eta}\slashed{\mathrm{Ric}})
     \\
     &+\frac{1}{2}\nabla\slashed{\mathrm{Ric}}_{34}
     +\frac{3}{2}\eta\slashed{\mathrm{Ric}}_{34},
     \end{split}
     \label{Eq2.35}
     \\
     \begin{split}
     \nabla_4\check{\sigma}+\frac{3}{2}\mathrm{tr}\chi\check{\sigma}
     &=-\mathrm{div}*\beta-\zeta\wedge\beta-2\underline{\eta}\wedge\beta
     -\frac{1}{2}\hat{\chi}\wedge(\nabla\hat{\otimes}\underline{\eta})
     -\frac{1}{2}\hat{\chi}\wedge(\underline{\eta}\hat{\otimes}\underline{\eta})
     \\
     &+\frac{1}{2}(\slashed{\epsilon}\cdot \nabla\slashed{\mathrm{Ric}}_{4\cdot}
     -\slashed{\epsilon}\cdot(\chi\times\slashed{\mathrm{Ric}}
     +2\underline{\eta}\cdot\slashed{\mathrm{Ric}}_{3\cdot}+\hat{\chi}\wedge\widehat{\slashed{\mathrm{Ric}}}),
     \end{split}
     \label{Eq2.36}
     \\
     \begin{split}
     \nabla_4K+\mathrm{tr}\chi K
     =&-\mathrm{div}\beta-\zeta\cdot\beta-2\underline{\eta}\cdot\beta
     +\frac{1}{2}\hat{\chi}\cdot\nabla\hat{\otimes}\underline{\eta}
     +\frac{1}{2}\hat{\chi}\cdot(\underline{\eta}\hat{\otimes}\underline{\eta})
     -\frac{1}{2}\mathrm{tr}\chi\mathrm{div}\underline{\eta}-\frac{1}{2}\mathrm{tr}\chi|\underline{\eta}|^2
     \\
     &
     -\frac{1}{2}(-\nabla_4\slashed{\mathrm{Ric}}_{34}+D_4\mathrm{Ric}_{34}-D_3\mathrm{Ric}_{44}
     +\slashed{\mathrm{div}}\slashed{\mathrm{Ric}}_{4\cdot})
     -(2\underline{\eta}+\zeta)\frac{1}{2}\slashed{\mathrm{Ric}}_{4\cdot}
     \\
     &
     +D_4(\frac{1}{3}R+\frac{1}{2}\slashed{\mathrm{Ric}}_{34})
     +\mathrm{tr}\chi(\frac{1}{3}R+\frac{1}{2}\slashed{\mathrm{Ric}}_{34})
     +\frac{1}{2}\hat{\chi}\cdot\widehat{\mathrm{Ric}}
     +\frac{1}{4}\mathrm{tr}\underline{\chi}\mathrm{Ric}_{44}
     \\
     &+\frac{1}{2}\mathrm{tr}\chi(\frac{1}{2}R+\frac{1}{2}\slashed{\mathrm{Ric}}_{34})
     -\frac{1}{2}\chi\cdot*\slashed{\mathrm{Ric}}+\frac{1}{6}\nabla_4R,
     \end{split}
     \label{Eq2.37}
     \\
     \begin{split}
     \nabla_4\underline{\beta}+\mathrm{tr}\chi\underline{\beta}
     &=\nabla K+*\nabla\check{\sigma}+2\omega\underline{\beta}+2\hat{\underline{\chi}}\cdot\beta
     +3(\underline{\eta} K+*\underline{\eta}\check{\sigma})
     -\frac{1}{2}(\nabla(\hat{\chi}\cdot\underline{\hat{\chi}})-*\nabla(\hat{\chi}\wedge\underline{\hat{\chi}}))
     \\
     &-\frac{3}{2}(\eta\hat{\chi}\cdot\underline{\hat{\chi}}-*\eta\hat{\chi}\wedge\hat{\underline{\chi}})
     +\frac{1}{4}(\nabla\mathrm{tr}\chi\mathrm{tr}\underline{\chi}
     +\mathrm{tr}\chi\nabla\mathrm{tr}\underline{\chi})
     +\frac{3}{4}\underline{\eta}\mathrm{tr}\chi\mathrm{tr}\underline{\chi}
     \\
     &
     +\frac{1}{2}
     (\nabla\slashed{\mathrm{Ric}}_{43}-\nabla_3\slashed{\mathrm{Ric}}_{\cdot4}
     -\underline{\chi}\cdot\slashed{\mathrm{Ric}}_{4\cdot}
     -\chi\cdot\slashed{\mathrm{Ric}}_{3\cdot}
     +\eta\slashed{\mathrm{Ric}}_{34}
     -\underline{\omega}\slashed{\mathrm{Ric}}_{4\cdot}+2\eta\slashed{\mathrm{Ric}}),
     \end{split}
     \label{Eq2.38}
     \\
     \begin{split}
     \nabla_3\check{\sigma}+\frac{3}{2}\mathrm{tr}\underline{\chi}\check{\sigma}
     =&-\mathrm{div}*\underline{\beta}+\zeta\wedge\underline{\beta}-2\eta\wedge\underline{\beta}
     +\frac{1}{2}\hat{\underline{\chi}}\wedge(\nabla\hat{\otimes}\eta)
     +\frac{1}{2}\hat{\chi}\wedge(\underline{\eta}\hat{\otimes}\underline{\eta})
     \\
     &+\frac{1}{2}(\slashed{\epsilon}\cdot \nabla\slashed{\mathrm{Ric}}_{3\cdot}
     -\slashed{\epsilon}\cdot(\underline{\chi}\times\slashed{\mathrm{Ric}}
     +2\eta\cdot\slashed{\mathrm{Ric}}_{3\cdot}+\hat{\chi}\wedge\widehat{\slashed{\mathrm{Ric}}}),
     \end{split}
     \label{Eq2.39}
     \\
     \begin{split}
     \nabla_3K+\mathrm{tr}\underline{\chi}
     K
     =&\mathrm{div}\underline{\beta}-\zeta\cdot\underline{\beta}+2\eta\cdot\underline{\beta}
     +\frac{1}{2}\hat{\underline{\chi}}\cdot\nabla\hat{\otimes}\underline{\eta}
     +\frac{1}{2}\hat{\underline{\chi}}\cdot(\eta\hat{\otimes}\eta)
     -\frac{1}{2}\mathrm{tr}\underline{\chi}\mathrm{div}\eta-\frac{1}{2}\mathrm{tr}\underline{\chi}|\eta|^2
     \\
     &
     -\frac{1}{2}(-\nabla_3\slashed{\mathrm{Ric}}_{34}+D_3\slashed{\mathrm{Ric}}_{34}-D_4\slashed{\mathrm{Ric}}_{33})
     +\slashed{\mathrm{div}}\slashed{\mathrm{Ric}}_{3\cdot}
     +(2\eta-\zeta)\frac{1}{2}\slashed{\mathrm{Ric}}_{3\cdot}
     \\
     &+D_3
     (\frac{1}{3}R+\frac{1}{2}\slashed{\mathrm{Ric}}_{34})
     +\mathrm{tr}\underline{\chi}(\frac{1}{3}R+\frac{1}{2}\slashed{\mathrm{Ric}}_{34})
     +\frac{1}{2}\hat{\underline{\chi}}\cdot\widehat{\mathrm{Ric}}
     +\frac{1}{4}
     \mathrm{tr}\chi\mathrm{Ric}_{33}
     \\
     &+\frac{1}{2}\mathrm{tr}\underline{\chi}(\frac{1}{2}R+\frac{1}{2}\slashed{\mathrm{Ric}}_{34})
     -\frac{1}{2}\underline{\chi}\cdot*\slashed{\mathrm{Ric}}+\frac{1}{6}\nabla_3R.
     \end{split}
     \label{Eq2.40}
   \end{align}
   We remark that the Ricci and scalar curvature terms in \eqref{Eq2.37} can be reduced by contracted second Bianchi identity,
   \begin{align*}
     -\frac{1}{2}D_4\mathrm{Ric}_{34}-\frac{1}{2}D_3\mathrm{Ric}_{44}+g^{AB}D_A\mathrm{Ric}_{B4}
     =&\frac{1}{2} D_4R,
   \end{align*}
   to the schematic form indicated in \eqref{Eq3.23} as
   \begin{align*}
     &-\frac{1}{2}(-\nabla_4\slashed{\mathrm{Ric}}_{34}+D_4\mathrm{Ric}_{34}-D_3\mathrm{Ric}_{44}
     +  \slashed{\mathrm{div}}\slashed{\mathrm{Ric}}_{4\cdot})-\frac{1}{2}\chi\cdot*\slashed{\mathrm{Ric}}
     \\
     &
     -\frac{3}{4}\mathrm{tr}\chi\slashed{\mathrm{Ric}}_{34}
     +\frac{1}{2}\mathrm{tr}\chi(R+\frac{1}{2}\slashed{\mathrm{Ric}}_{34})
     \\
     &
     +D_4(\frac{1}{2}R+\frac{1}{2}\slashed{\mathrm{Ric}}_{34})
     +\mathrm{tr}\chi(\frac{1}{2}R+\frac{1}{2}\slashed{\mathrm{Ric}}_{34})
     +\frac{1}{2}\hat{\chi}\cdot\widehat{\mathrm{Ric}}
     +\frac{1}{2}\mathrm{tr}\underline{\chi}\mathrm{Ric}_{44}
     -(2\underline{\eta}+\zeta)\frac{1}{2}\slashed{\mathrm{Ric}}_{4\cdot}
     \\
     =&
     2\underline{\eta}^A\mathrm{Ric}_{4A}
     +\frac{1}{2}\slashed{\mathrm{div}}\slashed{\mathrm{Ric}}_{4\cdot}
     +\zeta\cdot\slashed{\mathrm{Ric}}_{4\cdot}
     -g^{AB}\chi_{A}{}^C\slashed{\mathrm{Ric}}_{CB}
     -\frac{1}{2}\mathrm{tr}\underline{\chi}\mathrm{Ric}_{44}
     -\frac{1}{2}\mathrm{tr}\chi\mathrm{Ric}_{34}
     \\
     &
     -\frac{1}{2}\chi\cdot*\slashed{\mathrm{Ric}}
     -\frac{3}{4}\mathrm{tr}\chi\slashed{\mathrm{Ric}}_{34}
     +\frac{1}{2}\mathrm{tr}\chi(\frac{1}{2}R+\frac{1}{2}\slashed{\mathrm{Ric}}_{34})
     \\
     &
     +\mathrm{tr}\chi(\frac{1}{2}R+\frac{1}{2}\slashed{\mathrm{Ric}}_{34})
     +\frac{1}{2}\hat{\chi}\cdot\widehat{\mathrm{Ric}}
     +\frac{1}{2}\mathrm{tr}\underline{\chi}\mathrm{Ric}_{44}
     -(2\underline{\eta}+\zeta)\frac{1}{2}\slashed{\mathrm{Ric}}_{4A}
     \\
     =&
     (\eta^A+\underline{\eta}^A)\mathrm{Ric}_{4A}
     +\frac{1}{2}\slashed{\mathrm{div}}\slashed{\mathrm{Ric}}_{4\cdot}\ .
   \end{align*}
   Following \cite[(5.28)]{Christodoulou:2008nj}, we can also derive the above result from Codazzi equations as
   \begin{align}
     \begin{split}
     \mathcal{L}_4K+\mathrm{tr}\chi K
     =&\Omega^{-2}(\slashed{\mathrm{div}}\slashed{\mathrm{div}}(\Omega^2\hat{\chi})
     -\frac{1}{2}\slashed{\Delta}(\Omega^2\mathrm{tr}\chi))
     \\
     =&\Omega^{-2}
     (\slashed{\mathrm{div}}((\slashed{\nabla}\Omega^2-\zeta\Omega^2)\cdot\hat{\chi}
     -\frac{1}{2}\slashed{\nabla}\Omega^2 \mathrm{tr}\chi
     +\Omega^2(\frac{1}{2}\mathrm{tr}\chi\zeta-\beta+\frac{1}{2}\slashed{\mathrm{Ric}}_{4\cdot}))).
     \end{split}
     \label{Eq2.42}
   \end{align}
   Concluding the arguments above, we arrive at
   \begin{prop}\label{PropA.1}
     The Ricci curvature terms in \eqref{Eq2.36} for $e_4\check{\sigma}$, \eqref{Eq2.30} for $\nabla_4\beta$, and $\eqref{Eq2.37}$ for $e_4K$ reads, schematically, as
   \begin{align*}
     & (\slashed{\nabla},\nabla_4) \slashed{\mathrm{Ric}}+(\psi+\psi_H)\slashed{\mathrm{Ric}}.
   \end{align*}
   \end{prop}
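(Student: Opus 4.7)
My plan is to verify the proposition equation-by-equation, with the main work concentrated on \eqref{Eq2.37}.

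For \eqref{Eq2.36} (the equation for $\nabla_4\check{\sigma}$) and \eqref{Eq2.30} (the equation for $\nabla_4\beta$), the Ricci contributions as displayed already consist of (i) pure angular derivatives $\nabla\slashed{\mathrm{Ric}}$, (ii) at worst a single $\nabla_4$ derivative of Ricci, and (iii) products of Ricci with $\mathrm{tr}\chi$, $\hat{\chi}$, $\omega$, $\underline{\eta}$, or $\zeta$. Since $\nabla_4$ acting on a tangential tensor equals $\Omega^{-2}\partial_{\underline{u}}$ up to a contraction with $\chi$ on the tensorial slots, I would rewrite $\nabla_4 \slashed{\mathrm{Ric}} = \partial_\nu \slashed{\mathrm{Ric}} + \psi_H\,\slashed{\mathrm{Ric}}$; this turns every piece into the asserted schematic form, since $\mathrm{tr}\chi,\hat{\chi},\omega\in\psi_H$ and $\underline{\eta},\,\zeta=\tfrac12(\eta-\underline{\eta})\in\psi$.

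The nontrivial case is \eqref{Eq2.37} for $\nabla_4 K$, whose naive presentation contains the terms $D_3\mathrm{Ric}_{44}$ and $\tfrac12\mathrm{tr}\underline{\chi}\,\mathrm{Ric}_{44}$. Both are forbidden in the target form: the first is a $\nabla_3$ derivative of $\mathrm{Ric}$ (whereas we only allow $\partial_\nu$), and the second multiplies $\mathrm{Ric}$ by a singular coefficient $\mathrm{tr}\underline{\chi}\in\psi_{\underline{H}}$. The plan is to eliminate $D_3\mathrm{Ric}_{44}$ via the contracted second Bianchi identity
\[
-\tfrac12 D_4\mathrm{Ric}_{34}-\tfrac12 D_3\mathrm{Ric}_{44}+g^{AB}D_A\mathrm{Ric}_{B4}=\tfrac12 D_4 R,
\]
which trades $D_3\mathrm{Ric}_{44}$ for $D_4\mathrm{Ric}_{34}$, $D_4 R$, and an angular divergence $\slashed{\mathrm{div}}\slashed{\mathrm{Ric}}_{4\cdot}$. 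Substituting and collecting, the coefficients of the offending $\mathrm{tr}\underline{\chi}\,\mathrm{Ric}_{44}$ terms drop out, as do those of $\mathrm{tr}\chi\,\slashed{\mathrm{Ric}}_{34}$ and the scalar-curvature pieces, and what remains is exactly the expression $(\eta^A+\underline{\eta}^A)\mathrm{Ric}_{4A}+\tfrac12\slashed{\mathrm{div}}\slashed{\mathrm{Ric}}_{4\cdot}$ computed in the excerpt, which is manifestly of the form $\partial_\nu\mathrm{Ric}+(\psi+\psi_H)\mathrm{Ric}$.

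As an independent check for \eqref{Eq2.37}, I would reprove the assertion via the Gauss-type identity \eqref{Eq2.42}: by Gauss, $\mathcal{L}_4 K+\mathrm{tr}\chi\,K$ equals $\Omega^{-2}\big(\slashed{\mathrm{div}}\slashed{\mathrm{div}}(\Omega^2\hat{\chi})-\tfrac12\slashed{\Delta}(\Omega^2\mathrm{tr}\chi)\big)$, which involves only angular operations on $\hat{\chi}$ and $\mathrm{tr}\chi$; substituting $\slashed{\mathrm{div}}\hat{\chi}$ from Codazzi \eqref{Eq2.9} introduces Ricci only through $\tfrac12\slashed{\mathrm{Ric}}_{4\cdot}$, whose subsequent angular divergence yields precisely $\tfrac12\slashed{\mathrm{div}}\slashed{\mathrm{Ric}}_{4\cdot}$ together with terms of type $\psi\cdot\mathrm{Ric}$ and $\psi_H\cdot\mathrm{Ric}$. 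The main obstacle is tracking the Bianchi substitution in \eqref{Eq2.37} carefully enough that every $\psi_{\underline{H}}\cdot\mathrm{Ric}$ and every $\nabla_3\mathrm{Ric}$ cancels---the algebra is delicate because both sides of the Bianchi identity separately feed singular-looking coefficients into the same expression, and only their precise sum is regular in the sense required by the proposition.
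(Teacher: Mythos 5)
Your proposal is correct and follows essentially the same route as the paper: for \eqref{Eq2.36} and \eqref{Eq2.30} you verify by inspection that only $\nabla_4$ and angular derivatives of Ricci appear with coefficients in $\psi,\psi_H$ (the paper's proof phrases this as $\nabla_\nu\slashed{\mathrm{Ric}}=\partial_\nu\slashed{\mathrm{Ric}}+(\psi+\psi_H)\slashed{\mathrm{Ric}}$ together with the absence of $\underline{\omega},\underline{\chi}$), and for \eqref{Eq2.37} you invoke exactly the contracted second Bianchi identity used in the remark preceding the proposition to trade $D_3\mathrm{Ric}_{44}$ and the $\mathrm{tr}\underline{\chi}\,\mathrm{Ric}_{44}$ term for $(\eta+\underline{\eta})\cdot\mathrm{Ric}_{4\cdot}+\tfrac12\slashed{\mathrm{div}}\slashed{\mathrm{Ric}}_{4\cdot}$. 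Your Gauss--Codazzi cross-check via \eqref{Eq2.42} is likewise the paper's own alternative derivation, so nothing essential is missing.
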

   \begin{proof}
     We note that for the projected tensors $\slashed{\mathrm{Ric}},\slashed{\mathrm{Ric}}_{\alpha\cdot},\slashed{\mathrm{Ric}}_{\alpha\beta}$ for $\alpha,\beta\in\{3,4\}$,
     \begin{align*}
       &(D_A,D_4)\mathrm{Ric}
       =_s(\slashed{\nabla},\nabla_4) \slashed{\mathrm{Ric}}+(\psi+\psi_{H})\slashed{\mathrm{Ric}}.
     \end{align*}
     Thus the proof is complete by noting that the singular terms $\underline{\omega},\underline{\chi}$ are not present.
   \end{proof}
   Combining the equations above,
   \begin{align}
   \begin{split}
     \nabla_4(\slashed{\mathrm{div}}&\eta
     -K)
     =\slashed{\mathrm{div}}(
     -\frac{1}{2}\slashed{\mathrm{tr}}\chi\underline{\eta}
     -\hat{\chi}(\eta-\underline{\eta})+\frac{1}{2}\slashed{\mathrm{tr}}\chi\underline{\eta}-\slashed{\mathrm{Ric}}_{4\cdot} )
     -\mathrm{tr}\chi K
     \\
     &
     +\Omega^{-1}
     (\slashed{\mathrm{div}}((\slashed{\nabla}\Omega-\zeta\Omega)\cdot\hat{\chi}
     -\frac{1}{2}\slashed{\nabla}\Omega^2\mathrm{tr}\chi
     +\Omega(\frac{1}{2}\mathrm{tr}\chi\zeta-\beta+\frac{1}{2}\slashed{\mathrm{Ric}}_{4\cdot})))
     +[\nabla_4,\mathrm{div}]\eta.
     \end{split}
     \label{Eq2.43}
   \end{align}
   For the elliptic system in $\eta$, we also have
   \begin{align}
     \mathrm{curl}\eta=&\frac{1}{2}\chi\wedge\underline{\chi}
     +\sigma=\check{\sigma}.
     \label{Eq2.41}
   \end{align}
   And similarly,
   \begin{align}
     \begin{split}
     &-\mathrm{curl}\underline{\eta}=\frac{1}{2}\chi\wedge\underline{\chi}
     +\sigma=\check{\sigma},
     \end{split}
     \label{Eq2.44}
     \\
     \begin{split}
     \mathcal{L}_3K+\mathrm{tr}\underline{\chi} K
     =&\Omega^{-2}(\slashed{\mathrm{div}}\slashed{\mathrm{div}}(\Omega^2\hat{\underline{\chi}})
     -\frac{1}{2}\slashed{\Delta}(\Omega^2\mathrm{tr}\underline{\chi}))
     \\
     =&\Omega^{-2}
     (\slashed{\mathrm{div}}((\slashed{\nabla}\Omega^2-\zeta\Omega^2)\cdot\hat{\underline{\chi}}
     -\frac{1}{2}\slashed{\nabla}\Omega^2\mathrm{tr}\underline{\chi}
     +\Omega^2(\frac{1}{2}\mathrm{tr}\underline{\chi}\underline{\zeta}-\underline{\beta}+\frac{1}{2}\slashed{\mathrm{Ric}}_{3\cdot}))).
     \end{split}
     \label{Eq2.45}
   \end{align}
   Thus,
   \begin{align}
   \begin{split}
     \nabla_3(\slashed{\mathrm{div}}&\underline{\eta}
     -K)
     =\slashed{\mathrm{div}}(
     -\frac{1}{2}\slashed{\mathrm{tr}}\underline{\chi}\eta
     +\hat{\underline{\chi}}(\eta-\underline{\eta})+\frac{1}{2}\slashed{\mathrm{tr}}\underline{\chi}\eta-\slashed{\mathrm{Ric}}_{3\cdot} )
     -\mathrm{tr}\underline{\chi} K
     \\
     &
     +\Omega^{-1}
     (\slashed{\mathrm{div}}((\slashed{\nabla}\Omega-\zeta\Omega)\cdot\hat{\underline{\chi}}
     -\frac{1}{2}\slashed{\nabla}\Omega\mathrm{tr}\underline{\chi}
     +\Omega(\frac{1}{2}\mathrm{tr}\underline{\chi}\underline{\zeta}-\underline{\beta}-\frac{1}{2}\slashed{\mathrm{Ric}}_{3\cdot})))
     +[\nabla_4,\mathrm{div}]\eta.
     \end{split}
     \label{Eq2.46}
   \end{align}
   Also, $\omega,\underline{\omega}$ satisfy a elliptic system:
   \begin{align}
     &\mathrm{div}\nabla\omega
     =\mathrm{div}\nabla_4(\eta+\underline{\eta}),
     \label{Eq2.47}
     \\
     &\mathrm{curl}\nabla\omega=0,
     \label{Eq2.48}
     \\
     &\mathrm{div}\nabla\underline{\omega}
     =\mathrm{div}\nabla_3(\eta+\underline{\eta}),
     \label{Eq2.49}
     \\
     &\mathrm{curl}\nabla\underline{\omega}=0.
     \label{Eq2.50}
   \end{align}
   For the fluid part, projecting along flows by $v$ implies
   \begin{align}
       &v_\alpha D^\alpha\tau+(p+\tau)\mathrm{div}v=0,
       \label{Eq2.51}
       \\
       &(p+\tau)v_\alpha D^\alpha v_\beta-(p+\rho)v_\alpha \Gamma^{\alpha}_{\beta\gamma}v^\gamma
       +p'(\tau)v_\alpha D^\alpha\tau v_\beta+D_\beta \tau)=0.
       \label{Eq2.52}
   \end{align}

   \bibliography{./Einstein+Fluid_version_2}
   \bibliographystyle{plain}
\end{document}